\newtheoremstyle{exampstyle}
{8pt} 
{8pt} 
{\it} 
{} 
{\bfseries} 
{.} 
{.5em} 
{} 
\theoremstyle{exampstyle}
\newtheorem{theorem}{Theorem}
\newtheorem{example}{Example}
\newtheorem{lemma}{Lemma}
\newtheorem{corollary}{Corollary}
\newtheorem{remark}{Remark}
\newtheorem{prop}{Proposition}
\newtheorem{definition}{Definition}
\newtheorem{assumption}{Assumption}
\numberwithin{equation}{section}
\numberwithin{example}{section}
\numberwithin{theorem}{section}
\numberwithin{lemma}{section}
\numberwithin{corollary}{section}
\numberwithin{prop}{section}
\numberwithin{definition}{section}
\numberwithin{remark}{section}
\newcommand{\eat}[1]{}
\DeclareMathOperator*{\argmin}{\arg\!\min}
\DeclareMathOperator*{\argmax}{\arg\!\max}
\renewcommand{\bar}[1]{\overline{#1}}
\renewcommand{\hat}[1]{\widehat{#1}}
\renewcommand{\tilde}[1]{\widetilde{#1}}
\newcommand{\E}{\mathbb{E}}
\newcommand{\R}{\mathbb{R}}
\newcommand{\G}{\mathcal{G}}
\newcommand{\indep}{\mathop{\rotatebox[origin=c]{90}{$\models$}}}
\newcommand{\B}{\boldsymbol}
\definecolor{LightCyan}{rgb}{0.88,1,1}
\definecolor{Gray}{gray}{0.9}
\begin{document}

\title{Multivariate Symmetry: Distribution-Free Testing via Optimal Transport}


\author{Zhen Huang\vspace{-0.2cm}\\
    Department of Statistics, Columbia University\\
  e-mail: {\url{zh2395@columbia.edu}}\\
    and \vspace{0.2cm}\\
    Bodhisattva Sen\thanks{Supported by NSF grant DMS-2015376.}\vspace{0.15cm}\\
    Department of Statistics, Columbia University\\
    e-mail: \url{bodhi@stat.columbia.edu}}

\maketitle


\abstract{The sign test (Arbuthnott, 1710) and the Wilcoxon
signed-rank test (Wilcoxon, 1945) are among the first examples of a
nonparametric test. These procedures --- based on signs, (absolute)
ranks and signed-ranks --- yield distribution-free tests for symmetry
in one-dimension. In
this paper we propose a novel and unified framework for distribution-free testing of multivariate symmetry (that includes central symmetry, sign symmetry, spherical symmetry, etc.) based on the theory of
optimal transport. Our approach leads to notions of distribution-free generalized
multivariate signs, ranks and signed-ranks. As a consequence, we develop analogues of the sign and Wilcoxon signed-rank tests that share many
of the appealing properties of their one-dimensional counterparts. In particular, the proposed tests are exactly distribution-free in finite
samples with an asymptotic normal limit, and adapt to various
notions of multivariate symmetry. We study the consistency of the
proposed tests and their behavior under local alternatives, and show
that the proposed generalized Wilcoxon signed-rank (GWSR) test is
particularly powerful against location shift alternatives. We show
that in a large class of such models, our GWSR test suffers from no loss in (asymptotic) efficiency, when compared to Hotelling's $T^2$ test, despite being nonparametric and exactly
distribution-free. An appropriately score transformed version of the GWSR statistic leads to a locally asymptotically optimal test. Further, our method can be readily used to construct distribution-free confidence sets for the center of symmetry. }

\section{Introduction}\label{sec:Intro}
Symmetry is everywhere, from the bilateral symmetry in Greek sculpture, to the description of crystalline structure in nature by modern group theory \citep{Weyl1952Symmetry,Serfling2014}. The study of symmetry in statistics also has an old and rich history. Perhaps the first published report of a nonparametric test goes back to the landmark  paper~\cite{Arbuthnott1710sign} by John Arbuthnott where he essentially performed a statistical hypothesis test for univariate {\it symmetry}, computing a $p$-value via the {\it sign test}; see e.g.,~\cite{bellhouse2001john, Gibbons2011}. Some statisticians also date the beginning of (classical) nonparametric statistics, in particular rank-based methods, to the appearance of Wilcoxon's 1945 paper \cite{wilcoxon1945individual}, which introduced the {\it Wilcoxon signed-rank test} (WSR) for testing univariate symmetry. Both these classical tests are {\it distribution-free} and have had an enormous impact in the development of nonparametric statistics.


A univariate random variable $X$ is {\it symmetric} (around 0) if $X$ has the same distribution as $-X$.
Given $n$ observations $X_1,\ldots,X_n$ from the distribution of $X$,
the sign test rejects the null hypothesis ${\rm H}_0:X\overset{d}{=}-X$ when $\left|\sum_{i=1}^n {\rm sign}(X_i) \right|$ is large; here ${\rm sign}(X_i) =1$ if $X_i \ge 0$ and is $-1$ otherwise.
In contrast, the Wilcoxon signed-rank test first sorts $|X_1|,\ldots,|X_n|$ to obtain their (absolute) ranks $\{{\rm rank}(|X_i|)\}_{i=1}^n$, and  rejects ${\rm H}_0$ for large values of $\left|\sum_{i=1}^n {\rm sign}(X_i){\rm rank}(|X_i|)\right|$.
When $X$ has a continuous distribution, the classical sign and signed-rank tests enjoy several properties that make them powerful and useful in practice, including: (i) {\it distribution-freeness}, (ii) {\it independence} of signs and absolute ranks (under ${\rm H}_0$), (iii) simple asymptotic normal limiting distributions, and (iv) high (relative) efficiency compared to the $t$-test~\citep{Hodges1956efficiency,chernoff1958}; see~\citep[Chapter 15]{vanderVaart1998} for a detailed discussion of these results.

It is natural to ask if generalizations of these tests, that satisfy the above mentioned desirable properties, exist in multi-dimensions. In this paper, we answer this question in the affirmative by proposing a novel framework for testing {\it multivariate symmetry}, using the theory of optimal transport~\citep{Villani03,villani2009OT}. 

While symmetry in one dimension is unambiguous (i.e., $X\overset{d}{=}-X$), in the multivariate setting there are various notions of symmetry. The most prominent ones include: (a)  {\it central symmetry}~\citep{PuriSen1967,Blough1989PP,Ngatchou2009symmetry,Oja2010nonp,EG2016}, (b) {\it sign symmetry}~\citep{efron1969,Pinelis1994T2,Oja2010nonp}, and (c) {\it spherical symmetry}~\citep{Baringhaus1991,Ghosh1992charac,Henze2014sph,alb2020sph}; a thorough review of the topic can be found in~\citet{Serfling2014}.


In this paper, we consider the following general setting for testing multivariate symmetry. Let $\G$ be a compact subgroup of the orthogonal group ${\rm  O}(p)$ (for $p \ge 1$) --- the group of all $p \times p$ real orthogonal matrices, where the group operation is the usual matrix multiplication.
For a random vector
${\bf X}  \sim \mu \in \mathcal{P}_{\rm ac}(\mathbb{R}^p)$, where $\mathcal{P}_{\rm ac}(\mathbb{R}^p)$ denotes the class of all Borel probability measures on $\R^p$ with a Lebesgue density, we are interested in testing the hypothesis of {\it $\G$-symmetry}:
\begin{equation}\label{eq:hypo}
	\textrm{H}_0: {\bf X}  \stackrel{d}{=}  Q  {\bf X}{\rm\ \ for\ all\ \ } Q\in \G \qquad\qquad \textrm{versus}\qquad\qquad\textrm{H}_1: \textrm{not H}_0.
\end{equation}
In particular, the prominent examples of symmetry mentioned earlier correspond to the following groups $\G$.
\begin{enumerate}
\item[(a)] {\it Central symmetry}: Here $\G = \{I, -I\}$, where $I\equiv I_p$ is the identity matrix of order $p$.

Central symmetry (or ``reflective'' or ``diagonal'' or ``simple'' or ``antipodal'' symmetry \citep{PuriSen1967,Blough1989PP,Ngatchou2009symmetry,Oja2010nonp,EG2016}) arises naturally from paired data 
(see e.g., \cite[Section 5]{Gibbons2003nonpara}):
Given a sample of $n$ pairs $\{(\mathbf{Y}_i,\mathbf{Z}_i)\}_{i=1}^n$,
it is often of interest to study the differences
$\mathbf{X}_i:=\mathbf{Y}_i-\mathbf{Z}_i$, $i=1,\ldots,n$.
Under the null hypothesis that $\mathbf{Y}_i$ and $\mathbf{Z}_i$ are exchangeable, $\mathbf{X}_i$ will have a centrally symmetric distribution \citep{Lehmann75, Gibbons2003nonpara}.
Extensions of central symmetry include {\it angular symmetry} \citep{Randles1989sign},
which corresponds to $\frac{{\bf X}}{\|{\bf X}\|} \stackrel{d}{=}  - \frac{{\bf X} }{\|{\bf X} \|}$; here $\|\cdot\|$ denotes the usual Euclidean norm.

\item[(b)] {\it Sign symmetry}: Here $\G = \{Q \in \R^{p \times p}: Q {\rm\ is\ a\ diagonal\ orthogonal\ matrix}\}$; thus each diagonal entry of $Q$ is $\pm 1$.

Sign symmetry is also called {\it orthant symmetry} or {\it marginal symmetry} \citep{efron1969,Pinelis1994T2,Oja2010nonp}. Such sign-change invariance is often assumed in the {\it location-scatter} model, which provides tools for robust estimation of the regular mean vector and covariance matrix \citep{Oja2010nonp}; also see the book by Puri and Sen~\citep{Puri-Sen-1971} which gives a comprehensive account of multivariate analysis methods based on marginal signs and ranks.

\item[(c)] {\it Spherical symmetry}: Here $\G = {\rm O}(p)$.

Spherically symmetric distributions are natural generalizations of the Gaussian distribution, and have drawn substantial interest~\citep{Fang1990symm,alb2020sph}.
It is also known that many estimation results are robust when departures from the normality assumption are in the direction of the spherically symmetric family, with applications in the general linear model \citep{Jammalamadaka1987linear,Berk1989opt,Fourdrinier1995sph}, sufficient dimension reduction \citep{Li2005contour,Zeng2010central}, etc.
Tests for spherical symmetry have been considered in \citep{Baringhaus1991,Ghosh1992charac,Henze2014sph,alb2020sph}.

\end{enumerate}
Clearly, other subgroups of ${\rm O}(p)$, such as the {\it signed permutation group}\footnote{The signed permutation group consists of all matrices having exactly one nonzero entry in each row and exactly one nonzero entry in each column, and the nonzero entries are either 1 or $-1$.} \citep{fire2006statistics}, also correspond to a notion of symmetry.
The $\G$-symmetric model described by ${\rm H}_0$ in \eqref{eq:hypo} is motivated by the work of~\citet{beran1997multivariate}, and it provides a general framework for testing many notions of multivariate symmetries in a unified fashion.

To develop distribution-free tests for the general hypothesis of $\G$-symmetry in~\eqref{eq:hypo} we first observe (in~\cref{sec:OT_signs_ranks}) that the one-dimensional signs, ranks and signed-ranks --- which are the building blocks for distribution-free inference in the univariate setting --- can be obtained as solutions to an optimization problem which can be generalized to multi-dimensions. This resulting optimization is an optimal transport (OT) problem~\cite{Villani03, villani2009OT} with a specific cost function that can be solved efficiently (see \cref{sec:comp_efficiency}). 

Specifically, given observations $\mathbf{X}_1,\ldots,\mathbf{X}_n$ from $\mu\in \mathcal{P}_{\rm ac}(\R^p)$ and (known) candidate rank vectors $\mathbf{h}_1,\ldots,\mathbf{h}_n\in \mathbb{R}^p$, the {\it generalized sign} $S_n(\mathbf{X}_i)$ is defined as an element of $\G$ and the collection of {\it generalized ranks} $(R_n(\mathbf{X}_1),\ldots, R_n(\mathbf{X}_n))$ is a permutation of $(\mathbf{h}_1,\ldots,\mathbf{h}_n)$; see \cref{sec:OT} for the formal definitions. Our approach leads to notions of ($\G$-symmetric specific) generalized multivariate signs, ranks and signed-ranks that are also distribution-free (under ${\rm H}_0$). In the one-dimensional case, if $(\mathbf{h}_1,\ldots,\mathbf{h}_n)$ is chosen as $(\frac{1}{n},\ldots,\frac{n}{n})$, then we recover the usual notion of signs and ranks. Further, mirroring the one-dimensional setting, we define the {\it generalized sign} test statistic as $\sum_{i=1}^n S_n(\mathbf{X}_i)$, and the statistic for the {\it generalized Wilcoxon signed-rank} (GWSR) test will be defined as
$\sum_{i=1}^n S_n(\mathbf{X}_i)R_n(\mathbf{X}_i)$; see~\cref{sec:ARE} for the details.

Over the past few years, multivariate ranks~\citep{hallin2017distribution,Nabarun2021rank}, defined via the theory of OT, have been effectively used to design distribution-free testing procedures in two-sample problems \citep{boeckel2018multivariate,Nabarun2021rank}, for independence testing \citep{Nabarun2021rank,Shi2022indep,shi2022universally,shi2023semiparametrically}, in multivariate linear models~\citep{hallin2020fully,Hallin2022VARMA}, for directional data \citep{hallin2022nonparametric}, etc. Our framework, although inspired by these recent developments, is novel and more general and includes the usual notion of OT based multivariate ranks~\cite{hallin2017distribution,Nabarun2021rank} as a special case. We achieve this by considering a more general cost function in the OT problem that naturally arises from the hypothesis testing problem~\eqref{eq:hypo} and reduces to the usual squared Euclidean distance cost when $\G$ is the trivial group $\{I\}$. Further, in Section~\ref{sec:popu_signs_ranks}, using the theory of OT, we provide characterizations of the population analogues of the empirical generalized signs, ranks and signed-ranks (see~\cref{thm:population_rank}). Moreover, we show the consistency of these sample estimates to their population analogues in~\cref{cor:conv_popu}. Our treatment does not need any moment assumptions on the data distribution $\mu$ as we appeal to geometric characterizations of OT (as pioneered by Robert McCann~\cite{McCann1995}).

Among the important reasons for the  popularity of the classical Wilcoxon signed-rank test are
the striking results of \citet{Hodges1956efficiency} and \citet{chernoff1958}, where the authors show that the asymptotic (Pitman) relative efficiency of Wilcoxon’s test with respect
to (w.r.t.) the one-sample Student’s $t$-test, under location-shift alternatives, never falls below 0.864 (with the identity score function) and 1 (with the Gaussian score function) respectively, despite the former being exactly distribution-free for all sample sizes. Motivated by these results, we study in \cref{subsec:efficiency} the (asymptotic relative) efficiency, in the sense of Pitman \citep[see e.g.,][Section 14.3]{vanderVaart1998}, of our proposed multivariate generalization of the Wilcoxon signed-rank test compared to Hotelling's $T^2$ test --- the multivariate generalization of the usual $t$-test. In particular, we show that the proposed GWSR test satisfies Hodges-Lehmann and Chernoff-Savage-type efficiency lower bounds over natural sub-families of multivariate distributions, despite being entirely agnostic to the underlying data generating mechanism; see Theorems~\ref{thm:indepcomp}-\ref{thm:blind}. 

Let us now briefly highlight the main results in this paper.
\begin{enumerate}
\item Distribution-freeness: under ${\rm H}_0$,
\begin{enumerate}
\item $S_n(\mathbf{X}_1),\ldots,S_n(\mathbf{X}_n)$ are i.i.d.\ following the uniform distribution over $\G$.
\item The ranks $(R_n(\mathbf{X}_1),\ldots,R_n(\mathbf{X}_n))$ are uniformly distributed over all $n!$ permutations of $(\mathbf{h}_1,\ldots,\mathbf{h}_n)$; see \cref{prop:properties_of_generalized_sign_rank}.
\end{enumerate}
\item Independence: $(S_n(\mathbf{X}_1),\ldots,S_n(\mathbf{X}_n))$ is independent of $(R_n(\mathbf{X}_1),\ldots,R_n(\mathbf{X}_n))$ under $\textrm{H}_0$ (\cref{prop:properties_of_generalized_sign_rank}).
\item Asymptotic normality: Both $\frac{1}{\sqrt{n}}\sum_{i=1}^n S_n(\mathbf{X}_i)$ and $\frac{1}{\sqrt{n}}\sum_{i=1}^n S_n(\mathbf{X}_i)R_n(\mathbf{X}_i)$ converge in distribution to multivariate normal distributions, under ${\rm H}_0$ (\cref{thm:CLT_wilcoxon}).
\item Relative efficiency against location shift alternatives for suitable sub-classes of multivariate distributions (e.g., product distributions, elliptically symmetric distributions, etc.):
\begin{enumerate}
\item (Hodges-Lehmann phenomenon~\citep{Hodges1956efficiency}) The relative efficiency of the GWSR test w.r.t.\ Hotelling's $T^2$ test never falls below 0.864, if $\mathbf{h}_1,\ldots,\mathbf{h}_n$ and $\G$ determine a uniform {\it effective reference distribution} (see Theorem~\ref{thm:indepcomp} for the precise statement).
\item (Chernoff-Savage phenomenon~\citep{chernoff1958}) The relative efficiency of the GWSR test against Hotelling's $T^2$ test is lower bounded by 1, if $\mathbf{h}_1,\ldots,\mathbf{h}_n$ and $\G$ determine a Gaussian {\it effective reference distribution} (see Theorems~\ref{thm:indepcomp}-\ref{thm:blind}).
\end{enumerate}
\item Efficiency (against location shift alternatives): The family of score transformed GWSR tests (with statistic $\frac{1}{\sqrt{n}}\sum_{i=1}^n S_n(\mathbf{X}_i)J(R_n(\mathbf{X}_i))$ for some score function $J:\R^p\to \R^p$) contains a {\it locally asymptotically optimal test} (see \cref{thm:efficiency}), i.e., a test which has maximum (asymptotic) local power among all possible tests with a fixed Type I error. This is in parallel with the analogous result in univariate setting, cf.~\citet[Corollary 15.10]{vanderVaart1998}.
\end{enumerate}

We say that $\mathbf{X}$ is $\G$-symmetric around ${\B \theta} \in \R^p$ (usually referred to as the {\it center of $\G$-symmetry}) if $(\mathbf{X} - {\B \theta}) \overset{d}{=} Q(\mathbf{X} - {\B \theta})$, for all $Q \in \G$. Our framework can be immediately extended to test the hypothesis of $\G$-symmetry around any fixed and known ${\B \theta}$. To test symmetry around a general center ${\B \theta} \in \R^p$, one may just translate the data $\{\mathbf{X}_i\}_{i=1}^n$ to $\{\mathbf{X}_i-{\B \theta}\}_{i=1}^n$, thereby reducing the problem to~\eqref{eq:hypo}. Further, if ${\B \theta}$ is unknown, our proposed testing framework can be used to construct {\it distribution-free confidence sets} for ${\B \theta}$ (see~\cref{subsec:conf_set}). Further, in Section~\ref{sec:simulations} we demonstrate the superior performance of our proposed tests (especially for location shift models) via simulation studies, illustrate the implications of our efficiency results for data sets with small to moderate sample sizes, and compare our proposals with existing methods that test for different notions of symmetry.

From the above properties, we believe that our generalized multivariate signs and ranks (defined via OT) are natural extensions of the classical concepts. As far as we are aware, our work is the first to propose notions of multivariate distribution-free signs and ranks (relative to any group $\G$) and develop distribution-free testing for~\eqref{eq:hypo} in multi-dimensions with the above efficiency lower bounds, analogous to the classical Wilcoxon's signed-rank test.

\subsection{Literature Review}
Multivariate symmetry is an important concept in nonparametric statistics \citep{ZS2000}. Testing for multivariate symmetry has been studied extensively in the literature, see, e.g.,~\citet{Baringhaus1991},~\citet{DH1999} and \citet{Serfling2014}.

\textbf{Extensions of sign in multi-dimension.}
One of the earliest and most popular tests for symmetry is possibly the sign test \citep{Arbuthnott1710sign}.
Since then, several notions of multivariate signs have been proposed in the literature. The component-wise multivariate sign, which is a popular generalization,
dates back to \citet{Bennett1962sign,Chatterjee1966bivariate}. 
The asymptotic normality and Pitman efficiency of the component-wise sign test against Hotelling's $T^2$ was derived by \citet{Bickel1965competitors}.
Component-wise ranks and signed-ranks have also been used for testing \citep{Puri-Sen-1971,Marie1984sym}.
One drawback of the component-wise sign is that its distribution-freeness requires $\mathbf{X}$ to be sign symmetric (it is not distribution-free under central symmetry).
We will show that our generalized sign (see~\eqref{eq:Sign-Rank}) reduces to the component-wise sign when $\G$ is the group corresponding to sign symmetry (see \cref{rk:comp_sign}).

A multivariate generalization of the sign test based on {\it interdirections} was proposed in \citet{Randles1989sign}, generalizing the two-sided univariate sign test and Blumen's bivariate sign test \citep{blumen1958new}. This test statistic is invariant under nonsingular linear transformations and it has a finite-sample distribution-free property over distributions with elliptical directions \citep{Randles1989sign}.
\citet{CS1993} expressed Randles's statistic in terms of a data-dependent sign, and further unified a number of previous affine invariant generalized sign tests \citep{Hodges1955sign,Watson1961circle,Ajne1968circular}.
Such transformation re-transformation technique was later investigated to define affine equivariant multivariate median and quantiles
\citep{Chakraborty1996TR,CCO1998,Chakraborty2001quantiles,Serfling2010Equivariance}, and affine invariant multivariate sign test \citep{Randles2000}.
Locally asymptotically maximin tests based on interdirections were derived in \citet{Hallin2002optimal} under elliptical symmetry.

\citet{Mottonen1995sign} defined the multivariate sign as the unit vector in the direction of a given point, which is called the {\it spatial sign}. The Pitman efficiencies of the corresponding tests and a generalization of the Hodges-Lehmann estimator were further analyzed in \citet{Mottonen1997efficiency,Mottonen2005Multivariate}. However, none of the above notions of signs are distribution-free beyond spherical symmetry.

\citet{hallin2017distribution} proposed a novel center-outward definition of ranks and signs based on measure transportation with attractive properties, similar to our framework with $\G= \{I\}$. However, their center-outward signs differ from the classical signs and our proposal in multiple ways; e.g., the center-outward signs are not i.i.d.\ uniform over $\{\pm 1\}$ for symmetric distributions in one dimension. 


\textbf{Tests for central symmetry.}
Going beyond the sign test, there are tests for central symmetry which are consistent against all alternatives.
In the univariate setting, \citet{McWilliams1990runs} presented a simple test based on a runs statistic. 
\citet{Dyckerhoff2015depth} extended McWilliams's procedure to test for bivariate central symmetry using the notion of data depth. 
In~\citet{EG2016}, omnibus tests for central symmetry of a bivariate probability distribution are proposed that are asymptotically distribution-free. 
Another class of approaches explores the properties of the characteristic function of a centrally symmetric distribution and uses the
empirical characteristic function to construct test statistics, including both in the univariate setting \citep{csorgo1987symmetry,Feuerverger1977characteristic} and in multi-dimensional scenarios \citep{Ghosh1992charac,Heathcote1995symmetry,Henze2003inv,Ngatchou2009symmetry,Henze2020news}. 
Affine invariant tests for central symmetry 
are proposed in \citet{Henze2003inv} using kernel density estimators; also see~\citep{Henze2020news}.
However, these tests are in general not asymptotically distribution-free, and the limiting distributions need to be estimated from the data for inference \citep{Ngatchou2009symmetry,Henze2020news}.

\textbf{Tests for spherical symmetry.}
A direct extension of McWilliams's procedure for testing spherical symmetry in multi-dimension is given in \citet[Section 2.4]{marden1999multivariate}, which has a simple asymptotic $\chi^2_1$ distribution under the null. In~\citet{Baringhaus1991}, rotationally invariant tests based on test statistics of the von Mises type are proposed, which give rise to tests that are consistent against all alternatives. 
The characteristic function has also been used for testing
spherical symmetry in multi-dimensional scenarios \citep{Henze2014sph,Ghosh1992charac}.~\citet{Kariya1977sph} consider two classes of alternatives to test spherical symmetry, and provide uniformly most powerful tests. A new testing procedure based on the fact that $\mathbf{X}$ is spherically symmetric if and only if $\E[\mathbf{u}^\top\mathbf{X}|\mathbf{v}^\top\mathbf{X}]=0$ for any two perpendicular vectors $\mathbf{u}$ and $\mathbf{v}$ is given in \citet{alb2020sph}, with its rejection region being found via bootstrap.


\textbf{Tests for sign symmetry.}
The analysis of the classical Hotelling's $T^2$ test has also been extended to sign symmetric distributions by
\citet{efron1969,Pinelis1994T2}, etc.
In \citet{Oja2010nonp}, a book-length treatment on tests based on spatial signs and ranks is given;
sign-change versions of the spatial sign test and spatial signed-rank test are proposed when the observations are sign symmetric.
These tests also have good efficiency properties with respect to Hotelling's $T^2$ in the multivariate $t$-distribution settings \citep[Chapter 8]{Oja2010nonp}. However, Hodges-Lehmann~\citep{Hodges1956efficiency} and Chernoff-Savage~\citep{chernoff1958} type lower bounds are not available for these methods.

\subsection{Organization of the Paper}
In \cref{sec:OT_signs_ranks}, we define multivariate signs, ranks and signed-ranks, and provide their population analogues.
In \cref{sec:ARE}, we define multivariate analogues of the sign test and the Wilcoxon signed-rank test. We further investigate the statistical properties of these proposed tests such as distribution-freeness, asymptotic normality, consistency, and efficiency.
Finite sample performance of the proposed tests and comparisons with competing methods are given in \cref{sec:simulations}. A distribution-free confidence set for the unknown center of symmetry of a $\G$-symmetric distribution is constructed in~\cref{subsec:conf_set}. All proofs of our results and further discussions are relegated to Appendices~\ref{sec:discussions}-\ref{sec:Appendix-C}.

\section{Multivariate Generalized Signs, Ranks and Signed-ranks}\label{sec:OT_signs_ranks}
\subsection{Preliminaries} 
Let ${\rm  O}(p)$, the orthogonal group in dimension $p$, denote the group of all $\R^{p \times p}$ orthogonal matrices under the group operation of matrix multiplication. Let $\G$ be a compact subgroup of the orthogonal group ${\rm  O}(p)$.
In particular, $\G$ satisfies the following group axioms: (i) $I\in\G$; here $I\equiv I_p$ is the identity matrix of order $p$; (ii) if $Q\in \G$, then $Q^{-1}\in\G$; (iii) if $Q_1$, $Q_2\in\G$, then $Q_1Q_2\in \G$; (iv) if $Q_1$, $Q_2$, $Q_3\in\G$, then $(Q_1Q_2)Q_3=Q_1(Q_2Q_3)$.
The compactness assumption ensures that there exists a unique uniform distribution (also called the Haar measure) over $\G$; see e.g., \citet{Eaton1989group}.

\begin{definition}[Orbit]\label{defn:Orbit}
The {\it orbit} of an element $\mathbf{x} \in \R^p$ is the set of elements (in $\R^p$) to which $\mathbf{x}$ can be moved by the elements of $\G$, i.e., $\{Q \mathbf{x}: Q \in \G\}$.
\end{definition}

Note that, if $\mathbf{y}$ is in the orbit of $\mathbf{x}$, then $\mathbf{x}$ is also in the orbit of $\mathbf{y}$. Thus, 
$\R^p$ can be partitioned into a disjoint union of orbits.
Let $B \subset \R^p$. We define the set 
\begin{equation}\label{eq:defGB}
    \G B := \{Q \mathbf{x}: Q \in \G, \mathbf{x} \in B\}  \subset \R^p
\end{equation} as the image of the group action of $\G$ on $B$.
\begin{definition}[Group action]\label{defn:Group-Action}
A (left) group action of $\G$ on $\G B$ is a function $\cdot: \G \times \G B \to \G B$ which satisfies the following two axioms: (i) $I \cdot \mathbf{x}= \mathbf{x}$, and (ii) $Q_1 \cdot (Q_2 \cdot \mathbf{x}) = (Q_1 Q_2) \cdot \mathbf{x}$, for all $\mathbf{x} \in \G B$, and $Q_1, Q_2 \in  \G $. In our setting we take the group action $\cdot$ to be the usual matrix multiplication operation. 
\end{definition}

\begin{definition}[Free group action]\label{defn:Free-Group-Action}
We say that $\G$ acts {\it freely} on $\G B$ if for $\mathbf{y}\in B$ and $Q\in \G$, $Q \mathbf{y}=\mathbf{y}$ implies $Q= I$ (i.e., for any $\mathbf{x}$ in $\G B$, only the identity in $\G$ leaves $\mathbf{x}$ fixed).
\end{definition}
Intuitively, a free group action helps identify the unique element in $\G$ that maps a point to another point (i.e., $Q_1\mathbf{x}=Q_2\mathbf{x}$ for $Q_1,Q_2\in\G$ and $\mathbf{x}\in\G B$ implies that $Q_1=Q_2$).
Let $\mathcal{P}_{\rm ac}(\mathbb{R}^p)$ denote the class of all Borel probability measures on $\R^p$ with a Lebesgue density.
We say that a probability measure $\mu$ is concentrated on a set $B$ if $\mu(B)=1$.
If $\mu$ is a Borel measure, the support of $\mu$ is the smallest closed set on which $\mu$ is concentrated.

\begin{definition}[Push forward distribution]
Let $F:\R^p\to\R^k$ (for $k \ge 1$) be a function and $\mu$ be a distribution on $\R^p$. The push forward distribution $F\# \mu$ is defined as the distribution of $F(\mathbf{X})$, where $\mathbf{X}\sim \mu$.
\end{definition}

\begin{definition}[Cyclical monotonicity]\label{def:c-cyc}
Let $c:\R^p\times\R^p\to(-\infty,+\infty]$ be a function. A subset $\Gamma\subset \R^p\times \R^p$ is said to be $c$-cyclically monotone if for any $N\in\mathbb{N}$ and any family $(\mathbf{x}_1,\mathbf{y}_1),\ldots,(\mathbf{x}_N,\mathbf{y}_N)$ of points in $\Gamma$, the following inequality holds:
$$\sum_{i=1}^Nc(\mathbf{x}_i,\mathbf{y}_i)\leq \sum_{i=1}^Nc(\mathbf{x}_i,\mathbf{y}_{i+1}),$$
with the convention $\mathbf{y}_{N+1}=\mathbf{y}_1$.
\end{definition}

\subsection{Connection to Optimal Transport}\label{sec:OT}
For a random vector ${\bf X} \sim \mu \in \mathcal{P}_{\rm ac}(\mathbb{R}^p)$ and a compact subgroup $\G$ of the orthogonal group ${\rm  O}(p)$, we are interested in distribution-free testing of the hypothesis of $\G$-symmetry~\eqref{eq:hypo}. In this subsection we define new distribution-free (in finite samples) multivariate generalizations of signs and ranks that will form the building blocks for distribution-free inference. Further, these generalizations of signs and ranks --- which reduce to the usual classical notions of sign and rank when $p=1$ and $\G = \{I ,-I\}$ --- adapt to different notions of multivariate symmetry, and share many properties similar to their classical counterparts. 
We borrow ideas from 
the theory of optimal transport (OT) to define 
our multivariate generalizations.
Although our approach is inspired by the notion of multivariate ranks defined via OT~\citep{hallin2017distribution,Nabarun2021rank}, our framework is novel and more general.

Let $\nu $ be some known and fixed Borel measure on $\mathbb{R}^p$, which will be referred to as the {\it reference distribution}; see \cref{rk:ref_mea} for concrete examples. 
For each $n$, we consider a quasi-Monte Carlo \citep{halton1964algorithm,Kuipers1974unif,Harald1992mc} or Monte Carlo \citep{Hasting1970monte,Robert2004monte} sequence $\{\mathbf{h}_{n,j} \in\mathbb{R}^p:\, j = 1,2,\ldots,n\}$ assumed to be known and fixed, such that the empirical distribution over $\{\mathbf{h}_{n,j}\}_{j=1}^n$, denoted by $\nu_n$, converges in distribution to $\nu$ as $n\to \infty$. So $\nu_n$ provides a discretization of $\nu$.
For notational brevity, we hide $n$ and write $\mathbf{h}_{j}\equiv \mathbf{h}_{n,j} $.
For instance, $\{\mathbf{h}_j\}_{j=1}^n$ can be taken as $n$ i.i.d.\ observations from $\nu$ and then fixed. If $\nu={\rm Uniform}(0,1)^p$, we can also let $\{\mathbf{h}_j\}_{j=1}^n$ be the first $n$ points in the Halton sequence \citep{halton1964algorithm}; other choices are also available (see e.g., \citet[Appendix D.3]{Nabarun2021rank}).
As we will see later, to recover the classical one-dimensional sign and signed-rank, one can take $\nu ={\rm Uniform}(0,1)$ and $\mathbf{h}_{j} = \frac{j}{n}$, for $j = 1,2,\ldots,n$.

Let $\mathcal{S}_n$ be the set of all $n!$ permutations of $\{1,\ldots ,n\}$. Consider the following optimization problem:
\begin{equation}\label{eq:OTproblem}
\min \left\{\sum_{i=1}^n \lVert Q_i^\top \mathbf{X}_{\sigma(i)}-\mathbf{h}_i\rVert^2:Q_i \in \G, \sigma\in \mathcal{S}_n\right\}.
\end{equation}
Let $(\{\hat{Q}_i\}_{i=1}^n,\hat{\sigma})$ be a minimizer of \eqref{eq:OTproblem}. 
In fact, under mild assumptions, $\hat{\sigma}$ is actually a.s.\ unique, and $\{\hat{Q}_i\}_{i=1}^n$ is also a.s.\ unique if $\G$ induces a free group action (see \cref{prop:sample_uniq}). Define the cost function
\begin{equation}\label{eq:cost}
c(\mathbf{x},\mathbf{h}):=\min_{Q\in \G}\|Q^\top \mathbf{x}-\mathbf{h}\|^2 , \qquad \mbox{for} \;\; \mathbf{x},\mathbf{h} \in \R^p,
\end{equation}
so that~\eqref{eq:OTproblem} reduces to $\min_{\sigma \in \mathcal{S}_n} \sum_{i=1}^n c(\mathbf{X}_{\sigma(i)},\mathbf{h}_i)$ --- the optimal transport (OT) problem from $\{\mathbf{X}_1,\ldots, \mathbf{X}_n\}$ to $\{\mathbf{h}_1,\ldots, \mathbf{h}_n\}$ under the cost function $c(\cdot,\cdot)$. Observe that the cost function satisfies $c(\mathbf{x},\mathbf{h}) = c(\mathbf{x},Q\mathbf{h})=c(Q\mathbf{x},\mathbf{h})$ for any $Q \in \G$, i.e., it is invariant under transformations in $\G$.

Define the {\it generalized sign} $S_n:\{\mathbf{X}_1,\ldots,\mathbf{X}_n\}\to \G$ and {\it generalized rank} $R_n:\{\mathbf{X}_1,\ldots,\mathbf{X}_n\}\to\{\mathbf{h}_1,\ldots,\mathbf{h}_n\}$ as:
\begin{equation}\label{eq:Sign-Rank}
S_n(\mathbf{X}_i) = \hat{Q}_{\hat{\sigma}^{-1}(i)},\qquad \mbox{and}\qquad R_n(\mathbf{X}_i) = \mathbf{h}_{\hat{\sigma}^{-1}(i)}, \qquad \mbox{for} \;\; i = 1,\ldots, n.
\end{equation}
The {\it generalized signed-rank} of the $i$-th observation $\mathbf{X}_i$ is then defined as $S_n(\mathbf{X}_i)R_n(\mathbf{X}_i) $.

\begin{remark}[When $p=1$]\label{rem:p=1}
Note that when $p=1$  and $\G = \{1,-1\}$, the generalized signs and ranks (in~\eqref{eq:Sign-Rank}) reduce to the classical one-dimensional concepts. In fact, problem~\eqref{eq:OTproblem} directly yields the usual signs and ranks in this case. To see this observe that, for $x \in \R$ and $h \in [0,1]$, from~\eqref{eq:cost}, $c(x,h) = \min \{(x -h)^2, (-x - h)^2\}$, which (due to the linear ordering of the real line) reduces to $c(x,h) = (|x| -h)^2$. It is now easy to see that~\eqref{eq:OTproblem}, with $h_j = \frac{j}{n}$ for all $j$, simplifies to $$\min_{\sigma \in \mathcal{S}_n} \sum_{i=1}^n \left(|X_i| - \frac{\sigma(j)}{n}\right)^2 = \max_{\sigma \in \mathcal{S}_n} \sum_{i=1}^n |X_i|\frac{\sigma(j)}{n},$$ which, by the rearrangement inequality (see e.g.,~\cite[ Theorem 368]{Hardy1952}), yields that $\hat{\sigma}(i)$ will equal the usual (absolute) rank of $|X_i|$, for $i = 1,\ldots, n$. Thus, $X_i$ with the $j$-th largest absolute value will be mapped to $j/n$, and the sign of $X_i$ will be 1 if $X_i \ge 0$ and $-1$ if $X_i<0$.
\end{remark}

When $p > 1$, due to the absence of a canonical ordering in $\R^p$, optimization problem~\eqref{eq:OTproblem} does not have an explicit solution. However, the approach is similar; optimization problem~\eqref{eq:OTproblem} assigns the data points
$\{\mathbf{X}_i\}_{i=1}^n$ to the canonical multivariate rank vectors
$\{\mathbf{h}_j\}_{j=1}^n$ in a bijective fashion such that the objective
function in~\eqref{eq:OTproblem} is minimized. To develop some intuition for the generalized signs and signed-ranks (as in~\eqref{eq:Sign-Rank}), maybe it is easier to express the cost function as $c(\mathbf{x},\mathbf{h}) =
\min_{Q \in \G} \|\mathbf{x} - Q\mathbf{h}\|^2$. Thus, if $R_n(\mathbf{X}_i) = \mathbf{h}_j$, then
$c(\mathbf{X}_i,\mathbf{h}_j) = \min_{Q \in \G} \|\mathbf{X}_i - Q\mathbf{h}_j\|^2$ and hence the signed-rank
$S_n(\mathbf{X}_i) R_n(\mathbf{X}_i) \equiv S_n(\mathbf{X}_i) \mathbf{h}_j $ is the closest element (in Euclidean norm) to
$\mathbf{X}_i$ in the orbit of $\mathbf{h}_j$. Moreover, $S_n(\mathbf{X}_i)$, the sign of $\mathbf{X}_i$, is the argmin of the cost $\|\mathbf{X}_i - Q\mathbf{h}_j\|^2$ over $\G$ --- it is the orthogonal matrix in $\G$ that transforms the rank vector $R_n(\mathbf{X}_i) \equiv \mathbf{h}_j$ to
bring it closest to $\mathbf{X}_i$. 
\begin{figure}
     \centering
     \hspace{-0.32in}
     \begin{subfigure}[b]{0.45\textwidth}
         \centering
         \includegraphics[width=\textwidth]{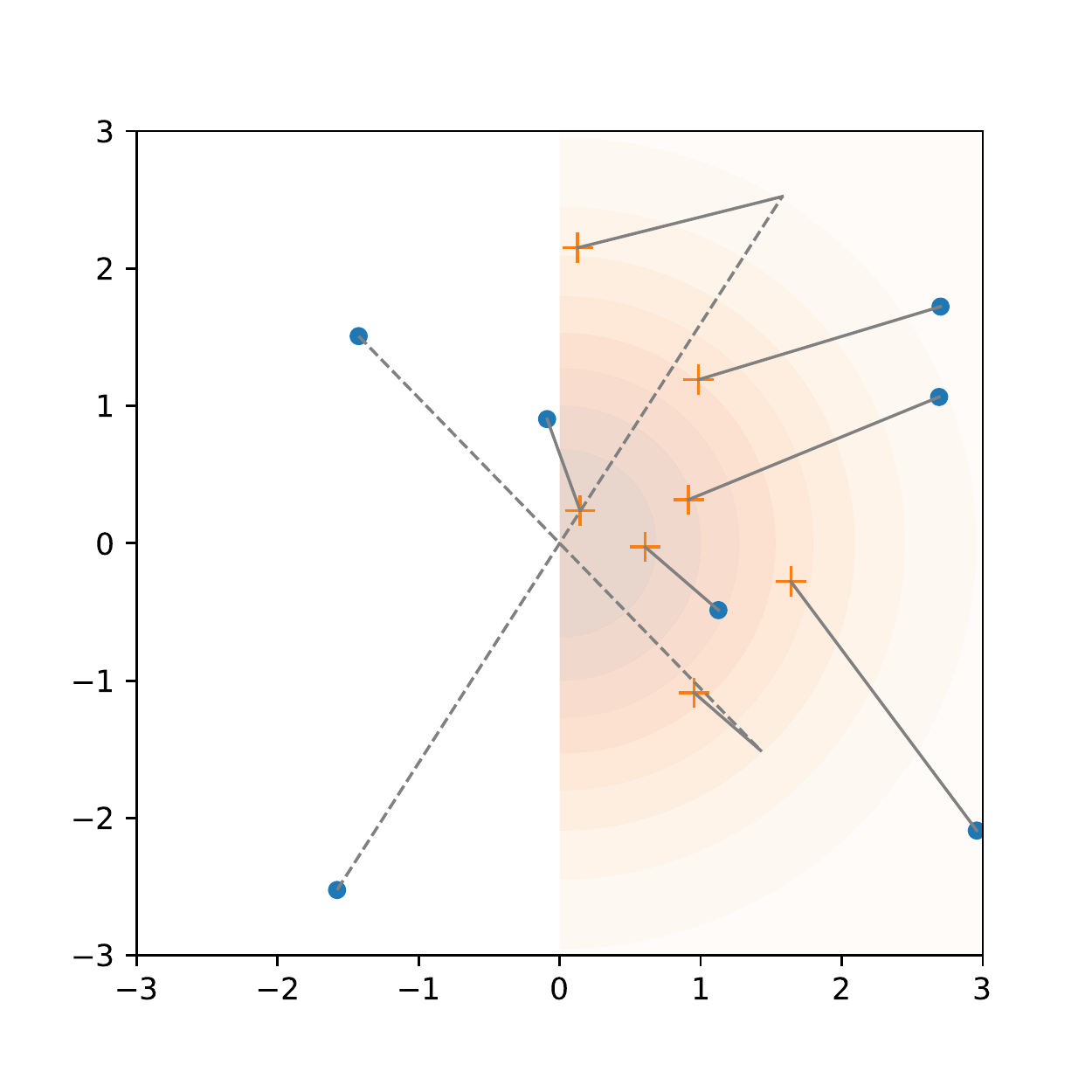}
     \end{subfigure}
     \hspace{0.2in}
     \hspace{-0.32in}
     \begin{subfigure}[b]{0.42\textwidth}
         \centering
         \includegraphics[width=\textwidth]{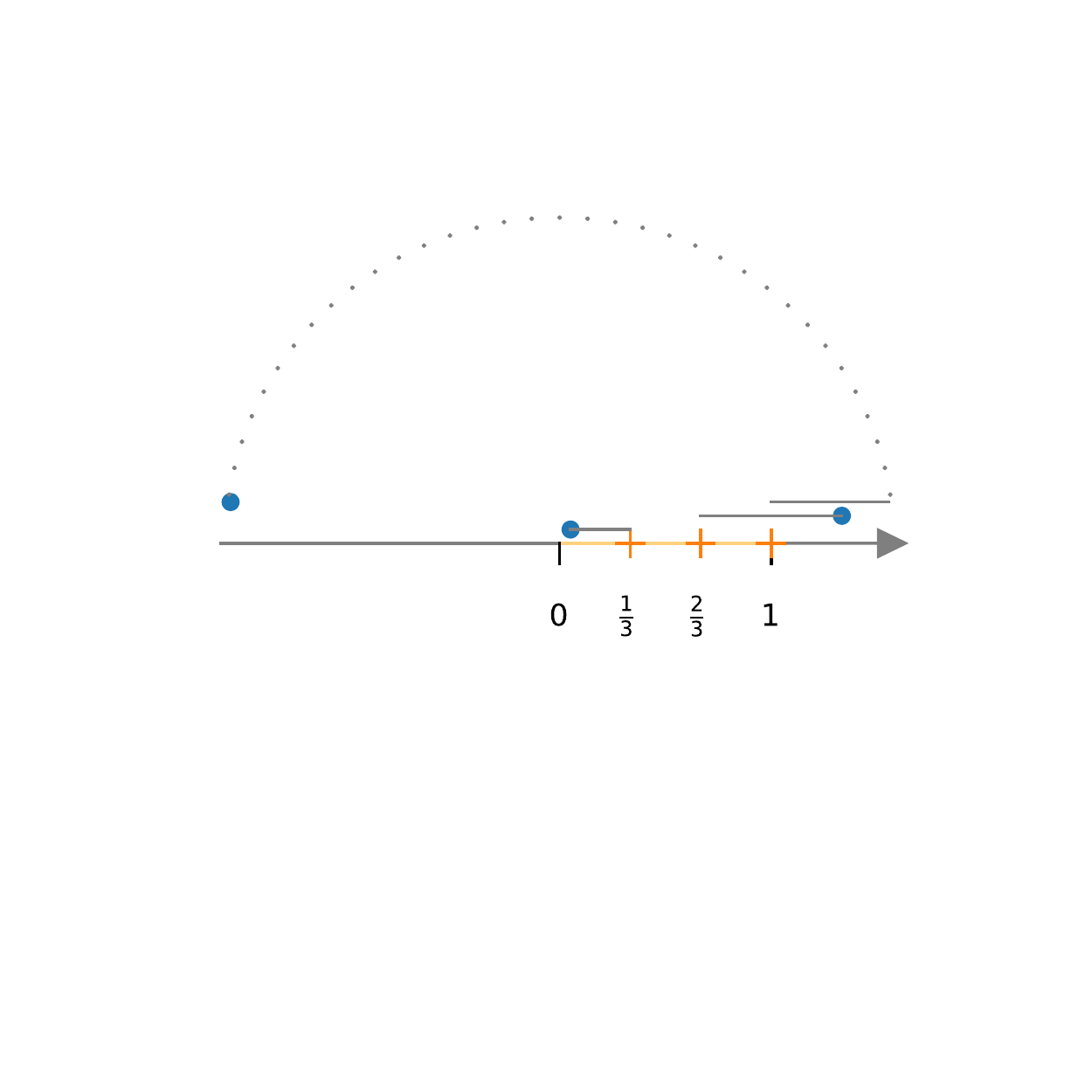}
     \end{subfigure}
        \caption{\small {\bf Left}: The data points $\mathbf{X}_1,\ldots,\mathbf{X}_n$ (marked by ``{\color{blue} $\bullet$}") with $n=7$ drawn from $N(\mathbf{0}_2,4I_2)$.
        The rank vectors $\mathbf{h}_1,\ldots,\mathbf{h}_n$ (marked by ``{\color{orange}+}") are taken as $n$ i.i.d.\ draws from the distribution of $(|Z_1|,Z_2)$, where $Z_1,Z_2$ are i.i.d.\ from $N(0,1)$.
        {\bf Right} (classical signs and ranks): $n=3$ data points (marked by ``{\color{blue} $\bullet$}") are drawn from $N(0,4)$. The ranks (marked by ``{\color{orange}+}") are $h_j = \frac{j}{n}$, $j=1,\ldots,n$.
        The squared length of each solid line is the cost of transporting each data point. The dashed (dotted) lines indicate that $\mathbf{X}_i \mapsto -\mathbf{X}_i$ (i.e., $S_n(\mathbf{X}_i) = -I$) before being transported to the $\mathbf{h}_j$'s.}
        \label{fig:ot_vis}
\end{figure}

Figure~\ref{fig:ot_vis} illustrates this for $p=1$ and $p=2$ with $\G = \{-I,I\}$. The left panel of Figure~\ref{fig:ot_vis} shows an example consisting of $n=7$ data points $\mathbf{X}_1,\ldots,\mathbf{X}_n$ (marked by ``{\color{blue} $\bullet$}") from $N(\mathbf{0}_2,4I_2)$ and their generalized ranks when $\G=\{I,-I\}$ (corresponding to central symmetry).
The rank vectors $\mathbf{h}_1,\ldots,\mathbf{h}_n$ (marked by ``{\color{orange}+}") are taken as $n$ i.i.d.\ draws from the distribution of $(|Z_1|,Z_2)$, where $Z_1,Z_2$ are i.i.d.\ from $ N(0,1)$.
The squared length of each solid line in Figure~\ref{fig:ot_vis} is the cost of transporting each data point, i.e., $c(\mathbf{X}_i,R_n(\mathbf{X}_i))  \equiv \|\mathbf{X}_i - S_n(\mathbf{X}_i) R_n(\mathbf{X}_i)\|^2$, for $i=1,\ldots, n$.
For some data points $\mathbf{X}_i$, $\mathbf{X}_i$ is first transported to $-\mathbf{X}_i$ without incurring any cost if $\min_{Q\in \{-I,I\}}\|Q^\top \mathbf{X}_i - R_n(\mathbf{X}_i)\|^2=\|-\mathbf{X}_i-R_n(\mathbf{X}_i)\|^2$ (as shown by the dashed line). 
In such a case, $S_n(\mathbf{X}_i)$, the sign of $\mathbf{X}_i$, is $-I$.
If directly mapping $\mathbf{X}_i$ to $R_n(\mathbf{X}_i)$ has a smaller squared distance compared to that between $-\mathbf{X}_i$ and $R_n(\mathbf{X}_i)$, then $S_n(\mathbf{X}_i)=I$.
When $p=1$ (see the right plot in Figure~\ref{fig:ot_vis}), similar observations can be made. It is just that in this case optimization problem~\eqref{eq:OTproblem} can be explicitly solved (as shown in~\cref{rem:p=1}).

The following proposition (proved in~\cref{pf:uniq_sample}) shows that under mild assumptions, the generalized ranks and signed-ranks are a.s.\ unique. Although the generalized signs may not be unique, they are a.s.\ unique when the group action of $\G$ is free.

\begin{prop}[Uniqueness of sample ranks, signs, and signed-ranks]\label{prop:sample_uniq}
Assume that ${\bf X}  \sim \mu \in \mathcal{P}_{\rm ac}(\mathbb{R}^p)$, and suppose that no two $\mathbf{h}_j$'s lie on a same orbit of $\G$.
\begin{enumerate}
    \item Then the multivariate rank $R_n(\cdot)$ and signed-rank $S_n(\cdot)R_n(\cdot)$ are a.s.\ unique.
    \item If $\mathbf{h}_1,\ldots,\mathbf{h}_n\in B\subset\R^p$, and $\G$ acts freely on $\G B$ (see \eqref{eq:defGB}), then
    $S_n(\cdot)$ is a.s.\ unique.
\end{enumerate}
\end{prop}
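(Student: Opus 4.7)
My plan is to establish the three uniqueness claims in sequence: first, the optimal permutation $\hat\sigma$ in~\eqref{eq:OTproblem} is $\mu^{\otimes n}$-a.s.\ unique, which gives uniqueness of $R_n$; second, for a.e.\ $\mathbf{X}_i$ the nearest point of the compact orbit $\G\mathbf{h}_{\hat\sigma^{-1}(i)}$ to $\mathbf{X}_i$ is unique, giving uniqueness of $S_n R_n$; and third, under the free-action hypothesis of Part 2, this nearest point determines a unique $Q\in\G$, giving uniqueness of $S_n$. The central analytical tool is the expansion $c(\mathbf{x},\mathbf{h})=\|\mathbf{x}\|^2+\|\mathbf{h}\|^2-2\phi_{\mathbf{h}}(\mathbf{x})$ with $\phi_{\mathbf{h}}(\mathbf{x}):=\max_{Q\in\G}\langle \mathbf{x},Q\mathbf{h}\rangle$, the support function of the orbit $\G\mathbf{h}$. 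Each $\phi_{\mathbf{h}}$ is convex and Lipschitz, hence differentiable Lebesgue-a.e.\ by Rademacher's theorem; at any point of differentiability the maximizer $\mathbf{y}^{*}\in\G\mathbf{h}$ is unique and $\nabla\phi_{\mathbf{h}}(\mathbf{x})=\mathbf{y}^{*}$.

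To prove a.s.\ uniqueness of $\hat\sigma$, I would fix two distinct permutations $\sigma_1,\sigma_2\in\mathcal{S}_n$ and pick an index $i$ with $j_1:=\sigma_1^{-1}(i)\neq\sigma_2^{-1}(i)=:j_2$. Conditioning on $(\mathbf{X}_k)_{k\neq i}$, the equality of total OT costs under $\sigma_1$ and $\sigma_2$ reduces to $g(\mathbf{X}_i)=d$ for the single-variable function $g(\mathbf{x}):=c(\mathbf{x},\mathbf{h}_{j_1})-c(\mathbf{x},\mathbf{h}_{j_2})$ and a constant $d$ depending on the conditioned values, so by Fubini it is enough to show that $\{\mathbf{x}\in\R^p:g(\mathbf{x})=d\}$ has Lebesgue measure zero. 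Wherever both $\phi_{\mathbf{h}_{j_1}}$ and $\phi_{\mathbf{h}_{j_2}}$ are differentiable, $\nabla g(\mathbf{x})=2(\mathbf{y}_2^{*}-\mathbf{y}_1^{*})$ with $\mathbf{y}_\ell^{*}\in\G\mathbf{h}_{j_\ell}$; since the hypothesis that no two $\mathbf{h}_j$'s share an orbit forces $\G\mathbf{h}_{j_1}\cap\G\mathbf{h}_{j_2}=\varnothing$, we get $\nabla g\neq\mathbf{0}$ Lebesgue-a.e. A Lebesgue density argument then concludes: at any density point $\mathbf{x}_0$ of $\{g=d\}$ at which $g$ is differentiable, sequences in $\{g=d\}$ approaching $\mathbf{x}_0$ from arbitrary directions (available by density $1$) together with Taylor expansion of $g$ force $\nabla g(\mathbf{x}_0)=\mathbf{0}$, contradicting the previous step. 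A union bound over the finitely many pairs $(\sigma_1,\sigma_2)\in\mathcal{S}_n\times\mathcal{S}_n$ completes this part.

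Uniqueness of the signed-rank $S_n(\mathbf{X}_i)R_n(\mathbf{X}_i)$ then reduces to a.s.\ uniqueness of the metric projection of $\mathbf{X}_i$ onto the compact orbit $\G\mathbf{h}_{\hat\sigma^{-1}(i)}$, which is a classical consequence of Rademacher applied to the $1$-Lipschitz distance function $\mathbf{x}\mapsto\op{dist}(\mathbf{x},\G\mathbf{h}_{\hat\sigma^{-1}(i)})$: the set of $\mathbf{x}$ with non-unique projection onto any closed subset of $\R^p$ is Lebesgue-null. Under the free-action hypothesis of Part 2, the orbit map $Q\mapsto Q\mathbf{h}_j$ is a bijection from $\G$ onto the orbit for each $\mathbf{h}_j\in B$, so once $S_n(\mathbf{X}_i)\mathbf{h}_{\hat\sigma^{-1}(i)}$ is uniquely determined, so is $S_n(\mathbf{X}_i)$. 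The main obstacle I anticipate is the Lebesgue density step: a generic difference of two support functions (a $DC$ function) can be constant on sets of positive Lebesgue measure, so the conclusion genuinely uses the orbit-disjointness hypothesis through the nowhere-vanishing gradient; this is precisely the place where the geometric assumption on $\{\mathbf{h}_j\}_{j=1}^n$ enters and translates into the probabilistic uniqueness claim.
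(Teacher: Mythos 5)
Your proposal is correct and follows essentially the same route as the paper's proof: since $c(\mathbf{x},\mathbf{h})=\|\mathbf{x}\|^2+\|\mathbf{h}\|^2-2\phi_{\mathbf{h}}(\mathbf{x})$, your cost difference $g$ is exactly the paper's difference of squared distances to the two disjoint compact orbits, and your a.e.-differentiability / nonvanishing-gradient / Lebesgue-density argument is precisely the content of \cref{lem:D_mea_zero}, while the signed-rank step is the same a.e.-unique-projection fact (\cref{lem:uni_proj}) combined with the finitely many candidate orbits. The only cosmetic deviation is Part 2, which you deduce from uniqueness of the projection plus injectivity of the orbit map under the free action, whereas the paper argues directly that two minimizers would force $\langle \mathbf{X}_1, Q_1\mathbf{h}-Q_2\mathbf{h}\rangle=0$ with $Q_1\mathbf{h}\neq Q_2\mathbf{h}$, a probability-zero event; both versions are valid.
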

A free group action (see Definition~\ref{defn:Free-Group-Action}) is easily available for central symmetry and sign symmetry by taking, for example, $ B=(0,\infty)\times\R^{p-1}$ and
$ B=(0,\infty)^{p}$ respectively. Thus, for these groups the generalized sign $S_n(\cdot)$ is unique; see \cref{fig:signed_rank_cen} (central symmetry) and \cref{fig:signed_rank_sign} (sign symmetry).

However, for spherical symmetry, we do not have a free group action when $p>1$:
In \cref{fig:signed_rank_sph} (spherical symmetry), $\G = {\rm O}(2)$ and the generalized sign is not unique --- rotation by 90 degrees and reflection along the line $y=x$ both map the rank vector to the signed-rank. However, the signed-rank is unique --- it is the point in $\{\mathbf{x}\in\R^2:\|\mathbf{x}\|=2\}$ (as the rank vector $\mathbf{h}$, indicated by  ``{\color{orange}+}" in \cref{fig:signed_rank_sph}, satisfies $\|\mathbf{h}\|=2$) that is closest to the data point.

\begin{remark}[When the minimizer of~\eqref{eq:OTproblem} is not unique] 
In cases where the minimizer of~\eqref{eq:OTproblem} is not unique, perform a random draw from all possible minimizers as follows:
If $\sigma_1,\ldots,\sigma_k\in \mathcal{S}_n$ all attain the minimal cost
$\sum_{i=1}^n c(\mathbf{X}_{\sigma(i)},\mathbf{h}_i)$, set $\hat{\sigma}$ to be $\sigma_j$ with probability $\mathbb{P}(\hat{\sigma}=\sigma_j)=\frac{1}{k}$, $j=1,\ldots,k$.
After determining $\hat{\sigma}$, find the minimizer $\hat{Q}_i:=\argmin_{Q\in\G} \|Q_i^\top \mathbf{X}_{\hat{\sigma}(i)}-\mathbf{h}_i\|^2$. If this minimizer is not unique, introduce another random variable/vector (independent of everything else) to choose $\hat{Q}_i$ from the uniform distribution over $\G$ conditioning on that it minimizes $\lVert Q_i^\top \mathbf{X}_{\hat{\sigma}(i)}-\mathbf{h}_i\rVert^2$; see \cref{subsec:consist_efficiency} on how to sample from this conditional distribution.
\end{remark}

\begin{figure}
     \centering
     \hspace{-0.32in}
     \begin{subfigure}[b]{0.35\textwidth}
         \centering
         \includegraphics[width=\textwidth]{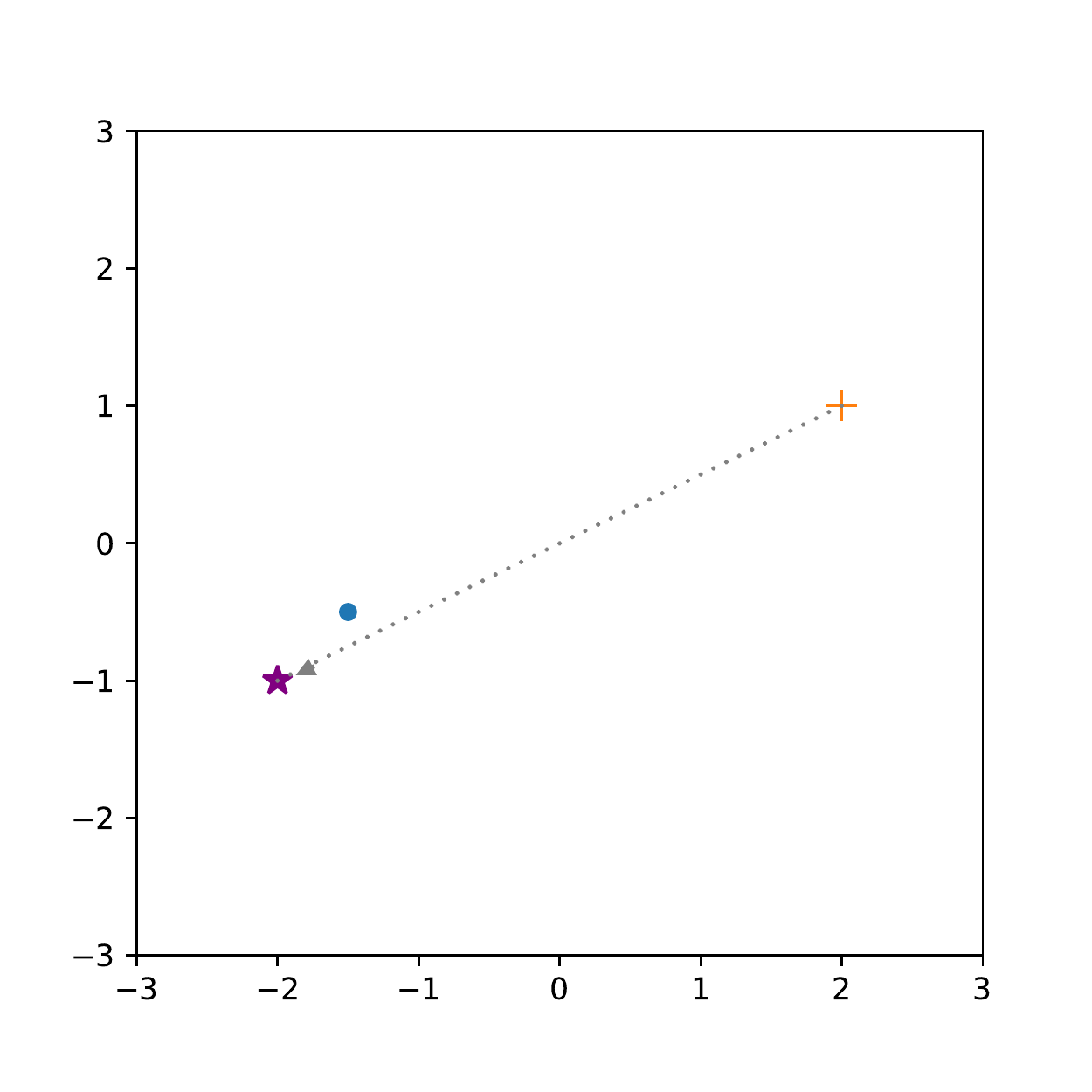}
         \captionsetup{justification=centering}
         \caption{$\G$: central symmetry.}
         \label{fig:signed_rank_cen}
     \end{subfigure}
     \hspace{-0.32in}
     \begin{subfigure}[b]{0.35\textwidth}
         \centering
         \includegraphics[width=\textwidth]{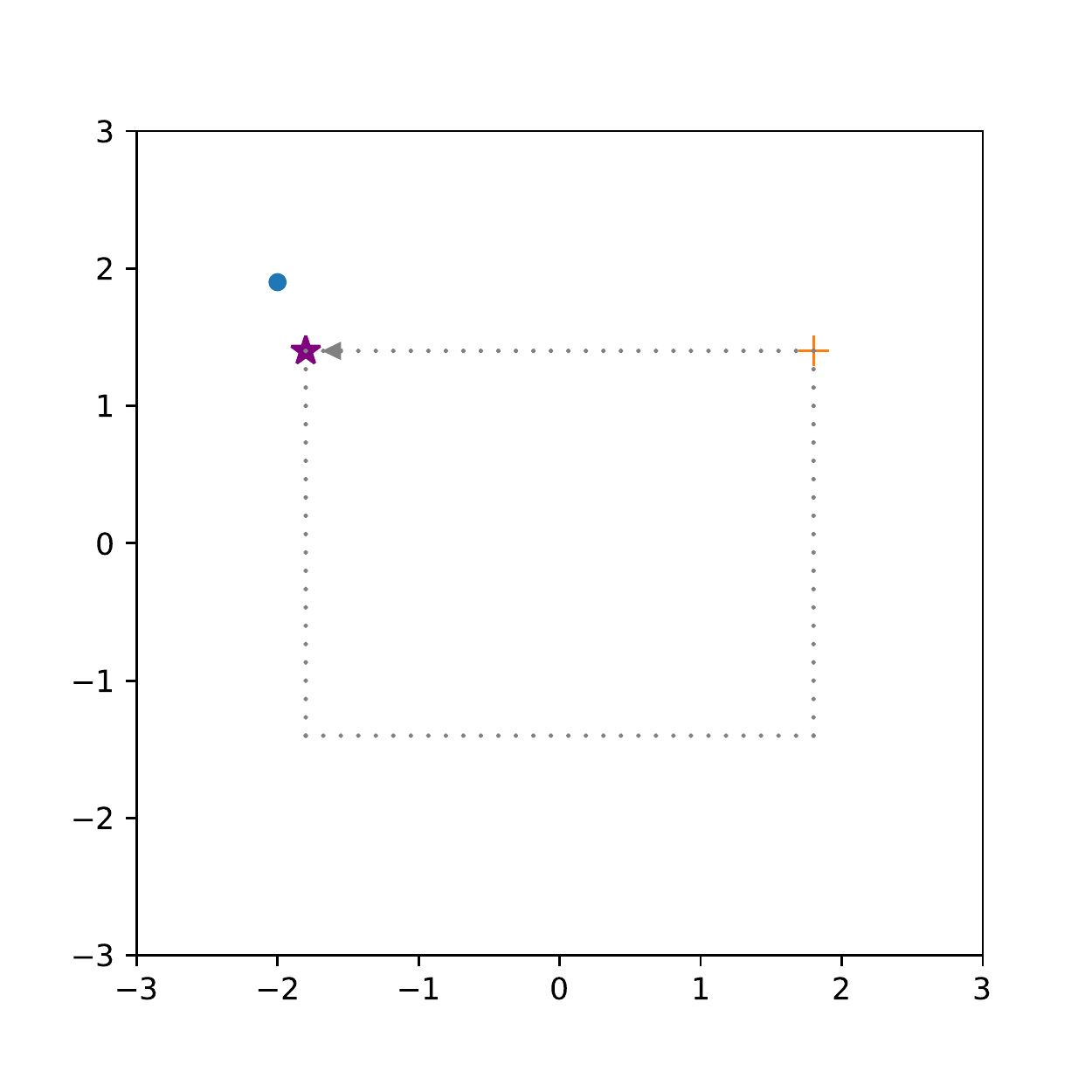}
         \captionsetup{justification=centering}
         \caption{$\G$: sign symmetry.}
         \label{fig:signed_rank_sign}
     \end{subfigure}
     \hspace{-0.32in}
     \begin{subfigure}[b]{0.35\textwidth}
         \centering
         \includegraphics[width=\textwidth]{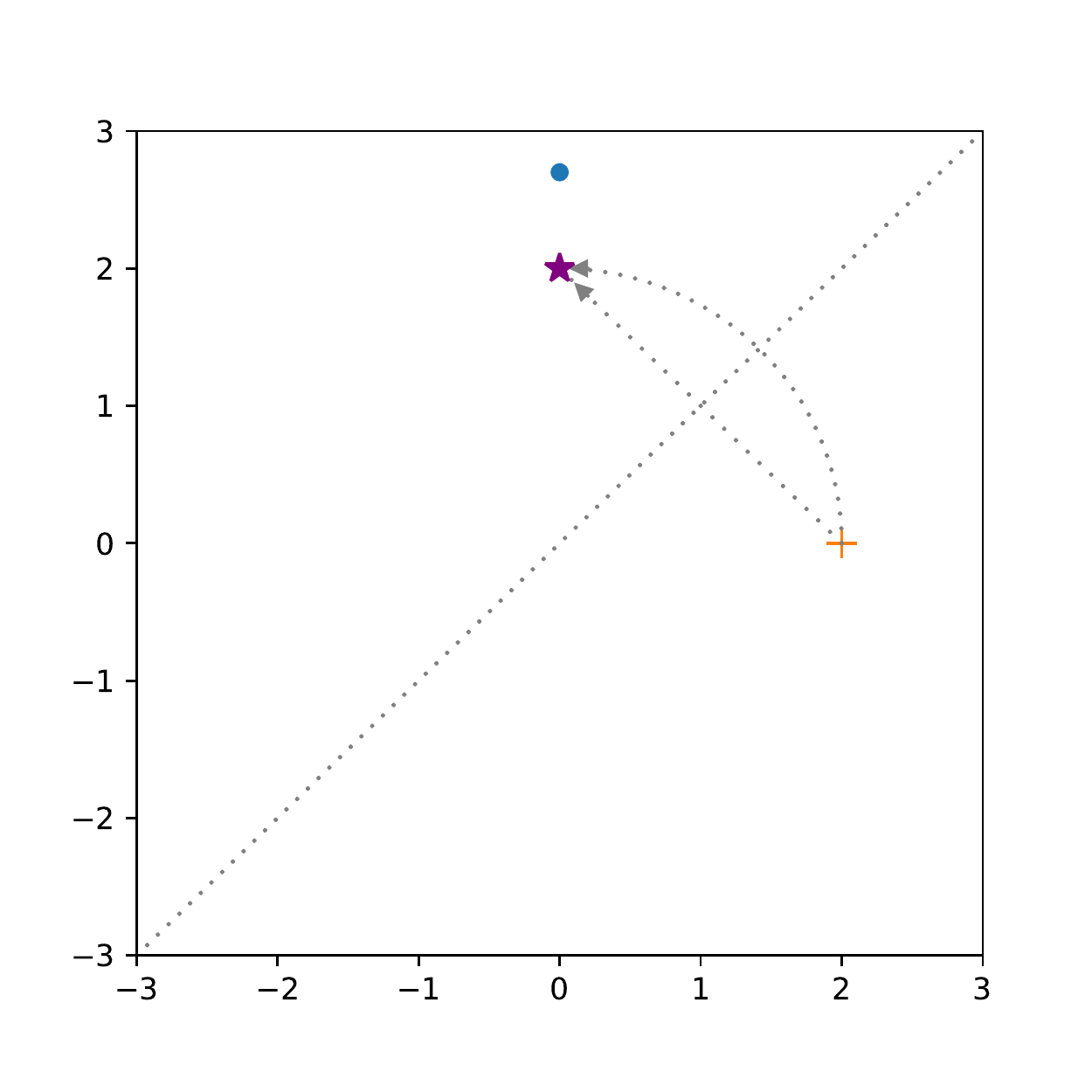}
         \captionsetup{justification=centering}
         \caption{$\G$: spherical symmetry.}
         \label{fig:signed_rank_sph}
     \end{subfigure}
        \caption{\small Illustration of a data point (marked by ``{\color{blue} $\bullet$}"), its rank vector (marked by ``{\color{orange}+}"), and its signed-rank (marked by ``{\color{purple}$\star$}"). The signed-rank is the unique point in the orbit of the rank vector that is closest to the data point.
        For (c), where $\G$ corresponds to spherical symmetry, the sign is not unique as rotation by 90 degrees and reflection along the line $y=x$ both map the rank vector to the signed-rank.}
        \label{fig:signed_rank}
\end{figure}


The following proposition (see \cref{sec:pf_properties_of_generalized_sign_rank} for a proof) shows that the generalized signs and signed-ranks, possibly chosen via randomization as described in Remark~\ref{eq:OTproblem} if they are not unique, extend naturally the properties of their classical counterparts (when $p=1$). In particular, under the null hypothesis of $\G$-symmetry, the generalized signs and generalized ranks are independent, and any statistic based on them is exactly distribution-free in finite samples.
\begin{prop}\label{prop:properties_of_generalized_sign_rank}
Consider the generalized signs, ranks and signed-ranks as defined in~\eqref{eq:Sign-Rank}. If they are not unique (cf.~Proposition~\ref{prop:sample_uniq}), we assume that they are chosen via randomization as described in Remark~\ref{eq:OTproblem}. Then, the following properties hold:
\begin{enumerate}
\item The generalized rank $(R_n(\mathbf{X}_1),\ldots,R_n(\mathbf{X}_n))$ is independent of any order statistics\footnote{An {\it order statistic} is an un-ordered version of the data $\{\mathbf{X}_1,\ldots,\mathbf{X}_n\}$ \citep{hallin2017distribution}.} of $\{\mathbf{X}_1,\ldots,\mathbf{X}_n\}$, and is uniformly distributed over the set of all $n!$ permutations of $(\mathbf{h}_1,\ldots,\mathbf{h}_n)$.
\item $(R_n(\mathbf{X}_1),\ldots,R_n(\mathbf{X}_n))$ and the generalized signs $(S_n(\mathbf{X}_1),\ldots,S_n(\mathbf{X}_n))$ are independent under ${\rm H}_0:\mathbf{X}\overset{d}{=}Q\mathbf{X}$, for all $Q\, \in \G$.
\item $S_n(\mathbf{X}_1),\ldots,S_n(\mathbf{X}_n)$ are i.i.d.\ following the uniform distribution over $\G$, under ${\rm H}_0$.
\end{enumerate}
\end{prop}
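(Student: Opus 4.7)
I would establish statement (1) separately from (2) and (3). Statement (1) is purely a consequence of the exchangeability of the i.i.d.\ sample $\mathbf{X}_1,\ldots,\mathbf{X}_n$. Observe that the OT problem \eqref{eq:OTproblem} depends on the data only through the unordered set $\{\mathbf{X}_1,\ldots,\mathbf{X}_n\}$: if we relabel the data by a permutation $\pi\in \mathcal{S}_n$, then the minimizing assignment transforms to $\pi^{-1}\circ\hat{\sigma}$ while the $\hat{Q}_i$'s are unchanged, so the rank attached to $\mathbf{X}_{\pi(i)}$ from the permuted problem coincides with $R_n(\mathbf{X}_{\pi(i)})$ from the original problem. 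Conditioning on the order statistic and invoking that the conditional distribution of the labels is uniform over $\mathcal{S}_n$ for exchangeable data then gives (1): the rank vector is conditionally (and hence unconditionally) uniform over the $n!$ permutations of $(\mathbf{h}_1,\ldots,\mathbf{h}_n)$ and independent of the order statistic.

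\textbf{Statements (2) and (3) via a Haar coupling.} Let $Q_1,\ldots,Q_n$ be i.i.d.\ uniform (Haar) on $\G$, independent of everything else. Under ${\rm H}_0$, $(Q_1\mathbf{X}_1,\ldots,Q_n\mathbf{X}_n)\stackrel{d}{=}(\mathbf{X}_1,\ldots,\mathbf{X}_n)$. The core step is the equivariance identity
\[
S_n(Q_i\mathbf{X}_i)=Q_i\,S_n(\mathbf{X}_i), \qquad R_n(Q_i\mathbf{X}_i)=R_n(\mathbf{X}_i),\qquad i=1,\ldots,n,
\]
obtained by substituting $Q'_i=Q_{\sigma(i)}\hat{Q}_i\in \G$ in \eqref{eq:OTproblem}: the objective is invariant, the optimal $\hat{\sigma}$ is preserved, and each sign matrix is simply left-multiplied by the corresponding $Q_i$. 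When the optimizer is not unique, one couples the randomization schemes so that the uniform draw on the minimizer set for $(\mathbf{X}_i)$ is transported via left-multiplication by $Q_i$ to the uniform draw on the minimizer set for $(Q_i\mathbf{X}_i)$; this is valid because left-multiplication by $Q_i$ is a measure-preserving bijection on $\G$ under its Haar measure. Combining the equivariance identity with the above distributional equality yields
\[
\big(Q_i\,S_n(\mathbf{X}_i),\,R_n(\mathbf{X}_i)\big)_{i=1}^n \;\stackrel{d}{=}\; \big(S_n(\mathbf{X}_i),\,R_n(\mathbf{X}_i)\big)_{i=1}^n.
\]

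\textbf{Conclusion and main obstacle.} Conditioning on the rank vector $(R_n(\mathbf{X}_i))_{i=1}^n$ in the above display, left-invariance of the Haar measure implies that multiplying each $S_n(\mathbf{X}_i)$ by the independent uniform $Q_i$ produces an element of $\G$ that is itself uniform on $\G$ and independent of $S_n(\mathbf{X}_i)$. Hence the conditional law of $(S_n(\mathbf{X}_i))_{i=1}^n$ given the ranks must equal the $n$-fold product of the Haar measure on $\G$, which simultaneously yields (2) independence of signs and ranks, and (3) that the signs are i.i.d.\ uniform on $\G$. I expect the main obstacle to lie in the equivariance step when minimizers are non-unique: one must rigorously specify the joint randomization so that left-multiplication by $Q_i$ commutes with the uniform sampling on the (possibly multi-element) set of minimizing $(\hat{\sigma},\{\hat{Q}_i\})$, a step requiring measurability checks that lean on the compactness of $\G$.
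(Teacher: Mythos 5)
Your proposal is correct, and for parts (2)–(3) it takes a genuinely different route from the paper's. Part (1) is the same exchangeability argument the paper uses. For the signs, the paper conditions on the $\sigma$-field $\mathcal{F}$ generated by the orbits $\G\mathbf{X}_1,\ldots,\G\mathbf{X}_n$ (and the tie-breaking randomness), uses that under ${\rm H}_0$ each $\mathbf{X}_{\hat\sigma(i)}$ is conditionally uniform on its orbit (justified via \cref{lem:abs_cont}: the density is constant on orbits), and shows $Q\hat{Q}_i\overset{d}{=}\hat{Q}_i$ given $\mathcal{F}$ for every fixed $Q\in\G$, which forces the conditional law to be Haar and independent of $\mathcal{F}$, hence of $\hat\sigma$ and the ranks. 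You instead inject external i.i.d.\ Haar multipliers $Q_i$, use $(Q_i\mathbf{X}_i)_i\overset{d}{=}(\mathbf{X}_i)_i$ under ${\rm H}_0$ together with the equivariance $S_n(Q_i\mathbf{X}_i)=Q_iS_n(\mathbf{X}_i)$, $R_n(Q_i\mathbf{X}_i)=R_n(\mathbf{X}_i)$, and read off the product-Haar conditional law of the signs given the ranks from the convolution identity (if $Q$ is Haar and independent of $S$, then $QS$ is Haar and independent of $S$; bi-invariance on the compact group makes the left/right distinction immaterial). Your randomization coupling does go through: the cost matrix, hence the set of optimal $\sigma$'s, is unchanged, and the sign-minimizer sets are left-translated by $Q_{\hat\sigma(i)}$, so both the coupled $\hat\sigma$ and signs have the prescribed conditional law given the transformed sample, since those minimizer sets are functions of that sample alone. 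The paper's conditioning buys a slightly stronger conclusion (signs independent of the entire orbit $\sigma$-field, not only the rank vector); your route avoids formalizing the disintegration ``given the orbit, the point is uniform on the orbit,'' obtaining the i.i.d.\ Haar law in one stroke at the cost of the coupling bookkeeping.
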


\begin{remark}[Unsigned multivariate rank]
If $\G=\{I\}$, then $S_n(\mathbf{X}_i) = I$ and $R_n(\mathbf{X}_i)$ reduces to the (unsigned) multivariate rank defined in \citet{Nabarun2021rank},
and includes the center-outward ranks proposed in \citet{hallin2017distribution}.
\end{remark}

\begin{remark}[Component sign]\label{rk:comp_sign}
For sign-symmetry, when $\nu_n$ is concentrated on $(0,\infty)^p$, $S_n(\mathbf{X}_i)$ is a diagonal matrix with diagonal entries being exactly equal to the component signs of $\mathbf{X}_i$ \citep{Bennett1962sign,Bickel1965competitors,Chatterjee1966bivariate,Puri-Sen-1971}.
\end{remark}

\subsection{Computational Complexity}\label{sec:comp_efficiency}
The optimization problem \eqref{eq:OTproblem} can be solved efficiently.
Recall that $c(\mathbf{x},\mathbf{h})=\min_{Q\in \G}\|Q^\top \mathbf{x}-\mathbf{h}\|^2$ is the cost function (see \eqref{eq:cost}), which can be computed in $O(1)$ time if $\G$ is finite. Then
the ranks $R_n(\mathbf{X}_i)$ can be found by solving the assignment problem \citep{Munkres1957assign,Bertsekas1988auction} of $\mathbf{X}_1,\ldots,\mathbf{X}_n$ to $\mathbf{h}_1,\ldots,\mathbf{h}_n$ under the cost $c(\cdot,\cdot)$,
for which algorithms with worst case complexity $O(n^3)$ are available \citep{Jonker1987algorithm}.
After obtaining the ranks, $S_n(\mathbf{X}_i) = \argmin_{Q\in \G}\|Q^\top \mathbf{X}_i-R_n(\mathbf{X}_i)\|^2$ can be obtained (see the discussion in~\cref{subsec:consist_efficiency} and~\cref{rem:Sp-Sym} on how to obtain this minimizer for the three main examples of symmetry considered in this paper).

\begin{remark}[Fast computation for spherical symmetry]
When $\G={\rm O}(p)$, the computation of the ranks (and the signed-ranks) is even faster (i.e., in $O(n\log n)$ time) since $c(\mathbf{x},\mathbf{h})=\min_{Q\in \G}\|Q^\top \mathbf{x}-\mathbf{h}\|^2 =\min_{Q\in \G}\left\{ \|\mathbf{x}\|^2 - 2 \mathbf{x}^\top Q\mathbf{h} + \|\mathbf{h}\|^2 \right\} =   \big(\|\mathbf{x}\| - \|\mathbf{h}\|\big)^2$, where the last step follows from the fact that there exists $Q\in \G$ that maps $\mathbf{h}$ to the direction of $\mathbf{x}$, e.g., a Householder reflection \citep{Householder1958}. Thus, if $\mathbf{X}_i$ has the $j$-th largest Euclidean norm among $\mathbf{X}_1,\ldots,\mathbf{X}_n$ and $\|\mathbf{h}_1\| < \ldots<\|\mathbf{h}_n\|$, then $\mathbf{X}_i$ will have $\mathbf{h}_j$ as its rank.
The signed-rank of $\mathbf{X}_i$ is simply the vector in the direction of $\mathbf{X}_i$ with length $\|R_n(\mathbf{X}_i)\|$, i.e., $S_n(\mathbf{X}_i) R_n(\mathbf{X}_i) = \|R_n(\mathbf{X}_i)\| \frac{\mathbf{X}_i}{\|\mathbf{X}_i\|}$.

\end{remark}

\subsection{The Population Analogues}\label{sec:popu_signs_ranks}
So far, we have defined the sample version of the multivariate generalized signs and ranks, $S_n(\cdot)$ and $R_n(\cdot)$, via the optimization problem \eqref{eq:OTproblem}.
We will now show that, under finite second order moment assumptions on $\mu$ and $\nu$, the sample rank map $R_n(\cdot)$ converges to a population rank map $R(\cdot)$ such that $(\mathbf{X},R(\mathbf{X}))$ solves the following OT problem (aka Kantorovich relaxation \cite{villani2009OT}):
\begin{equation}\label{eq:popu_ot_problem}
\inf_{(\mathbf{X},\mathbf{H}):(\mathbf{X},\mathbf{H})\in \Pi(\mu,\nu)} \mathbb{E} \left[c(\mathbf{X},\mathbf{H})\right],\quad {\rm with}\ c(\mathbf{x},\mathbf{h})=\min_{Q\in \G}\|Q^\top \mathbf{x}-\mathbf{h}\|^2.
\end{equation}
Here, $\Pi(\mu,\nu)$ consists of all joint Borel distributions $(\mathbf{X},\mathbf{H})$ with marginals $\mathbf{X}\sim \mu$ and $\mathbf{H}\sim \nu$. It can be seen that the optimization problem \eqref{eq:popu_ot_problem} is a natural population analogue of \eqref{eq:OTproblem}.

To establish the convergence of $S_n(\cdot)$ and $R_n(\cdot)$, we need the reference distribution $\nu$ and the group $\G$ to satisfy certain natural compatibility assumptions. In particular, we assume that: 

\begin{assumption}\label{assump:nu-G}
There exists a Borel set $B\subset\R^p$ with $\nu(B)= 1$ such that for any $\mathbf{x}\in\mathbb{R}^p$, the orbit $\{Q\mathbf{x}: Q\in \G\}$ of $\mathbf{x}$ intersects $B$ at one point at most.
\end{assumption}
This assumption requires that only one point at most should be taken as a rank vector (which belongs to $B$) from any orbit of $\G$.
This is because transporting $\mathbf{x}$ to any point in an orbit of $\G$ has the same cost due to the form of the cost function $c(\cdot,\cdot)$ (see \eqref{eq:cost}),
and there is no need to include two representative points from an orbit, which may break the uniqueness of the population OT map.\footnote{Let $\mu = \nu = N(\mathbf{0},I)$. Then $R(\mathbf{x})= \mathbf{x}$ and $R(\mathbf{x})=- \mathbf{x}$ are both OT maps with 0 loss (see \eqref{eq:popu_ot_problem}), provided $-I\in \G$.}

To formally state our result, we introduce some notation.

\begin{definition}[Quotient map $q$]
Under \cref{assump:nu-G} we define the quotient map $q:\G B\to B$ by $q(Q\mathbf{x})=\mathbf{x}$ for $\mathbf{x}\in B$ and $Q\in \G$.
Observe that \cref{assump:nu-G} ensures that $q$ is well-defined.
\end{definition}
Note that the above definition is slightly different from the usual way a quotient map is defined, e.g., as a map that takes a point to an equivalence class (i.e., an orbit here); see e.g., \citet[Chapter 2.22]{Munkres2000topology}. Instead, our quotient map $q$ explicitly maps all points in the orbit $\{Q\mathbf{x}: Q\in \G\}$ to a representative point $\mathbf{x}$ in $ B$.

\begin{figure}
     \centering
     \begin{subfigure}[b]{0.32\textwidth}
         \centering
         \includegraphics[width=\textwidth]{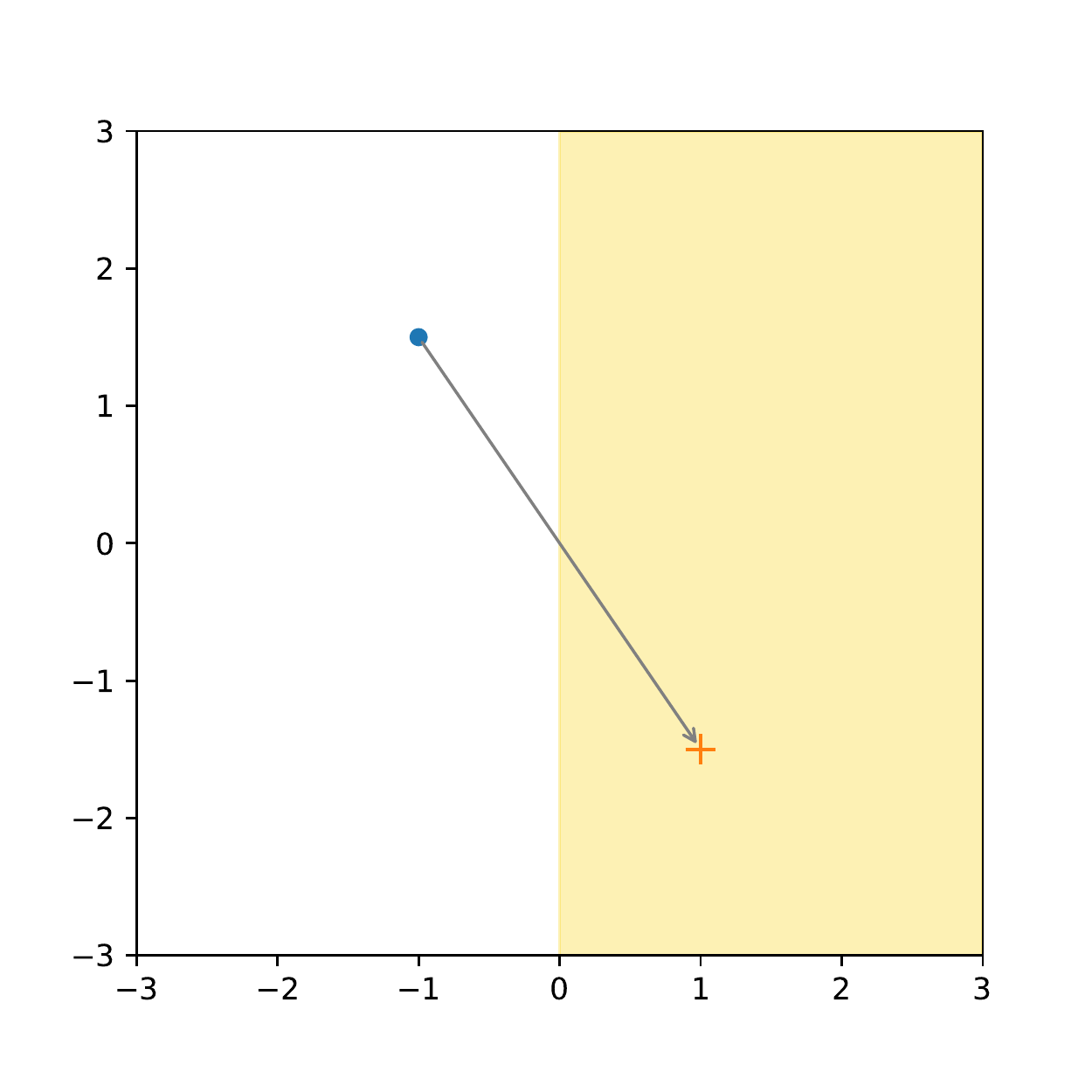}
         \captionsetup{justification=centering}
         \caption{$\G$: central symmetry.}
         \label{fig:q_cen}
     \end{subfigure}
     \begin{subfigure}[b]{0.32\textwidth}
         \centering
         \includegraphics[width=\textwidth]{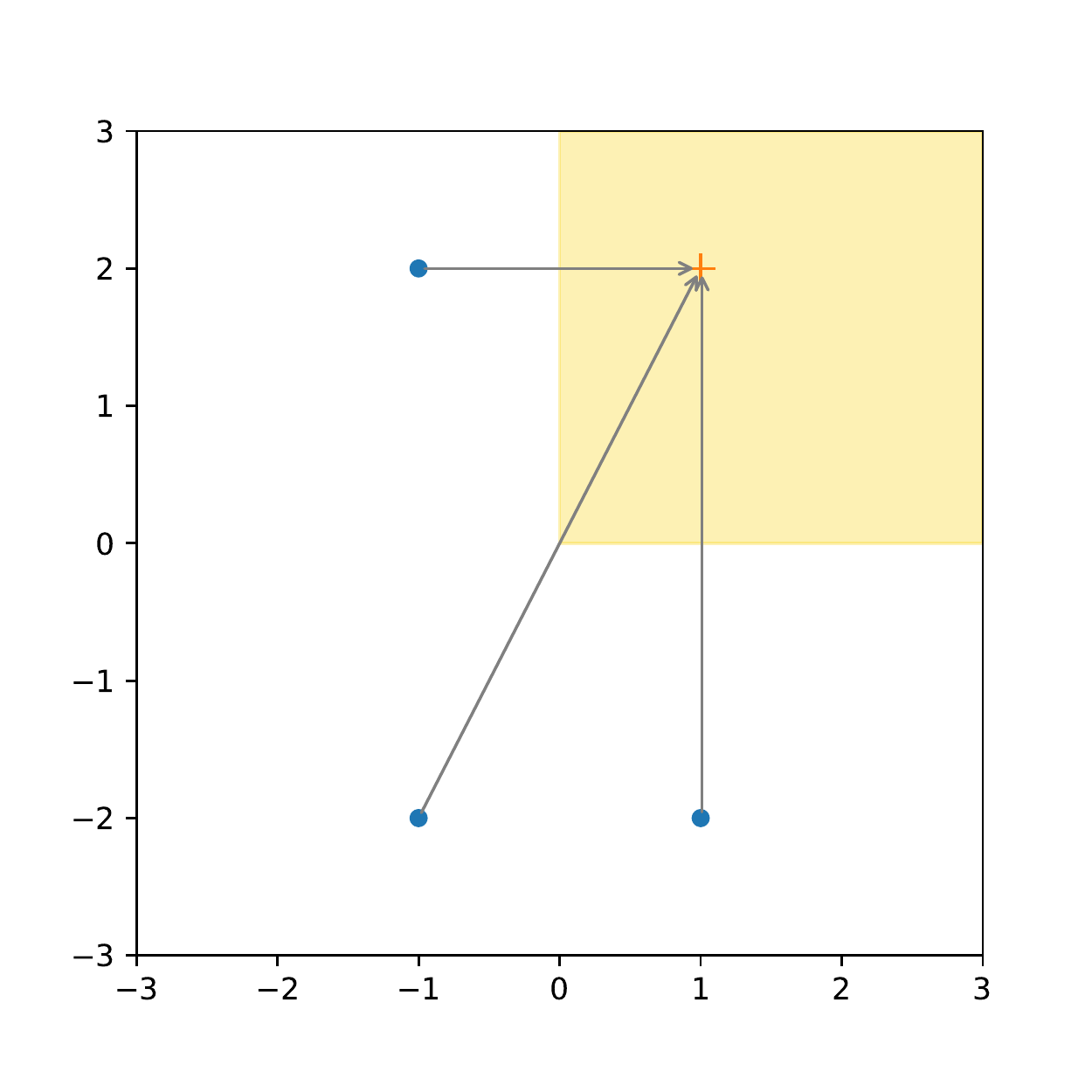}
         \captionsetup{justification=centering}
         \caption{$\G$: sign symmetry.}
         \label{fig:q_sign}
     \end{subfigure}
     \begin{subfigure}[b]{0.32\textwidth}
         \centering
         \includegraphics[width=\textwidth]{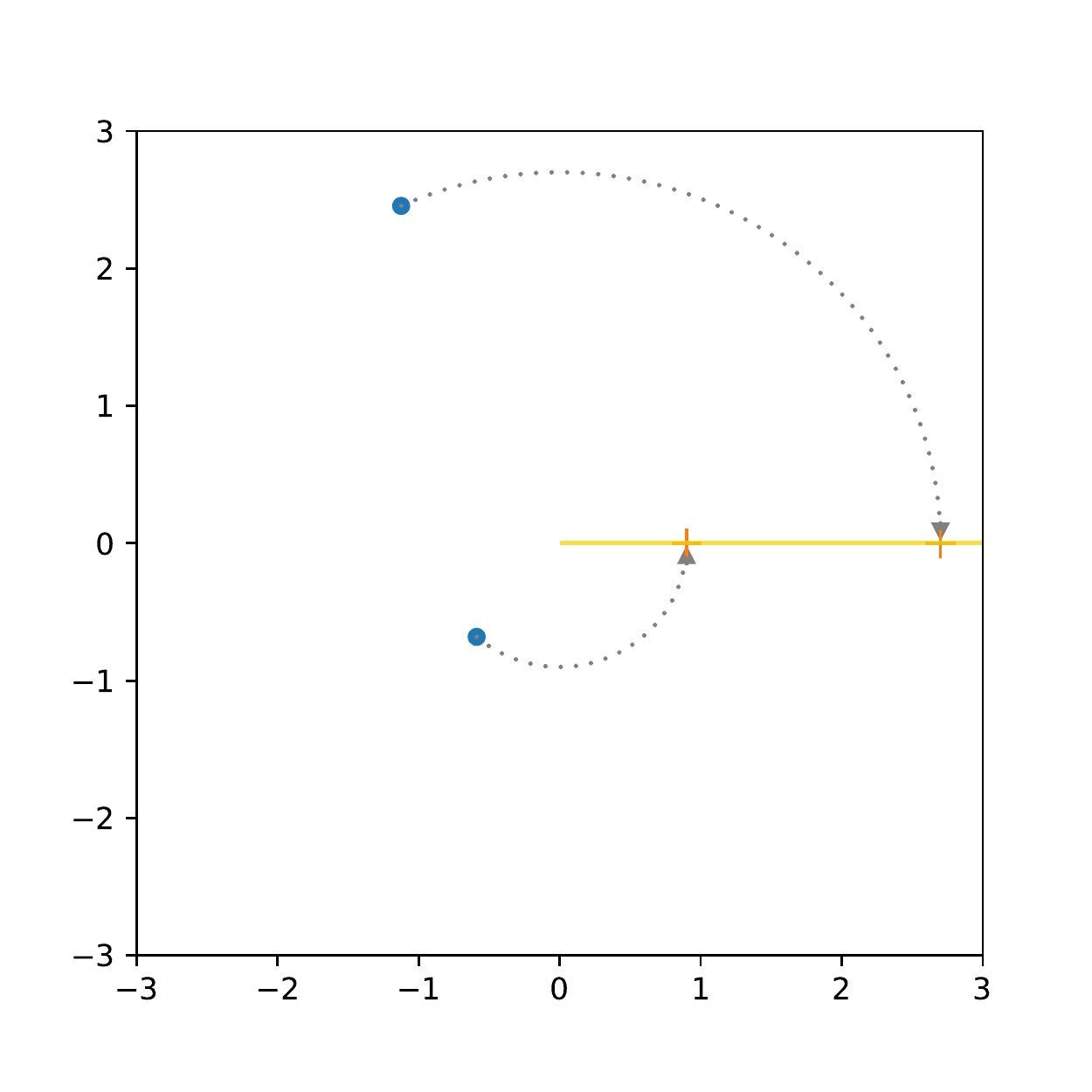}
         \captionsetup{justification=centering}
         \caption{$\G$: spherical symmetry.}
         \label{fig:q_sph}
     \end{subfigure}
        \caption{Illustrations of the quotient map $q(\cdot)$ for different groups $\G$, when $p=2$. Here, each blue point (marked by ``{\color{blue} $\bullet$}") is mapped to a representative point in its orbit (marked by ``{\color{orange}+}") that lies in the set $B$ (shaded in yellow).
        }
        \label{fig:quotient_maps}
\end{figure}

\cref{fig:quotient_maps} shows examples of quotient maps. In \cref{fig:q_cen}, $B=(0,\infty)\times \R$ and $\G = \{-I,I\}$. Here $q(\mathbf{x})=-\mathbf{x}$ if $x_1 < 0$ and $q(\mathbf{x})=\mathbf{x}$ if $x_1 > 0$, where $\mathbf{x}=(x_1,x_2)$.
\cref{fig:q_sign} shows the actions of the quotient map $q$ on the three points $(-1,2)$, $(-1,-2)$ and $(1,-2)$, for $\G$ being the group corresponding to sign symmetry with $ B=(0,\infty)^2$.
All these three points are in the same orbit of $\G$ and are mapped to the same point $(1,2)$.
In this case, $q(\mathbf{x})=(|x_1|,|x_2|)$.
\cref{fig:q_sph} shows the quotient map for $\G={\rm O}(2)$, corresponding to spherical symmetry, with $ B=(0,\infty)\times\{0\}$.
In this case, the quotient map $q$ can be written as $q(\mathbf{x})=(\|\mathbf{x}\|,0)$.


The following result (proved in \cref{sec:pf_consistency}) states the existence and uniqueness of the population rank map $R:\R^p\to\R^p$ such that $({\rm identity}, R)\#\mu$ (i.e., the distribution of $(\mathbf{X},R(\mathbf{X}))$ where $\mathbf{X} \sim \mu$) solves \eqref{eq:popu_ot_problem}, if the optimal cost is finite. Even if the optimal cost is infinite, $R(\cdot)$ can be uniquely characterized using the notion of $c$-cyclical monotonicity (see \cref{def:c-cyc}).

\begin{theorem}[Population rank and signed-rank maps]\label{thm:population_rank}
Suppose \cref{assump:nu-G} holds. Denote the distribution of $\mathbf{X}$ by $\mu\in \mathcal{P}_{\rm ac}(\mathbb{R}^p)$. 
Then there exists a $\mu$-a.e.\ unique Borel measurable map $R:\R^p\to\R^p$ such that $(\mathbf{X},R(\mathbf{X}))$ has the unique distribution in $\Pi(\mu,\nu)$ with a $c$-cyclically monotone support. Moreover, $R$ has the following properties:
\begin{enumerate}
\item[(i)]
Let $q$ be the quotient map from $\G B$ to $ B$. Then, there exists a lower semicontinuous convex function $\psi:\mathbb{R}^p\to (-\infty,+\infty]$ such that
$$R(\mathbf{x}) = q(\nabla \psi(\mathbf{x})) \quad (\mu\mbox{-a.e.}~\mathbf{x}).$$
\item[(ii)] Let $S\sim {\rm Uniform}(\G)$ be independent of $\mathbf{X}\sim \mu$ and $\mathbf{H}\sim\nu$. Let $\mu_S$ and $\nu_S$ denote the distributions of $S\mathbf{X}$ and $S\mathbf{H}$, respectively. Then
$\nabla \psi(\cdot)$ is the $\mu_S$-a.e.\ unique gradient of convex function that pushes $\mu_S$ to $\nu_S$.
\item[(iii)] $\nabla \psi$ is equivariant under the group action of $\G$, i.e.,
$\nabla \psi(Q\mathbf{x}) = Q\nabla \psi(\mathbf{x})$ for $Q\in \G$ and $\mathbf{x} \in \R^p$ being a differentiable point of $\psi$.
\item[(iv)] $\nabla \psi(\mathbf{X}) \overset{a.s.}{=} S(\mathbf{X},R( \mathbf{X})) R(\mathbf{X})$ with $S(\mathbf{x},\mathbf{h}) := \argmin_{Q\in \G} \|Q^\top\mathbf{x} - \mathbf{h}\|^2$. Hence, $\nabla \psi(\cdot)$ can be viewed as the population signed-rank map.
\item[(v)] If $\mu$ and $\nu$ have finite second moments, then
$({\rm identity}, R)\#\mu$ is the unique solution to the population OT problem \eqref{eq:popu_ot_problem}.
\end{enumerate}
\end{theorem}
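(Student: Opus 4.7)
The plan is to reduce the problem to a classical quadratic-cost optimal transport via $\G$-symmetrization. Let $S\sim \mathrm{Uniform}(\G)$ be independent of $\mathbf{X}\sim \mu$ and $\mathbf{H}\sim \nu$, and write $\mu_S,\nu_S$ for the laws of $S\mathbf{X}, S\mathbf{H}$; both are $\G$-invariant, and $\mu_S$ inherits a Lebesgue density from $\mu$ since $\G\subset \mathrm{O}(p)$. McCann's theorem (which requires no moment assumption) yields a $\mu_S$-a.e.\ unique lower semicontinuous convex $\psi:\R^p\to (-\infty,+\infty]$ whose gradient $\nabla\psi$ pushes $\mu_S$ to $\nu_S$ with squared-cost cyclically monotone graph, establishing item (ii). For (iii), fix $Q\in \G$ and note that $\mathbf{x}\mapsto \psi(Q\mathbf{x})$ is another lower semicontinuous convex function whose gradient $Q^\top \nabla\psi(Q\mathbf{x})$ also pushes $\mu_S$ to $\nu_S$ by $\G$-invariance of both measures; McCann's uniqueness forces $Q^\top \nabla\psi(Q\mathbf{x})=\nabla\psi(\mathbf{x})$ almost everywhere, which is exactly equivariance.

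Next I would establish the ``nearest-point-in-orbit'' property $\|\mathbf{x}-\nabla\psi(\mathbf{x})\|^2 \le \|\mathbf{x}-Q\nabla\psi(\mathbf{x})\|^2$ for every $Q\in \G$: apply squared-cost cyclical monotonicity of $\nabla\psi$ along the sequence $\mathbf{x},Q^{-1}\mathbf{x},\ldots,Q^{-(N-1)}\mathbf{x}$ when $Q\in \G$ has finite order $N$; equivariance collapses both sides of the inequality, and density of finite-order elements in the compact group $\G$ extends the bound to all $Q\in \G$. Setting $R(\mathbf{x}) := q(\nabla\psi(\mathbf{x}))$, we see that $\nabla\psi(\mathbf{x})$ lies in the orbit of $R(\mathbf{x})$ and is the closest element there to $\mathbf{x}$, so $\nabla\psi(\mathbf{x}) = S(\mathbf{x},R(\mathbf{x}))R(\mathbf{x})$ with $S(\mathbf{x},\mathbf{h}) := \argmin_{Q\in \G}\|Q^\top \mathbf{x}-\mathbf{h}\|^2$, which is (iv), together with the key identity $c(\mathbf{x},R(\mathbf{x})) = \|\mathbf{x}-\nabla\psi(\mathbf{x})\|^2$ holding $\mu$-a.e. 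For the push-forward, $R=q\circ \nabla\psi$ is $\G$-invariant (as $q$ is and $\nabla\psi$ is equivariant), so $R(\mathbf{X}) = R(S\mathbf{X}) = q(\nabla\psi(S\mathbf{X}))$; since $S\mathbf{X}\sim \mu_S$, $\nabla\psi\#\mu_S = \nu_S$, and $q\#\nu_S=\nu$, I conclude $R\#\mu = \nu$ and $(\mathbf{X},R(\mathbf{X}))\in \Pi(\mu,\nu)$.

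For the $c$-cyclical monotonicity of the graph of $R$, let $\mathbf{x}_1,\ldots,\mathbf{x}_N\in \mathrm{spt}\,\mu$ and let $\sigma$ be a cyclic permutation; put $Q_i^* := \argmin_{Q\in \G}\|\mathbf{x}_i-Q\nabla\psi(\mathbf{x}_{\sigma(i)})\|^2$, so that $\|\mathbf{x}_i-Q_i^*\nabla\psi(\mathbf{x}_{\sigma(i)})\|^2 = c(\mathbf{x}_i,R(\mathbf{x}_{\sigma(i)}))$. Applying squared-cost cyclical monotonicity of $\nabla\psi$ to the augmented $2N$-point sequence $\mathbf{x}_1,Q_1^*\mathbf{x}_{\sigma(1)},\mathbf{x}_2,Q_2^*\mathbf{x}_{\sigma(2)},\ldots$ under a carefully chosen permutation, using equivariance to simplify the diagonal terms and the nearest-point-in-orbit bound to control the auxiliary ones, I expect the resulting inequality to collapse to the required $\sum c(\mathbf{x}_i,R(\mathbf{x}_i))\le \sum c(\mathbf{x}_i,R(\mathbf{x}_{\sigma(i)}))$. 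For uniqueness, given any candidate $R'$ with $c$-cyclically monotone graph, define $\tilde R'(\mathbf{x}) := S'(\mathbf{x})R'(\mathbf{x})$ where $S'(\mathbf{x})$ is a nearest-point-in-orbit choice for $R'(\mathbf{x})$; a short calculation using $c(\mathbf{x},R'(\mathbf{x})) = \|\mathbf{x}-\tilde R'(\mathbf{x})\|^2$ shows the graph of $\tilde R'$ is squared-cost cyclically monotone, so Rockafellar yields $\tilde R' = \nabla\phi$ for some convex $\phi$, and identifying the lifted marginal (together with Brenier's uniqueness) forces $\tilde R'=\nabla\psi$ and hence $R'=R$ a.e. Finally, (v) follows under finite second moments by duality: for any $\pi\in \Pi(\mu,\nu)$, the coupling $(S\mathbf{X},\,S\,Q^*(\mathbf{X},\mathbf{H})\mathbf{H})\in \Pi(\mu_S,\nu_S)$ with $S$ uniform on $\G$ and $Q^*=\argmin_Q\|\mathbf{X}-Q\mathbf{H}\|^2$ has squared cost exactly $\E[c(\mathbf{X},\mathbf{H})]$, so $\E[c(\mathbf{X},\mathbf{H})]$ is lower bounded by the squared-cost OT cost between $\mu_S$ and $\nu_S$, a bound attained by $(\mathbf{X},R(\mathbf{X}))$ since $\E[c(\mathbf{X},R(\mathbf{X}))] = \E\|\mathbf{X}-\nabla\psi(\mathbf{X})\|^2$. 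The main obstacle will be the combinatorics of the $c$-cyclical monotonicity step: the max over $\G$ inside $c$ does not interact cleanly with the fixed per-point signs of $\nabla\psi$, and making the augmented-sequence argument close in the correct direction requires a careful joint choice of sequence and permutation.
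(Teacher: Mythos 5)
Your route is genuinely different from the paper's: you start from McCann's theorem on the symmetrized pair $(\mu_S,\nu_S)$ and try to build everything — equivariance, the orbit-nearest-point identity, and the $c$-cyclically monotone coupling — out of ordinary quadratic-cost cyclical monotonicity, whereas the paper first proves abstractly that a coupling with $c$-cyclically monotone support exists and lies in the $c$-subdifferential of a $c$-convex potential (its Propositions \ref{prop:existence_c_cyc} and \ref{prop:existence_c_cvx}), and only then observes that this potential is an ordinary convex, $\G$-invariant function whose a.e.\ gradient yields $R=q\circ\nabla\psi$. Several of your components are sound: (ii) and (iii) via McCann uniqueness, the orbit-maximality $\langle\mathbf{x},\nabla\psi(\mathbf{x})\rangle\ge\langle\mathbf{x},Q\nabla\psi(\mathbf{x})\rangle$ via cycles along orbits of finite-order elements plus density of torsion elements in the compact group, and the symmetrization/duality argument for (v). But the step you yourself flag — $c$-cyclical monotonicity of the graph of $R$ — is a genuine gap, and the augmented-cycle device cannot close it as stated: after cancelling the norm terms, the needed inequality is $\sum_i\max_{Q}\langle\mathbf{x}_i,Q\nabla\psi(\mathbf{x}_{i+1})\rangle\le\sum_i\langle\mathbf{x}_i,\nabla\psi(\mathbf{x}_i)\rangle$, and if you rotate the cycle by group elements $G_1,\dots,G_N$ to realize the maximizers $Q_i^\ast=\argmax_{Q}\langle\mathbf{x}_i,Q\nabla\psi(\mathbf{x}_{i+1})\rangle$, the telescoping constraint $\prod_i G_i^\top G_{i+1}=I$ forces $\prod_i Q_i^\ast=I$, which generically fails; you can realize at most $N-1$ of the maxima and the leftover term closes with the wrong sign. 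A repair that stays in your framework is to symmetrize $\psi$ over $\G$ and check, via the Fenchel equality $\psi(\mathbf{x})+\psi^*(\nabla\psi(\mathbf{x}))=\langle\mathbf{x},\nabla\psi(\mathbf{x})\rangle$ combined with your orbit-maximality identity, that the graph of $(\mathrm{identity},R)$ lies in the $c$-subdifferential of $\psi$ — but that is essentially reintroducing the paper's $c$-convexity machinery rather than avoiding it.

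The uniqueness part has a second gap. The theorem asserts uniqueness among \emph{all} distributions in $\Pi(\mu,\nu)$ with $c$-cyclically monotone support, not only among deterministic maps; showing that any such coupling is concentrated on a graph is part of the claim, and it is exactly what the paper's differentiability argument for the $c$-convex potential delivers. Your argument starts from a competitor map $R'$, so non-graph couplings are never ruled out. Even in the map case, the step ``identifying the lifted marginal (together with Brenier's uniqueness) forces $\tilde R'=\nabla\psi$'' is unsupported: $\tilde R'=S'R'$ pushes $\mu$ to some measure on $\G B$ whose quotient under $q$ is $\nu$, but there is no reason this law equals $\nu_S$ — the induced signs need not be uniform over $\G$ nor independent of $R'(\mathbf{X})$ — so McCann's uniqueness for the pair $(\mu_S,\nu_S)$ does not apply to $\tilde R'$. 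Finally, note that you repeatedly pass from ``$\mu_S$-a.e.'' to ``$\mu$-a.e.'' (to define $R=q\circ\nabla\psi$, to use equivariance along orbits, and to state uniqueness); this requires the absolute continuity of $\mu$ with respect to $\mu_S$, which the paper proves separately (\cref{lem:abs_cont}) and which your sketch uses implicitly without justification.
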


Let us discuss the conclusions in $(i)$-$(v)$ of the above result: 
$(i)$ shows that the population rank map is essentially the gradient of a convex function $\psi(\cdot)$ (cf.~McCann's geometric characterization of OT maps~\citep{McCann1995} under the usual squared loss), up to a composition with the quotient map $q(\cdot)$ which brings it to $B$. The convex function $\psi(\cdot)$ is characterized in $(ii)$ --- its gradient pushes the symmetrized data distribution $\mu_S$ to the symmetrized reference measure $\nu_S$, and it is unique  (which follows from the celebrated result in \citet{McCann1995}). Although the result in~\cite{McCann1995} would imply that $\nabla\psi(\cdot)$ is $\mu_S$-a.e.\ uniquely defined,
\cref{lem:abs_cont} in the Appendix shows that $\mu$ is absolutely continuous w.r.t.\ $\mu_S$, and thus
$\nabla\psi(\cdot)$ is also $\mu$-a.e.\ uniquely defined. In 
$(iii)$ and $(iv)$ of Theorem~\ref{thm:population_rank} we illustrate some of the  `nice' properties of $\nabla\psi(\cdot)$: it respects the symmetry of $\G$, and turns out to be the population signed-rank map, which further motivates the study of the generalized (sample) signed-ranks $S_n(\cdot)R_n(\cdot)$.
Note that $(i)$-$(iv)$ in the above result do not assume any moment conditions on $\mu$ and $\nu$, and provides a geometric characterization of the rank map using $c$-cyclical monotonicity.
When second moments of $\mu$ and $\nu$ do exist, $(v)$ shows that the population rank map is indeed the OT map in the sense of minimizing~\eqref{eq:popu_ot_problem}.

\begin{remark}[On the proof of Theorem~\ref{thm:population_rank}]
The proof of the above result proceeds via checking the uniqueness of the $c$-subdifferential set \citep[Theorem 5.30]{villani2009OT} to show that the $c$-cyclically monotone measure in $\Pi(\mu,\nu)$ is unique, and is given by a Monge map. It crucially uses the structure of the cost function $c(\cdot,\cdot)$, e.g., a $c$-convex function (defined in \cref{sec:pf_consistency}) is still convex in the usual sense, with its (almost everywhere) gradient being $\G$-equivariant.
\end{remark}

Note that the minimizer $S(\mathbf{x},\mathbf{h}) = \argmin_{Q\in \G} \|Q^\top\mathbf{x} - \mathbf{h}\|^2$ may not be unique in general. For example, when $\G = {\rm O}(p)$, there are many orthogonal matrices that map a given vector to a given direction (see e.g., \cref{fig:signed_rank_sph}). However, in the proof of \cref{thm:population_rank}  we show that the signed-rank
$S(\mathbf{X},R( \mathbf{X})) R(\mathbf{X})$ is $\mu$-a.s.~unique --- it is the point in the orbit of $R( \mathbf{X})$, i.e.,
$\{QR( \mathbf{X}): Q\in \G\}$, that is closest to $\mathbf{X}$.

With the population rank map $R(\cdot)$ and the signed-rank map $\nabla\psi(\cdot)$ defined in \cref{thm:population_rank}, we state below a convergence theorem for the sample generalized signs, ranks and signed-ranks. We assume the following condition:
\begin{assumption}[On weak convergence of $\nu_n$]\label{assump:Weak-nu_n}
The empirical measure $\nu_n := \frac{1}{n} \sum_{i=1}^n \delta_{\mathbf{h}_i}$ converges weakly to the reference distribution $\nu$.
\end{assumption}
The result below states that these sample quantities converge to their population counterparts in the averaged $\rho$-loss. Here, we let $\rho(\cdot,\cdot)$ be any continuous nonnegative function on $\R^p \times \R^p$ satisfying:
(i) $\rho (\mathbf{x},\mathbf{x}) = 0$, $\forall \; \mathbf{x}\,\in\,\R^p$;  (ii) there exists $C_1 > 0$ such that $\rho(\mathbf{x},\mathbf{y})\leq C_1\left(\rho(\mathbf{x},\mathbf{0})+\rho(\mathbf{y},\mathbf{0})\right)$, $\forall \; \mathbf{x}$, $\mathbf{y}\in\R^p$; (iii) there exists $C_2 >0$ such that $\rho(Q\mathbf{x},\mathbf{0})\leq C_2 \left( \rho(\mathbf{x},\mathbf{0})+1\right)$,  $\forall \; \mathbf{x}\,\in\,\R^p$, $Q\in\G$.
Examples of $\rho(\cdot,\cdot)$ include the usual Euclidean distance $\|\cdot - \cdot\|$, any norm on $\R^p$, and $\rho(\mathbf{x},\mathbf{y})=\|\mathbf{x} - \mathbf{y}\|^k$ for any $k > 0 $.

\begin{theorem}[Convergence to population generalized ranks, signs and signed-ranks]\label{cor:conv_popu}
Suppose Assumptions~\ref{assump:nu-G} and \ref{assump:Weak-nu_n} hold. We have:
\begin{enumerate}
\item (Convergence of ranks) If the reference distribution $\nu$, which concentrates on $B$, is chosen such that the quotient map $q:\G  B\to B$ defined as  $q(Q\mathbf{x})=\mathbf{x}$ (for $\mathbf{x}\in B$, $Q\in \G$) is continuous w.r.t.\ the usual Euclidean topology, then
$R(\cdot) = q(\nabla \psi(\cdot))$ is $\mu$-a.e.\ continuous. If furthermore, $\E \rho(\mathbf{H}_{n},\mathbf{0}) \to \E \rho(\mathbf{H},\mathbf{0})<\infty$ as $n\to\infty$, where $\mathbf{H}_{n}\sim \nu_n$ and $\mathbf{H}\sim \nu$, then
$$\frac{1}{n}\sum_{i=1}^n \rho(R_n(\mathbf{X}_i),R(\mathbf{X}_i) )\overset{a.s.}{\longrightarrow} 0,\quad {\rm as}\ n\to\infty.$$
\item (Convergence of signs) If $\G$ acts freely on $\G B$,
then $S(\mathbf{x},R( \mathbf{x})) := \argmin_{Q\in \G} \|Q^\top\mathbf{x} - R(\mathbf{x})\|^2$ is uniquely defined for $\mu$-a.e.\ $\mathbf{x}$, and for any continuous function $\tilde{\rho}(\cdot,\cdot)$ on $\R^{p^2}\times \R^{p^2}$ such that $\tilde{\rho}(Q,Q)=0$, for all $Q\,\in\R^{p^2}$, we have
$$\frac{1}{n}\sum_{i=1}^n \tilde{\rho}(S_n(\mathbf{X}_i) , S(\mathbf{X}_i,R(\mathbf{X}_i)))\overset{a.s.}{\longrightarrow} 0,\quad {\rm as}\ n\to\infty.$$
\item (Convergence of signed-ranks) 
If $\E \rho(\mathbf{H}_{n},\mathbf{0}) \to \E \rho(\mathbf{H},\mathbf{0})<\infty$ where $\mathbf{H}_{n}\sim \nu_n$, $\mathbf{H}\sim \nu$, then
$$\frac{1}{n}\sum_{i=1}^n \rho(S_n(\mathbf{X}_i)R_n(\mathbf{X}_i) , \nabla\psi(\mathbf{X}_i))\overset{a.s.}{\longrightarrow} 0,\quad {\rm as}\ n\to\infty.$$
\end{enumerate}
\end{theorem}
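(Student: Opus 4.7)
The plan is to reduce all three convergence statements to the stability of $c$-cyclically monotone couplings under weak convergence of marginals, combined with the uniqueness statement in \cref{thm:population_rank}. I will set up the argument for the ranks (part 1) in detail and then describe the small modifications needed for parts 2 and 3.

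Set $\pi_n := \tfrac{1}{n}\sum_{i=1}^n \delta_{(\mathbf{X}_i, R_n(\mathbf{X}_i))}$ and $\mu_n := \tfrac{1}{n}\sum_i \delta_{\mathbf{X}_i}$. By the strong law of large numbers $\mu_n \Rightarrow \mu$ almost surely, and by \cref{assump:Weak-nu_n} the second marginal $\nu_n \Rightarrow \nu$, so $\{\pi_n\}$ is a.s.\ tight. Each $\pi_n$ has $c$-cyclically monotone support by the optimality of $\hat{\sigma}$ in \eqref{eq:OTproblem}; since $c(\mathbf{x},\mathbf{h}) = \min_{Q \in \G} \|Q^\top \mathbf{x} - \mathbf{h}\|^2$ is continuous (as the minimum of a continuous function over the compact set $\G$), the standard closedness of $c$-cyclical monotonicity under weak convergence of Borel probability measures forces every weak subsequential limit $\pi^*$ to lie in $\Pi(\mu,\nu)$ with $c$-cyclically monotone support. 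By \cref{thm:population_rank} such a $\pi^*$ is unique and equals $(\mathrm{id},R)\#\mu$, so the full sequence converges, $\pi_n \Rightarrow (\mathrm{id},R)\#\mu$ almost surely.

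To upgrade this weak convergence to convergence in averaged $\rho$-loss, write $\tfrac{1}{n}\sum_i \rho(R_n(\mathbf{X}_i), R(\mathbf{X}_i)) = \int f\, d\pi_n$ with $f(\mathbf{x},\mathbf{y}) := \rho(\mathbf{y}, R(\mathbf{x}))$. The continuity of $q$, together with the $\mu$-a.e.\ continuity of $\nabla\psi$ (continuous on its Lebesgue-full domain of differentiability since it is the gradient of a convex function), makes $R = q\circ\nabla\psi$ continuous on a set of full $\mu$-measure; hence $f$ is continuous on a set of full $\pi^*$-measure, and $\int f\,d\pi^* = \int \rho(R(\mathbf{x}),R(\mathbf{x}))\,d\mu = 0$. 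A portmanteau-type argument for a.e.-continuous, uniformly integrable integrands then yields $\int f\,d\pi_n \to 0$. Uniform integrability is supplied by the growth bound $\rho(\mathbf{x},\mathbf{y}) \le C_1(\rho(\mathbf{x},\mathbf{0})+\rho(\mathbf{y},\mathbf{0}))$: the $R_n(\mathbf{X}_i)$-term is handled directly because the sample ranks are a permutation of the $\mathbf{h}_j$'s, whence $\tfrac{1}{n}\sum_i \rho(R_n(\mathbf{X}_i),\mathbf{0}) = \tfrac{1}{n}\sum_j \rho(\mathbf{h}_j,\mathbf{0}) \to \E\rho(\mathbf{H},\mathbf{0})$ by the moment hypothesis, while the $R(\mathbf{X}_i)$-term is controlled by the $\G$-invariance inequality $\rho(Q\mathbf{x},\mathbf{0}) \le C_2(\rho(\mathbf{x},\mathbf{0})+1)$ together with the identification $R(\mathbf{X}) = q(\nabla\psi(\mathbf{X}))$ (so $R(\mathbf{X})$ and $\nabla\psi(\mathbf{X})$ lie in the same $\G$-orbit) and the fact from \cref{thm:population_rank}(ii) that $\nabla\psi\#\mu_S = \nu_S$.

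Parts 2 and 3 follow the same template. For the signed-ranks (part 3), replace $\pi_n$ by $\tfrac{1}{n}\sum_i\delta_{(\mathbf{X}_i, S_n(\mathbf{X}_i)R_n(\mathbf{X}_i))}$; the target limit is $(\mathrm{id}, \nabla\psi)\#\mu$, whose $\mu$-a.s.\ uniqueness is part (iv) of \cref{thm:population_rank}, and the averaged-loss upgrade goes through since $\nabla\psi$, being a gradient of a convex function, is a.e.\ continuous. For the signs (part 2), under the free group action $S(\mathbf{x},R(\mathbf{x})) = \argmin_{Q \in \G} \|Q^\top \mathbf{x} - R(\mathbf{x})\|^2$ is uniquely defined and depends continuously on $(\mathbf{x},R(\mathbf{x}))$ by Berge's maximum theorem (the minimand is jointly continuous in $Q, \mathbf{x}, R(\mathbf{x})$ and the argmin is unique on a compact $\G$); combining with part 1 gives a.s.\ convergence in averaged $\tilde{\rho}$-loss by the bounded convergence theorem, since $\tilde{\rho}$, being continuous on the compact set $\G\times\G$, is automatically bounded. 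The main obstacle is the uniform integrability step in part 1 (and its analogue in part 3): the growth and $\G$-invariance hypotheses on $\rho$ are crafted precisely so that this step reduces to the moment condition on $\nu_n$ and the $\G$-equivariance of $\nabla\psi$ established in \cref{thm:population_rank}(iii).
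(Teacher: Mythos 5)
Your treatment of parts 1 and 2 essentially follows the paper's own route: the paper's \cref{lem:joint_weak} proves exactly your a.s.\ weak convergence of the empirical rank coupling (tightness, $c$-cyclical monotonicity of the support by optimality, closedness of that property under weak limits, and the uniqueness in \cref{thm:population_rank}), and the upgrade to averaged $\rho$-loss via a.e.-continuity of the integrand plus Scheff\'e-type uniform integrability is the same; your Berge-maximum-theorem continuity of the argmin is the paper's \cref{lem:free_G_cont}, proved there by epi-convergence. Two points you assert rather than prove: the $\mu$-a.e.\ uniqueness of $\argmin_{Q\in\G}\|Q^\top\mathbf{x}-R(\mathbf{x})\|^2$ in part 2 is itself part of the statement and requires the fact (from the proof of \cref{thm:population_rank}) that $\{Q R(\mathbf{x}):Q\in\argmax_{Q\in\G}\mathbf{x}^\top Q R(\mathbf{x})\}$ is the singleton $\{\nabla\psi(\mathbf{x})\}$, combined with freeness; and integrability of $\rho(R(\mathbf{X}),\mathbf{0})$ is most directly obtained from $R\#\mu=\nu$ rather than through $\nabla\psi\#\mu_S=\nu_S$.

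The genuine gap is in part 3. You propose to rerun the template on $\tilde{\pi}_n:=\frac1n\sum_i\delta_{(\mathbf{X}_i,\,S_n(\mathbf{X}_i)R_n(\mathbf{X}_i))}$ and to identify every subsequential limit as $({\rm identity},\nabla\psi)\#\mu$ by invoking "uniqueness" from \cref{thm:population_rank}(iv). But the uniqueness actually proved there is uniqueness of the $c$-cyclically monotone coupling \emph{inside} $\Pi(\mu,\nu)$, and $\tilde{\pi}_n$ does not have second marginal $\nu_n$: its second marginal is the empirical law of the signed-ranks $S_n(\mathbf{X}_i)R_n(\mathbf{X}_i)$, which is data-dependent and whose limit is precisely what you are trying to determine. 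A subsequential limit of $\tilde{\pi}_n$ is therefore a $c$-cyclically monotone coupling of $\mu$ with an unknown second marginal, and nothing you cite pins it down; statement (iv) concerns the a.s.\ uniqueness of the value $S(\mathbf{X},R(\mathbf{X}))R(\mathbf{X})$, not of couplings. Note also that part 3 does not assume continuity of $q$, so you cannot simply push the second marginal forward by $q$ to recover $\nu$ and re-enter the $\Pi(\mu,\nu)$ framework. The paper closes this step differently: it deduces the weak convergence of the signed-rank coupling from that of the rank coupling via the continuous mapping theorem, after showing that the orbit-projection map $(\mathbf{x},\mathbf{y})\mapsto{\rm Proj}_{\G\mathbf{y}}(\mathbf{x})$ is continuous at $({\rm identity},R)\#\mu$-a.e.\ point (using the a.s.\ uniqueness of that projection). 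Your argument needs this, or an equivalent device such as passing the closed alignment constraint $c(\mathbf{x},\mathbf{y})=\|\mathbf{x}-\mathbf{y}\|^2$ on the support of $\tilde{\pi}_n$ to the limit together with a.e.\ uniqueness of projections, to be complete; the uniform-integrability half of your part 3 is fine once the identification is supplied.
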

A proof of the above result is given in \cref{sec:pf_consistency}.
To establish the convergence of ranks, the continuity of $q$ is needed, 
while to establish the convergence of $S_n(\mathbf{X}_i)$, we need to restrict ourselves to $\G$ with a free group action  (which includes central symmetry and sign symmetry) to make sure that $S_n(\mathbf{X}_i)$ and $S(\mathbf{X}_i,R(\mathbf{X}_i))$ are a.s.\ uniquely defined.
The convergence of the signed-ranks requires the weakest assumption, again motivating that test statistics based on the signed-ranks may be preferable. This intuition will be confirmed in \cref{sec:ARE}. Note that the above result allows for the use of any `reasonable' loss function while comparing the sample and population signs, ranks and signed-ranks.

\section{Distribution-Free Tests for Symmetry}\label{sec:ARE}
In this section, we will define multivariate generalizations of the sign test and the Wilcoxon signed-rank test that share analogous properties with their classical counterparts (Sections~\ref{subsec:sign}-\ref{subsec:rank}).
We then study the consistency of our proposed tests (\cref{subsec:consist_efficiency}) and state results on their relative efficiency w.r.t.\ Hotelling's $T^2$ test (\cref{subsec:efficiency}). We describe the locally asymptotically optimal property of our generalized Wilcoxon signed-rank (GWSR) test in~\cref{subsec:efficiency}.

\subsection{Generalized Sign Test}\label{subsec:sign}
Recall the hypothesis of $\G$-symmetry in \eqref{eq:hypo}, and that
under ${\rm H}_0$, the generalized signs $S_n(\mathbf{X}_1),\ldots,S_n(\mathbf{X}_n)$ are i.i.d.\ from ${\rm Uniform}(\G)$.
Hence, any test of uniformity over $\G$ can be applied with the generalized signs to define a multivariate analogue of the sign  test.
\begin{example}[Chi-squared test]
If $\G = \{g_1,\ldots,g_k\}$ is a finite group of size $k$, then we can count the observed frequencies of each element in $\G$, i.e., $Y_i:=\sum_{j=1}^n I(S_n(\mathbf{X}_j)=g_i)$, $i=1,\ldots,k$.
Under ${\rm H}_0$, $(Y_1,\ldots,Y_k)$ follows
${\rm Multinomial}\left(n,\frac{1}{k}\mathbf{1}_k\right)$. Further, we can perform an asymptotic Pearson's chi-squared test:
$$\tilde{T}_n := \sum_{i=1}^k \frac{(Y_i - n/k)^2}{n/k}\overset{d}{\longrightarrow}\chi^2_{k-1},\quad {\rm as\ }n\to\infty.$$
${\rm H}_0 $ can be rejected for a large value of $\tilde{T}_n$.
\end{example}
However, for a group with large size $k$ (e.g., $\G$ corresponding to sign symmetry has size $2^p$, which can be very large for even moderate $p$), the convergence of the chi-squared test statistic $\tilde{T}_n$ to its limiting distribution can be slow.
Hence, in this paper 
we propose 
the following {\it generalized sign test}: 
\begin{equation}\label{eq:Sign-Test}
T_n:=\frac{1}{\sqrt{n}}\sum_{i=1}^n S_n(\mathbf{X}_i).
\end{equation}
We reject the null hypothesis ${\rm H}_0$ for large values of $\|T_n\|^2_F$ where $\|\cdot \|_F$ denotes the matrix Frobenius norm. The generalized sign test is distribution-free under ${\rm H}_0$ as the signs $S_n(\mathbf{X}_i)$ are i.i.d.\ from ${\rm Uniform}(\G)$. We give below a simple result on the asymptotic behavior of $\|T_n\|^2_F$ under ${\rm H}_0$. Note that this test statistic is also a direct generalization of the classical sign test.



\begin{prop}[Generalized sign test]\label{prop:Sign-Test}
We have the following consequences:
\begin{enumerate}
\item Central symmetry: $S_n(\mathbf{X}_i)$ are i.i.d.\ uniform over $\G= \{I, -I\}$ under ${\rm H}_0$, and $T_n$ (in~\eqref{eq:Sign-Test}) is thus equivalent to a binomial random variable (up to a linear transformation) in finite samples. The asymptotic distribution of $T_n$ follows from the usual one-dimensional CLT:
$$\frac{1}{p}\|T_n\|^2_F \overset{d}{\longrightarrow} \chi^2_1,\quad {\rm as}\ n\to\infty.$$
\item Sign symmetry: $S_n(\mathbf{X}_i)$ are i.i.d.\ uniform over diagonal orthogonal matrices under the null. Equivalently, $S_n(\mathbf{X}_i)$ can be viewed as being i.i.d.\ uniform over $\{1, -1\}^p$. In such a case,
$$\|T_n\|^2_F \overset{d}{\longrightarrow} \chi^2_p,\quad {\rm as}\ n\to\infty.$$
\item Spherical symmetry: $S_n(\mathbf{X}_i)$ are i.i.d.\ uniform over the orthogonal group ${\rm O}(p)$ under ${\rm H}_0$. In this case, our generalized sign test behaves as:
$${p}\|T_n\|^2_F \overset{d}{\longrightarrow} \chi^2_{p^2},\quad {\rm as}\ n\to\infty.$$
\end{enumerate}
\end{prop}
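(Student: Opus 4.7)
The plan is to exploit the finite-sample distribution-freeness of $(S_n(\mathbf{X}_1), \ldots, S_n(\mathbf{X}_n))$ established in Proposition 2.1(3): under $\mathrm{H}_0$, these are i.i.d.\ Haar-uniform on $\G$. So in all three cases it suffices to derive the limiting distribution of $T_n = n^{-1/2}\sum_{i=1}^n Q_i$ where $Q_1,\ldots,Q_n \overset{iid}{\sim} \mathrm{Uniform}(\G)$, and then translate into a statement about $\|T_n\|_F^2$. By Haar invariance $\E Q_i = 0$ in each case (since $-Q \cdot Q$ is equal in law to $Q$ after averaging), so we are in the setting of a zero-mean CLT and only the covariance structure depends on $\G$.

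For central symmetry, I would write $Q_i = \epsilon_i I$ with $\epsilon_i \in \{\pm 1\}$ uniform, so $T_n = \bigl(n^{-1/2}\sum_i \epsilon_i\bigr) I$ and the one-dimensional CLT gives $n^{-1/2}\sum_i \epsilon_i \overset{d}{\to} N(0,1)$; squaring and using $\|I\|_F^2 = p$ yields $p^{-1}\|T_n\|_F^2 \overset{d}{\to} \chi^2_1$. For sign symmetry I would write $Q_i = \mathrm{diag}(\epsilon_{i,1},\ldots,\epsilon_{i,p})$ with $\epsilon_{i,k}$ i.i.d.\ $\pm 1$; then $T_n$ is diagonal with entries $n^{-1/2}\sum_i \epsilon_{i,k}$ whose joint distribution converges to $N(0, I_p)$ (independence across $k$ from the product structure), giving $\|T_n\|_F^2 \overset{d}{\to} \chi^2_p$.

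The main work is the spherical case, where I would pass to $\mathrm{vec}(T_n) = n^{-1/2}\sum_i \mathrm{vec}(Q_i) \in \R^{p^2}$ and apply the multivariate CLT. The key calculation is the covariance of the Haar measure on $\mathrm{O}(p)$: by left/right invariance one shows $\E[Q_{ij}Q_{kl}] = p^{-1}\delta_{ik}\delta_{jl}$, so $\mathrm{Cov}(\mathrm{vec}(Q)) = p^{-1} I_{p^2}$. (Quickly: $\sum_j \E[Q_{ij}Q_{kj}] = \delta_{ik}$ from row-orthonormality and exchangeability of columns under Haar gives each term equal to $p^{-1}\delta_{ik}$; similarly for the off-diagonal-in-$j$ blocks, which vanish by the invariance $Q \overset{d}{=} Q\,\mathrm{diag}(-1,1,\ldots,1)$ and permutations.) Hence $\mathrm{vec}(T_n) \overset{d}{\to} N(0, p^{-1}I_{p^2})$, so $p\|T_n\|_F^2 = p\|\mathrm{vec}(T_n)\|^2 \overset{d}{\to} \chi^2_{p^2}$.

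The main potential obstacle is rigorously justifying the Haar covariance computation on $\mathrm{O}(p)$; a clean route is to use the fact that $Q \overset{d}{=} U Q V$ for any fixed $U,V\in \mathrm{O}(p)$, which immediately forces $\mathrm{Cov}(\mathrm{vec}(Q)) = c\, I_{p^2}$ for some $c>0$, and then take $c = p^{-1}$ by computing $\E\,\mathrm{tr}(Q^\top Q) = p = p^2 c$. Everything else reduces to standard CLTs and the continuous mapping theorem applied to $\|\cdot\|_F^2$.
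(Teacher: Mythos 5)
Your proof is correct and takes essentially the route the paper leaves implicit: combine the finite-sample distribution-freeness of the signs from Proposition~\ref{prop:properties_of_generalized_sign_rank}(3) with a standard (multivariate) CLT for bounded i.i.d.\ mean-zero summands in each of the three cases, then apply continuous mapping to $\|\cdot\|_F^2$. The only point where you supply more detail than the paper is the spherical case: the paper simply identifies $pn\,\mathrm{Tr}(\bar{S}^\top\bar{S}) = p\|T_n\|_F^2$ with the Rayleigh statistic and cites the directional-statistics literature for its $\chi^2_{p^2}$ limit, whereas you verify the Haar covariance $\mathrm{Cov}(\mathrm{vec}(Q)) = p^{-1}I_{p^2}$ directly, and both of your justifications (column-wise uniformity on the sphere plus sign-flip invariance across columns, or the $Q \overset{d}{=} UQV$ invariance argument normalized via $\E\,\mathrm{tr}(Q^\top Q) = p$) are valid.
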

Possibly the simplest test of uniformity over ${\rm O}(p)$ is given by the Rayleigh test \citep[Section 13.2.2]{Mardia2000directional}, which rejects the uniformity assumption when $pn {\rm Tr}\left(\bar{S}^\top \bar{S}\right)$ is large, where 
$\bar{S}:=\frac{1}{n}\sum_{i=1}^n S_n(\mathbf{X}_i)$. Note that this is exactly the generalized sign test given in~\cref{prop:Sign-Test}-3. We also  observe that all the above three cases reduce to the classical two-sided sign test when $p=1$.

\subsection{Generalized Wilcoxon Signed-Rank (GWSR)  Test }\label{subsec:rank}
The GWSR statistic can naturally be defined as
$\frac{1}{\sqrt{n}} \sum_{i=1}^n S_n(\mathbf{X}_i)  R_n(\mathbf{X}_i)$.
We also consider a more general score transformed version: for a score function $J:B \to \R^p$ define
\begin{equation}\label{eq:score_Wn}\mathbf{W}_n:= \frac{1}{\sqrt{n}}\sum_{i=1}^n S_n(\mathbf{X}_i)  J(R_n(\mathbf{X}_i)).
\end{equation}

Note that $\mathbf{W}_n$ is distribution-free under $\textrm{H}_0$ (see~\eqref{eq:hypo}) following~\cref{prop:properties_of_generalized_sign_rank}.
Apart from using its exact distribution in finite samples, one can also use its asymptotic distribution as $n\to\infty$, which yields a computationally simple test.
\begin{definition}[Effective reference distribution]
 We define the {\it effective reference distribution} (ERD), obtained with score function $J(\cdot)$ and reference distribution $\nu$, as the distribution of $SJ(\mathbf{H})$, where $S\sim {\rm Uniform}(\G)$ is independent of $\mathbf{H}\sim\nu$.
\end{definition}
The ERD is $\G$-symmetric, 
i.e., $QSJ(\mathbf{H})$ has the same distribution as $SJ(\mathbf{H})$, for all $Q\in\G$.
Let $\Sigma_{\rm ERD}$ be the covariance matrix of the ERD.
The GWSR test rejects the null hypothesis ${\rm H}_0$ for large values of \begin{equation}\label{eq:Wilcox-Test}
\mathbf{W}_n^\top \Sigma_{\rm ERD}^{-1}\mathbf{W}_n.
\end{equation}
In the next few subsections we study the behavior of the GWSR test under the null and under (local) contiguous alternatives. For this, we need a subset of the following assumptions, which we list below for the convenience of the reader.

We say $J:\G B\to \R^p$ is {\it equivariant} under the group action of $\G$ if $J(Q\mathbf{x})=Q J(\mathbf{x})$ for all $Q\in\G$ and $\mathbf{x}\in \G B$.
If $J:B\to\R^p$ can be extended to an equivariant function on $\G B$,
then $\mathbf{W}_n$ will be a statistic based on the signed-ranks.

\begin{assumption}[On the score function $J$]\label{assump:equivarJ}
$J:\G B\to \R^p$ is $\nu_S$-a.e.\ continuous and equivariant under the group action of $\G$.
\end{assumption}
\begin{assumption}[Moment convergence of $\nu_n$ to $\nu$]\label{assump:nu_n-nu} Let $\mathbf{H}_n\sim\nu_n = \frac{1}{n} \sum_{i=1}^n \delta_{\mathbf{h}_i}$ and  $\mathbf{H}\sim\nu$.
\begin{enumerate}
    \item[(i)] (1st moment) $\E \|J(\mathbf{H}_n)\| \to \E \|J(\mathbf{H})\|<\infty $ as $n \to \infty$.
    \item[(ii)] (2nd moment) $\mathbb{E}\|J(\mathbf{H}_{n})\|^2 \to \mathbb{E}\|J(\mathbf{H})\|^2 <\infty$ as $n \to \infty$.
    \item[(iii)] (2nd moments)  $\mathbb{E}|J(\mathbf{H}_{n})_i J(\mathbf{H}_{n})_j|\to \mathbb{E}|J(\mathbf{H})_{i} J(\mathbf{H})_{j}|<\infty$ as $n \to \infty$, for $i,j\in\{1,\ldots,p\}$.
\end{enumerate}
\end{assumption}

\begin{assumption}[On the group $\G$]\label{assump:G}
\begin{enumerate}
    \item[(i)] $\mathbb{E} [{\rm Uniform}(\G)] = \mathbf{0}_{p\times p}$; \hspace{0.1in}(ii) $-I_p \in \G$.
\end{enumerate}
\end{assumption}
Note that, under the condition that $\nu_n$ converges weakly to $\nu$ (see~\cref{assump:Weak-nu_n}), Assumption~\ref{assump:nu_n-nu}-$(iii)$ implies Assumption~\ref{assump:nu_n-nu}-$(ii)$, and Assumption~\ref{assump:nu_n-nu}-$(ii)$ implies Assumption~\ref{assump:nu_n-nu}-$(i)$.
By the strong law of large numbers, if $J\# \nu$ has finite second moments, and $\nu_n$ is obtained from i.i.d.\ sampling from $\nu$, then $\nu_n$ satisfies Assumption~\ref{assump:nu_n-nu}-$(iii)$.
If $J(\cdot)$ is bounded and continuous, then Assumption~\ref{assump:nu_n-nu}-$(iii)$  also holds by the dominated convergence theorem, as long as $\nu_n$ converges weakly to $\nu$.
Similarly, Assumption~\ref{assump:G}-$(ii)$ implies Assumption~\ref{assump:G}-$(i)$, and central symmetry, sign symmetry, spherical symmetry all satisfy Assumption~\ref{assump:G}-$(ii)$.

The following result (proved in~\cref{sec:pf_Wil_CLT}) shows that $\mathbf{W}_n$ has an asymptotic normal distribution, mirroring the classical Wilcoxon signed-rank statistic.

\begin{theorem}[Asymptotic normality of $\mathbf{W}_n$]\label{thm:CLT_wilcoxon}
Suppose Assumptions~\ref{assump:Weak-nu_n},~\ref{assump:nu_n-nu}-(iii) and \ref{assump:G}-(i) hold,
and that the score function $J(\cdot)$ used in defining $\mathbf{W}_n$ (see \eqref{eq:score_Wn}) is $\nu$-a.e.\ continuous.
Then, under ${\rm H}_0$,
$$\mathbf{W}_n\overset{d}{\longrightarrow}N\left(\mathbf{0}_p,\Sigma_{\rm ERD}\right).$$
\end{theorem}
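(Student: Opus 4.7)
The plan is a two-step argument: first reduce $\mathbf{W}_n$ to a sum of independent random vectors with deterministic coefficients using the distribution-freeness established in \cref{prop:properties_of_generalized_sign_rank}, then apply the multivariate Lindeberg--Feller central limit theorem. Under ${\rm H}_0$, that proposition yields that $(S_n(\mathbf{X}_i))_{i=1}^n$ is i.i.d.\ ${\rm Uniform}(\G)$, $(R_n(\mathbf{X}_i))_{i=1}^n$ is a uniformly distributed permutation of $(\mathbf{h}_1,\ldots,\mathbf{h}_n)$, and the two are independent. Writing $R_n(\mathbf{X}_i) = \mathbf{h}_{\pi(i)}$ with $\pi$ uniform on $\mathcal{S}_n$, relabeling $j = \pi(i)$, and using the fact that permuting an i.i.d.\ sequence by an independent random permutation gives back an i.i.d.\ sequence, I obtain
$$
\mathbf{W}_n \;\overset{d}{=}\; \frac{1}{\sqrt{n}}\sum_{j=1}^n \tilde{S}_j\, J(\mathbf{h}_j),
$$
where $\tilde{S}_1,\ldots,\tilde{S}_n$ are i.i.d.\ ${\rm Uniform}(\G)$ and the $J(\mathbf{h}_j)$ are deterministic weights.

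To apply the CLT I would first check the mean and covariance of the reduced sum. \cref{assump:G}-(i) gives $\E \tilde{S}_j = \mathbf{0}_{p\times p}$, so each summand is mean zero. Since $\tilde S_j \stackrel{d}{=} S \sim {\rm Uniform}(\G)$,
$$
\Cov(\mathbf{W}_n) \;=\; \E_S\!\left[\,S\!\left(\frac{1}{n}\sum_{j=1}^n J(\mathbf{h}_j) J(\mathbf{h}_j)^\top\right)\! S^\top\,\right].
$$
Weak convergence $\nu_n \Rightarrow \nu$ (\cref{assump:Weak-nu_n}) combined with $\nu$-a.e.\ continuity of $J$ and the entrywise second-moment hypothesis of \cref{assump:nu_n-nu}-(iii) gives entrywise convergence of the inner empirical matrix to $\E[J(\mathbf{H})J(\mathbf{H})^\top]$; passing to the limit inside the finite-dimensional linear operation $\E_S[S(\cdot)S^\top]$ then yields $\Cov(\mathbf{W}_n) \to \E[SJ(\mathbf{H})J(\mathbf{H})^\top S^\top] = \Sigma_{\rm ERD}$.

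For the Lindeberg condition, orthogonality of $\tilde S_j$ removes the sign contribution because $\|\tilde{S}_j J(\mathbf{h}_j)\|=\|J(\mathbf{h}_j)\|$, and so for any $\epsilon > 0$ the truncated second-moment sum equals
$$
\E\!\left[\,\|J(\mathbf{H}_n)\|^2 \mathbf{1}\!\left\{\|J(\mathbf{H}_n)\|^2 > \epsilon^2 n\right\}\,\right].
$$
The continuous mapping theorem gives $\|J(\mathbf{H}_n)\|^2 \Rightarrow \|J(\mathbf{H})\|^2$, while \cref{assump:nu_n-nu}-(ii) (which is implied by (iii)) supplies the matching convergence of means to a finite limit; together these force uniform integrability of $\{\|J(\mathbf{H}_n)\|^2\}_n$, hence the displayed truncated expectation vanishes as $n \to \infty$. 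Applying the multivariate Lindeberg--Feller CLT to the reduced sum then concludes $\mathbf{W}_n \stackrel{d}{\to} N(\mathbf{0}_p, \Sigma_{\rm ERD})$. The only non-routine step is the passage from the absolute-value moment hypothesis of \cref{assump:nu_n-nu}-(iii) to uniform integrability of $\|J(\mathbf{H}_n)\|^2$ and to the convergence of the empirical Gram matrix; all remaining pieces are standard once the i.i.d.\ sign representation is in place.
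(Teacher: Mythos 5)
Your proposal is correct and follows essentially the same route as the paper's proof: reduce $\mathbf{W}_n$ in distribution to $\frac{1}{\sqrt{n}}\sum_j \tilde S_j J(\mathbf{h}_j)$ with i.i.d.\ uniform signs via \cref{prop:properties_of_generalized_sign_rank}, obtain convergence of the covariance and the Lindeberg condition from weak convergence of $\nu_n$ plus the moment convergence in \cref{assump:nu_n-nu} (a Scheff\'e-type uniform integrability argument), and invoke the multivariate Lindeberg--Feller CLT. The only cosmetic difference is that you pass the limit of the empirical Gram matrix through the fixed linear map $M\mapsto\E_S[SMS^\top]$, whereas the paper bounds $\|\E_S[S\mathbf{x}\mathbf{x}^\top S^\top]\|_F$ to get uniform integrability directly; both are valid and rest on the same ingredients.
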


\subsection{Consistency of the Tests}\label{subsec:consist_efficiency}
If $\mathbf{X}$ is $\G$-symmetric, then by the equivariance of $\psi$ (see \cref{thm:population_rank}), $Q\nabla \psi(\mathbf{X}){=}\nabla\psi(Q\mathbf{X})\overset{d}{=}\nabla\psi(\mathbf{X})$ for all $Q\in\G$. Hence $\nabla\psi(\mathbf{X})$ is also $\G$-symmetric, and $\E[ J(\nabla\psi(\mathbf{X}))] = \mathbf{0}$ if~\cref{assump:G}-$(i)$ holds and $J(\cdot)$ is equivariant.\footnote{For $S \sim {\rm Uniform}(\G)$ independent of $\mathbf{H} \sim \nu$, we have $\E[ J(\nabla\psi(\mathbf{X}))] = \E[ J(S \mathbf{H})] = \E[S J(\mathbf{H})]  = \E[S] \E[ J(\mathbf{H})] = \mathbf{0}$.} On the other hand, if $\mathbf{X}$ is not $\G$-symmetric, then $\E[ J(\nabla\psi(\mathbf{X}))]$ may not be $\mathbf{0}$. The following result (\cref{prop:consistency_signed_rank}; see \cref{sec:pf_consistency} for a proof) states that under this assumption our GWSR test will be consistent.

\begin{theorem}[Consistency of GWSR test]\label{prop:consistency_signed_rank}
Suppose Assumptions~\ref{assump:nu-G},~\ref{assump:Weak-nu_n}, \ref{assump:equivarJ} and \ref{assump:nu_n-nu}-(i) hold.
Then the distribution-free GWSR test (see~\eqref{eq:Wilcox-Test}) is consistent against all alternatives for which
$\mathbb{E} [J(\nabla\psi(\mathbf{X}))]\neq \mathbf{0}$.
\end{theorem}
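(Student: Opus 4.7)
The GWSR test at level $\alpha$ rejects when $\mathbf{W}_n^\top \Sigma_{\rm ERD}^{-1} \mathbf{W}_n$ exceeds a critical value $c_{n,\alpha}$, and by \cref{prop:properties_of_generalized_sign_rank} this is an exactly distribution-free test; \cref{thm:CLT_wilcoxon} together with positive definiteness of $\Sigma_{\rm ERD}$ guarantees that $c_{n,\alpha}$ remains bounded in $n$. Writing $\bar{\mathbf{W}}_n := \mathbf{W}_n/\sqrt{n} = \frac{1}{n}\sum_{i=1}^n S_n(\mathbf{X}_i) J(R_n(\mathbf{X}_i))$, the test statistic equals $n\,\bar{\mathbf{W}}_n^\top \Sigma_{\rm ERD}^{-1}\bar{\mathbf{W}}_n$, so consistency reduces to the strong law
$$\bar{\mathbf{W}}_n \overset{a.s.}{\longrightarrow} \mathbb{E}[J(\nabla\psi(\mathbf{X}))];$$
once this holds, $\mathbb{E}[J(\nabla\psi(\mathbf{X}))]\neq \mathbf{0}$ forces $n\,\bar{\mathbf{W}}_n^\top\Sigma_{\rm ERD}^{-1}\bar{\mathbf{W}}_n\to +\infty$ a.s., dominating any bounded $c_{n,\alpha}$.

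To set up the SLLN target, I first use the equivariance of $J$ (\cref{assump:equivarJ}) to rewrite each summand as $S_n(\mathbf{X}_i) J(R_n(\mathbf{X}_i)) = J\bigl(S_n(\mathbf{X}_i)R_n(\mathbf{X}_i)\bigr)$, and then verify $\mathbb{E}\|J(\nabla\psi(\mathbf{X}))\|<\infty$. By \cref{thm:population_rank}(ii), $\nabla\psi\#\mu_S=\nu_S$, so with $S\sim{\rm Uniform}(\G)$ independent of $\mathbf{X}$ we have $\nabla\psi(S\mathbf{X})\overset{d}{=}S\mathbf{H}$ for $\mathbf{H}\sim\nu$. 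Combining with equivariance of $J$ and orthogonality of each $Q\in\G$ (which yield $\|J(Q\mathbf{x})\|=\|J(\mathbf{x})\|$) gives $\mathbb{E}\|J(\nabla\psi(\mathbf{X}))\| = \mathbb{E}\|J(\mathbf{H})\|<\infty$ by \cref{assump:nu_n-nu}(i). The classical SLLN for the i.i.d.\ random vectors $J(\nabla\psi(\mathbf{X}_i))$ then yields
$$\frac{1}{n}\sum_{i=1}^n J(\nabla\psi(\mathbf{X}_i)) \overset{a.s.}{\longrightarrow} \mathbb{E}[J(\nabla\psi(\mathbf{X}))].$$

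The technical core is to control the empirical-to-population discrepancy
$$D_n \;:=\; \frac{1}{n}\sum_{i=1}^n\bigl[J\bigl(S_n(\mathbf{X}_i)R_n(\mathbf{X}_i)\bigr)-J(\nabla\psi(\mathbf{X}_i))\bigr],$$
showing $D_n\to\mathbf{0}$ a.s. Two tools are available: (a) \cref{cor:conv_popu}(3) applied with $\rho(\mathbf{x},\mathbf{y})=\|\mathbf{x}-\mathbf{y}\|$ yields $\frac{1}{n}\sum\|S_n(\mathbf{X}_i)R_n(\mathbf{X}_i)-\nabla\psi(\mathbf{X}_i)\|\to 0$ a.s.; and (b) since $\{R_n(\mathbf{X}_i)\}$ is a permutation of $\{\mathbf{h}_i\}$ and $\|J(S_n R_n)\|=\|J(R_n)\|$ by orthogonal invariance, $\frac{1}{n}\sum\|J(S_n(\mathbf{X}_i)R_n(\mathbf{X}_i))\| = \frac{1}{n}\sum\|J(\mathbf{h}_i)\|\to \mathbb{E}\|J(\mathbf{H})\|$ by \cref{assump:nu_n-nu}(i). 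The principal obstacle is converting (a) and (b) into the desired sum convergence using only the continuity (not uniform continuity) of $J$. My plan is a truncation argument: given $\varepsilon>0$, choose $M$ so large that, by (b) and the SLLN above, the contribution to $D_n$ of indices for which $\max\{\|S_n(\mathbf{X}_i)R_n(\mathbf{X}_i)\|,\|\nabla\psi(\mathbf{X}_i)\|\}>M$ is $O(\varepsilon)$ on an a.s.\ event for all large $n$. On the complementary indices $J$ is uniformly continuous on the compact ball $\{\|\cdot\|\le M\}$; a Markov-type bound via (a) shows that the fraction of such indices with $\|S_n R_n-\nabla\psi(\mathbf{X}_i)\|>\delta$ is $o(1)$, and on the rest $\|J(S_n R_n) - J(\nabla\psi(\mathbf{X}_i))\|$ is at most the uniform modulus of continuity of $J$ at scale $\delta$. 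Letting $\delta\to 0$ and then $\varepsilon\to 0$ concludes.
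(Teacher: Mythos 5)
Your overall reduction is the same as the paper's: show $\frac{1}{n}\sum_i S_n(\mathbf{X}_i)J(R_n(\mathbf{X}_i))\to\mathbb{E}[J(\nabla\psi(\mathbf{X}))]$ a.s., note the critical value stays bounded, and conclude the statistic diverges when $\mathbb{E}[J(\nabla\psi(\mathbf{X}))]\neq\mathbf{0}$; your moment identity $\mathbb{E}\|J(\nabla\psi(\mathbf{X}))\|=\mathbb{E}\|J(\mathbf{H})\|$ and step (b) are also fine. However, the technical core of your argument has two genuine gaps. First, your step (a) invokes \cref{cor:conv_popu}(3) with $\rho(\mathbf{x},\mathbf{y})=\|\mathbf{x}-\mathbf{y}\|$, whose hypothesis is $\mathbb{E}\|\mathbf{H}_n\|\to\mathbb{E}\|\mathbf{H}\|<\infty$; the present theorem only grants \cref{assump:nu_n-nu}-(i), i.e.\ moment convergence of $J(\mathbf{H}_n)$, not of $\mathbf{H}_n$ itself (think of a bounded score with a heavy-tailed $\nu$). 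This particular gap is repairable inside your own framework by taking the bounded cost $\rho(\mathbf{x},\mathbf{y})=\min(\|\mathbf{x}-\mathbf{y}\|,1)$, which satisfies the conditions on $\rho$ and whose moment condition follows from weak convergence alone, and still delivers the "small fraction of badly matched indices" statement.

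The second gap is more serious: your truncation step treats $J$ as (uniformly) continuous on compact balls, but \cref{assump:equivarJ} only gives $\nu_S$-a.e.\ continuity, so no modulus of continuity is available, and because the matching between $S_n(\mathbf{X}_i)R_n(\mathbf{X}_i)$ and $\nabla\psi(\mathbf{X}_i)$ is only close in Ces\`aro average (the pairing is reshuffled as $n$ grows), you also cannot fall back on pointwise continuity at each fixed $\nabla\psi(\mathbf{X}_i)$. Moreover, even to use a.e.\ continuity you must check that the discontinuity set of $J$ is null for the law of $\nabla\psi(\mathbf{X})$ under the \emph{alternative} $\mu$; this needs $\nabla\psi\#\mu_S=\nu_S$ (\cref{thm:population_rank}) together with $\mu\ll\mu_S$ (\cref{lem:abs_cont}), a point your proposal never addresses. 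The paper's proof avoids both issues by a different mechanism: it first proves (in \cref{lem:joint_weak}) that, almost surely, the empirical joint law of $(\mathcal{X}_n,S_n(\mathcal{X}_n)R_n(\mathcal{X}_n))$ converges weakly to the law of $(\mathbf{X},\nabla\psi(\mathbf{X}))$, then applies the continuous mapping theorem (which tolerates a.e.\ discontinuities with respect to the limit law) to get $J(S_n(\mathcal{X}_n)R_n(\mathcal{X}_n))\Rightarrow J(\nabla\psi(\mathbf{X}))$, and finally upgrades to convergence of first moments via uniform integrability of $\|J(R_n(\mathcal{X}_n))\|$, obtained from \cref{assump:nu_n-nu}-(i) and Scheff\'e's lemma. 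To make your route rigorous you would essentially have to reconstruct this weak-convergence-plus-uniform-integrability argument in place of the modulus-of-continuity step.
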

It is natural to ask if the condition $\mathbb{E} [J(\nabla\psi(\mathbf{X}))]\neq \mathbf{0}$ holds for location shift alternatives. The following result (see \cref{sec:pf_location_shift} for a proof) shows that it is indeed the case, thereby showing that the GWSR test~\eqref{eq:Wilcox-Test} for $\G$-symmetry (if $-I \in \G$) is consistent against location shift alternatives of a centrally symmetric distribution. 

\begin{corollary}[Consistency in location shift model]\label{cor:location_shift}
Suppose Assumptions~\ref{assump:nu-G},~\ref{assump:Weak-nu_n},~\ref{assump:nu_n-nu}-(i) and \ref{assump:G}-(ii) hold.
Suppose $J(\mathbf{x})=\mathbf{x}$, and $\mathbf{X}$ is a centrally symmetric distribution shifted by $\mathbf{\Delta}\neq \mathbf{0}$, i.e.,
$$\mathbf{X} - \mathbf{\Delta} \stackrel{d}{=}  \mathbf{\Delta} - \mathbf{X}.$$
If $\psi$ (defined in \cref{thm:population_rank}) is strictly convex\footnote{$\psi$ is said to be strictly convex on $U$ if for any $\mathbf{x}$, $\mathbf{y}\in U$ and $\alpha\in(0,1)$, $\psi(\alpha\mathbf{x}+(1-\alpha)y)<\alpha\psi(\mathbf{x})+(1-\alpha)\psi(\mathbf{y})$.} on an open set of positive $\mu$-measure,
then the GWSR test is consistent.
\end{corollary}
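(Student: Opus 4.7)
The plan is to apply \cref{prop:consistency_signed_rank} (consistency of the GWSR test), which reduces the claim to showing $\mathbb{E}[J(\nabla\psi(\mathbf{X}))]=\mathbb{E}[\nabla\psi(\mathbf{X})]\neq \mathbf{0}$. Its remaining hypothesis (\cref{assump:equivarJ}) holds trivially: $J(\mathbf{x})=\mathbf{x}$ is continuous and $\G$-equivariant since each $Q\in\G$ is linear.

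Since $-I\in\G$ by \cref{assump:G}-$(ii)$, \cref{thm:population_rank}$(iii)$ gives $\nabla\psi(-\mathbf{x})=-\nabla\psi(\mathbf{x})$, so the central symmetry $\mathbf{X}\stackrel{d}{=}2\mathbf{\Delta}-\mathbf{X}$ yields
\[
\mathbb{E}[\nabla\psi(\mathbf{X})]=\mathbb{E}[\nabla\psi(2\mathbf{\Delta}-\mathbf{X})]=-\mathbb{E}[\nabla\psi(\mathbf{X}-2\mathbf{\Delta})],
\]
hence $2\,\mathbb{E}[\nabla\psi(\mathbf{X})]=\mathbb{E}[\nabla\psi(\mathbf{X})-\nabla\psi(\mathbf{X}-2\mathbf{\Delta})]$. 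Taking inner product with $\mathbf{\Delta}$ and using that the gradient of a convex function is monotone---so $(\nabla\psi(\mathbf{X})-\nabla\psi(\mathbf{X}-2\mathbf{\Delta}))^\top(2\mathbf{\Delta})\geq 0$ almost surely---we immediately obtain $\mathbb{E}[\nabla\psi(\mathbf{X})]^\top\mathbf{\Delta}\geq 0$.

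To upgrade this to a strict inequality, suppose for contradiction that $\mathbb{E}[\nabla\psi(\mathbf{X})]^\top\mathbf{\Delta}= 0$. Then the nonnegative integrand vanishes $\mu$-almost surely, which, since the convex function $f_{\mathbf{X}}(t):=\psi(\mathbf{X}-t\mathbf{\Delta})$ has $f_{\mathbf{X}}'(0)=-\nabla\psi(\mathbf{X})^\top\mathbf{\Delta}$ and $f_{\mathbf{X}}'(2)=-\nabla\psi(\mathbf{X}-2\mathbf{\Delta})^\top\mathbf{\Delta}$, means $f_{\mathbf{X}}'(0)=f_{\mathbf{X}}'(2)$. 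By convexity (so $f_{\mathbf{X}}'$ is non-decreasing on $\R$), $f_{\mathbf{X}}$ is then affine on $[0,2]$ for $\mu$-a.e.\ $\mathbf{X}$. With positive $\mu$-probability we have $\mathbf{X}\in U$, and by openness of $U$ there exists $\delta=\delta(\mathbf{X})\in(0,2]$ with $\mathbf{X}-\delta\mathbf{\Delta}\in U$ as well. Affinity of $f_{\mathbf{X}}$ on $[0,\delta]$ then forces $\psi(\tfrac{1}{2}\mathbf{X}+\tfrac{1}{2}(\mathbf{X}-\delta\mathbf{\Delta}))=\tfrac{1}{2}\psi(\mathbf{X})+\tfrac{1}{2}\psi(\mathbf{X}-\delta\mathbf{\Delta})$, contradicting strict convexity of $\psi$ on $U$ applied to these two distinct points of $U$. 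Thus $\mathbb{E}[\nabla\psi(\mathbf{X})]^\top\mathbf{\Delta}>0$, and in particular $\mathbb{E}[\nabla\psi(\mathbf{X})]\neq\mathbf{0}$.

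The most delicate point is this last step: strict convexity is a \emph{local} hypothesis (only on $U$), whereas the identity derived from central symmetry is \emph{global} (affinity of $\psi$ along the entire segment $[\mathbf{X},\mathbf{X}-2\mathbf{\Delta}]$). The key observation bridging these is that affinity on a long segment automatically yields affinity on every sub-segment, so it suffices to produce, with positive probability, a short segment with both endpoints in $U$---which openness of $U$ combined with $\mu(U)>0$ makes automatic.
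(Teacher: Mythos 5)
Your proposal is correct and follows essentially the same route as the paper's proof: reduce via \cref{prop:consistency_signed_rank} to showing $\mathbb{E}[\nabla\psi(\mathbf{X})]\neq\mathbf{0}$, use $-I\in\G$ and equivariance to get $\nabla\psi(\mathbf{X}-2\mathbf{\Delta})\stackrel{d}{=}-\nabla\psi(\mathbf{X})$, then combine monotonicity of the gradient with strict convexity of $\psi$ along the direction $\mathbf{\Delta}$ on the open set to rule out $\mathbb{E}[\nabla\psi(\mathbf{X})]^\top\mathbf{\Delta}=0$ (the paper phrases this as $g'(2)>g'(0)$ for points of the set rather than your affinity-on-a-subsegment contradiction, but it is the same idea). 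The only nit is that you take expectations of $\nabla\psi(\mathbf{X})$ and $\nabla\psi(\mathbf{X}-2\mathbf{\Delta})$ without checking integrability, which the paper verifies from Assumption~\ref{assump:nu_n-nu}-(i) via $\|\nabla\psi(\mathbf{X})\|=\|R(\mathbf{X})\|$ with $R(\mathbf{X})\sim\nu$.
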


Let us now discuss the consistency properties of the generalized sign test. Recall that $S_n(\mathbf{X}_i) \equiv S(\mathbf{X}_i, R_n(\mathbf{X}_i))$ where $S(\mathbf{x},\mathbf{h}) = \argmin_{Q\in \G} \|Q^\top\mathbf{x} - \mathbf{h}\|^2$.
When we have a free group action, $S(\mathbf{x},\mathbf{h})$ will be $({\rm identity}, R)\#\mu$-a.e.~unique (see~\cref{cor:conv_popu}). For example:
\begin{enumerate}
\item (Central symmetry) $S(\mathbf{x},\mathbf{h}) = {\rm sign}(\mathbf{x}^\top \mathbf{h} ) I_p$ is unique provided that $\mathbf{x}^\top \mathbf{h}\neq 0$.
\item (Sign symmetry) $S(\mathbf{x},\mathbf{h}) = {\rm diag}\left(\textbf{sign}(\mathbf{x}\odot\mathbf{h} )\right)$ is unique provided $\mathbf{x}\odot\mathbf{h}\neq \mathbf{0}$, where
for $\mathbf{a},\mathbf{b}\in \R^p$, $\mathbf{a}\odot \mathbf{b}$ denotes $(a_1b_1, \ldots ,a_pb_p)$, and
$\textbf{sign}(\cdot)$ is the component sign.
\end{enumerate}
\begin{remark}[Spherical symmetry]\label{rem:Sp-Sym}
For spherical symmetry, the minimizer $S(\mathbf{x},\mathbf{h})$ is not unique and needs to be chosen uniformly at random from the set of all minimizers.
We can select the minimizer in the following way, by introducing an independent Gaussian vector $\varepsilon \sim N(\mathbf{0}, I_{p(p-1)})$:
Suppose $\mathbf{h},\mathbf{x}\neq \mathbf{0}$. Let $\mathbf{v} = \frac{\mathbf{h}}{\|\mathbf{h}\|}$, $\mathbf{w}=\frac{\mathbf{x}}{\|\mathbf{x}\|}$, and $V$ and $W$ be $p\times (p-1)$ matrices such that $V^\top V=W^\top W=I_{p-1}$, $V^\top \mathbf{v}=W^\top \mathbf{w} = \mathbf{0}$.
Generate $S(\mathbf{x},\mathbf{h}) \equiv S(\mathbf{x},\mathbf{h},\varepsilon)$ as a function of $\mathbf{x},\mathbf{h},\varepsilon$:
\begin{enumerate}
\item[(a)] Set $\mathbf{v}^\top$ as the first row and extend it to a $p\times p $ orthogonal matrix $[\mathbf{v}\ V]^\top \in {\rm O}(p)$.
\item[(b)] Set $\mathbf{w}$ as the first column and add $p(p-1)$ i.i.d.\ standard normal random variables $\varepsilon\in\R^{p(p-1)}$ to form a $p\times p $ matrix $[\mathbf{w}\ \varepsilon]$. Apply
Gram-Schmidt orthogonalization to the columns of $[\mathbf{w}\ \varepsilon]$ (in the order of the columns) to obtain ${\rm GS}([\mathbf{w}\ \varepsilon]) \in {\rm O}(p)$.
\item[(c)] Set $S(\mathbf{x},\mathbf{h},\varepsilon) = {\rm GS}([\mathbf{w}\ \varepsilon])[\mathbf{v}\ V]^\top$.
\end{enumerate}
The validity of the above procedure is shown in~\cref{sec:gen_sign_sph}.
\end{remark}
The following result (proved in \cref{sec:pf_consistency}) gives sufficient conditions for the consistency of the generalized sign test for the three prime examples of $\G$ considered in this paper, corresponding to central, sign and spherical symmetry.


\begin{theorem}[Consistency of the generalized sign test]\label{prop:consistency_sign}
Suppose Assumptions~\ref{assump:nu-G} and~\ref{assump:Weak-nu_n} hold.
Then, the generalized sign test, which rejects $\mathrm{H}_0$ for large values of $\|T_n\|_F$, is consistent if 
\begin{itemize}
\item (Central/Sign symmetry) $\G$ acts freely on $\G B$, and $\mathbb{E} [S(\mathbf{X},R( \mathbf{X}))]\neq \mathbf{0}$.
\item (Spherical symmetry) $\G= {\rm O}(p)$, $\nu(\{\mathbf{0}\})=0$, and $\mathbb{E} [S(\mathbf{X},R( \mathbf{X}),\varepsilon)]\neq \mathbf{0}$, where $S(\mathbf{x},\mathbf{h},\varepsilon)$ is given in the above construction (which requires $\mathbf{h}\neq \mathbf{0}$).
\end{itemize}
\end{theorem}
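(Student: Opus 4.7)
In each of the three cases the plan is to show that, under $\mathrm{H}_1$,
\[
\bar{S}_n \defn \frac{1}{n}\sum_{i=1}^n S_n(\mathbf{X}_i) \overset{a.s.}{\longrightarrow} M^\ast \neq \mathbf{0}_{p\times p},
\]
where $M^\ast = \E[S(\mathbf{X}, R(\mathbf{X}))]$ in the central/sign symmetry cases and $M^\ast = \E[S(\mathbf{X}, R(\mathbf{X}), \varepsilon)]$ in the spherical case. Since $T_n = \sqrt{n}\,\bar{S}_n$, this forces $\|T_n\|_F \to \infty$ almost surely. On the other hand, \cref{prop:Sign-Test} shows that under $\mathrm{H}_0$ the statistic $\|T_n\|_F$ converges in distribution, so any test rejecting for large values of $\|T_n\|_F$ at a fixed level $\alpha$ has a bounded critical value. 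Combining the two yields consistency.

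\textbf{Central and sign symmetry.}
Here $\G$ acts freely on $\G B$, so \cref{cor:conv_popu}(2) applies. Taking the continuous loss $\tilde{\rho}(Q, Q') = \|Q - Q'\|_F$, one gets
\[
\frac{1}{n}\sum_{i=1}^n \bigl\|S_n(\mathbf{X}_i) - S(\mathbf{X}_i, R(\mathbf{X}_i))\bigr\|_F \overset{a.s.}{\longrightarrow} 0,
\]
and the triangle inequality promotes this to $\|\bar{S}_n - n^{-1}\sum_{i=1}^n S(\mathbf{X}_i, R(\mathbf{X}_i))\|_F \to 0$ a.s. Since the matrices $S(\mathbf{X}_i, R(\mathbf{X}_i))$ are i.i.d.\ orthogonal, hence uniformly bounded in Frobenius norm by $\sqrt{p}$, the classical strong law of large numbers gives $n^{-1}\sum_{i=1}^n S(\mathbf{X}_i, R(\mathbf{X}_i)) \to \E[S(\mathbf{X}, R(\mathbf{X}))] = M^\ast$ a.s., from which the desired limit for $\bar{S}_n$ follows.

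\textbf{Spherical symmetry.}
Here uniqueness of the sign fails, so I invoke the randomized construction $S_n(\mathbf{X}_i) = \tilde{S}(\mathbf{X}_i, R_n(\mathbf{X}_i), \varepsilon_i)$ of \cref{rem:Sp-Sym}, with $\varepsilon_i \sim N(\mathbf{0}, I_{p(p-1)})$ i.i.d.\ and independent of everything else. Two observations drive the argument. First, under $\mu \in \mathcal{P}_{\rm ac}(\R^p)$ and $\nu(\{\mathbf{0}\})=0$ we have $\mathbf{X} \neq \mathbf{0}$ a.s.\ and $R(\mathbf{X}) \neq \mathbf{0}$ $\mu$-a.s.; on this set, Gram-Schmidt applied to generic Gaussian inputs is continuous in $(\mathbf{x}, \mathbf{h})$, so $\tilde{S}$ is jointly continuous almost surely and uniformly bounded in Frobenius norm by $\sqrt{p}$. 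Second, since $q(\mathbf{x}) = (\|\mathbf{x}\|, 0, \ldots, 0)$ is continuous, \cref{cor:conv_popu}(1) yields $n^{-1}\sum_{i=1}^n \|R_n(\mathbf{X}_i) - R(\mathbf{X}_i)\| \to 0$ a.s. I plan to combine these, via a truncation/compact-approximation argument, to show
\[
\frac{1}{n}\sum_{i=1}^n \bigl\|\tilde{S}(\mathbf{X}_i, R_n(\mathbf{X}_i), \varepsilon_i) - \tilde{S}(\mathbf{X}_i, R(\mathbf{X}_i), \varepsilon_i)\bigr\|_F \overset{a.s.}{\longrightarrow} 0,
\]
after which an SLLN on the i.i.d.\ average $n^{-1}\sum_{i=1}^n \tilde{S}(\mathbf{X}_i, R(\mathbf{X}_i), \varepsilon_i)$ finishes the spherical case.

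\textbf{Main obstacle.} The hardest piece is the approximation step for the spherical case: \cref{cor:conv_popu} supplies only an averaged $\rho$-loss convergence of the ranks, and $\tilde{S}$ is merely continuous (not Lipschitz) in its second argument. My intended fix is to split indices according to whether $\|R_n(\mathbf{X}_i) - R(\mathbf{X}_i)\|$ is below a threshold $\delta$; on the good indices a uniform continuity estimate over a large compact set (on which $(\mathbf{X}, R(\mathbf{X}))$ concentrates by absolute continuity) controls the summand, while on the bad indices (whose empirical fraction is $o(1)$ by Markov's inequality applied to the averaged-rank convergence) the trivial bound $\|\tilde{S}\|_F \le \sqrt{p}$ keeps the total contribution $o(1)$.
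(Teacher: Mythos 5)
Your overall reduction (show $\bar{S}_n=\frac1n\sum_i S_n(\mathbf{X}_i)\to M^\ast\neq\mathbf{0}$ a.s., so $\|T_n\|_F=\sqrt{n}\,\|\bar S_n\|_F\to\infty$ while the null critical value stays bounded) is exactly the paper's reduction, and your central/sign argument is correct. It differs from the paper only in packaging: you pass through \cref{cor:conv_popu}-2 with $\tilde\rho(Q,Q')=\|Q-Q'\|_F$ and then apply the SLLN to the i.i.d.\ bounded population signs, whereas the paper proves the claim $\frac1n\sum_i S_n(\mathbf{X}_i)\to\E[S(\mathbf{X},R(\mathbf{X}))]$ in one stroke: \cref{lem:joint_weak} gives a.s.\ weak convergence of the empirical joint law $(\mathrm{identity},R_n)\#\mu_n$ to $(\mathrm{identity},R)\#\mu$, \cref{lem:free_G_cont} gives a.e.\ continuity (and uniqueness) of $S(\cdot,\cdot)$ under the free action, and since $S$ is bounded the empirical average, viewed as an integral against the empirical joint law, converges by bounded convergence. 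Your route is legitimate but uses the same underlying lemmas (Theorem~\ref{cor:conv_popu}-2 is itself proved from them), so it buys no extra generality.

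The spherical case is where your proposal has genuine gaps. First, you invoke \cref{cor:conv_popu}-1 with $\rho(\mathbf{x},\mathbf{y})=\|\mathbf{x}-\mathbf{y}\|$ to get $\frac1n\sum_i\|R_n(\mathbf{X}_i)-R(\mathbf{X}_i)\|\to 0$; that part of the theorem requires the moment convergence $\E\|\mathbf{H}_n\|\to\E\|\mathbf{H}\|<\infty$, which is \emph{not} among the hypotheses of the statement you are proving (only Assumptions~\ref{assump:nu-G} and~\ref{assump:Weak-nu_n} are assumed). This is repairable — e.g.\ use the bounded loss $\rho(\mathbf{x},\mathbf{y})=\min(\|\mathbf{x}-\mathbf{y}\|,1)$, for which weak convergence alone suffices — but as written your argument assumes more than the theorem grants. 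Second, the truncation/uniform-continuity step that you yourself flag as the main obstacle is not yet a proof: the modulus of continuity of $\tilde S(\mathbf{x},\mathbf{h},\varepsilon_i)$ in $\mathbf{h}$ depends on $\varepsilon_i$ (which is unbounded and changes with $i$, through the near-degeneracy of the Gram--Schmidt step) and on how the frame $[\mathbf{v}\ V]$ is chosen as a function of $\mathbf{v}=\mathbf{h}/\|\mathbf{h}\|$, which cannot be done continuously on the whole sphere and also blows up as $\mathbf{h}\to\mathbf{0}$; you would have to excise all of these bad regions (small $\|\mathbf{X}_i\|$, small $\|R(\mathbf{X}_i)\|$, large $\|\varepsilon_i\|$, near-singular $[\mathbf{w}\ \varepsilon_i]$) and show their empirical frequency is negligible before a uniform modulus is available. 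None of this machinery is needed: the paper handles the randomized sign exactly as in the free-action case, by observing that $S(\cdot,\cdot,\cdot)$ of \cref{rem:Sp-Sym} is bounded and continuous a.e.\ with respect to the limit law $(\mathrm{identity},R)\#\mu\otimes P_\varepsilon$ (here $\nu(\{\mathbf{0}\})=0$ guarantees $R(\mathbf{X})\neq\mathbf{0}$ a.s.), and then applying the weak convergence of \cref{lem:joint_weak} together with dominated convergence to conclude $\frac1n\sum_i S(\mathbf{X}_i,R_n(\mathbf{X}_i),\varepsilon_i)\to\E[S(\mathbf{X},R(\mathbf{X}),\varepsilon)]$ directly. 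I would either adopt that route or fill in your truncation step with the explicit exclusions above; as it stands, the spherical case is a sketch rather than a proof.
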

Note that even in the univariate setting, the classical sign test may not be consistent against location shift alternatives if the symmetric distribution has no mass in a neighborhood of the origin.\footnote{For example, if $X\sim {\rm Uniform}\left([-2,-1]\cup [1,2]\right)$, then the sign test will be powerless against a location shift by $\delta < 1$.} In fact, the condition $\mathbb{E} [S(\mathbf{X},R( \mathbf{X}))]\neq \mathbf{0}$ for consistency mentioned in~\cref{prop:consistency_sign} directly generalizes its one-dimensional counterpart; see remark 
below.

\begin{remark}[Consistency of the proposed tests when $p=1$]\label{rk:classical_Wilcoxon}
If $p=1$ and $\nu$ is concentrated on $(0,\infty)$,
then by Theorem~\ref{prop:consistency_sign} the generalized sign test is consistent against alternatives where $\mathbb{E}[{\rm sign}(X)]\neq 0$. This coincides with the condition needed for the consistency of the classical two-sided sign test~\citep[see e.g.,][Section 2.3]{kolassa2020introduction}.
If $\nu={\rm Uniform}(0,1)$, then
$R(x)=F_{|X|}(|x|)$, where $F_{|X|}(\cdot)$ is the cumulative distribution function of $|X|$, and $\nabla \psi(x) = {\rm sign}(x) F_{|X|}(|x|)$.
\cref{prop:consistency_signed_rank} implies GWSR test with identity score function is consistent against
$\mathbb{E}[{\rm sign}(X)F_{|X|}(|X|)]\neq 0$, which is equivalent to the condition for the consistency of the classical Wilcoxon signed-rank test, i.e.,
$\mathbb{P}(X_1+X_2 > 0) \neq \frac{1}{2}$, where $X_1,X_2$ are i.i.d.\ from $\mu$; see \cref{sec:pf_classical_Wilcoxon} for a proof. 
\end{remark}

\subsection{Efficiency of the GWSR Test}\label{subsec:efficiency}
In this subsection we study the asymptotic relative efficiency (ARE) of our GWSR test w.r.t.~the Hotelling's $T^2$ test. We first introduce the following lemma (see \cref{sec:pf_lem:population_signed_rank} for a proof), which states that under ${\rm H}_0$, we can replace the empirical signed-ranks by their population counterparts,
while keeping the limiting distribution of the GWSR statistic unchanged. Results of this type are sometimes referred to as H{\' a}jek representation (cf.~\citet[Theorem 1, Section 6.1.7]{hajek1999represent}) and have appeared in a number of recent works on applications of multivariate ranks defined via OT \citep{shi2022universally,hallin2020fully}.

\begin{lemma}[H{\' a}jek representation]\label{lem:population_signed_rank}
Suppose Assumptions~\ref{assump:nu-G}, \ref{assump:Weak-nu_n}, \ref{assump:equivarJ}, \ref{assump:nu_n-nu}-(ii) and \ref{assump:G}-(i) hold.
Then, under $\textrm{H}_0$,
$$\frac{1}{\sqrt{n}}\sum_{i=1}^n S_n(\mathbf{X}_i)  J(R_n(\mathbf{X}_i))
- \frac{1}{\sqrt{n}}\sum_{i=1}^n S(\mathbf{X}_i,R(\mathbf{X}_i))  J(R(\mathbf{X}_i)) \overset{L^2}{\longrightarrow}0.$$
\end{lemma}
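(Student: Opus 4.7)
The plan is to combine the equivariance of $J$ (which lets us recast the difference purely in terms of signed-ranks) with a symmetrization under $\mathrm{H}_0$, and then invoke the signed-rank convergence from~\cref{cor:conv_popu}(3) together with a Vitali/uniform-integrability step. By Assumption~\ref{assump:equivarJ}, $S_n(\mathbf{X}_i)J(R_n(\mathbf{X}_i)) = J(\hat{\mathbf{Y}}_i)$ with $\hat{\mathbf{Y}}_i := S_n(\mathbf{X}_i)R_n(\mathbf{X}_i)$, and $S(\mathbf{X}_i,R(\mathbf{X}_i))J(R(\mathbf{X}_i)) = J(\mathbf{Y}_i)$ with $\mathbf{Y}_i := \nabla\psi(\mathbf{X}_i)$; so setting $A_i := J(\hat{\mathbf{Y}}_i) - J(\mathbf{Y}_i)$ the task reduces to showing $\mathbb{E}\|n^{-1/2}\sum_i A_i\|^2 \to 0$.

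Next I would symmetrize. Introduce i.i.d.\ $\epsilon_1,\ldots,\epsilon_n \sim \mathrm{Uniform}(\G)$, independent of the data. Under $\mathrm{H}_0$ the joint law of $(\mathbf{X}_1,\ldots,\mathbf{X}_n)$ is invariant under the replacement $\mathbf{X}_i\mapsto\epsilon_i\mathbf{X}_i$; moreover, because the cost $c(\cdot,\mathbf{h})$ in~\eqref{eq:cost} is $\G$-invariant in its first argument, the OT problem~\eqref{eq:OTproblem} is literally unchanged by this substitution, hence $R_n(\epsilon_i\mathbf{X}_i)=R_n(\mathbf{X}_i)$ and $S_n(\epsilon_i\mathbf{X}_i)=\epsilon_i S_n(\mathbf{X}_i)$, and the analogous identities hold at the population level by the equivariance in~\cref{thm:population_rank}(iii). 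Consequently $D_n\stackrel{d}{=}n^{-1/2}\sum_i\epsilon_iA_i$, and conditioning on the data with $\mathbb{E}[\epsilon_i]=\mathbf{0}$ (Assumption~\ref{assump:G}-(i)) and $\mathbb{E}[\epsilon_i^\top\epsilon_j]=\delta_{ij}I_p$ kills the off-diagonal terms, giving
\[
\mathbb{E}\|D_n\|^2 \;=\; \mathbb{E}\!\left[\frac{1}{n}\sum_{i=1}^n\|A_i\|^2\right] \;=\; \mathbb{E}\|A_1\|^2,
\]
the final equality being by exchangeability of $(A_i)_{i=1}^n$.

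It remains to show $\mathbb{E}\|J(\hat{\mathbf{Y}}_1)-J(\mathbf{Y}_1)\|^2\to 0$. Applying~\cref{cor:conv_popu}(3) with $\rho(\mathbf{x},\mathbf{y})=\|\mathbf{x}-\mathbf{y}\|^2$ --- legitimate under Assumption~\ref{assump:nu_n-nu}-(ii) --- yields $\frac{1}{n}\sum_i\|\hat{\mathbf{Y}}_i-\mathbf{Y}_i\|^2\to 0$ almost surely, and this sequence is uniformly integrable because it is dominated by $\frac{2}{n}\sum_j\|\mathbf{h}_j\|^2+\frac{2}{n}\sum_i\|\mathbf{Y}_i\|^2$, the first summand being deterministic with finite limit $2\mathbb{E}\|\mathbf{H}\|^2$ and the second $L^1$-convergent to the same limit by the classical LLN. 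Exchangeability then delivers $\hat{\mathbf{Y}}_1\to\mathbf{Y}_1$ in $L^2$, hence in probability. Along any subsequence of $n$, extract a further subsequence on which $\hat{\mathbf{Y}}_1\to\mathbf{Y}_1$ almost surely; since $\mathbf{Y}_1\sim\nu_S$ places zero mass on the discontinuity set of $J$ (Assumption~\ref{assump:equivarJ}), continuity of $J$ at $\mathbf{Y}_1$ yields $J(\hat{\mathbf{Y}}_1)\to J(\mathbf{Y}_1)$ a.s.\ along the subsequence, and hence in probability overall. Coupled with the matching moment convergence $\mathbb{E}\|J(\hat{\mathbf{Y}}_1)\|^2=\frac{1}{n}\sum_j\|J(\mathbf{h}_j)\|^2\to\mathbb{E}\|J(\mathbf{H})\|^2=\mathbb{E}\|J(\mathbf{Y}_1)\|^2$ (Assumption~\ref{assump:nu_n-nu}-(ii)), Vitali's theorem furnishes uniform integrability of $\|J(\hat{\mathbf{Y}}_1)-J(\mathbf{Y}_1)\|^2$, and therefore $L^1$ convergence to zero. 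The main obstacle is precisely this last transfer: because $J$ is only $\nu_S$-a.e.\ continuous, one cannot feed $\rho(\mathbf{x},\mathbf{y})=\|J(\mathbf{x})-J(\mathbf{y})\|^2$ directly into~\cref{cor:conv_popu}, and the subsequence/Vitali route above is needed to lift the raw $L^2$ convergence of signed-ranks to convergence of the score-transformed quantity.
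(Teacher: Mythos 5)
Your symmetrization step is sound, and it is a nice repackaging of the paper's cross-term computations: randomizing each observation by an independent ${\rm Uniform}(\G)$ element multiplies both the empirical and the population signed-rank of the $i$-th point by the same $\epsilon_i$ (by $\G$-invariance of the cost in \eqref{eq:cost} and the equivariance in \cref{thm:population_rank}-(iii), with the usual care about the randomized tie-breaking, exactly as in the proof of \cref{prop:properties_of_generalized_sign_rank}), so Assumption~\ref{assump:G}-(i) annihilates all cross terms and reduces the lemma to showing $\E\|J(\hat{\mathbf{Y}}_1)-J(\mathbf{Y}_1)\|^2\to 0$ --- the same analytic crux that the paper attacks with its liminf/Fatou argument.

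The genuine gap is in how you establish that last convergence. You invoke \cref{cor:conv_popu}-3 with $\rho(\mathbf{x},\mathbf{y})=\|\mathbf{x}-\mathbf{y}\|^2$ and call this ``legitimate under Assumption~\ref{assump:nu_n-nu}-(ii)'', but the hypothesis of that result for this $\rho$ is $\E\|\mathbf{H}_n\|^2\to\E\|\mathbf{H}\|^2<\infty$, i.e.\ second-moment convergence of the reference sequence itself, whereas Assumption~\ref{assump:nu_n-nu}-(ii) only controls the moments of $J(\mathbf{H}_n)$. The same unstated hypothesis underlies your uniform-integrability bound $\frac{2}{n}\sum_j\|\mathbf{h}_j\|^2+\frac{2}{n}\sum_i\|\mathbf{Y}_i\|^2$ with ``finite limit $2\E\|\mathbf{H}\|^2$'': weak convergence of $\nu_n$ gives no moment convergence, and the lemma does not even assume $\nu$ has finite second moments. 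For a non-identity score (e.g.\ a bounded $J$ with a heavy-tailed reference, or the optimal score of \cref{thm:efficiency}), your intermediate claims --- $\frac{1}{n}\sum_i\|\hat{\mathbf{Y}}_i-\mathbf{Y}_i\|^2\to 0$ a.s., its uniform integrability, and hence $\hat{\mathbf{Y}}_1\to\mathbf{Y}_1$ in $L^2$ --- are unproven and can fail, so your argument as written only covers the case $\|J(\mathbf{x})\|\equiv\|\mathbf{x}\|$ (essentially the identity score). The fix is to never compare raw signed-ranks: apply the a.s.\ weak convergence of the empirical measure of the pairs $(\mathbf{X}_i,\hat{\mathbf{Y}}_i)$ (\cref{lem:joint_weak}) to the nonnegative function $(\mathbf{x},\mathbf{y})\mapsto\|J(\mathbf{y})-J(\nabla\psi(\mathbf{x}))\|^2$, which is continuous a.e.\ under the limit law since $\nabla\psi(\mathbf{X})\sim\nu_S$ under ${\rm H}_0$ and $J$ is $\nu_S$-a.e.\ continuous, and obtain uniform integrability from Assumption~\ref{assump:nu_n-nu}-(ii) (via Scheff\'e/\cref{lem:UI}, noting $\frac{1}{n}\sum_i\|J(\hat{\mathbf{Y}}_i)\|^2=\frac{1}{n}\sum_j\|J(\mathbf{h}_j)\|^2$) together with the LLN for the i.i.d.\ integrable variables $\|J(\nabla\psi(\mathbf{X}_i))\|^2$; this gives $\E\bigl[\frac{1}{n}\sum_i\|J(\hat{\mathbf{Y}}_i)-J(\mathbf{Y}_i)\|^2\bigr]\to 0$, which is in substance the paper's double-Fatou step, and your symmetrization then completes the proof.
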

The above lemma together with Le Cam's third lemma implies the following CLT under contiguous alternatives; see \cref{sec:pf_thm:asymptotics_local_alternatives} for a proof.
\begin{theorem}[Asymptotic distribution under local alternatives]\label{thm:asymptotics_local_alternatives}
Let $\mathbf{X}_1,\ldots,\mathbf{X}_n$ be i.i.d.~with Lebesgue density $f(\mathbf{\cdot} -{\B \theta})$ on $\R^p$, where $f$ is the density of a $\G$-symmetric distribution and $\B \theta \in \R^p$. Consider testing:
\begin{equation}\label{eq:local_test}
{\rm H}_0: {\B \theta} = \mathbf{0}_p \qquad\textrm{versus}\qquad {\rm H}_1: {\B \theta} = \frac{{\B \xi}_n}{\sqrt{n}},
\end{equation}
where ${\B \xi}_n\to {\B \xi}\in\R^p$ as $n \to \infty$.
Suppose Assumptions~\ref{assump:nu-G}, \ref{assump:Weak-nu_n}, \ref{assump:equivarJ}, \ref{assump:nu_n-nu}-(ii) and \ref{assump:G}-(i) hold.
Under the assumption of differentiability under quadratic mean of the parametric family $\{f(\cdot -{\B \theta})\}_{{\B \theta} \in \R^p}$ (see~\cref{sec:regu} for the details), we have under ${\rm H}_1:{\B \theta} = \frac{{\B \xi}_n}{\sqrt{n}}$,
\begin{equation*}\label{eq:Gamma}
\mathbf{W}_n\overset{d}{\longrightarrow}  N\left({\B \gamma},\Sigma_{\rm ERD}\right),\quad \qquad \mbox{where} \;\;
{\B \gamma} := -\mathbb{E}_{{\rm H}_0} \left[J(\nabla\psi(\mathbf{X}))\frac{{\B \xi}^\top \nabla f (\mathbf{X})}{f(\mathbf{X})}\right]\in\R^p.
\end{equation*}
\end{theorem}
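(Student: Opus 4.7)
My plan is to combine the H\'ajek representation (Lemma on population signed-rank replacement) with Le Cam's third lemma, pushing all of the real work onto an i.i.d.\ joint CLT under $\mathrm{H}_0$.

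First I would invoke the H\'ajek representation lemma to replace $\mathbf{W}_n$ by its population counterpart
$\tilde{\mathbf{W}}_n := \tfrac{1}{\sqrt{n}}\sum_{i=1}^n S(\mathbf{X}_i,R(\mathbf{X}_i))J(R(\mathbf{X}_i))$
at an $o_{L^2}(1)$ cost under $\mathrm{H}_0$; since $\mathrm{H}_0$ and $\mathrm{H}_1$ are mutually contiguous (by differentiability in quadratic mean), this $o_P(1)$ bound also transfers to $\mathrm{H}_1$. Using that $J$ is $\G$-equivariant (Assumption~\ref{assump:equivarJ}) and that $S(\mathbf{X},R(\mathbf{X}))R(\mathbf{X})=\nabla\psi(\mathbf{X})$ a.s.\ by Theorem~\ref{thm:population_rank}(iv), I can rewrite the summands as $J(\nabla\psi(\mathbf{X}_i))$. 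Under $\mathrm{H}_0$, Theorem~\ref{thm:population_rank}(ii) gives $\nabla\psi(\mathbf{X})\sim \nu_S$, so $J(\nabla\psi(\mathbf{X}))$ has exactly the ERD; combined with Assumption~\ref{assump:G}-(i), this summand is mean-zero with covariance $\Sigma_{\rm ERD}$. Thus $\tilde{\mathbf{W}}_n$ is a normalized sum of i.i.d.\ mean-zero vectors with covariance $\Sigma_{\rm ERD}$, and in particular it satisfies a multivariate CLT under $\mathrm{H}_0$.

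Next I would apply the DQM hypothesis (see \cref{sec:regu}) to the location family $\{f(\cdot-\B{\theta})\}$, which yields the LAN expansion of the log-likelihood ratio
\begin{equation*}
\Lambda_n \;=\; \log\prod_{i=1}^n \frac{f(\mathbf{X}_i-\B{\xi}_n/\sqrt{n})}{f(\mathbf{X}_i)} \;=\; -\frac{1}{\sqrt{n}}\sum_{i=1}^n \B{\xi}^\top \frac{\nabla f(\mathbf{X}_i)}{f(\mathbf{X}_i)} \;-\; \tfrac{1}{2}\B{\xi}^\top I(f)\B{\xi} \;+\; o_{P_{\mathrm{H}_0}}(1),
\end{equation*}
with the score term $-\B{\xi}^\top\nabla f(\mathbf{X}_i)/f(\mathbf{X}_i)$ being i.i.d., mean zero, and square-integrable (finite Fisher information is built into DQM). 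Pairing these two i.i.d.\ summations together, the bivariate-summand multivariate CLT yields, under $\mathrm{H}_0$,
\begin{equation*}
\begin{pmatrix} \tilde{\mathbf{W}}_n \\ \Lambda_n \end{pmatrix} \;\overset{d}{\longrightarrow}\; N\!\left(\begin{pmatrix} \mathbf{0}_p \\ -\tfrac{1}{2}\B{\xi}^\top I(f)\B{\xi} \end{pmatrix},\; \begin{pmatrix} \Sigma_{\rm ERD} & \B{\tau} \\ \B{\tau}^\top & \B{\xi}^\top I(f)\B{\xi} \end{pmatrix}\right),
\end{equation*}
where the cross-covariance is
$\B{\tau} = \mathrm{Cov}_{\mathrm{H}_0}\!\big(J(\nabla\psi(\mathbf{X})),\, -\B{\xi}^\top\nabla f(\mathbf{X})/f(\mathbf{X})\big) = -\mathbb{E}_{\mathrm{H}_0}\!\big[J(\nabla\psi(\mathbf{X}))\,\B{\xi}^\top\nabla f(\mathbf{X})/f(\mathbf{X})\big] = \B{\gamma}$,
after using that both factors are mean zero under $\mathrm{H}_0$ (the first by $G$-symmetry of $\nabla\psi(\mathbf{X})$ and equivariance of $J$; the second as a score).

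Finally, Le Cam's third lemma (see, e.g., van der Vaart, 1998, Theorem 6.6) converts the joint limit under $\mathrm{H}_0$ into the marginal limit of $\tilde{\mathbf{W}}_n$ under $\mathrm{H}_1$: namely $\tilde{\mathbf{W}}_n \overset{d}{\to} N(\B{\gamma},\Sigma_{\rm ERD})$ under $\mathrm{H}_1$. Contiguity then lets me pass from $\tilde{\mathbf{W}}_n$ back to $\mathbf{W}_n$, completing the proof. The main technical obstacle is verifying the joint CLT, and specifically identifying the cross-covariance $\B{\tau}$ with $\B{\gamma}$: this requires the score $\nabla f/f$ to be square-integrable (which is ensured by DQM) and the equivariance identity $J(S(\mathbf{X},R(\mathbf{X}))R(\mathbf{X}))=S(\mathbf{X},R(\mathbf{X}))J(R(\mathbf{X}))$ to collapse the population signed-rank into $J(\nabla\psi(\mathbf{X}))$; once these are in hand, both the marginal CLT for $\tilde{\mathbf{W}}_n$ and the LAN expansion for $\Lambda_n$ are routine multivariate/Lindeberg arguments, and contiguity handles the transfer between $\mathbf{W}_n$ and $\tilde{\mathbf{W}}_n$ throughout.
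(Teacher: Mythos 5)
Your proposal is correct and follows essentially the same route as the paper's proof: replace $\mathbf{W}_n$ by $\frac{1}{\sqrt{n}}\sum_{i=1}^n J(\nabla\psi(\mathbf{X}_i))$ via the H\'ajek representation lemma, combine this with the LAN expansion of the log-likelihood ratio from quadratic-mean differentiability, apply the multivariate CLT jointly, and conclude with Le Cam's third lemma. The only cosmetic difference is that you invoke contiguity explicitly to transfer the $o_P(1)$ remainder to $\mathrm{H}_1$, whereas the paper folds this into the Le Cam step by working with $\mathbf{W}_n$ directly in the joint limit; both are equivalent.
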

The above theorem implies that
$$\mathbf{W}_n^\top \Sigma_{\rm ERD}^{-1} \mathbf{W}_n\overset{d}{\longrightarrow}
\left\|\Sigma_{\rm ERD}^{-1/2}{\B \gamma} + N(\mathbf{0},I_p)  \right\|^2,$$ which is a non-central $\chi^2$-distribution with $p$ degrees of freedom and {\it non-centrality} parameter 
$\|\Sigma_{\rm ERD}^{-1/2}{\B \gamma}\|^2$.  Note that this limiting distribution does not depend on $\G$ directly; it only depends on $J(\cdot)$, $\nabla \psi(\cdot)$ (which further depends on $\nu_S$; see~\cref{thm:population_rank}-$(ii)$), ${\B \xi} \in \R^p$, and the distribution of $\mathbf{X}$ under ${\rm H}_0$. We compare this test with the Hotelling's $T^2$ statistic given by
\begin{equation}\label{eq:Hotelling-T2}
T^2=n\bar{\mathbf{X}}_n^\top S_n^{-1} \bar{\mathbf{X}}_n,
\end{equation}
where $S_n := \frac{1}{n-1}\sum_{i=1}^n (\mathbf{X}_i-\bar{\mathbf{X}}_n)(\mathbf{X}_i-\bar{\mathbf{X}}_n)^\top$ is the sample covariance matrix. Observe that as $n\to\infty$, $S_n\overset{p}{\longrightarrow}\Sigma_\mathbf{X} := \mathbb{E}[\left(\mathbf{X} - \mathbb{E}\mathbf{X}\right)\left(\mathbf{X} - \mathbb{E}\mathbf{X}\right)^\top]$, if the latter is finite. The absolute continuity of $\mathbf{X}\sim \mu$ implies that $\Sigma_\mathbf{X}$ is positive definite.
Under the sequence of contiguous alternatives ${\B \theta}_n \equiv {\B \xi}_n/\sqrt{n}$, we have
$$\begin{aligned}
\sqrt{n} S_n^{-1/2}\bar{\mathbf{X}}_n - \Sigma_\mathbf{X}^{-1/2}{\B \xi}=\sqrt{n}S_n^{-1/2} \left( \bar{\mathbf{X}}_n - {\B \xi}_n/\sqrt{n}\right) + (S_n^{-1/2}{\B \xi}_n  - \Sigma_\mathbf{X}^{-1/2}{\B \xi})\overset{d}{\longrightarrow}N(\mathbf{0},I_p).
\end{aligned}$$
Hence, under this contiguous alternative, the Hotelling's $T^2$ statistic has non-central $\chi^2$-distribution with $p$ degrees of freedom and {\it non-centrality} parameter $ \|\Sigma_\mathbf{X}^{-1/2}{\B \xi}\|^2$.

\begin{definition}[Asymptotic relative efficiency]\label{defn:ARE}
For ${\B \xi}\in\R^p\backslash\{\mathbf{0}\}$, consider testing
$${\rm H}_0: {\B \theta} = \mathbf{0}_p \qquad\textrm{versus}\qquad{\rm H}_1: {\B \theta} = \varepsilon_n {\B \xi}_n,\quad  {\rm where\ }\varepsilon_n\to 0,\ {\B \xi}_n\to{\B \xi}$$
in the location shift model $f(\mathbf{\cdot} -{\B \theta})$, where $f$ is the density of a $\G$-symmetric distribution on $\R^p$.
For $0<\alpha <\beta<1$, let $N_1(\alpha,\beta,{\B \xi})$ be the minimum number of samples needed for the (asymptotic) level $\alpha$ test based on $\mathbf{W}_n$ to achieve power $\beta$ when ${\B \theta}={\B \xi}$, i.e.,
$$N_1\left(\alpha,\beta,{\B \xi}\right):=\min\left\{K\geq 1: \mathbb{P}_{{\B \theta} ={\B \xi}} \left( \mathbf{W}_K^\top \Sigma_{\rm ERD}^{-1} \mathbf{W}_K \geq \chi_p^2(\alpha) \right)\geq \beta \right\},$$
where $\chi_p^2(\alpha)\in\R$ satisfies $\mathbb{P}\left(\chi_p^2 \geq \chi_p^2(\alpha) \right) = \alpha$.
Similarly, let $N_2(\alpha,\beta,{\B \xi})$ denote the minimum number of samples needed for the (asymptotic) level $\alpha$ test based on the Hotelling's $T^2$ statistic to achieve power $\beta$ when ${\B \theta} = {\B \xi}$.
The asymptotic relative efficiency (ARE) of the GWSR test w.r.t.~Hotelling's $T^2$, in the direction ${\B \xi}$, is defined as:
 $${\rm ARE}(\mathbf{W}_n,\bar{\mathbf{X}}_n;{\B \xi})=\lim_{n\to\infty}\frac{N_2\left(\alpha,\beta,\varepsilon_n {\B \xi}_n\right)}{N_1\left(\alpha,\beta,\varepsilon_n {\B \xi}_n\right)}.$$
\end{definition}
The ARE, in general can depend on $\alpha$, $\beta$, $\{\varepsilon_n\}$ and $\{{\B \xi}_n\}$. However, if the tests have asymptotically non-central chi-squared distributions with the same degrees of freedom, as in our case here (such that~\cref{thm:asymptotics_local_alternatives} holds),
then the ARE is free of $\alpha, \beta, \{\varepsilon_n\}$, and is the squared ratio of the corresponding non-centrality parameters 
(see \cref{sec:pf:defnARE} for a proof):
$${\rm ARE}(\mathbf{W}_n,\bar{\mathbf{X}}_n;{\B \xi})=\frac{\|\Sigma_{\rm ERD}^{-1/2}{\B \gamma}\|^2}{\| \Sigma_\mathbf{X}^{-1/2}{\B \xi} \|^2}.$$
The above ARE turns out to be lower-bounded for some sub-families of multivariate distributions. 

In the one-dimensional case, when $f(\cdot)$ is the density of $N(0,\sigma^2)$ for some unknown $\sigma>0$, the relative efficiency of the Wilcoxon signed-rank test against the $t$-test is $3/\pi\approx 0.95$ \citep{Hodges1956efficiency}. The following proposition (see \cref{sec:pf_prop:Gaussian} for a proof) generalizes this result.
A Chernoff-Savage-type result is also provided, which illustrates that the GWSR test can be no worse than the Hotelling's $T^2$ test, in terms of ARE, if the Gaussian reference distribution (with identity score function) is used.


\begin{prop}[Gaussian case]\label{prop:Gaussian}
Suppose Assumptions~\ref{assump:nu-G}, \ref{assump:Weak-nu_n}, \ref{assump:nu_n-nu}-(ii) and \ref{assump:G}-(i) hold.
Suppose $f(\cdot)$ is the density of a multivariate normal distribution $N(\mathbf{0}_p,\Sigma)$ for some unknown positive definite covariance matrix $\Sigma$.
\begin{enumerate}
\item If ${\rm ERD}=N(\mathbf{0},I)$, with $J(\mathbf{x})=\mathbf{x}$, then for all ${\B \xi}\neq \mathbf{0} \in \R^p$, ${\rm ARE}(\mathbf{W}_n,\bar{\mathbf{X}}_n;{\B \xi})=1$.

\item If ${\rm ERD} = {\rm Uniform}(-1,1)^p$, obtained from $J(\mathbf{x})=(2\Phi(x_1)-1,\ldots,2\Phi(x_p)-1)^\top$, where $\Phi$ is the cumulative distribution function (c.d.f.) of the standard normal distribution\footnote{Note that this ERD excludes testing for spherical symmetry as $J(\cdot)$ is not equivariant under the group action of ${\rm O}(p)$.}, then for all ${\B \xi}\neq \mathbf{0}$, ${\rm ARE}(\mathbf{W}_n,\bar{\mathbf{X}}_n;{\B \xi})={3}/{\pi}\approx 0.95$.

\item If ${\rm ERD}$ is the spherical uniform distribution\footnote{That is, the product of the uniform distribution over the unit sphere with the uniform distribution over the unit interval of distances to the origin \citep{hallin2017distribution,Shi2022indep,Hallin2022VARMA,hallin2020fully}.}, with $J(\mathbf{x})=\mathbf{x}$, then for all ${\B \xi}\neq \mathbf{0}$,
$${\rm ARE}(\mathbf{W}_n,\bar{\mathbf{X}}_n;{\B \xi})=\kappa_p \geq \left\{\begin{aligned}
& 0.95,\quad\quad {\rm for\ }p<5,\\
&0.648,\quad\ \ {\rm for\ all\ }p,
\end{aligned}\right.
$$
(see \cref{sec:pf_prop:Gaussian} for the exact value of $\kappa_p$).
In particular, $\kappa_1 = 3/\pi$ reduces to the classical ARE of the Wilcoxon's signed-rank test against the $t$-test \citep{Hodges1956efficiency}.
\end{enumerate}
\end{prop}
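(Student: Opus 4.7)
The plan is to apply the Pitman ARE formula derived just before the statement,
$${\rm ARE}(\mathbf{W}_n,\bar{\mathbf{X}}_n;{\B \xi}) = \frac{\|\Sigma_{\rm ERD}^{-1/2}{\B \gamma}\|^2}{\|\Sigma_{\mathbf{X}}^{-1/2}{\B \xi}\|^2},$$
in each of the three cases, by first identifying the population signed-rank map $\nabla\psi$ via Theorem~\ref{thm:population_rank} and then evaluating ${\B \gamma}$ from Theorem~\ref{thm:asymptotics_local_alternatives}. Throughout, $\G$-symmetry of $\mathbf{X}\sim N(\mathbf{0},\Sigma)$ forces $\Sigma$ to commute with every $Q\in\G$, so $\mu_S = \mu$, $\Sigma_\mathbf{X} = \Sigma$, and $\nabla\log f(\mathbf{x}) = -\Sigma^{-1}\mathbf{x}$, which gives ${\B \gamma} = \E[J(\nabla\psi(\mathbf{X}))\,{\B \xi}^\top\Sigma^{-1}\mathbf{X}]$.

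For part 1, Theorem~\ref{thm:population_rank}(ii) says $\nabla\psi$ pushes $N(\mathbf{0},\Sigma)$ to $N(\mathbf{0},I)$ as the gradient of a convex function, hence $\nabla\psi(\mathbf{x}) = \Sigma^{-1/2}\mathbf{x}$; plugging in $J = {\rm identity}$ and $\E[\mathbf{X}\mathbf{X}^\top] = \Sigma$ yields ${\B \gamma} = \Sigma^{-1/2}{\B \xi}$, so with $\Sigma_{\rm ERD} = I$ the ratio equals $1$. For part 2, the chosen score is equivariant under central or sign symmetry (but not ${\rm O}(p)$), so the admissible $\Sigma$ still allows choosing $\nu$ so that $\nu_S = N(\mathbf{0},I)$, and hence again $\nabla\psi(\mathbf{x}) = \Sigma^{-1/2}\mathbf{x}$. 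Introducing the whitened $\mathbf{Z} = \Sigma^{-1/2}\mathbf{X} \sim N(\mathbf{0},I)$, independence of the coordinates of $\mathbf{Z}$ kills the off-diagonal cross-moments in ${\B \gamma}$, and Stein's identity with $g = \Phi$, $g' = \phi$, $\int\phi^2 = 1/(2\sqrt{\pi})$ gives $\E[(2\Phi(Z)-1)Z] = 1/\sqrt{\pi}$; combined with $\Sigma_{\rm ERD} = \frac{1}{3}I$ from the independent Uniform$(-1,1)$ marginals, this produces ARE $= 3/\pi$.

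For part 3, $\G = {\rm O}(p)$ forces $\Sigma = \sigma^2 I$, and both $\mu$ and the spherical uniform are radial, so the Brenier map acts only on the radius:
$$\nabla\psi(\mathbf{x}) = F_{\chi_p}(\|\mathbf{x}\|/\sigma)\,\mathbf{x}/\|\mathbf{x}\|.$$
Writing $\mathbf{X} = \sigma R\mathbf{U}$ with $R\sim\chi_p$ and $\mathbf{U}\sim{\rm Uniform}(S^{p-1})$ independent, and using $\E[\mathbf{U}\mathbf{U}^\top] = I/p$, a short computation gives ${\B \gamma} = (a_p/(\sigma p)){\B \xi}$ with $a_p := \E[F_{\chi_p}(R) R]$. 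Together with $\Sigma_{\rm ERD} = I/(3p)$ (from $\E[R_H^2] = 1/3$ for the radial ${\rm Uniform}(0,1)$ component of the spherical uniform), this yields the closed form $\kappa_p = 3a_p^2/p$; a coupling identity ($F_{\chi_p}(R_1) = \P(R_2\le R_1\mid R_1)$ and symmetrization over i.i.d.\ copies $R_1,R_2\sim\chi_p$) rewrites this as $\kappa_p = \frac{3}{4p}(\E[\max(R_1,R_2)])^2$, and the case $p = 1$ recovers $3/\pi$ via Stein.

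The main obstacle is the uniform lower bound $\kappa_p \ge 0.648$ across all $p$. The plan is (a) to evaluate $\kappa_p$ for $p \le 4$ numerically using the closed forms $\mu_p = \sqrt{2}\,\Gamma((p+1)/2)/\Gamma(p/2)$ and the mean absolute difference of two i.i.d.\ chi variables, verifying the stronger $\kappa_p \ge 0.95$; (b) to run a delta-method expansion (noting ${\rm Var}(R) \to 1/2$ as $p\to\infty$) to show $\kappa_p \to 3/4$ with an explicit rate, so that $\kappa_p$ is bounded below by any constant strictly less than $3/4$ for all sufficiently large $p$; and (c) to close the intermediate range of $p$ by direct numerical evaluation, yielding the stated uniform bound $0.648$.
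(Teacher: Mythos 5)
Your treatment of parts 1 and 2 is correct and is essentially the paper's own argument: identify $\nabla\psi(\mathbf{x})=\Sigma^{-1/2}\mathbf{x}$ from \cref{thm:population_rank}, then evaluate ${\B \gamma}$; your Stein-identity step with $\int\phi^2=1/(2\sqrt{\pi})$ is the same computation the paper performs via integration by parts (see \cref{sec:pf_prop:Gaussian}), so there is nothing genuinely different there.

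The gap is in part 3. You assume $\G={\rm O}(p)$ and conclude $\Sigma=\sigma^2 I$, but the statement (and the remark immediately following it, as well as \cref{rk:ref_mea}) is meant to hold for \emph{any} admissible $\G$ whose ERD is the spherical uniform distribution --- e.g.\ $\G=\{I,-I\}$ with $\nu$ the spherical uniform restricted to a half-space, in which case the null $N(\mathbf{0},\Sigma)$ has an arbitrary positive definite $\Sigma$. For such $\Sigma$ your closed form $\nabla\psi(\mathbf{x})=F_{\chi_p}(\|\mathbf{x}\|/\sigma)\,\mathbf{x}/\|\mathbf{x}\|$ is unavailable: the natural candidate $\mathbf{x}\mapsto F_{\chi_p}(\|\Sigma^{-1/2}\mathbf{x}\|)\,\Sigma^{-1/2}\mathbf{x}/\|\Sigma^{-1/2}\mathbf{x}\|$ pushes $N(\mathbf{0},\Sigma)$ to the spherical uniform but is not the gradient of a convex function unless $\Sigma\propto I$, so it is not the population signed-rank map of \cref{thm:population_rank}, and your evaluation of ${\B \gamma}$ does not go through. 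The paper handles general $\Sigma$ by invoking the computation of \citet[Eq.~(C.44)]{deb2021efficiency} (the same device used for \cref{thm:ellip}), which shows the non-centrality ratio equals $\kappa_p$ independently of $\Sigma$ and ${\B \xi}$; your argument proves the claim only in the spherically symmetric sub-case. Within that sub-case your formula is right --- one can check via integration by parts that $\kappa_p=3a_p^2/p=\tfrac{3}{4p}\bigl(\E[\max(R_1,R_2)]\bigr)^2$ with $R_i\sim\chi_p$ agrees with the paper's expression, and $\kappa_p\to 3/4$ --- but the uniform bounds ($\ge 0.95$ for $p<5$, $\ge 0.648$ for all $p$) are only sketched (small-$p$ numerics, a delta-method limit with an unproved explicit rate, and ``close the intermediate range numerically''); this mirrors the paper, which also does not verify these constants in-house, but as written it is a plan rather than a proof.
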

Note that the above result is valid for all possible choices of $\G$ as long as the ERD is of the specified form. We will observe this phenomenon for all results in this section. A remarkable result of the classical Wilcoxon signed-rank test in one-dimension is the Hodges-Lehmann phenomenon, which states that the relative efficiency of the classical Wilcoxon signed-rank test w.r.t.\ the $t$-test for location shift alternatives never falls below 0.864 \citep{Hodges1956efficiency}.
The following theorem (see \cref{sec:pf_thm:indepcomp} for a proof) provides a multivariate analogue to this result, albiet in a more restrictive setting --- where the components of the multivariate vector are assumed to be independent.

\begin{theorem}[Independent components case]\label{thm:indepcomp}
Under the assumptions of \cref{thm:asymptotics_local_alternatives}, 
suppose further that $f(\mathbf{x})=\prod_{i=1}^p \tilde{f}_i(x_i)$, for $\mathbf{x} = (x_1,\ldots, x_p) \in \R^p$, where $\tilde{f}_1,\ldots,\tilde{f}_p$ are univariate Lebesgue densities with finite non-zero variances.
\begin{enumerate}
\item If the ERD is ${\rm Uniform}(-1,1)^p$, with $J(\mathbf{x})=\mathbf{x}$, then ${\rm ARE}(\mathbf{W}_n,\bar{\mathbf{X}}_n;{\B \xi})\geq  \frac{108}{125} \approx 0.864$.

Further, equality in the ARE lower bound holds if and only if for all $\xi_i\neq 0$, $\tilde{f}_i$ has density:
$$\tilde{f}_i(x)=\frac{3}{20\sqrt{5}\sigma_i^3}(5\sigma_i^2 - x^2) I(|x|\leq \sqrt{5}\sigma_i),\quad {\rm for\ some\ }\sigma_i>0.$$
\item If the ERD is $N(\mathbf{0},I)$, with $J(\mathbf{x})=\mathbf{x}$, then ${\rm ARE}(\mathbf{W}_n,\bar{\mathbf{X}}_n;{\B \xi})\geq  1$.

Further, equality holds if and only if for all $\xi_i\neq 0$, $\tilde{f}_i$ is the density of a normal distribution.
\end{enumerate}
\end{theorem}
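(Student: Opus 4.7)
The plan is to exploit the product structure of $f$ to decouple the multivariate signed-rank map into $p$ one-dimensional maps, reduce the ARE to a convex combination of coordinate-wise AREs, and then invoke classical univariate efficiency bounds --- the Hodges--Lehmann inequality for the uniform ERD and the Chernoff--Savage inequality for the Gaussian ERD.

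\textbf{Step 1 (Decoupling of $\nabla \psi$).} Since $f(\mathbf{x}) = \prod_i \tilde{f}_i(x_i)$ is $\G$-symmetric with $-I \in \G$, each $\tilde{f}_i$ must be symmetric about zero. In the cases compatible with a product $f$ together with a product ERD (central or sign symmetry), the symmetrization $\mu_S$ equals $\mu$ and is itself a product of the $\tilde{f}_i$. Since the ERD $\nu_S$ is also a product (${\rm Uniform}(-1,1)^p$ or $N(\mathbf{0}, I_p)$), any separable $\psi(\mathbf{x}) = \sum_i \psi_i(x_i)$ with each $\psi_i'$ pushing $\tilde{f}_i$ to the $i$-th marginal of $\nu_S$ has a separable gradient that transports $\mu_S$ to $\nu_S$; by McCann's uniqueness (cf.~\cref{thm:population_rank}(ii)), this must coincide with the signed-rank map, so
\[
\nabla\psi(\mathbf{x}) = (T_1(x_1),\ldots, T_p(x_p)),
\]
where $T_i$ is the monotone one-dimensional OT map from $\tilde{f}_i$ to the $i$-th marginal of the ERD, and is odd by symmetry.

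\textbf{Step 2 (Reduction of $\boldsymbol{\gamma}$ and $\Sigma_{\rm ERD}$).} Under the product structure, $\boldsymbol\xi^\top \nabla f / f = \sum_j \xi_j \tilde{f}_j'(x_j)/\tilde{f}_j(x_j)$, so with $J(\mathbf{x})=\mathbf{x}$,
\[
\gamma_i = -\sum_j \xi_j\, \E[T_i(X_i)\, \tilde{f}_j'(X_j)/\tilde{f}_j(X_j)].
\]
For $j \neq i$, independence factors the expectation and each factor vanishes ($\E[T_i(X_i)] = 0$ by oddness of $T_i$ and symmetry of $\tilde{f}_i$; $\E[\tilde{f}_j'(X_j)/\tilde{f}_j(X_j)] = \int \tilde{f}_j' = 0$). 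For $j = i$, integration by parts (justified by the DQM condition in \cref{thm:asymptotics_local_alternatives}) gives $\gamma_i = \xi_i\, \E[T_i'(X_i)]$. Independence of components makes both $\Sigma_{\mathbf{X}}$ and $\Sigma_{\rm ERD}$ diagonal; writing $\Sigma_{\mathbf{X}} = {\rm diag}(\sigma_i^2)$ and $\Sigma_{\rm ERD} = {\rm diag}(\tau_i^2)$,
\[
{\rm ARE}(\mathbf{W}_n, \bar{\mathbf{X}}_n; \boldsymbol{\xi}) = \frac{\sum_i (\xi_i^2/\sigma_i^2)\, {\rm ARE}_i}{\sum_i \xi_i^2/\sigma_i^2}, \qquad {\rm ARE}_i := \frac{\sigma_i^2\, (\E[T_i'(X_i)])^2}{\tau_i^2},
\]
a convex combination of coordinate AREs. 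Hence ${\rm ARE} \geq \min_i {\rm ARE}_i$, with equality (when all $\xi_i \neq 0$) iff each ${\rm ARE}_i$ attains its lower bound.

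\textbf{Step 3 (One-dimensional bounds).} For part (1), $T_i(x) = 2 F_{X_i}(x) - 1$ gives $T_i'(x) = 2\tilde{f}_i(x)$, $\tau_i^2 = 1/3$, and ${\rm ARE}_i = 12 \sigma_i^2 (\int \tilde{f}_i^2)^2$. The classical Hodges--Lehmann computation (Lagrange multipliers on $\int f^2$ subject to $\int f = 1$ and $\int x^2 f = \sigma^2$) identifies the unique symmetric minimizer as the quadratic density $\tilde{f}_i(x) \propto (5\sigma_i^2 - x^2)_+$ on $[-\sqrt{5}\sigma_i, \sqrt{5}\sigma_i]$, at which ${\rm ARE}_i = 108/125$. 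For part (2), $T_i = \Phi^{-1} \circ F_{X_i}$ makes $T_i(X_i) \sim N(0, 1)$; Stein's identity applied to $g = T_i^{-1}$ yields $\E[X_i T_i(X_i)] = \E[1/T_i'(X_i)]$, and two Cauchy--Schwarz applications --- first $\E[X_i T_i(X_i)] \leq \sigma_i$, then $1 \leq \E[T_i'(X_i)]\, \E[1/T_i'(X_i)]$ --- give $\E[T_i'(X_i)] \geq 1/\sigma_i$, so ${\rm ARE}_i \geq 1$. Equality in both Cauchy--Schwarz steps forces $T_i(x) = x/\sigma_i$, i.e., $\tilde{f}_i$ Gaussian.

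\textbf{Main obstacle.} The crux is Step 1: proving that the signed-rank map decouples. This rests on two facts --- the product structure of $\mu_S$ (which hinges on $\G$-symmetry combined with each $\tilde{f}_i$ being symmetric) and the product structure of $\nu_S$ (the assumed ERD) --- after which the separable convex-function argument combined with McCann's uniqueness finishes the job. Justifying the integration by parts in Step 2 is a secondary technical concern, handled through the DQM regularity implicit in the local-alternatives setup.
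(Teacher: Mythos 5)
Your proposal is correct and follows essentially the same route as the paper: identify the population signed-rank map as the coordinate-wise monotone transforms $\bigl(2F_i(x_i)-1\bigr)_i$ or $\bigl(\Phi^{-1}\circ F_i(x_i)\bigr)_i$ via the separable convex potential and McCann's uniqueness (Theorem~\ref{thm:population_rank}(ii)), reduce $\B\gamma$ by independence and integration by parts to coordinate-wise quantities, and bound each univariate term by the Hodges--Lehmann and Chernoff--Savage inequalities. The only cosmetic differences are your convex-combination framing of the ARE and your self-contained Stein-identity/Cauchy--Schwarz derivation of the one-dimensional Gaussian-ERD bound $\sigma_i\,\E[T_i'(X_i)]\ge 1$, where the paper instead cites Gastwirth's classical result; both give the same inequality and the same Gaussian equality case.
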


The ARE of the GWSR test w.r.t.\ the Hotelling's $T^2$ test can also be lower-bounded for another rich class of multivariate probability distributions, namely the class of elliptically symmetric distributions; this is stated in the theorem below (see \cref{sec:pf_ellip} for a proof).
\begin{theorem}[Elliptically symmetric case]\label{thm:ellip}
Under the assumptions of \cref{thm:asymptotics_local_alternatives}, 
suppose further that $\mathbf{X}$ has finite second moments under ${\rm H}_0$ and
$$f(\mathbf{x})\propto ({\rm det}(\Sigma))^{-\frac{1}{2}}\underline{f}\left(\mathbf{x}^\top \Sigma^{-1}\mathbf{x}  \right), \qquad \mbox{ for } \mathbf{x}\in \R^p,$$
for some unknown positive definite matrix $\Sigma$, and some unknown function $\underline{f}:\R^+\to \R^+$.
Under standard regularity conditions on the smoothness and integrability of $\underline{f}$
\citep{deb2021efficiency,hallin2017distribution} (see \cref{sec:pf_ellip} for details), we have
\begin{enumerate}
\item If the ERD is the spherical uniform distribution with  $J(\mathbf{x})=\mathbf{x}$, then:
$$ {\rm ARE}(\mathbf{W}_n,\bar{\mathbf{X}}_n;{\B \xi}) \ge  \frac{81}{500}\cdot \frac{(\sqrt{2p-1}+1)^5}{p^2(\sqrt{2p-1}+5)}\geq 0.648.$$
See \cref{sec:pf_ellip} for the distribution that attains the ARE lower bound above.
\item If the ERD is $N(\mathbf{0},I)$ with $J(\mathbf{x})=\mathbf{x}$, then ${\rm ARE}(\mathbf{W}_n,\bar{\mathbf{X}}_n;{\B \xi})\geq 1$.

Equality above is attained if and only if $f(\cdot)$ is a multivariate normal distribution.
\end{enumerate}
\end{theorem}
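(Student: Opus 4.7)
My plan combines Theorem~\ref{thm:asymptotics_local_alternatives}, a divergence-type integration by parts, and the elliptical structure of $f$ under $\mathrm{H}_0$ to reduce the multivariate ARE bound to a one-dimensional Cauchy--Schwarz inequality on the radial variable $W := \sqrt{\mathbf{X}^\top \Sigma^{-1} \mathbf{X}}$.

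First I would start from $\boldsymbol{\gamma} = -M\boldsymbol{\xi}$, with $M := \mathbb{E}_{\mathrm{H}_0}[\nabla\psi(\mathbf{X})(\nabla\log f(\mathbf{X}))^\top]$ from Theorem~\ref{thm:asymptotics_local_alternatives}, and apply a component-wise integration by parts (boundary terms vanishing by the standard regularity hypotheses on $\underline{f}$) to derive the compact identity
$$ M \;=\; -\mathbb{E}\!\left[\nabla^2\psi(\mathbf{X})\right],$$
so that $\boldsymbol{\gamma} = \mathbb{E}[\nabla^2\psi(\mathbf{X})]\boldsymbol{\xi}$. Since the reference $\nu$ (and $\nu_S$, because $\mathcal{G}\subset\mathrm{O}(p)$) is spherically symmetric, $\Sigma_{\mathrm{ERD}} = \alpha\,I_p$ for an explicit scalar $\alpha$; and the elliptical structure gives $\Sigma_\mathbf{X} = \frac{\mathbb{E}[W^2]}{p}\Sigma$. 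Next I would combine the $\mathcal{G}$-equivariance of the Hessian, $\nabla^2\psi(Q\mathbf{x}) = Q\nabla^2\psi(\mathbf{x})Q^\top$ for $Q\in\mathcal{G}$ (obtained by differentiating Theorem~\ref{thm:population_rank}(iii)), the elliptical decomposition $\mathbf{X}=\Sigma^{1/2}\mathbf{Z}$ with $\mathbf{Z}$ spherically symmetric, and the Legendre-dual change-of-variables identity $\mathbb{E}[\nabla^2\psi(\mathbf{X})] = \mathbb{E}_{\nu_S}[(\nabla^2\psi^*(\mathbf{Y}))^{-1}]$ (via the inverse-function theorem applied to the Brenier map) to show that $\mathbb{E}[\nabla^2\psi(\mathbf{X})]$ is a positive scalar multiple $c(\underline{f})\,\Sigma^{-1/2}$ where $c(\underline{f})$ depends only on the radial generator $\underline{f}$ and $\nu$. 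Substituting collapses the bilinear form to
$$ \mathrm{ARE}(\mathbf{W}_n,\bar{\mathbf{X}}_n;\boldsymbol{\xi}) \;=\; \frac{c(\underline{f})^2}{\alpha}\cdot\frac{\mathbb{E}[W^2]}{p},$$
which is independent of $\boldsymbol{\xi}$ and $\Sigma$.

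The final step is to bound this one-dimensional quantity by Cauchy--Schwarz against the density $f_W$ of $W$, after rewriting $c(\underline{f})$ via a further integration by parts as an expectation involving the radial quantile transport $T$ from the law of $W$ to the radial marginal of $\nu_S$. For Part~2 (Gaussian ERD, for which $T$ sends the law of $W$ to $\chi_p$), the Cauchy--Schwarz step delivers $\mathrm{ARE}\ge 1$ with equality iff $T$ is affine, equivalently iff $W\sim\chi_p$ and $f$ is multivariate normal. For Part~1 (spherical uniform ERD, for which $T$ sends the law of $W$ to $\mathrm{Uniform}(0,1)$), the Cauchy--Schwarz bound reduces to a functional of $f_W$ alone which must be minimized over admissible radial densities. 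This variational problem is a radial analogue of the one-dimensional Hodges--Lehmann extremal calculation behind the constant $108/125$ in Theorem~\ref{thm:indepcomp}; its extremizer is a bounded-support polynomial density in $W$, and it yields the stated lower bound $\tfrac{81}{500}\cdot\tfrac{(\sqrt{2p-1}+1)^5}{p^2(\sqrt{2p-1}+5)}$. An elementary monotonicity check then confirms this quantity decreases from $108/125\approx 0.864$ at $p=1$ to $324/500 = 0.648$ as $p\to\infty$, giving the uniform bound.

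The main obstacle is the structural step $\mathbb{E}[\nabla^2\psi(\mathbf{X})] = c(\underline{f})\,\Sigma^{-1/2}$ when $\Sigma\ne I$. The Brenier map from an elliptical measure to a spherical reference does \emph{not} factor as sphericalization composed with a radial transport: the candidate $\mathbf{x}\mapsto\Sigma^{-1/2}\mathbf{x}$ followed by a radial quantile transform fails to be the gradient of a convex function unless $\Sigma=I$, so the standard affine reduction to the spherically symmetric case breaks down. The argument must therefore combine the $\mathcal{G}$-equivariance of $\nabla^2\psi$, the elliptical symmetry of $\mu$, and the Legendre-dual identity to pin down the proportionality directly, rather than by substituting a closed-form expression for $\nabla\psi$. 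A secondary challenge is explicitly solving the radial calculus-of-variations problem in Part~1 and verifying the specific polynomial structure producing the constants $\tfrac{81}{500}$, $(\sqrt{2p-1}+1)^5$, and $(\sqrt{2p-1}+5)$ in the announced lower bound.
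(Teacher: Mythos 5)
There is a genuine gap at exactly the step you flag as the ``main obstacle,'' and your proposed tools do not close it. Your whole reduction hinges on the structural claim $\mathbb{E}[\nabla^2\psi(\mathbf{X})]=c(\underline{f})\,\Sigma^{-1/2}$, but the ingredients you list cannot deliver it: for the groups under which a general elliptical law is actually $\G$-symmetric (e.g.\ $\G=\{I,-I\}$), the equivariance $\nabla^2\psi(Q\mathbf{x})=Q\nabla^2\psi(\mathbf{x})Q^\top$ only says the Hessian is an even function; the symmetries of $\mu$ itself are the non-orthogonal maps $\Sigma^{1/2}O\Sigma^{-1/2}$, which do not act isometrically on the quadratic-cost transport problem and hence give no usable equivariance of $\psi$; and the Legendre-dual identity $\mathbb{E}_\mu[\nabla^2\psi]=\mathbb{E}_{\nu_S}[(\nabla^2\psi^*)^{-1}]$ is a change of variables that adds no structure. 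So the collapse of the bilinear form to a $\B\xi$- and $\Sigma$-free radial functional is asserted, not proved (and it is far from clear how to prove it without an explicit handle on $\nabla\psi$). There is also a secondary unproved step: your integration by parts $\mathbb{E}[\nabla\psi\,(\nabla\log f)^\top]=-\mathbb{E}[\nabla^2\psi]$ needs more than a.e.\ second differentiability of the convex $\psi$ (the distributional Hessian of a convex function can have a singular part), and your Part~1 extremal problem producing $\tfrac{81}{500}\cdot\tfrac{(\sqrt{2p-1}+1)^5}{p^2(\sqrt{2p-1}+5)}$ is only gestured at, not solved.

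For comparison, the paper's proof does not attempt any such intrinsic argument: it writes $\bar{\mathbf{X}}=\Sigma^{-1/2}(\mathbf{X}-\B\theta)$, imports from \citet{deb2021efficiency} an explicit representation of the population signed-rank map for elliptical families (their Lemma C.2, of the sphericalize-then-radial-quantile form) together with their already-computed expressions for $\|\Sigma_{\rm ERD}^{-1/2}\B\gamma\|^2$ (their Eqs.\ (C.55) and (C.60)), and then invokes their solved one-dimensional variational problems (Eq.\ (C.56) and Theorem 3.5) to obtain both the constant $\tfrac{81}{500}\cdot\tfrac{(\sqrt{2p-1}+1)^5}{p^2(\sqrt{2p-1}+5)}$ and the Gaussian equality characterization. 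Note the tension here: your (correct) observation that the sphericalization-plus-radial-transport map is not a gradient of a convex function for $\Sigma\not\propto I$ is precisely an objection to the representation the paper relies on, so you cannot simply fall back on that closed form either; but as written, your alternative route replaces it with an unsubstantiated proportionality claim rather than a proof, so the argument does not go through.
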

In the following, we consider a class of generative models used in blind source separation \citep{Cardoso1998blind,comon2010handbook,shlens2014tutorial,deb2021efficiency} and in independent component analysis \citep{Hyvarunen2000ICA,Samarov2004ica,Bach2003KICA} and derive similar lower bounds on the ARE (see~\cref{sec:pf_blind} for a proof).
\begin{theorem}[Generative model for blind source mixing]\label{thm:blind}
Under the assumptions of \cref{thm:asymptotics_local_alternatives}, 
suppose further that $\mathbf{X} = A \mathbf{W}$ under $\mathrm{H}_0$ for some unknown orthogonal matrix $A$ and $\mathbf{W}$ having independent components with Lebesgue density $\tilde{f}(\mathbf{w})=\prod_{i=1}^p\tilde{f}_i(w_i)$, where $\mathbf{w}=(w_1,\ldots,w_p)$ and $\tilde{f}_1,\ldots,\tilde{f}_p$ are univariate densities.
Suppose the ERD is $N(\mathbf{0},I)$, with $J(\mathbf{x})=\mathbf{x}$, and $\mathbf{X}$ has finite second moments under ${\rm H}_0$. Then: $${\rm ARE}(\mathbf{W}_n,\bar{\mathbf{X}}_n;{\B \xi})\geq 1.$$
The equality is attained if and only if $\tilde{f}_i$ is the density of a normal distribution whenever $(A^\top {\B \xi})_i\neq 0$, $i=1,2,\ldots,p$.
\end{theorem}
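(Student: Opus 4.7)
The plan is to reduce the blind-source-mixing setting to the independent-components case of \cref{thm:indepcomp} by a rotation argument that leverages the orthogonal invariance of the $N(\mathbf{0},I)$ reference. Define $\mathbf{Y}:=A^\top\mathbf{X}$; under $\mathrm{H}_0$, $\mathbf{Y}$ has the independent-components density $\tilde{f}=\prod_i\tilde{f}_i$. Also set $\tilde{\boldsymbol{\xi}}:=A^\top\boldsymbol{\xi}$, $\tilde{\G}:=A^\top\G A$, and $\tilde{\nu}:=A^\top\#\nu$. Then $\mathbf{Y}$ is $\tilde{\G}$-symmetric (since $\mathbf{X}$ is $\G$-symmetric), $\tilde{\G}$ is a compact subgroup of ${\rm O}(p)$ with $\mathbb{E}[\mathrm{Uniform}(\tilde{\G})]=\mathbf{0}$, and writing $\tilde{S}:=A^\top SA$ and $\tilde{\mathbf{H}}:=A^\top\mathbf{H}$ for $S\sim\mathrm{Uniform}(\G)$, $\mathbf{H}\sim\nu$ gives $\tilde{S}\tilde{\mathbf{H}}=A^\top(S\mathbf{H})$, so the ERD associated with $(\tilde{\G},\tilde{\nu},J=\mathrm{identity})$ is again $N(\mathbf{0},I)$. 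The remaining hypotheses of \cref{thm:asymptotics_local_alternatives} and \cref{thm:indepcomp} transfer from $(\mathbf{X},\boldsymbol{\xi},\G,\nu)$ to $(\mathbf{Y},\tilde{\boldsymbol{\xi}},\tilde{\G},\tilde{\nu})$ by analogous direct checks.

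Next I argue that the ARE is preserved by this rotation. Let $\psi$ and $\tilde{\psi}$ denote the convex potentials from \cref{thm:population_rank} associated with $\mathbf{X}$ and $\mathbf{Y}$, respectively. The composition $\tilde{\psi}\circ A^\top$ is convex, and the map $\mathbf{x}\mapsto A\,\nabla\tilde{\psi}(A^\top\mathbf{x})$ pushes $\mu$ to $N(\mathbf{0},I)$ (because $\nabla\tilde{\psi}$ pushes the law of $\mathbf{Y}$ to $N(\mathbf{0},I)$ and $A$ is orthogonal); uniqueness of the Brenier map (\cref{thm:population_rank}(ii)) therefore forces
\begin{equation*}
\nabla\psi(\mathbf{x})\;=\;A\,\nabla\tilde{\psi}(A^\top\mathbf{x}),\qquad\mu\text{-a.e.}
\end{equation*}
Combined with $\nabla\log f(\mathbf{X})=A\,\nabla\log\tilde{f}(\mathbf{Y})$, a short change of variables gives $\boldsymbol{\gamma}=A\,\tilde{\boldsymbol{\gamma}}$, hence $\|\boldsymbol{\gamma}\|^2=\|\tilde{\boldsymbol{\gamma}}\|^2$. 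Since also $\boldsymbol{\xi}^\top\Sigma_\mathbf{X}^{-1}\boldsymbol{\xi}=\tilde{\boldsymbol{\xi}}^\top\Sigma_\mathbf{Y}^{-1}\tilde{\boldsymbol{\xi}}$ and $\Sigma_{\rm ERD}=I$ in both parametrizations, the squared-non-centrality-ratio formula for the ARE (stated just after \cref{defn:ARE}) yields
\begin{equation*}
\mathrm{ARE}(\mathbf{W}_n,\bar{\mathbf{X}}_n;\boldsymbol{\xi})\;=\;\mathrm{ARE}(\mathbf{W}_n^{\mathbf{Y}},\bar{\mathbf{Y}}_n;\tilde{\boldsymbol{\xi}}),
\end{equation*}
where $\mathbf{W}_n^{\mathbf{Y}}$ denotes the GWSR statistic built from $\mathbf{Y}_1,\ldots,\mathbf{Y}_n$ using reference $\tilde{\nu}$ and group $\tilde{\G}$.

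The right-hand side is then amenable to \cref{thm:indepcomp}, part~2: $\mathbf{Y}$ has density $\tilde{f}=\prod_i\tilde{f}_i$ and each $\tilde{f}_i$ has finite nonzero variance, inherited from $\mathbf{X}$ having finite second moments and the nondegeneracy of the $\tilde{f}_i$. This yields $\mathrm{ARE}\ge 1$ with equality if and only if $\tilde{f}_i$ is Gaussian whenever $\tilde{\xi}_i=(A^\top\boldsymbol{\xi})_i\ne 0$, which is precisely the claimed characterization. The only nontrivial ingredient is the Brenier-map identity in the middle paragraph; the rest is bookkeeping that relies crucially on $N(\mathbf{0},I)$ being invariant under the full orthogonal group.
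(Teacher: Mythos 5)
Your proposal is correct. The paper proves the result by direct computation rather than by reduction: it writes the density as $f(\mathbf{x})=\prod_i\tilde f_i\big((A^\top\mathbf{x})_i\big)$, invokes the factorization of the population signed-rank map, $\nabla\psi(\mathbf{x})=A\,R_{\mathbf{W}}(A^\top\mathbf{x})$ with $R_{\mathbf{W}}(\mathbf{w})=\big(\Phi^{-1}\circ\tilde F_1(w_1),\ldots,\Phi^{-1}\circ\tilde F_p(w_p)\big)$ (citing an external lemma for this identity), and then expands the non-centrality parameter coordinatewise, using independence to kill cross terms and the one-dimensional bound $\sigma_k\,\E\big[\Phi^{-1}\circ\tilde F_k(W_k)\tfrac{\tilde f_k'}{\tilde f_k}(W_k)\big]\ge 1$ from the proof of \cref{thm:indepcomp}. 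You instead establish an orthogonal-invariance principle: rotating the data, the direction, the group and the reference leaves the Gaussian ERD (and hence the non-centrality ratio) unchanged, so the problem becomes a literal instance of \cref{thm:indepcomp}, part 2, applied to $\mathbf{Y}=A^\top\mathbf{X}$. Your key step, $\nabla\psi(\mathbf{x})=A\,\nabla\tilde\psi(A^\top\mathbf{x})$ via McCann uniqueness, is exactly the identity the paper imports from the literature, and your derivation of it is sound; the identities $\boldsymbol{\gamma}=A\tilde{\boldsymbol{\gamma}}$, $\boldsymbol{\xi}^\top\Sigma_{\mathbf{X}}^{-1}\boldsymbol{\xi}=\tilde{\boldsymbol{\xi}}^\top\Sigma_{\mathbf{Y}}^{-1}\tilde{\boldsymbol{\xi}}$, and the transfer of the hypotheses (symmetry of $\mathbf{Y}$ under $\tilde\G=A^\top\G A$, Haar-uniformity of $\tilde S$, Gaussianity of the rotated ERD, finite non-zero variances of the $\tilde f_i$) all check out, and the equality characterization carries over verbatim. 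What your route buys is a cleaner, more structural argument that avoids redoing the coordinatewise Chernoff--Savage computation and makes explicit why the result is insensitive to the mixing matrix; what the paper's direct computation buys is an explicit formula for the non-centrality parameter (hence for the ARE itself, not just the bound) without having to verify that the rotated triple $(\tilde\G,\tilde\nu,J)$ again satisfies all the standing assumptions.
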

Note that in all of the above ARE results, there is no restriction on $\G$ as long as the ERD is of the required form.~\cref{thm:blind} again illustrates the superiority of using a Gaussian ERD with the identity score function. Based on this fact, we recommend some choices of the reference measure $\nu$ below for different $\G$'s.

\begin{remark}[Examples of the reference measure $\nu$]\label{rk:ref_mea}
Here we recommend some choices of the reference measure $\nu$,
based on the canonical choice of the score function $J(\mathbf{x})=\mathbf{x}$:
\begin{enumerate}
\item Our first recommendation is a reference distribution $\nu$ with Gaussian ERD --- $N(\mathbf{0},I_p)$.
For example, for central symmetry, we can take $\nu$ to be the distribution of $ (|Z_1|,Z_2,\ldots,Z_p)$ (concentrated on $ B=(0,\infty)\times \mathbb{R}^{p-1}$) where $Z_i$'s are i.i.d.\ standard normal;
for sign symmetry, we can take $\nu$ as the distribution of $ (|Z_1|,|Z_2|,\ldots,|Z_p|)$ (here $ B=(0,\infty)^p$). For spherical symmetry, we can take
$\nu$ to be the distribution of $ \left(\sqrt{Z_1^2+\ldots+Z_p^2},{0},\ldots,{0}\right)$ and $ B=(0,\infty)\times \{0\}^{p-1}$.
Using Gaussian ERD is particularly powerful as the ARE of the GWSR test w.r.t.\ the Hotelling's $T^2$ test never falls below 1 for a large class of models (see Theorems~\ref{thm:indepcomp}-\ref{thm:blind}). 
\item Our second recommendation is a reference distribution $\nu$ with uniform ERD --- ${\rm Uniform}(-1,1)^p$.
For example, for central symmetry, we can take $\nu$ to be the distribution of $(|U_1|,U_2,\ldots,U_p)$ (concentrated on $ B=(0,1)\times (-1,1)^{p-1}$) where $U_i$'s are i.i.d.\ ${\rm Uniform}(-1,1)$;
for sign symmetry, we can take $\nu$ as the distribution of $ (|U_1|,|U_2|,\ldots,|U_p|)$ (thus $ B = (0,1)^p$).
Note that ${\rm Uniform}(-1,1)^p$ is not spherically symmetric, so this ERD is not permissible for testing spherical symmetry.
\item Our third recommendation is a reference distribution $\nu$ with spherical uniform ERD, which reduces to ${\rm Uniform}(-1,1)$ when $p=1$, and has been heavily advocated in
the literature of distribution-free inference using OT~\citep{hallin2017distribution,Shi2022indep,Hallin2022VARMA,hallin2020fully}.
Reference measures for central symmetry, sign symmetry, and spherical symmetry can also be constructed similarly as described above for the Gaussian ERD.
\end{enumerate}
\end{remark}

We have so far illustrated lower bounds on the ARE of the GWSR test against the Hotelling's $T^2$ test.
A natural question arises whether the generalized sign test would also share similar properties?
A positive result holds for sign symmetry, where the generalized signs coincide with the component-wise signs, and was studied in \citet{Bickel1965competitors}. The following theorem is a direct consequence of \citet[Equation (5.5)]{Bickel1965competitors}.
\begin{theorem}[ARE for the generalized sign test under sign symmetry \citep{Bickel1965competitors}]\label{rk:AREsign}
Suppose the reference distribution $\nu_n$ is concentrated on $(0,+\infty)^p$ for all $n$. Assume a Lebesgue density $f(\cdot)$ is sign symmetric with non-degenerate covariance matrix, and each marginal density $f_{i}$ of $f$ has $0$ as its mode, and is continuous at $0$.
Consider the model $f(\cdot -{\B \theta})$ and $H_0$: ${\B \theta}=\mathbf{0}_p$ against contiguous alternatives. 
Then the relative efficiency of the generalized sign test (using sign symmetry) with respect to Hotelling's $T^2$-test never falls below $1/3$.
\end{theorem}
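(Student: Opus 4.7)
The plan is to reduce the multivariate problem to a coordinate-wise analysis, compute the asymptotic non-centrality of both the generalized sign test and Hotelling's $T^2$ under contiguous alternatives, and then invoke the classical one-dimensional Hodges--Lehmann bound coordinate by coordinate.

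\emph{Step 1 (reduction to component signs).} By \cref{rk:comp_sign}, since $\nu_n$ is concentrated on $(0,\infty)^p$ and $\G$ is the group of diagonal orthogonal matrices, $S_n(\mathbf{X}_i)$ is diagonal with diagonal entries equal to the component signs of $\mathbf{X}_i$. Consequently the statistic $T_n$ from \eqref{eq:Sign-Test} is itself diagonal and
$\|T_n\|_F^2 = \sum_{j=1}^p \bigl(n^{-1/2}\sum_{i=1}^n \sign(X_{ij})\bigr)^2$. Under $\mathrm{H}_0$ of sign symmetry the vector $(\sign(X_{i,1}),\dots,\sign(X_{i,p}))$ is uniform on $\{-1,+1\}^p$, so the classical CLT gives $\|T_n\|_F^2 \overset{d}{\to}\chi_p^2$, consistent with \cref{prop:Sign-Test}-2.

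\emph{Step 2 (non-centrality under contiguous alternatives).} Under the local sequence $\B\theta_n=\B\xi/\sqrt n$, I would invoke Le~Cam's third lemma together with the DQM assumption. The key computation is
\[
\mathbb{E}_{\mathrm{H}_0}\!\left[\sign(X_j)\,\partial_k \log f(\mathbf{X})\right] = \int \sign(x_j)\,\partial_k f(\mathbf{x})\,d\mathbf{x},
\]
which by integration by parts in $x_j$ equals $-2f_j(0)$ for $k=j$ and equals $0$ for $k\neq j$ (the cross terms vanish because $\sign(x_j)$ does not depend on $x_k$ and $f$ has vanishing boundary values); here the continuity of $f_j$ at $0$ is used. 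Hence under $\mathrm{H}_1$, the diagonal of $T_n$ converges to $N(\B\gamma,I_p)$ with $\gamma_j=2f_j(0)\xi_j$, and $\|T_n\|_F^2$ is asymptotically a non-central $\chi_p^2$ with non-centrality $\|\B\gamma\|^2=4\sum_j f_j(0)^2\xi_j^2$.

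\emph{Step 3 (Hotelling's $T^2$ non-centrality and ARE).} Sign symmetry forces $\Cov(X_j,X_k)=0$ for $j\neq k$ (choose the diagonal sign flip that negates only $X_k$), so $\Sigma=\operatorname{diag}(\sigma_1^2,\ldots,\sigma_p^2)$ is non-degenerate by assumption. Then, as computed below \eqref{eq:Hotelling-T2}, the Hotelling's $T^2$ has asymptotic non-centrality $\B\xi^\top\Sigma^{-1}\B\xi=\sum_j \xi_j^2/\sigma_j^2$. Both asymptotic distributions being non-central $\chi_p^2$ with the same degrees of freedom, the ARE is the ratio of non-centralities:
\[
\operatorname{ARE}(T_n,\bar{\mathbf{X}}_n;\B\xi)=\frac{4\sum_j f_j(0)^2\xi_j^2}{\sum_j \xi_j^2/\sigma_j^2}\ \ge\ \min_{1\le j\le p}\,4\sigma_j^2 f_j(0)^2,
\]
where the inequality is the elementary bound $(\sum a_j y_j)/(\sum b_j y_j)\ge \min_j a_j/b_j$ for nonnegative $y_j,b_j>0$.

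\emph{Step 4 (one-dimensional Hodges--Lehmann bound).} Finally I would invoke the classical inequality: for a univariate density $f_j$ that is symmetric around $0$ with mode at $0$, one has $f_j(x)\le f_j(0)$, and the minimum-variance density dominated by $f_j(0)$ is the uniform on $[-1/(2f_j(0)),1/(2f_j(0))]$, yielding $\sigma_j^2\ge 1/\bigl(12 f_j(0)^2\bigr)$, i.e.\ $4\sigma_j^2 f_j(0)^2\ge 1/3$. Combined with Step 3 this gives $\operatorname{ARE}(T_n,\bar{\mathbf{X}}_n;\B\xi)\ge 1/3$ for every $\B\xi\neq\mathbf{0}$, as asserted. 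The main obstacle is really just the careful application of Le~Cam's third lemma in Step 2 (to justify the Gaussian shift rigorously from the DQM condition and the continuity of each $f_j$ at $0$); the remaining estimates are algebraic. As the theorem statement indicates, once the reduction of Step 1 is in place the entire argument is exactly the content of equation (5.5) in \citet{Bickel1965competitors}.
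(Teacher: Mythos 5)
Your proposal is correct, but it takes a different route from the paper: the paper gives no self-contained argument for this theorem at all — after the identification of the generalized sign with the component-wise sign (\cref{rk:comp_sign}), it simply cites Equation (5.5) of \citet{Bickel1965competitors} — whereas you reconstruct the content of that citation from scratch. Your Step 1 coincides with the paper's reduction; Steps 2--4 then supply what the paper outsources to Bickel: the joint CLT with the log-likelihood ratio plus Le Cam's third lemma giving non-centrality $4\sum_j f_j(0)^2\xi_j^2$ (note that this is easier here than for the GWSR statistic, since under sign symmetry $S_n(\mathbf{X}_i)$ depends only on $\mathbf{X}_i$, the sign vector is exactly uniform on $\{\pm1\}^p$ with covariance $I_p$, and no H\'ajek-type representation is needed); the observation that sign symmetry forces $\Sigma_{\mathbf{X}}$ to be diagonal, so Hotelling's non-centrality is $\sum_j\xi_j^2/\sigma_j^2$; the min-ratio inequality; and the classical bound $\sigma_j^2 f_j(0)^2\ge 1/12$ for a density bounded by its value at its mode $0$, which is exactly the Hodges--Lehmann extremal computation (equality at the uniform, matching Table \ref{table:AREsign}). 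The one place requiring care, which you correctly flag, is the identification $\E_{{\rm H}_0}[\sign(X_j)\,\partial_j f(\mathbf{X})/f(\mathbf{X})]=-2f_j(0)$: under the QMD framework of \cref{sec:regu} this needs absolute continuity of $f$ along the $x_j$-lines with integrable partial derivative (so the boundary terms vanish and Fubini applies, with $\int|\partial_k f|<\infty$ following from Cauchy--Schwarz and square-integrability of the score), and the continuity of $f_j$ at $0$ to identify $\int f(x_{-j},0)\,{\rm d}x_{-j}$ with $f_j(0)$ — precisely the hypotheses of the theorem. What your approach buys is a proof valid under exactly the stated assumptions without consulting the reference; what the paper's approach buys is brevity, since Bickel's formula already encodes your Steps 2--4.
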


\begin{table}
\caption{Relative efficiencies of the generalized sign test (for sign symmetry) versus the Hotelling's $T^2$ test for some sign symmetric distributions.
The minimally possible value of the relative efficiency is $1/3$.}
\centering
\label{table:AREsign}
\begin{tabular}{ll}
\toprule
Distribution & ARE (sign/Hotelling's $T^2$)\\
\hline
 Laplace: $\prod_{i=1}^p {\rm Laplace}(0,b_i)$& 2   \\
 Normal: $\prod_{i=1}^p N(0,\sigma_i^2)$ & $2/\pi$\\
 Logistic: $\prod_{i=1}^p {\rm Logistic}(0,s_i)$ & $\pi^2/12$\\
 Uniform: $\prod_{i=1}^p {\rm Uniform}(-c_i,c_i)$ & 1/3 (lower bound)\\
 \bottomrule
\end{tabular}
\end{table}

In Table \ref{table:AREsign},
the relative efficiencies of the generalized sign test (for sign symmetry) versus the Hotelling's $T^2$ test for some distributions are given (using the formula \citep[Equation (5.2)]{Bickel1965competitors}), which is exactly the same as in one-dimensional case \citep[Table 14.1]{vanderVaart1998}.
Besides, by designing a large density at $0$, the relative efficiency can be arbitrarily large, which would strongly favor the sign test over the Hotelling's $T^2$ test.
However, For other types of symmetry, there is in general no lower bound of ARE.
The ARE could depend on $\alpha$, $\beta$, the reference distribution, the direction of ${\B \xi}$, and the dimension $p$.
From our simulations in \cref{sec:simulations}, we also find that the generalized sign test using sign symmetry tend to have a better performance than using the central symmetry or the spherical symmetry. Hence, we would recommend using sign symmetry as the primary choice when performing the generalized sign test.

So far, we have shown that for natural (and practical) choices of sub-families of location shift alternatives and suitable score functions (mostly $J(\mathbf{x})=\mathbf{x}$),
the GWSR test performs at least as well as the Hotelling's $T^2$ test.
In the following result (proved in~\cref{pf:thm:efficiency}), we show, through two separate results, that there exists a score transformed Wilcoxon's signed-rank test (with appropriately chosen score function) that is {\it locally asymptotically optimal}, i.e., it has maximum local power (against contiguous alternatives) among all tests with fixed Type I error.
\begin{theorem}[Local asymptotic optimality]\label{thm:efficiency}
Let $\mathbf{X}_1,\ldots,\mathbf{X}_n$ be a sample from $P_{\B \theta}$ with Lebesgue density $f(\cdot-{\B \theta})$, where $f$ is $\G$-symmetric about ${\B \theta}_0 :=\mathbf{0}_p$.
Under the assumptions in \cref{thm:asymptotics_local_alternatives}, let $I_{{\B \theta}_0}$ be the nonsingular Fisher information matrix at ${\B \theta}_0 =\mathbf{0}_p$.
Suppose $\nu_S$ has a Lebesgue density. 
Let $\psi^*(\mathbf{y}):=\sup_{\mathbf{x} \in \R^p} \{\mathbf{y}^\top \mathbf{x} -\psi(\mathbf{x}) \}$ be the Legendre-Fenchel transform of $\psi$, and suppose 
\begin{equation}\label{eq:Opt-Score-Fnc}
J(\mathbf{x}):=-\frac{\nabla f(\nabla\psi^*(\mathbf{x}))}{f(\nabla\psi^*(\mathbf{x}))}, \qquad \mbox{for} \;\; \mathbf{x} \in \G B,
\end{equation}
is used in defining $\mathbf{W}_n$ (via~\eqref{eq:score_Wn}).
\begin{enumerate}
\item (Locally maximin optimal test) Let ${\B \xi} \ne \mathbf{0} \in \R^p$. For any sequence of level $\alpha \in (0,1)$ tests $\{\phi_n\}_{n \ge 1}$ for testing
\begin{equation}\label{eq:Hypo-Test-Local}
{\rm H}_0: {\B \theta}=\mathbf{0}_p\qquad {\rm versus}\qquad {\rm H}_1:{\B \theta}=\frac{{\B \xi}}{\sqrt{n}}
\end{equation}
with power function $\pi_n({\B \theta}) :=\E_{\B \theta}[\phi_n]$, such that  $\pi_n\left(\frac{{\B \xi}}{\sqrt{n}}\right)$ converges under ${\rm H}_1$ for any ${\B \xi}\in\R^p$, we have
$$\begin{aligned}
\inf_{{\B \xi}: \|I^{1/2}_{{\B \theta}_0} {\B \xi}\|=c}\lim_{n\to\infty} \pi_n \left(\frac{{\B \xi}}{\sqrt{n}} \right)\leq \inf_{{\B \xi}: \|I^{1/2}_{{\B \theta}_0} {\B \xi}\|=c} \lim_{n\to\infty} \mathbb{P}_{{\B \theta}_0 + \frac{{\B \xi}}{\sqrt{n}}}\left(\mathbf{W}_n^\top \Sigma_{\rm ERD}^{-1} \mathbf{W}_n \geq \chi_p^2(\alpha)\right),
\end{aligned}$$
for all $c>0$.
\item (Locally most powerful test) Let $\zeta:\R^p\to\R$ be differentiable at ${\B \theta}_0=\mathbf{0}_p$ and $\zeta({\B \theta}_0)=0$.
Let ${\B \xi}\in\R^p$ be such that ${\B \xi}^\top\nabla\zeta({\B \theta}_0)>0$. 
For any sequence of level $\alpha \in (0,1)$ tests $\{\phi_n\}_{n \ge 1}$ for 
\begin{equation}\label{eq:one-side-test}
    {\rm H}_0: \zeta({\B \theta})\leq 0 \qquad {\rm versus}\qquad {\rm H}_1:\zeta({\B \theta})>0
\end{equation}
with power function $\pi_n({\B \theta})=\E_{\B \theta} [\phi_n]$, we have
$$\begin{aligned}
\limsup_{n\to\infty} \pi_n \left(\frac{{\B \xi}}{\sqrt{n}} \right)\leq \lim_{n\to\infty}\mathbb{P}_{{\B \theta}_0 + \frac{{\B \xi}}{\sqrt{n}}}\left(\frac{\nabla \zeta ({\B \theta}_0)^\top I_{{\B \theta}_0}^{-1} \mathbf{W}_n}{\sqrt{\nabla \zeta ({\B \theta}_0)^\top I_{{{\B \theta}_0}}^{-1}\nabla \zeta ({\B \theta}_0)}} \geq z_{\alpha} \right),
\end{aligned}$$
where $z_\alpha$ satisfies $\Phi(z_\alpha)=1-\alpha$, and
$\frac{\nabla \zeta ({\B \theta}_0)^\top I_{{\B \theta}_0}^{-1} \mathbf{W}_n}{\sqrt{\nabla \zeta ({\B \theta}_0)^\top I_{{{\B \theta}_0}}^{-1}\nabla \zeta ({\B \theta}_0)}} \geq z_{\alpha}$ is the appropriate (asymptotic) level $\alpha$ test based on the GWSR statistic $\mathbf{W}_n$ for~\eqref{eq:one-side-test}.
\end{enumerate}
\end{theorem}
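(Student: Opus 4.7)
The plan is to combine the H\'ajek-type representation of Lemma~\ref{lem:population_signed_rank} with Legendre--Fenchel duality to show that, under the score $J$ in~\eqref{eq:Opt-Score-Fnc}, the statistic $\mathbf{W}_n$ is asymptotically equivalent to the efficient Fisher score of the location family $\{f(\cdot-{\B \theta})\}$ at ${\B \theta}_0=\mathbf{0}_p$. The two optimality claims then reduce to the standard Le Cam bounds in the Gaussian shift experiment obtained from the DQM expansion of $f$.

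First I would verify that $J$ in~\eqref{eq:Opt-Score-Fnc} is $\G$-equivariant, so that Lemma~\ref{lem:population_signed_rank} applies. From Theorem~\ref{thm:population_rank}(iii), $\nabla\psi$ is $\G$-equivariant, which forces $\psi$ (up to an additive constant), and hence $\psi^{*}$, to be $\G$-invariant; thus $\nabla\psi^{*}$ is $\G$-equivariant. Since $f$ is $\G$-symmetric we have $\nabla f(Q\mathbf{z})=Q\nabla f(\mathbf{z})$, and combining these yields $J(Q\mathbf{x})=QJ(\mathbf{x})$. Using Theorem~\ref{thm:population_rank}(iv) together with equivariance,
\[
S(\mathbf{X}_i,R(\mathbf{X}_i))\,J(R(\mathbf{X}_i))=J\bigl(S(\mathbf{X}_i,R(\mathbf{X}_i))R(\mathbf{X}_i)\bigr)=J(\nabla\psi(\mathbf{X}_i)).
\]
Because $\nu_S$ has a Lebesgue density, $\nabla\psi$ is $\mu$-a.e.\ injective and $\nabla\psi^{*}\circ\nabla\psi=\mathrm{id}$ $\mu$-a.e.\ (cf.~\citet{McCann1995}), giving $J(\nabla\psi(\mathbf{X}))=-\nabla f(\mathbf{X})/f(\mathbf{X})$ $\mu$-a.s. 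Hence Lemma~\ref{lem:population_signed_rank} yields
\[
\mathbf{W}_n \;=\; -\frac{1}{\sqrt{n}}\sum_{i=1}^n \frac{\nabla f(\mathbf{X}_i)}{f(\mathbf{X}_i)} + o_{L^2}(1)\qquad \text{under}\ {\rm H}_0,
\]
so $\Sigma_{\rm ERD}=I_{{\B \theta}_0}$, and Theorem~\ref{thm:asymptotics_local_alternatives} gives non-centrality vector ${\B \gamma}=I_{{\B \theta}_0}{\B \xi}$. Therefore, under ${\rm H}_1:{\B \theta}={\B \xi}/\sqrt{n}$,
\[
\mathbf{W}_n^\top \Sigma_{\rm ERD}^{-1}\mathbf{W}_n \overset{d}{\longrightarrow} \chi^2_p\bigl(\|I^{1/2}_{{\B \theta}_0}{\B \xi}\|^2\bigr),
\]
and the normalized statistic of part~(2) converges in distribution to $N\bigl(\nabla\zeta({\B \theta}_0)^\top{\B \xi}/\sqrt{\nabla\zeta({\B \theta}_0)^\top I_{{\B \theta}_0}^{-1}\nabla\zeta({\B \theta}_0)},\,1\bigr)$.

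Both optimality bounds now follow from the fact that the localized experiment $\{P_{{\B \xi}/\sqrt{n}}:{\B \xi}\in\R^p\}$ converges in the Le Cam sense to the Gaussian shift model $\{N({\B \xi},I_{{\B \theta}_0}^{-1})\}_{{\B \xi}\in\R^p}$. For part~(1), the local asymptotic minimax theorem (e.g., \citet[Chapter~25]{vanderVaart1998}) implies that for any level-$\alpha$ test sequence $\{\phi_n\}$,
\(
\inf_{\|I^{1/2}_{{\B \theta}_0}{\B \xi}\|=c}\lim_n\pi_n({\B \xi}/\sqrt{n}) \le P\bigl(\chi^2_p(c^2)\ge \chi^2_p(\alpha)\bigr),
\)
and the first display above shows this bound is attained uniformly on $\{\|I^{1/2}_{{\B \theta}_0}{\B \xi}\|=c\}$ by the GWSR test. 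For part~(2), the localized hypothesis becomes $\nabla\zeta({\B \theta}_0)^\top{\B \xi}\le 0$ on the limiting Gaussian experiment, whose UMP level-$\alpha$ test is an efficient one-sided $z$-test; the second display shows this limiting test is asymptotically equivalent to our normalized GWSR statistic, and the asymptotic power upper bound for one-sided tests (e.g., \citet[Theorem~15.4]{vanderVaart1998}) gives the claim.

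The main obstacle will be the rigorous justification of $\nabla\psi^{*}\circ\nabla\psi=\mathrm{id}$ $\mu$-a.e.: $\psi$ is convex but need not be strictly convex on all of $\R^p$, and $\nabla\psi^{*}$ is only defined on the image of $\nabla\psi$. One must therefore restrict to the $\mu$-full-measure set of differentiability points and invoke the absolute continuity of $\nu_S$ to prevent $\nabla\psi$ from collapsing onto a Lebesgue-null set (via the Brenier/McCann uniqueness). A secondary technicality is checking that this particular $J$ satisfies the second-moment hypotheses of Lemma~\ref{lem:population_signed_rank}; this should follow from DQM, which guarantees $\mathbb{E}\|\nabla f/f\|^2=\mathrm{tr}(I_{{\B \theta}_0})<\infty$, together with $\nabla\psi\#\mu_S=\nu_S$, so that the required moment conditions on $\nu_n$ reduce to the finiteness of Fisher information.
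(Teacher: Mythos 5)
Your proposal is correct and follows essentially the same route as the paper: verify that the score $J$ in~\eqref{eq:Opt-Score-Fnc} is $\G$-equivariant and satisfies $J(\nabla\psi(\mathbf{X}))=-\nabla f(\mathbf{X})/f(\mathbf{X})$ via McCann's a.s.\ inverse $\nabla\psi^*$, deduce ${\B \gamma}=I_{{\B \theta}_0}{\B \xi}$ and $\Sigma_{\rm ERD}=I_{{\B \theta}_0}$ so that $\mathbf{W}_n$ matches the efficient score asymptotically, and then invoke the Le Cam limiting Gaussian shift experiment bounds (the maximin bound for part 1 and \citet[Theorem~15.4]{vanderVaart1998} for part 2) to show the GWSR test attains them. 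The only cosmetic difference is that the paper obtains the part-1 upper bound through the asymptotic representation theorem together with the known maximin test in the Gaussian experiment rather than your cited minimax theorem, which amounts to the same argument.
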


The first result above shows that the GWSR test maximizes the minimum (local) power over the contour $\{{\B \xi}\in\R^p: \|I^{1/2}_{{\B \theta}_0} {\B \xi}\|=c\}$ for any $c >0$, while the second result tells us that if the hypothesis is one-sided, then a test based on the GWSR statistic $\mathbf{W}_n$ would achieve maximum (local) power for any fixed direction ${\B \xi} \in \R^p$. 
Although tests based on $\mathbf{W}_n$ can be locally asymptotically optimal, the optimal score function $J(\cdot)$ (in~\eqref{eq:Opt-Score-Fnc}) depends on the underlying distribution $f$, and is in general unknown.


\begin{example}[Gaussian case]
Suppose it is known that $f(\cdot)$ (in~\cref{thm:efficiency}) follows a Gaussian distribution $N(\mathbf{0}_p,\Sigma)$ with unknown nonsingular covariance matrix $\Sigma$.
Then with $S\mathbf{H}\sim N(\mathbf{0},I_p)$, \cref{thm:efficiency} states that the optimal score function (in~\eqref{eq:Opt-Score-Fnc})  is $J(\mathbf{x})=\Sigma^{-1/2}\mathbf{x}$.
Although $\Sigma$ is unknown, it may be estimated by the sample covariance matrix $\hat{\Sigma}$. Define
$\hat{J}(\mathbf{x})=\hat{\Sigma}^{-1/2}\mathbf{x}$. Then, 
$$\mathbf{W}_n = \frac{1}{\sqrt{n}}\sum_{i=1}^n S_n(\mathbf{X}_i)  J(R_n(\mathbf{X}_i)) = \Sigma^{-1/2}\left( \frac{1}{\sqrt{n}}\sum_{i=1}^n S_n(\mathbf{X}_i)  R_n(\mathbf{X}_i)\right)$$
has the same asymptotic distribution as
\begin{equation}\label{eq:W_n-Est}
\hat{\Sigma}^{-1/2}\left( \frac{1}{\sqrt{n}}\sum_{i=1}^n S_n(\mathbf{X}_i)  R_n(\mathbf{X}_i)\right).
\end{equation}
This means that using the data-driven score function $\hat{J}(\cdot)$ has the same asymptotic distribution as using $J(\cdot)$, and thus the GWSR test based on~\eqref{eq:W_n-Est} also has the same local asymptotic optimal properties.
\end{example}

Generalizing the above example, if we can estimate optimal score function $J(\cdot)=-\frac{\nabla f(\nabla\psi^*(\cdot))}{f(\nabla\psi^*(\cdot))}$ in \cref{thm:efficiency} sufficiently well such that
$$\frac{1}{\sqrt{n}}\sum_{i=1}^n S_n(\mathbf{X}_i)  J(R_n(\mathbf{X}_i)) = \frac{1}{\sqrt{n}}\sum_{i=1}^n S_n(\mathbf{X}_i)  \hat{J}(R_n(\mathbf{X}_i)) + o_p(1),$$
then the GWSR test using $\hat{J}$ will be locally asymptotically optimal as described in \cref{thm:efficiency}.

However, in general, estimating the optimal $J(\cdot)=-\frac{\nabla f(\nabla\psi^*(\cdot))}{f(\nabla\psi^*(\cdot))}$ faces two main obstacles, which are both hard problems without additional information.
First, $\frac{\nabla f(\cdot)}{f(\cdot)}$ is unknown and may be estimated empirically.
Estimating the gradient of a log-density function is in principle a difficult non-parametric problem \citep{Hyv2005score}. A positive result was given in \citet{Hyv2005score}, where the author proposed the {\it score matching} method which minimizes an empirical version of
$\mathbb{E}_{\mathbf{X}\sim f}\left[\|\mathbf{s}_{\tau}(\mathbf{X})-\nabla \log f(\mathbf{X}) \|_2^2\right]$ over $\tau$,
with $\mathbf{s}_\tau:\R^p\to\R^p$ is parametrized by $\tau$. In \citet{song2019generative}, $\mathbf{s}_{\tau}$ is taken as a neural network;
$\mathbf{s}_{\tau}$ can also be taken to belong to a parametric family~\citep{Hyv2005score}.
Another obstacle is the estimation of $\nabla\psi^* (\cdot)$ (which is the transport map from $\nu_S$ to $P_{{\B \theta}_0}$). Though with enough data from $P_{{\B \theta}_0}$, we can estimate $\nabla\psi^* (\cdot)$ up to arbitrary precision ~\citep{deb2021rates,manole2021plugin}, the convergence rate may still suffer from the curse of dimensionality \citep{deb2021rates,manole2021plugin}.

\section{Finite Sample Performance}\label{sec:simulations}
In this section we examine the finite sample performance of our distribution-free procedures.
We first illustrate the ARE results in \cref{sec:ARE}, and then
compare our proposals with existing methods in the literature that test for different types of symmetries.
We will use the identity score function $J(\mathbf{x})=\mathbf{x}$ in this section.
\subsection{Empirical Validation of the ARE Results}
Consider testing $\G$-symmetry for the following distributions:
\begin{enumerate}
\item[(a)] A bivariate Epanechnikov distribution (see \cref{thm:indepcomp}) with $\sigma_i=\frac{1}{\sqrt{5}}$ and independent components, shifted by $0.05\cdot \mathbf{1}_2$. Here we test for central and sign symmetry.

\item[(b)] A bivariate standard Gaussian distribution shifted by $0.1\cdot \mathbf{1}_2$. Here we test central, sign and spherical symmetry.
\end{enumerate}
We will consider two types of effective reference distributions: (i) Gaussian ERD: $N(\mathbf{0},I_2)$, and (ii) Uniform ERD: ${\rm Uniform}(-1,1)^2$.
For different types of symmetries, we use the reference distributions $\nu$ recommended in \cref{rk:ref_mea}:
\begin{itemize}
\item Central symmetry with Uniform ERD (CU): $\nu \sim {\rm Uniform}(0,1)\times {\rm Uniform}(-1,1)$.
\item Sign symmetry with Uniform ERD (SU): $\nu \sim {\rm Uniform}(0,1)^2$.
\item Central symmetry with Gaussian ERD (CG): $\nu \sim |N(0,1)|\times N(0,1)$.
\item Sign symmetry with Gaussian ERD (SG): $\nu \sim |N(0,1)|\times |N(0,1)|$.
\item Spherical symmetry with Gaussian ERD (SpG): $\nu \sim \sqrt{\chi^2_2} \times \{0\}$.
\end{itemize}
\begin{figure}
    \centering
    \includegraphics[width = 1\textwidth]{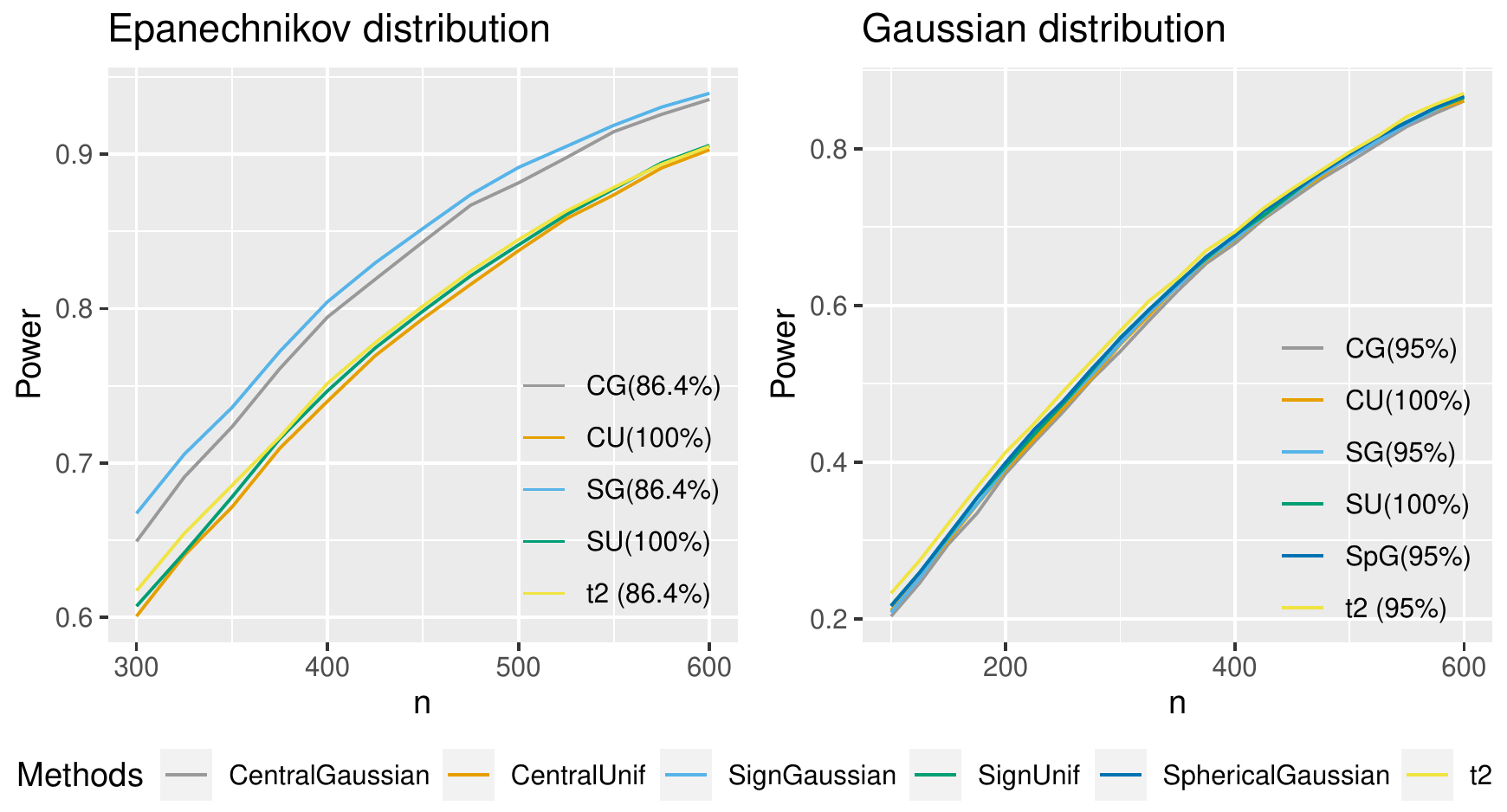}
\caption{Left: The power curves for different methods based on sample size $n$ or $0.864n$ (indicated by the percentage inside brackets) from a shifted Epanechnikov distribution. Right: The power curves of different methods based on $n$ (or $0.95n$) samples from a shifted Gaussian distribution. Here, CG, CU, SG, SU, and SpG correspond to our GWSR test for different symmetries with different ERDs; the first letter denotes the symmetry being tested (C for central, S for sign, and Sp for spherical) and the second letter denotes the ERD adopted (G for $N(\mathbf{0},I_p)$ and U for ${\rm Uniform}[-1,1]^p$).
    ``t2" denotes the Hotelling's $T^2$ test.
        }
    \label{ARE_check}
\end{figure}
Within each replication, the empirical reference distribution
$\nu_n$ is obtained by drawing $n$ i.i.d.\ observations from $\nu$ and fixing it.
The empirical power (over $10^4$ replications) of our GWSR test at level $\alpha = 0.05$ and rejection region
$$\mathbf{W}_n^\top \Sigma_{\rm ERD}^{-1} \mathbf{W}_n \geq \chi_p^2(\alpha)$$
are shown in \cref{ARE_check} for different sample sizes $n$.
The power of Hotelling's $T^2$ test is also given, with rejection region: $T^2 \geq \chi_p^2(\alpha)$; see~\eqref{eq:Hotelling-T2}.

From \cref{thm:indepcomp} we know that the Epanechnikov distribution in (a) achieves the equality in the 0.864 lower bound for Uniform ERD.
Hence with a Uniform ERD and $n$ samples, our method using central symmetry (CU), and sign symmetry (SU) achieve almost the same power as Hotelling's $T^2$ test using $0.864n$ samples; see the yellow, orange and green curves in the left panel of~\cref{ARE_check}.

If the ERD is Gaussian, then from \cref{thm:indepcomp}, the ARE of our methods with respect to Hotelling's $T^2$ is strictly larger than 1. This can also be observed in the left panel of~\cref{ARE_check} --- see the blue and grey curves (corresponding to CG and SG) which are strictly above the yellow curve (corresponding to Hotelling's $T^2$), all obtained from the same number of samples (i.e., 86.4\% $n$). Also, observe that the blue curve (SG) is slightly above the grey curve (CG); this is due to the fact that sign symmetry imposes more constraints than central symmetry (note that the Epanechnikov distribution satisfies both symmetry assumptions).

For (b), the shifted Gaussian example, we know that the
lower bound in ${\rm ARE}(\mathbf{W}_n,\bar{\mathbf{X}}_n)\geq 1$ is attained if Gaussian ERD is used (see Theorems~\ref{thm:indepcomp} and~\ref{thm:ellip}).
This can also be observed in the right panel of \cref{ARE_check} where we see that the four curves --- yellow, dark blue, grey and blue --- are on top of each other.

that our method using any of the central symmetry (CG), sign symmetry (SG), and spherical symmetry (SpG) has almost the same power as Hotelling's $T^2$ test using the same number of samples.
If using our method with a Uniform ERD, then ${\rm ARE}(\mathbf{W}_n,\bar{\mathbf{X}}_n)= 3/\pi\approx 0.95$. 
Hence, again from the right panel of \cref{ARE_check}, our method with Uniform ERD using either central symmetry (CU) or sign symmetry (SU) and $n$ samples has almost the same power as Hotelling's $T^2$ test using $0.95n$ samples.

\subsection{Power Comparison}\label{subsec:power_compare}
In this subsection, we compare our methods with existing procedures in the literature that test for different types of symmetries.
We will use the recommended Gaussian ERD with identity score (see~\cref{rk:ref_mea}).
Apart from $\nu_n$ obtained via random sampling of the $\mathbf{h}_i$'s from $\nu$, as adopted earlier, we will also use a quasi-Monte Carlo sequence (the Halton transformed sequence) to construct $\nu_n$ as explained below. The Halton sequence \citep{halton1964algorithm} $\{\mathbf{U}_i^p\}_{i\ge 1}$ (resp.\ $\{{U}_i^1\}_{i\ge 1}$) provides a discretization of ${\rm Uniform}(0,1)^p$ (resp.\  ${\rm Uniform}(0,1)$).
Let $F_{Z}(\cdot)$ be the distribution function of $Z\sim N(0,1)$, $F_{|Z|}(\cdot)$
be the c.d.f.~of $|Z|\sim |N(0,1)|$, and $F_{\|\mathbf{Z}\|}$ be the c.d.f.~of $\|\mathbf{Z}\|\sim \|N(\mathbf{0},I_p)\|$. 
The Halton transformed $\nu_n$'s for central, sign and spherical symmetries are given by the discrete uniform distribution over $\big(F_{|Z|}^{-1}(\mathbf{U}_{i,1}^p),F_{Z}^{-1}(\mathbf{U}_{i,2}^p),\ldots,F_{Z}^{-1}(\mathbf{U}_{i,d}^p)\big)$, for $i=1,\ldots,n$, the discrete uniform distribution over
$\big(F_{|Z|}^{-1}(\mathbf{U}_{i,1}^p),F_{|Z|}^{-1}(\mathbf{U}_{i,2}^p),\ldots,F_{|Z|}^{-1}(\mathbf{U}_{i,p}^p)\big)$, for $i=1,\ldots,n$,
and the discrete uniform distribution over
$\big(F_{\|\mathbf{Z}\|}^{-1}({U}_{i}^1),0,\ldots,0\big)$, for $i=1,\ldots,n$, respectively.

For low-dimensional scenarios (i.e., $p=2$), we will use the asymptotic $\chi^2$ critical values for our methods and also for Hotelling's $T^2$. In the high-dimensional settings (i.e., $p=50$), we will use the exact distribution of $T^2$ (assuming Gaussianity) and our statistics (which are distribution-free). In particular, for the GWSR test, given the reference points $\mathbf{h}_1,\ldots,\mathbf{h}_n$, we sample 1000 observations from the null distribution of $\mathbf{W}_n$, i.e., $\frac{1}{\sqrt{n}}\sum_{i=1}^n S_i\mathbf{h}_i$ where $S_i$ are i.i.d.\ from ${\rm Uniform}(\G)$, and reject the null hypothesis if the observed value of $\|\mathbf{W}_n\|^2$ is greater than 95\% of the observations from the null distribution.

We consider the following data distributions to test central symmetry (C1-C10), sign symmetry (S1-S10), and spherical symmetry (Sp1-Sp10):

\begin{enumerate}
\item[C1.] Multivariate Gaussian: $\mathbf{X}\sim N\left(\mathbf{0},
\begin{pmatrix}
2&1\\
1&3\\
\end{pmatrix}\right) + \lambda\cdot \mathbf{1}_2
$; $n=200$.
\item[C2.] Multivariate $t$-distribution with 1 degree of freedom, location parameter $\lambda\cdot \mathbf{1}_2$ and scale parameter $\begin{pmatrix}
2&1\\
1&3\\
\end{pmatrix}$; $n=200$.
\item[C3.] Uniform distribution over $[-1,0]^2 \cup [0,1]^2$, plus $\lambda\cdot \mathbf{1}_2$; $n=200$.
\item[C4.] Exponential distribution: $n = 100$, $p=2$, ${X}_1,{X}_2$ i.i.d.\ follow ${\rm Exp}(1) - 1$. $\mathbf{X}$ has mean $\mathbf{0}_p$.
\item[C5.] Exponential distribution (larger sample): Same as above except that $n=200$.
\item[C6.] Chi-squared distribution: $n = 100$, $p=2$, ${X}_1,{X}_2$ i.i.d.\ follow $\chi^2(1) - 1$. $\mathbf{X}$ has mean $\mathbf{0}_p$.
\item[C7.] Pareto distribution: $n = 200$, $p=2$, ${X}_1,{X}_2$ i.i.d.\ follow ${\rm Pareto}(1) - 2$. $\mathbf{X}$ has median $\mathbf{0}_p$.
\item[C8.] ${\rm AR}(1)$ under ${\rm H}_0$: $n = 200$, $p=50$, ${X}_1= Z_1$, $X_i = 0.5 X_{i-1} + Z_i$, $i=2,\ldots,p$, where $Z_i$ are i.i.d.\ $N(0,1)$.
\item[C9.] ${\rm AR}(1)$ under ${\rm H}_1$: Same as above except that the distribution of $\mathbf{X}$ is shifted by $0.15\cdot \mathbf{1}_d$.
\item[C10.] AR$(1)$ (${\rm H}_1$, heavy-tailed): $n = 200$, $p=50$, $Z_i$ are i.i.d.\ from the standard $t$-distribution with 1 degree of freedom. ${Y}_1= Z_1$, $Y_i = 0.5 Y_{i-1} + Z_i$, $i=2,\ldots,p$, $X_i = Y_i + 0.9$, $i=1,\ldots,p$.
\item[S1.] Multivariate Gaussian: $\mathbf{X}\sim N\left(\mathbf{0},
\begin{pmatrix}
2&0\\
0&3\\
\end{pmatrix}\right) + \lambda\cdot \mathbf{1}_2
$; $n=200$.
\item[S2.] Multivariate $t$-distribution with 1 degree of freedom, location parameter $\lambda\cdot \mathbf{1}_2$ and scale parameter $\begin{pmatrix}
2&0\\
0&3\\
\end{pmatrix}$; $n=200$.
\item[S3.] Uniform distribution over $[-1,1]^2$, plus $\lambda\cdot \mathbf{1}_2$; $n=200$.
\item[S4.] Correlation: $n=200$, $p=2$, $\mathbf{X}\sim N\left(\mathbf{0},
\begin{pmatrix}
2&1\\
1&3\\
\end{pmatrix}\right)
$.
\item[S5.] Exponential distribution: $n = 100$, $p=2$, ${X}_1,{X}_2$ i.i.d.\ follow ${\rm Exp}(1) - 1$.
\item[S6.] Chi-squared distribution: $n=100$, $p=2$, $X_1$, $X_2$ i.i.d.\ follow $\chi^2(1)-1$.
\item[S7.] Laplace: $n=200$, $p=2$, $X_1$, $X_2$ i.i.d.\ follow a Laplace distribution with location parameter 0.2 and scale parameter 1.
\item[S8.] High-dimensional (${\rm H}_0$): $n=200$, $p=50$, $Z_1,\ldots,Z_p \overset{i.i.d.}{\sim} N(0,1)$, $X_i = \sin(i)Z_i$.
\item[S9.] High-dimensional (${\rm H}_1$): $n=200$, $p=50$, $Z_1,\ldots,Z_p \overset{i.i.d.}{\sim} N(0,1)$, $X_i = \sin(i)Z_i + 0.003$.
\item[S10.] High-dimensional (${\rm H}_1$, heavy-tailed): $n=200$, $p=50$, $X_i = \sin(i)Z_i + 0.003$; here $\mathbf{Z}$ follows a multivariate $t$-distribution with 1 degree of freedom, location parameter $\mathbf{0}_p$ and scale parameter $I_p$.
\item[Sp1.] Multivariate Gaussian: $\mathbf{X}\sim N\left(\mathbf{0}, I_2\right) + \lambda\cdot \mathbf{1}_2$; $n=200$.
\item[Sp2.] Multivariate $t$-distribution with 1 degree of freedom, location parameter $\lambda\cdot \mathbf{1}_2$ and scale parameter $I_2$; $n=200$.
\item[Sp3.] Uniform distribution over the unit disk $\{\mathbf{x}\in\R^2 : \|\mathbf{x}\|\leq 1\}$, plus $\lambda\cdot \mathbf{1}_2$; $n=200$.
\item[Sp4.] Elliptical: $n=200$, $p=2$, $\mathbf{X}=(2Z_1,Z_2)^\top$, where $Z_1$, $Z_2$ i.i.d.\ $N(0,1)$.
\item[Sp5.] Correlation: $n=200$, $p=2$, $\mathbf{X}\sim N\left(\mathbf{0},
\begin{pmatrix}
1&\rho\\
\rho&1\\
\end{pmatrix}\right)
$, where $\rho=0.6$.
\item[Sp6.] Non-spherical uniform distribution: $n=1000$, $p=2$, $\mathbf{X}$ follows the uniform distribution over $[-1,1]^2$.
\item[Sp7.] Chi-squared distribution: $n=100$, $p=2$, $X_1$, $X_2$ i.i.d.\ follow $\chi^2(1)-1$.
\item[Sp8.] High-dimensional (${\rm H}_0$): $n=200$, $p=50$, $\mathbf{X}\sim N(\mathbf{0},I_p)$.
\item[Sp9.] High-dimensional (${\rm H}_1$): $n=200$, $p=50$, $\mathbf{X}\sim N(\mathbf{0},I_p) + 0.05\cdot \mathbf{1}_p$.
\item[Sp10.] High-dim (${\rm H}_1$, heavy-tailed): $n=200$, $p=50$, $\mathbf{X}$ follows the multivariate $t$-distribution with 1 degree of freedom, location parameter $0.05\cdot \mathbf{1}_p$ and scale parameter $I_p$.
\end{enumerate}

For central symmetry (C1-C10), we compare our GWSR test (denoted by ``OT-Wilcox"), our generalized sign test (denoted by ``OT-sign") and Hotelling's $T^2$ test with the depth-based test proposed in \citet{Dyckerhoff2015depth} (denoted by ``DLP"), and the test proposed in \citet{EG2016} (denoted by ``EG") which compares empirical measures of opposite regions. For EG, we use the same parameter settings as in~\citet[Section 3]{EG2016}. 

For sign symmetry (S1-S10), we compare our proposals --- OT-Wilcox and OT-sign --- with Hotelling's $T^2$ test, the sign-change version of {\it spatial sign} test (denoted by ``SS") \citep[Section 6.1.2]{Oja2010nonp} and the sign-change version of the {\it spatial signed-rank} test (denoted by ``SSR") \citep[Section 7.1]{Oja2010nonp}. The last two tests are implemented in the \texttt{R} package ``MNM" \citep{MNM2011}.

For spherical symmetry (Sp1-Sp10), we compare our methods with \citet{Baringhaus1991} (denoted by ``LB"), a test for multivariate spherical symmetry that is consistent against any fixed alternative.
For LB \citep{Baringhaus1991}, a function $h$ needs to be specified. When $p=2$, we use the function $h(t)=\frac{t-1/4}{17/8-t}$ as in \citet[Section 4]{Baringhaus1991}, which yields a tractable  asymptotic null distribution.
When $p=50$, we use the function $h(t)=(1-2tw+w^2)^{-\lambda}-1$ \citep[Equation 3.10]{Baringhaus1991} with $\lambda=24$ and $w=1/4$ as in \citet[Section 5]{Baringhaus1991}, and
the same first-order approximation to the upper tail probability for computing the critical value, as advocated in \citet[Equation 4.4]{Baringhaus1991}.
We also compare our methods with \citet{Henze2014sph} (denoted by ``HHM"), a test for spherical symmetry based on the empirical characteristic function. We use the suggested settings in \citet[Section 5]{Henze2014sph}.\footnote{More specifically, we use the Kolmogorov-Smirnov type statistic with the hyper-parameter $\mathcal{R}=2$. For $p=2$, we use 8 rings and 9 grid points as in \citet[Section 5]{Henze2014sph}. For $p=50$, we use 8 rings and 200 grid points).}

The nominal size of all the testing procedures is set at  $\alpha =0.05$. Tables~\ref{tab:cen}, \ref{tab:sign} and~\ref{tab:sph} show the empirical power of the different methods for testing central, sign and spherical symmetries, respectively. For our methods --- OT-Wilcox and OT-sign --- and Hotelling's $T^2$, the empirical power is averaged over $10^4$ replications. For other competing methods, the empirical power is based on 1000 replications, as some of them are computationally intensive.

\begin{table}
\centering
\caption{Power of the competing methods for testing central symmetry, with nominal level 0.05. For OT-based methods, two choices of $\nu_n$ are used (random sampling/[Halton transformed]).}
\label{tab:cen}
\begin{tabular}{c||ccccc}
 \toprule
 \multicolumn{6}{c}{Multivariate Gaussian Distribution} \\
 \hline
C1& $T^2$ & OT-Wilcox & OT-sign &DLP \citep{Dyckerhoff2015depth}&EG \citep{EG2016}\\
 \hline
$\lambda = 0.0$   & 0.06    & 0.05/[0.04]&  0.06/[0.06]  & 0.05 &0.04\\
$\lambda = 0.1$   & 0.15    & 0.15/[0.13]& 0.11/[0.11]  & 0.06  & 0.14 \\
$\lambda = 0.2$   & 0.49    & 0.46/[0.42]&  0.29/[0.29] & 0.08 &  0.46\\
$\lambda = 0.3$   & 0.84    & 0.82/[0.80]&  0.55/[0.54]  &  0.12& 0.85 \\
$\lambda = 0.4$   & 0.98    & 0.98/[0.97]& 0.78/[0.77]  & 0.20 & 0.98 \\
\midrule
 \multicolumn{6}{c}{Multivariate $t$-Distribution} \\
 \hline
C2& $T^2$ & OT-Wilcox & OT-sign &DLP \citep{Dyckerhoff2015depth}&EG \citep{EG2016}\\
 \hline
$\lambda = 0.0$ & 0.03 & 0.06/[0.05] & 0.05/[0.05] & 0.05  & 0.04  \\
$\lambda = 0.2$ & 0.03 & 0.15/[0.13] & 0.19/[0.18] & 0.07  &  0.21 \\
$\lambda = 0.4$ & 0.04 & 0.48/[0.48] & 0.54/[0.54] & 0.12  & 0.62  \\
$\lambda = 0.6$ & 0.08 & 0.81/[0.79] & 0.86/[0.87] & 0.26  &0.94   \\
$\lambda = 0.8$ & 0.12 & 0.96/[0.96] & 0.98/[0.97] & 0.49  & 1.00  \\
\midrule
\multicolumn{6}{c}{Multivariate Uniform Distribution} \\
\hline
C3& $T^2$ & OT-Wilcox & OT-sign &DLP \citep{Dyckerhoff2015depth}&EG \citep{EG2016}\\
\hline
$\lambda = 0.00$ & 0.05 & 0.05/[0.04] & 0.05/[0.05] & 0.05  & 0.05  \\
$\lambda = 0.03$ & 0.10 & 0.22/[0.18] & 0.09/[0.09] & 0.08  & 0.13  \\
$\lambda = 0.06$ & 0.27 & 0.64/[0.57] & 0.19/[0.18] & 0.15  & 0.45  \\
$\lambda = 0.09$ & 0.55 & 0.91/[0.88] & 0.35/[0.34] & 0.24  & 0.80  \\
$\lambda = 0.12$ & 0.81 & 0.99/[0.98] & 0.54/[0.52] & 0.37  & 0.97 \\
\midrule
\multicolumn{6}{c}{Other  Alternatives} \\
\hline
C4-C7& $T^2$ & OT-Wilcox & OT-sign &DLP \citep{Dyckerhoff2015depth}&EG \citep{EG2016}\\
\hline
Exponential& 0.07 & 0.10/[0.09] & 0.63/[0.66] & 0.84 &0.88\\
Exponential (L) & 0.06 & 0.15/[0.15] & 0.91/[0.93] & 1.00 & 1.00\\
Chi-squared & 0.09 & 0.21/[0.19] & 0.83/[0.88] & 0.99 &1.00\\
Pareto & 0.84 & 0.04/[0.02] & 0.90/[0.96]& 1.00  &1.00 \\
\midrule
\multicolumn{6}{c}{High-Dimensional Setting} \\
\hline
C8-C10& $T^2$ & OT-Wilcox & OT-sign &DLP \citep{Dyckerhoff2015depth}&EG \citep{EG2016}\\
\hline
AR$(1)$ (${\rm H}_0$) &0.05 & 0.05/[0.05] & 0.04/[0.04] & --- & ---\\
AR$(1)$ (${\rm H}_1$) & 0.97 & 0.41/[0.99] & 0.07/[0.16] & --- & ---\\
AR$(1)$ (${\rm H}_1$, heavy-tailed) & 0.35 & 0.24/[0.89] & 0.06/[0.23] & --- & ---\\
\bottomrule
\end{tabular}
\end{table}

\begin{table}
\centering
\caption{Power of the competing methods for testing sign symmetry, with nominal level 0.05. For OT-based methods, two choices of $\nu_n$ are used (random sampling/[Halton transformed]).}
\label{tab:sign}
\begin{tabular}{c||ccccc}
 \toprule
 \multicolumn{6}{c}{Multivariate Gaussian Distribution} \\
 \hline
S1& $T^2$ & OT-Wilcox & OT-sign & SSR \citep{Oja2010nonp}& SS \citep{Oja2010nonp}\\
 \hline
$\lambda = 0.0$ & 0.06  & 0.05/[0.04]   & 0.05/[0.05]  & 0.05 & 0.05  \\
$\lambda = 0.1$ & 0.21  & 0.20/[0.18]   & 0.14/[0.14]  & 0.19 & 0.16  \\
$\lambda = 0.2$ & 0.64  & 0.62/[0.60]   & 0.43/[0.43]  & 0.61 & 0.52  \\
$\lambda = 0.3$ & 0.94  & 0.93/[0.93]   & 0.79/[0.79]  & 0.94 & 0.87  \\
$\lambda = 0.4$ & 1.00  & 1.00/[1.00]   & 0.97/[0.97]  & 1.00 & 0.99  \\
\midrule
\multicolumn{6}{c}{Multivariate $t$ Distribution} \\
 \hline
S2& $T^2$ & OT-Wilcox & OT-sign & SSR \citep{Oja2010nonp}& SS \citep{Oja2010nonp}\\
 \hline
$\lambda = 0.0$   & 0.03   & 0.06/[0.05]    & 0.04/[0.04] & 0.05 & 0.05\\
$\lambda = 0.1$   & 0.04   & 0.10/[0.09]    & 0.12/[0.12] & 0.09 & 0.12\\
$\lambda = 0.2$   & 0.05   & 0.22/[0.20]    & 0.29/[0.29] & 0.24 & 0.34\\
$\lambda = 0.3$   & 0.06   & 0.44/[0.43]    & 0.58/[0.58] & 0.48 & 0.66\\
$\lambda = 0.4$   & 0.06   & 0.66/[0.65]    & 0.83/[0.83] & 0.72 & 0.89\\
\midrule
\multicolumn{6}{c}{Multivariate Uniform Distribution} \\
 \hline
S3& $T^2$ & OT-Wilcox & OT-sign & SSR \citep{Oja2010nonp}& SS \citep{Oja2010nonp}\\
 \hline
$\lambda = .000$   & 0.05   & 0.05/[0.04]  & 0.05/[0.05] & 0.06 & 0.05\\
$\lambda = .025$   & 0.12   & 0.16/[0.14]  & 0.07/[0.07] & 0.10 & 0.08\\
$\lambda = .050$   & 0.33   & 0.51/[0.46]  & 0.13/[0.13] & 0.29 & 0.18\\
$\lambda = .075$   & 0.65   & 0.83/[0.80]  & 0.25/[0.25] & 0.58 & 0.37\\
$\lambda = .100$   & 0.89   & 0.97/[0.96]  & 0.42/[0.42] & 0.83 & 0.60\\
\midrule
\multicolumn{6}{c}{Other Alternatives} \\
\hline
S4-S7 & $T^2$ & OT-Wilcox & OT-sign & SSR \citep{Oja2010nonp}& SS \citep{Oja2010nonp}\\
\hline
Correlation &0.06 &0.06/[0.05] &0.05/[0.05] & 0.05 & 0.05\\
Exponential & 0.07 & 0.14/[0.12] & 0.93/[0.93] & 0.34 & 0.79\\
Chi-squared & 0.09 & 0.30/[0.28] & 1.00/[1.00] & 0.59 & 0.98\\
Laplace & 0.73&0.81/[0.79] &0.92/[0.92] & 0.83 & 0.88\\
\midrule
\multicolumn{6}{c}{High-Dimensional Setting} \\
\hline
S8-S10 & $T^2$ & OT-Wilcox & OT-sign & SSR \citep{Oja2010nonp}& SS \citep{Oja2010nonp}\\
\hline
High-dim (${\rm H}_0$) & 0.04 & 0.05/[0.05] & 0.05/[0.05] & 0.05 & 0.05\\
High-dim (${\rm H}_1$) & 0.65 & 0.30/[0.28] & 0.49/[0.49] & 0.63 & 0.62\\
High-dim (${\rm H}_1$, heavy-tailed) & 0.05 & 0.17/[0.13] & 0.28/[0.28] &0.27 &0.35\\
\bottomrule
\end{tabular}
\end{table}

\begin{table}
\centering
\caption{Power of the competing methods for testing spherical symmetry, with nominal level 0.05. For OT-based methods, two choices of $\nu_n$ are used (random sampling/[Halton transformed]).}
\label{tab:sph}
\begin{tabular}{c||ccccc}
 \toprule
 \multicolumn{6}{c}{Multivariate Gaussian Distribution} \\
 \hline
Sp1& $T^2$ & OT-Wilcox & OT-sign &LB \citep{Baringhaus1991} & HHM \citep{Henze2014sph}\\
 \hline
$\lambda = 0.00$ & 0.06  & 0.05/[0.05] & 0.05/[0.06] & 0.05&0.05 \\
$\lambda = 0.05$ & 0.15  & 0.14/[0.14] & 0.07/[0.06] & 0.08& 0.10 \\
$\lambda = 0.10$ & 0.42  & 0.42/[0.42] & 0.12/[0.12] & 0.18&0.26 \\
$\lambda = 0.15$ & 0.77  & 0.76/[0.76] & 0.21/[0.22] & 0.38&0.54 \\
$\lambda = 0.20$ & 0.95  & 0.95/[0.95] & 0.36/[0.37] & 0.62& 0.84\\
\midrule
 \multicolumn{6}{c}{Multivariate $t$ Distribution} \\
 \hline
Sp2& $T^2$ & OT-Wilcox & OT-sign &LB \citep{Baringhaus1991}& HHM \citep{Henze2014sph}\\
 \hline
$\lambda = 0.0$ & 0.03  & 0.07/[0.06] & 0.05/[0.06] & 0.04& 0.05\\
$\lambda = 0.1$ & 0.05  & 0.16/[0.15] & 0.09/[0.09] & 0.21& 0.19\\
$\lambda = 0.2$ & 0.06  & 0.46/[0.45] & 0.20/[0.20] & 0.75& 0.60\\
$\lambda = 0.3$ & 0.07  & 0.79/[0.78] & 0.39/[0.39] & 0.97& 0.94\\
$\lambda = 0.4$ & 0.10  & 0.96/[0.96] & 0.62/[0.62] & 1.00& 0.99\\
\midrule
\multicolumn{6}{c}{Multivariate Uniform Distribution} \\
 \hline
Sp3& $T^2$ & OT-Wilcox & OT-sign &LB \citep{Baringhaus1991}&HHM \citep{Henze2014sph}\\
 \hline
$\lambda = .000$ & 0.06  & 0.05/[0.04] & 0.05/[0.05] & 0.04& 0.05\\
$\lambda = .025$ & 0.14  & 0.21/[0.19] & 0.06/[0.06] & 0.04& 0.09\\
$\lambda = .050$ & 0.40  & 0.61/[0.59] & 0.09/[0.09] & 0.05& 0.25\\
$\lambda = .075$ & 0.75  & 0.90/[0.90] & 0.14/[0.15] & 0.07& 0.55\\
$\lambda = .100$ & 0.96  & 0.99/[0.99] & 0.23/[0.23] & 0.11& 0.83\\
\midrule
\multicolumn{6}{c}{Other Alternatives} \\
 \hline 
Sp4-Sp7 & $T^2$ & OT-Wilcox & OT-sign &LB \citep{Baringhaus1991}&HHM \citep{Henze2014sph}\\
\hline 
Elliptical & 0.06 & 0.06/[0.06]&0.05/[0.05]&0.50 & 1.00\\
Correlation &0.06 &0.06/[0.06] &0.05/[0.05] &0.50 & 1.00\\
Non-spherical uniform &0.05 & 0.05/[0.05] &0.05/[0.05]  &0.07 & 0.05\\
Chi-squared & 0.09 & 0.27/[0.25] &0.45/[0.45] &1.00 & 1.00\\
\midrule
\multicolumn{6}{c}{High-Dimensional Setting} \\
\hline
Sp8-Sp10 & $T^2$ & OT-Wilcox & OT-sign &LB \citep{Baringhaus1991}&HHM \citep{Henze2014sph}\\
\hline
High-dim (${\rm H}_0$) &0.04 &0.05/[0.05] &0.05/[0.05] &0.00 &0.04\\
High-dim (${\rm H}_1$) &0.54 &0.68/[0.68] &0.05/[0.05] &0.00 &0.15\\
High-dim (${\rm H}_1$, heavy-tailed) & 0.05 & 0.35/[0.35] & 0.05/[0.05]& 0.00 & 0.11\\
\bottomrule
\end{tabular}
\end{table}

The first three examples (C1-C3, S1-S3, Sp1-Sp3) are location shift models, a class of alternatives where OT-Wilcox is powerful. When $\lambda=0$, these distributions are (centrally/sign/spherically) symmetric, and
it can be seen from Tables~\ref{tab:cen}-\ref{tab:sph} that all methods have the correct size around 0.05.
When $\lambda >0$,
in the Gaussian cases (C1, S1, Sph1),  Hotelling's $T^2$ test works the best,
and OT-Wilcox is only slightly inferior to it, which is a consequence of \cref{prop:Gaussian} (or Theorems~\ref{thm:indepcomp},~\ref{thm:ellip}). In the $t$-distribution example (C2, S2, Sp2), Hotelling's $T^2$ test has almost no power, while OT-Wilcox still maintains high power. The power of the OT-sign test is even higher than that of OT-Wilcox in settings C2 and S2.
In the light-tailed uniform distribution case (settings C3, S3, Sp3), Hotelling's $T^2$ regains some power, but it is still inferior to OT-Wilcox, which achieves the highest power among all comparing methods.

The above scenarios C1-C3, S1-S3 and Sp1-Sp3 show that our OT-Wilcox is a nonparametric competitor of Hotelling's $T^2$ test, and is powerful in detecting location shifts of a symmetric distribution. However, our methods may be less powerful against non-location shift alternatives, such as a skewed alternative with mean or median 0.
We illustrate this using examples C4-C7, S4-S6, and Sph4-Sph7, where
the power of OT-Wilcox is lower compared to OT-sign and most of the other competing methods, though it is still higher than Hotelling's $T^2$ test in general.
Note that OT-Wilcox being powerless for scenarios S4 and Sp4-Sp6 may be explained by~\cref{prop:consistency_signed_rank}:  when testing sign and spherical symmetry, OT-Wilcox will not be consistent against centrally symmetric alternatives because $\mathbb{E}[\nabla\psi(\mathbf{X})] = \mathbf{0}$ (with $J(\mathbf{x})=\mathbf{x}$), a consequence of $\nabla\psi(-\mathbf{x}) = -\nabla\psi(\mathbf{x})$.

C8, S8 and Sp8 correspond to high-dimensional settings under ${\rm H}_0$.
It can be seen that both our methods and Hotelling's $T^2$ have size close to 0.05, as expected. Under ${\rm H}_1$ (i.e., settings C9, S9, Sp9), OT-Wilcox has nontrivial power, but is not competitive compared to Hotelling's $T^2$ (see settings C9 and S9). One issue here is that when the dimension $p$ is high, it can be difficult for $\nu_n$ to approximate $\nu$ well. For example, for central symmetry, $\nu$ is supported on $2^{p-1}$ orthants, but at most $n$ of the orthants have point(s) from $\nu_n$. 
Thus, from setting C8 in \cref{tab:cen}, OT-Wilcox with $\nu_n$ obtained from random sampling is seen to be less powerful than $T^2$. This may be improved by using $\nu_n$ obtained from the quasi-Monte Carlo (Halton transformed) points, which greatly boosts the power of OT-Wilcox (see C8 in \cref{tab:cen}). However, Halton transformed $\nu_n$ does not lead to significant power improvements for sign symmetry (S8), spherical symmetry (Sp8), or in any of the low-dimensional examples, where $\nu$ is easier to approximate.
In the heavy-tailed high-dimensional settings (i.e., C10, S10, Sp10), OT-Wilcox is clearly more powerful (and robust) than Hotelling's $T^2$.

Let us now compare the performance of the other methods mentioned above. Note that DLP \citep{Dyckerhoff2015depth} and EG \citep{EG2016} are designed for testing bivariate central symmetry, so they cannot be used in high-dimensional settings. LB \citep{Baringhaus1991} is powerless in high-dimensional scenarios.\footnote{Here the first-order approximation to the upper tail probability is no longer accurate --- the infinite product in \citep[Equation 4.4]{Baringhaus1991}, which is only practically computable for small $p$, becomes astronomically large when $p=50$.}
HHM \citep{Henze2014sph} achieves relatively good performance for settings Sp4, Sp5, Sp7, but is less satisfactory for the high-dimensional cases Sp8-Sp10.
SSR \citep{Oja2010nonp} and SS \citep{Oja2010nonp} perform quite well in testing sign symmetry, except for location shifts of the uniform distribution (i.e., setting S3), where they have even lower power than Hotelling's $T^2$.

In example S7 the ARE of our generalized sign test w.r.t.\ Hotelling's $T^2$ is 2 (see~\cref{table:AREsign});~\cref{tab:sign} (setting S7) shows that OT-sign does have higher power than Hotelling's $T^2$.

\section{Distribution-Free Confidence Sets}\label{subsec:conf_set}
We say that $\mathbf{X} \sim \mu \in \mathcal{P}_{\rm ac}(\R^p)$ is $\G$-symmetric around $\B \theta^* \in \R^p$ if $\mathbf{X}-\B \theta^* \overset{d}{=} Q(\mathbf{X}-\B \theta^*)$ for all $Q\in \G$. Here $\B \theta^*$ is called the center of symmetry for the distribution $\mu$. As a direct consequence of our results, the GWSR test (or the generalized sign test) can be used to obtain a distribution-free confidence set for $\B \theta^*$. The idea is just to invert the collection of hypotheses $\mathrm{H}_{0,\B \theta}$ defined as: $$\mathrm{H}_{0,\B \theta}: (\mathbf{X} - \B \theta) \stackrel{d}{=} Q(\mathbf{X} - \B \theta), \quad \mbox{ for all }\;Q \in \G,$$ for $\B \theta \in \R^p$, using the distribution-free (see Proposition~\ref{prop:properties_of_generalized_sign_rank}) GWSR test with data $\{\mathbf{X}_i - \B \theta\}_{i=1}^n$ at level $\alpha \in (0,1)$. This yields an exact $(1-\alpha)$ distribution-free confidence set for $\B \theta^*$:
$$\mathcal{C}:= \left\{\B \theta \in \R^p: \mathrm{H}_{0,\B \theta} \mbox{ is  accepted using data } \{\mathbf{X}_i - \B \theta\}_{i=1}^n \right\},$$ 
without making any distributional assumptions on $\mathbf{X} \sim \mu \in \mathcal{P}_{\rm ac}(\mathbb{R}^p)$ except the assumption of $\G$-symmetry. For implementational purposes, we recommend using a fine grid of $\B \theta$'s to test $\mathrm{H}_{0,\B \theta}$ and approximate $\mathcal{C}$. For $p \ge 2$, an alternative to `gridding' (which may be computationally infeasible) is to test the hypothesis of $\G$-symmetry $\mathrm{H}_{0,\B \theta}$ at $\B \theta = \mathbf{X}_i$, for each $i=1,\ldots, n$, and then to consider the (possibly larger) convex hull of the accepted $\B \theta$'s.  

\cref{fig:Conf_set} shows the distribution-free 95\% confidence sets (colored in yellow; obtained via a fine gridding of $\B \theta \in \R^p$) using the GWSR test when $p=2$ with the Gaussian ERD and the identity score function, when: (i) $\G = \{-I,I\}$ (corresponding to central symmetry), and (ii) $\G = O(p)$ (corresponding to spherical symmetry);  see~\cref{rk:ref_mea} for our choice of $\nu$ (here we used a random sample from $\nu$ to obtain $\nu_n$ as in~\cref{subsec:power_compare}). 
The first two rows show data points drawn from the Gaussian distribution $N(0.5\cdot \mathbf{1}_2, I_2)$; the third row depicts data points obtained from the multivariate $t$-distribution with 1 degree of freedom with location parameter $0.5\cdot \mathbf{1}_2$ and scale parameter $I_2$.
In each example, the true center of symmetry $\B \theta^*$ is marked by ``{\color{red}$+$}".

It can be seen that our GWSR based confidence sets successfully cover the true unknown center  $\B \theta^*$ in the depicted cases. Moreover, for testing spherical symmetry with Gaussian data (see the first row of~\cref{fig:Conf_set}), the two confidence sets --- our proposal based on the GWSR test and the one obtained from Hotelling's $T^2$ test --- are almost indistinguishable even for sample size $n=20$. For testing central symmetry (see the second row of~\cref{fig:Conf_set}) using Gaussian data we observe a similar phenomenon, although for $n=20$ the confidence set  based on the Wilcoxon signed-rank test has a more jagged boundary.

When the Gaussian assumption is violated and we have heavy-tailed observations (see the third row of~\cref{fig:Conf_set}) the confidence sets obtained from the Hotelling's $T^2$ test get severely inflated. However, our distribution-free confidence sets are still quite small and reliable.

\section*{Acknowledgements}
The authors would like to thank Nabarun Deb for many helpful discussions.

\begin{figure}
    \centering
    \includegraphics[width = 1\textwidth]{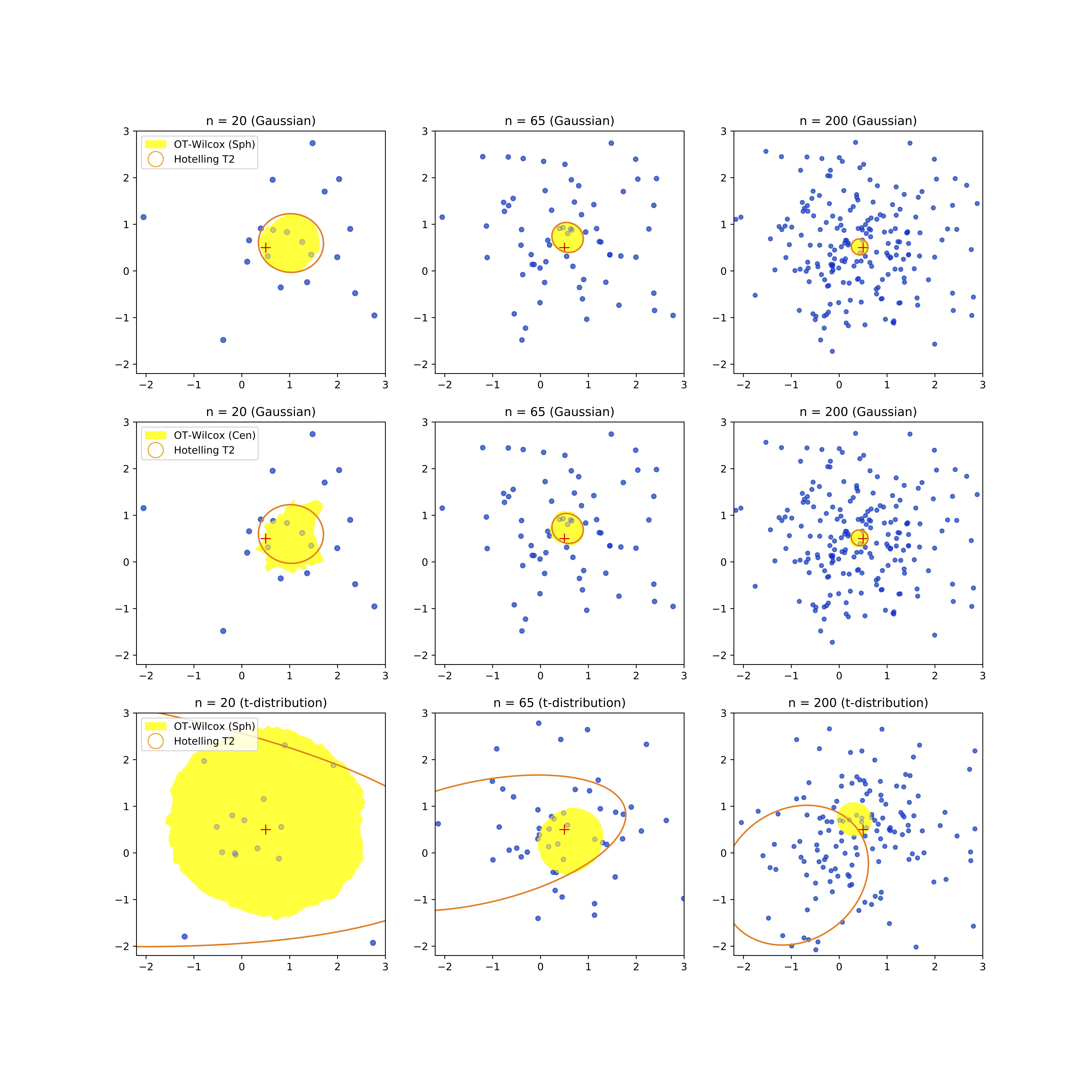}
    \caption{The confidence sets obtained by inverting OT-Wilcox (for central symmetry and spherical symmetry) and the Hotelling's $T^2$ test. The true center of the underlying Gaussian distribution is marked by ``{\color{red}$+$}". The data distribution is either Gaussian (the first two rows) or the $t$-distribution with 1 degree of freedom.}
    \label{fig:Conf_set}
\end{figure}

\appendix
\section{Some Discussions}\label{sec:discussions}
\subsection{Regularity Conditions}\label{sec:regu}
In this section, we describe the regularity conditions that are assumed in \cref{thm:asymptotics_local_alternatives}, in order to study the power performance against local contiguous alternatives.
We adopt the standard smooth parametric model assumptions that guarantee local asymptotic normality (LAN) (see e.g., \citet[Chapter 7]{vanderVaart1998}).
Consider the parametric family consisting of all locational shifts of a symmetric distribution $f$, i.e.,
$\{\mathcal{P}_{\B \theta} \}_{{\B \theta}\in\R^p}= \{f(\cdot -{\B \theta}):{\B \theta} \in \R^p\}$. Assume:
\begin{enumerate}
\item The family $\{\mathcal{P}_{\B \theta} \}_{{\B \theta}\in\R^p}$ is {\it quadratic mean differentiable} (QMD) at ${\B \theta}_0 = \mathbf{0}$ (see \cite[Definition 12.2.1]{Lehmann2005hypo} for relevant definitions).
\item $\mathbb{E}_{\mathbf{X}\sim \mathcal{P}_{{\B \theta}_0}} \left( \|\eta (\mathbf{X},{\B \theta}_0)\|^2\right)<\infty$, where $\eta(\cdot,{\B \theta}):=\frac{\nabla f(\cdot - {\B \theta})}{f(\cdot - {\B \theta})}$ is the {\it score function}. Also assume that the Fisher information at ${\B \theta}_0$, that is, $I_{{\B \theta}_0}:= \mathbb{E}_{\mathbf{X}\sim \mathcal{P}_{{\B \theta}_0}} \left[ \eta(\mathbf{X},{\B \theta})\eta(\mathbf{X},{\B \theta})^\top\right]$ exists and is invertible.
\end{enumerate}

\subsection{Generating a Random Sign for Spherical Symmetry}\label{sec:gen_sign_sph}
To show that the procedure described in~\cref{rem:Sp-Sym} in \cref{subsec:consist_efficiency} generates a sample from the conditional distribution of ${\rm Uniform(O}(p))$ given that it maps $\mathbf{v}$ to $\mathbf{w}$,
we note that a uniform distribution over ${\rm O}(p)$ can be generated in the following way \citep[Example 4.1]{Eaton1989group}:
\begin{enumerate}
\item First draw $\mathbf{\omega}$ from the uniform distribution over the sphere $\{\mathbf{x}\in\R^p:\|\mathbf{x}\|=1\}$.
\item Then draw $p(p-1)$ i.i.d.\ standard normal random variables $\varepsilon\in\R^{p(p-1)}$ independent of $\mathbf{\omega}$.
\item Apply Gram-Schmidt orthogonalization to the columns of $[\mathbf{\omega}\ \varepsilon] $ in the order of the columns. This yields an orthogonal matrix ${\rm GS}([\mathbf{\omega}\ \varepsilon])$ following the uniform distribution over ${\rm O}(p)$ \citep[Example 4.1]{Eaton1989group}.
\item Define $f(\mathbf{\omega},\varepsilon):= {\rm GS}([\mathbf{\omega}\ \varepsilon])[\mathbf{v}\ V]^\top$ which also follows the uniform distribution over ${\rm O}(p)$ (as $[\mathbf{v}\ V] \in {\rm O}(p)$), by the definition of the Haar measure (invariant under group action).
\end{enumerate}
Note that the set $\argmin_{Q\in \G} \|Q^\top\mathbf{x} - \mathbf{h}\|^2$ is described by $\{Q\in {\rm O}(p):\mathbf{w}=Q\mathbf{v}\}$.
If $f(\omega,\varepsilon)\mathbf{v}=\mathbf{w}$, we know that this condition is equivalent to ${\rm GS}([\mathbf{\omega}\ \varepsilon]) \mathbf{e}_1 = \mathbf{w}$, where $\mathbf{e}_1 = (1,0,\ldots,0)^\top$, and it is further equivalent to $\omega = \mathbf{w}$. Therefore, by the independence of $\omega$ and $\varepsilon$, given $\omega = \mathbf{w}$, the conditional distribution of $f(\omega,\varepsilon)$ is given by the unconditional distribution of $f(\mathbf{w},\varepsilon)$. Thus the validity of the procedure is shown.
\qed \\

\noindent\textbf{A second way to generate the conditional distribution.} From
$$\{Q\in \mathrm{O}(p):\;\mathbf{w}=Q\mathbf{v}\}=\{\mathbf{w}\mathbf{v}^\top+WUV^\top:\;U\in \mathrm{O}(p-1)\},$$
where $V$ and $W$ are $p\times (p-1)$ matrices such that $V^\top V=W^\top W=I_{p-1}$, $V^\top \mathbf{v}=W^\top \mathbf{w} = \mathbf{0}$,
we can first generate $U$ from $\mathrm{O}(p-1)$ and then $\mathbf{w}\mathbf{v}^\top+WUV^\top$ has the desired conditional distribution. The validity of this procedure can be shown as follows:
\begin{enumerate}
    \item Draw $p^2$ i.i.d.\ standard normal variables $\varepsilon \in \R^{p\times p}$ to form the $p\times p$ matrix $[\varepsilon_{ij}]$.
    \item Apply Gram-Schmidt orthogonalization to the columns of $[\varepsilon_{ij}] $ in the order of the columns. This yields an orthogonal matrix ${\rm GS}([ \varepsilon_{ij}])$ that follows the uniform distribution over ${\rm O}(p)$ \citep[Example 4.1]{Eaton1989group}.
    \item Define $f(\varepsilon) := [\mathbf{w}\ W]{\rm GS}([\varepsilon_{ij}])[\mathbf{v}\ V]^\top$ which also follows the uniform distribution over ${\rm O}(p)$ by the invariance property of the Haar measure.
\end{enumerate}
If $f(\varepsilon)\mathbf{v}=\mathbf{w}$, we know that this condition is equivalent to saying that the first column of ${\rm GS}([\varepsilon_{ij}])$ is $\mathbf{e}_1$. From the Gram-Schmidt procedure we know that this is further equivalent to the event $\{\varepsilon_{j1} = 0,{\ \rm for\ } j=2,3,\ldots,p,\ {\rm and}\ \varepsilon_{11} > 0\}$ which only depends on the first column of $[\varepsilon_{ij}] $. By independence, given this condition, other $\varepsilon_{ij}$ from the second column to the last column are still i.i.d.\ standard normal. Hence this implies the bottom right $(p-1)\times(p-1)$ sub-matrix of ${\rm GS}([\varepsilon_{ij}])$ (denoted by $U$) conditionally follows the uniform distribution over ${\rm O}(p-1)$, and ${\rm GS}([\varepsilon_{ij}])=
\begin{pmatrix}
1&\mathbf{0}^\top\\
\mathbf{0} & U\\
\end{pmatrix}.$
Hence the conditional distribution of the uniform distribution
$f(\varepsilon)$ given $\{f(\varepsilon)\mathbf{v}=\mathbf{w}\}$ has the distribution of $[\mathbf{w}\ W]\begin{pmatrix}
1&\mathbf{0}^\top\\
\mathbf{0} & U\\
\end{pmatrix}[\mathbf{v}\ V]^\top=\mathbf{w}\mathbf{v}^\top+WUV^\top$
with $U\sim {\rm Uniform}({\rm O}(p-1))$.
Therefore, the validity of the second generating procedure is shown.\qed

\section{Proofs}\label{sec:proofs}
\subsection{Proof of \cref{prop:sample_uniq}}\label{pf:uniq_sample}
We first show the uniqueness of $R_n(\cdot)$. If the uniqueness fails, then necessarily, there exist two different (non-random) permutations $\sigma_1$ and $\sigma_2$ such that
$\sum_{i=1}^n c(\mathbf{X}_i,\mathbf{h}_{\sigma_1(i)}) = \sum_{i=1}^n c(\mathbf{X}_i,\mathbf{h}_{\sigma_2(i)})$.
Without loss of generality, suppose $\sigma_1(1)\neq \sigma_2(1)$.
Then
\begin{equation}\label{eq:reorganize_c}
    c(\mathbf{X}_1,\mathbf{h}_{\sigma_1(1)}) - c(\mathbf{X}_1,\mathbf{h}_{\sigma_2(1)}) = \sum_{i=2}^n \left(c(\mathbf{X}_i,\mathbf{h}_{\sigma_2(i)}) - c(\mathbf{X}_i,\mathbf{h}_{\sigma_1(i)})\right).
\end{equation}
Define the distance from a point $\mathbf{x}\in\R^p$ to a set $A\subset\R^p$ as $d(\mathbf{x},A) :=\inf_{\mathbf{y}\in A}\|\mathbf{x}-\mathbf{y}\|$.
Note that $c(\mathbf{X}_1,\mathbf{h}_{\sigma_1(1)}) - c(\mathbf{X}_1,\mathbf{h}_{\sigma_2(1)}) = d(\mathbf{X}_1,\G \mathbf{h}_{\sigma_1(1)})^2 - d(\mathbf{X}_1,\G \mathbf{h}_{\sigma_2(1)})^2$.
The following lemma shows that as long as the orbits $\G \mathbf{h}_{\sigma_1(1)}\neq \G \mathbf{h}_{\sigma_2(1)}$ are different,
$$\{\mathbf{x}\in\R^p: d(\mathbf{x},\G \mathbf{h}_{\sigma_1(1)})^2 - d(\mathbf{x},\G \mathbf{h}_{\sigma_2(1)})^2 =C\}$$
has Lebesgue measure 0 for any constant $C$. This implies
\eqref{eq:reorganize_c} holds with probability 0 by considering the conditional distribution of $\mathbf{X}_1$ given $\mathbf{X}_2,\ldots,\mathbf{X}_n$. Hence, $\hat{\sigma}$ and thus $R_n(\cdot)$ are unique.
\begin{lemma}\label{lem:D_mea_zero}
Let $A,B\subset\R^p$ be two disjoint compact sets with Lebesgue measure 0, and $C$ be a constant. Then
$$D:=\{\mathbf{x}\in\R^p: d(\mathbf{x},A)^2 - d(\mathbf{x},B)^2 =C\}$$
has Lebesgue measure 0.
\end{lemma}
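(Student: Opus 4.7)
The plan is to argue by contradiction, showing that any would-be Lebesgue density point of $D$ actually has density zero, by trapping $D$ locally inside a thin slab around a hyperplane.

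The first step is to rewrite $d(\mathbf{x}, A)^2 = \|\mathbf{x}\|^2 - \phi_A(\mathbf{x})$ with $\phi_A(\mathbf{x}) := \sup_{\mathbf{a}\in A}(2\mathbf{x}^\top \mathbf{a} - \|\mathbf{a}\|^2)$, and similarly for $B$, so that $f(\mathbf{x}) := d(\mathbf{x},A)^2 - d(\mathbf{x},B)^2 = \phi_B(\mathbf{x}) - \phi_A(\mathbf{x})$. Both $\phi_A$ and $\phi_B$ are finite, continuous, and convex (being suprema of affine functions indexed by the compact sets $A, B$). By Danskin's theorem, $\phi_A$ is differentiable at every $\mathbf{x}$ where the Euclidean projection $\pi_A(\mathbf{x})$ onto $A$ is unique, with $\nabla \phi_A(\mathbf{x}) = 2 \pi_A(\mathbf{x})$; the set of non-unique projections coincides with the non-differentiability set of the $1$-Lipschitz function $d(\cdot, A)$ and therefore has Lebesgue measure zero by Rademacher's theorem, and analogously for $B$. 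Hence, for Lebesgue-a.e.\ $\mathbf{x}$, $f$ is differentiable with
$$\nabla f(\mathbf{x}) = 2\bigl(\pi_B(\mathbf{x}) - \pi_A(\mathbf{x})\bigr),$$
and since $\delta := \mathrm{dist}(A, B) > 0$ (because $A$ and $B$ are disjoint compact), one has $\|\nabla f(\mathbf{x})\| \geq 2\delta$ on this full-measure set.

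Next, suppose for contradiction that $|D|_p > 0$. By the Lebesgue density theorem, almost every point of $D$ is a density point of $D$; intersecting with the full-measure set on which $f$ is differentiable, one can pick $\mathbf{x}_0 \in D$ which is both a density point of $D$ and satisfies $\|\nabla f(\mathbf{x}_0)\| \geq 2\delta$. Differentiability at $\mathbf{x}_0$ yields, for every $\varepsilon > 0$, a radius $r_0 > 0$ such that
$$|f(\mathbf{x}) - f(\mathbf{x}_0) - \nabla f(\mathbf{x}_0)^\top (\mathbf{x} - \mathbf{x}_0)| \leq \varepsilon \|\mathbf{x} - \mathbf{x}_0\| \quad \text{for } \|\mathbf{x} - \mathbf{x}_0\| < r_0.$$
For any $\mathbf{x} \in D \cap B(\mathbf{x}_0, r)$ with $r < r_0$, $f(\mathbf{x}) = C = f(\mathbf{x}_0)$, so $|\nabla f(\mathbf{x}_0)^\top (\mathbf{x} - \mathbf{x}_0)| \leq \varepsilon r$; that is, $D \cap B(\mathbf{x}_0, r)$ lies in a slab of thickness at most $\varepsilon r / \delta$ around a hyperplane through $\mathbf{x}_0$, intersected with $B(\mathbf{x}_0, r)$. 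Its Lebesgue volume is $O(\varepsilon r^{p} / \delta)$, so $|D \cap B(\mathbf{x}_0, r)| / |B(\mathbf{x}_0, r)| = O(\varepsilon / \delta) \to 0$ as $\varepsilon \downarrow 0$, contradicting the density of $D$ at $\mathbf{x}_0$ being $1$. Therefore $|D|_p = 0$.

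There is no single hard step here; the main technical point is the a.e.\ gradient identification $\nabla f(\mathbf{x}) = 2(\pi_B(\mathbf{x}) - \pi_A(\mathbf{x}))$, which follows from Danskin / envelope theorems together with Rademacher's theorem applied to the $1$-Lipschitz distance functions. Everything else is an elementary density-point calculation that exploits disjointness and compactness of $A$ and $B$ to keep $\|\nabla f\|$ bounded below by $2\delta$.
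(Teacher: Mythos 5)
Your proof is correct and follows essentially the same route as the paper's: almost-everywhere differentiability of the (squared) distance functions with the gradient identified through the unique nearest points in $A$ and $B$, a nonzero gradient of the difference forced by disjointness, and a first-order/slab estimate combined with the Lebesgue density (differentiation) theorem to conclude $D$ is null. The only cosmetic differences are your convex-duality rewriting via $\phi_A,\phi_B$ and Danskin's theorem in place of the paper's direct Rademacher argument on $d(\cdot,A)$, $d(\cdot,B)$, plus the uniform lower bound $2\,\mathrm{dist}(A,B)$ on $\|\nabla f\|$, none of which changes the structure of the argument.
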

\begin{proof}
It is known that the distance function $f(\mathbf{x}):= d(\mathbf{x},A)$ is 1-Lipschitz and is thus Lebesgue a.e.\ differentiable (by Rademacher's theorem).
If $\mathbf{x}_0 \notin A$ is a differentiable point of $f$ then its projection onto $A$ must be unique\footnote{To make this uniqueness argument precise, suppose that $\mathbf{y}_0$ (a priori not unique) attains the minimal distance from $\mathbf{x}_0$ to $A$, i.e., $f(\mathbf{x}_0)=\|\mathbf{x}_0-\mathbf{y}_0\|$. Then $\mathbf{y}_0$ also attains the minimal distance from $A$ to any point in the line segment $[\mathbf{x}_0,\mathbf{y}_0]$. This is because, for any $\mathbf{y}' \in A$, $t\in[0,1]$, and $\mathbf{z}:=\mathbf{x}_0 + t(\mathbf{y}_0 - \mathbf{x}_0)$, we have by triangle inequality,
$$\| \mathbf{z} - \mathbf{y}'\|\geq \| \mathbf{x}_0 - \mathbf{y}'\| - \| \mathbf{x}_0 - \mathbf{z}\|\geq f(\mathbf{x}_0) - \|t(\mathbf{y}_0 - \mathbf{x}_0)\| = (1-t) \|\mathbf{y}_0 - \mathbf{x}_0\| = \|\mathbf{z} -\mathbf{y}_0\|. $$
Therefore, $f(\mathbf{z}) = \|\mathbf{z}-\mathbf{y}_0\|$; further $f(\mathbf{z}) = f(\mathbf{x}_0) - (\mathbf{z}-\mathbf{x}_0)^\top \frac{\mathbf{y}_0 - \mathbf{x}_0}{\|\mathbf{y}_0 - \mathbf{x}_0\| }$.
As $f$ is differentiable at $\mathbf{x}_0$, for $\mathbf{z}$ close to $\mathbf{x}_0$, we have $f(\mathbf{z}) = f(\mathbf{x}_0) +(\mathbf{z}-\mathbf{x}_0)^\top \nabla f(\mathbf{x}_0) + o(\|\mathbf{z}-\mathbf{x}_0\|)$, which further yields $\nabla f(\mathbf{x}_0) = -\frac{\mathbf{y}_0 - \mathbf{x}_0}{\|\mathbf{y}_0 - \mathbf{x}_0\| }$. Hence, $\mathbf{y}_0$ is determined from $\mathbf{x}_0$ and the gradient $\nabla f(\mathbf{x}_0)$ as $\mathbf{y}_0=\mathbf{x}_0 - f(\mathbf{x}_0)\nabla f(\mathbf{x}_0)$, and hence $\mathbf{y}_0$ unique.}, otherwise there will be two directions where $f$ descends the steepest (with rate 1), which is a contradiction to the differentiability of $f$.
Moreover, $\nabla f(\mathbf{x}_0)$ is a unit vector on the line that connects $\mathbf{x}_0$ and its projection on $A$.

Now, take $\mathbf{x}_0 \in D$ to be a point that is not in $A$ or $B$, and is a differentiable point of both $d(\mathbf{x},A)$ and $d(\mathbf{x},B)$.
Then, for $\mathbf{x}$ in a neighborhood of $\mathbf{x}_0$, using the definition of differentiability, $d(\mathbf{x},A)^2 - d(\mathbf{x},B)^2 =C$ can be written as:
\begin{equation}\label{eq:first_order_app}
2\left[d(\mathbf{x}_0,A)\nabla d(\mathbf{x}_0,A) - d(\mathbf{x}_0,B)\nabla d(\mathbf{x}_0,B) \right]^\top (\mathbf{x}-\mathbf{x}_0) + o(\|\mathbf{x}-\mathbf{x}_0\|) = 0.
\end{equation}
Let $\mathbf{f} := 2\left[d(\mathbf{x}_0,A)\nabla d(\mathbf{x}_0,A) - d(\mathbf{x}_0,B)\nabla d(\mathbf{x}_0,B) \right] \in \R^p$. Then $\mathbf{f}$ is nonzero, otherwise the projections of $\mathbf{x}_0$ onto $A$ and $B$ are the same, which is a contradiction to $A$, $B$ being disjoint. Hence, \eqref{eq:first_order_app} implies the set $D$ is locally contained in a hyperplane with arbitrarily small thickness, i.e., for any $\delta > 0$, there exists $\varepsilon_0 > 0 $ such that for all  $\varepsilon<\varepsilon_0$,
$$D\cap B(\mathbf{x}_0,\varepsilon)\subset \{\mathbf{x}\in B(\mathbf{x}_0,\varepsilon): | \mathbf{f}^\top (\mathbf{x}-\mathbf{x}_0) |  \leq \delta \|\mathbf{x}-\mathbf{x}_0\|\} .$$
Here $B(\mathbf{x}_0,\varepsilon) := \{\mathbf{x}\in\R^p:\|\mathbf{x}-\mathbf{x}_0\|<\varepsilon\}$ is the open ball centered at $\mathbf{x}_0$ with radius $\varepsilon >0$.
Hence, with $m$ denoting the Lebesgue measure,
$$\begin{aligned}
\limsup_{\varepsilon \to 0} \frac{m(D\cap B(\mathbf{x}_0,\varepsilon))}{m( B(\mathbf{x}_0,\varepsilon))}\leq
\limsup_{\varepsilon \to 0} \frac{m(\{\mathbf{x}\in B(\mathbf{x}_0,\varepsilon): | \mathbf{f}^\top (\mathbf{x}-\mathbf{x}_0) |  \leq \delta \|\mathbf{x}-\mathbf{x}_0\|\})}{m( B(\mathbf{x}_0,\varepsilon))}.
\end{aligned}$$
By applying a translation and a rescaling, we observe that 
$$\frac{m(\{\mathbf{x}\in B(\mathbf{x}_0,\varepsilon): | \mathbf{f}^\top (\mathbf{x}-\mathbf{x}_0) |  \leq \delta \|\mathbf{x}-\mathbf{x}_0\|\})}{m( B(\mathbf{x}_0,\varepsilon))}=\frac{m(\{\mathbf{x}\in B(\mathbf{0},1): | \mathbf{f}^\top \mathbf{x} |  \leq \delta \|\mathbf{x}\|\})}{m( B(\mathbf{0},1))}$$
for all $\mathbf{x}_0\in\R^p$ and $\varepsilon > 0$.
Therefore,
$$\limsup_{\varepsilon \to 0} \frac{m(D\cap B(\mathbf{x}_0,\varepsilon))}{m( B(\mathbf{x}_0,\varepsilon))}\leq \frac{m(\{\mathbf{x}\in B(\mathbf{0},1): | \mathbf{f}^\top \mathbf{x} |  \leq \delta \|\mathbf{x}\|\})}{m( B(\mathbf{0},1))}.$$
Since $\delta$ is arbitrary (and the set on the right side of the above display converges to a subset of a hyperplane, as $\delta$ approaches 0), we further have that
$$\lim_{\varepsilon \to 0} \frac{m(D\cap B(\mathbf{x}_0,\varepsilon))}{m( B(\mathbf{x}_0,\varepsilon))} = 0.$$
But from the Lebesgue differentiation theorem, for Lebesgue a.e.-$\mathbf{x}_0$ in $D$, 
$$\lim_{\varepsilon \to 0} \frac{m(D\cap B(\mathbf{x}_0,\varepsilon))}{m( B(\mathbf{x}_0,\varepsilon))} = 1_D(\mathbf{x}_0) = 1.$$
Comparing the last two displays we see that $D$ must have Lebesgue measure 0.
\end{proof}

Next, we show the uniqueness of $S_n(\cdot)$. Since $R_n(\mathbf{X}_1)$ must be one of $\mathbf{h}_1,\ldots,\mathbf{h}_n$,
if there are two elements $Q_1\neq Q_2\in\G$ that belong to
$\argmin_{Q\in\G} \|Q^\top \mathbf{X}_1 - R_n(\mathbf{X}_1)\|^2$,
then necessarily, there exists $\mathbf{h}\in \{\mathbf{h}_1,\ldots,\mathbf{h}_n\}$ such that
$$\|Q_1^\top \mathbf{X}_1 - \mathbf{h}\|^2 = \|Q_2^\top \mathbf{X}_1 - \mathbf{h}\|^2,\ {\rm or\ equivalently}\ \langle \mathbf{X}_1,Q_1\mathbf{h} -Q_2\mathbf{h}\rangle =0. $$
The free group action implies $Q_1\mathbf{h} -Q_2\mathbf{h}\neq \mathbf{0}$. Hence the above event happens with probability 0 (as $\mathbf{X}_1$ is absolutely continuous).

Finally, we show below the uniqueness of the signed-rank. We will use the following lemma.
\begin{lemma}\label{lem:uni_proj}
Let $C\subset\R^p$ be a compact set.
Then for Lebesgue almost every point, it has a unique nearest point in $C$.
\end{lemma}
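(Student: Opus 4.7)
The plan is to recycle the geometric argument already used in the footnote of the preceding proof of Lemma~\ref{lem:D_mea_zero}: a point at which the distance function is differentiable must have a unique projection onto the closed set. The lemma then follows from Rademacher's theorem applied to $f(\mathbf{x}) := d(\mathbf{x},C)$.

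More precisely, I would split $\R^p$ into $C$ and $\R^p \setminus C$. For any $\mathbf{x}\in C$ the nearest point in $C$ is $\mathbf{x}$ itself, so uniqueness is trivial on $C$. For points outside $C$, I would observe that $f(\mathbf{x}) = \inf_{\mathbf{y}\in C}\|\mathbf{x}-\mathbf{y}\|$ is 1-Lipschitz (by the triangle inequality), and hence by Rademacher's theorem it is differentiable at Lebesgue-a.e.\ point of $\R^p\setminus C$. Let $\mathbf{x}_0\in\R^p\setminus C$ be such a differentiable point, and suppose for contradiction that two distinct points $\mathbf{y}_0,\mathbf{y}_0'\in C$ both realize the distance $f(\mathbf{x}_0)$. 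Then I would reproduce the argument from the earlier footnote: for each minimizer $\mathbf{y}$, the function $f$ decreases linearly at unit rate along the segment from $\mathbf{x}_0$ toward $\mathbf{y}$, which forces $\nabla f(\mathbf{x}_0) = -(\mathbf{y}-\mathbf{x}_0)/\|\mathbf{y}-\mathbf{x}_0\|$; applying this with $\mathbf{y}=\mathbf{y}_0$ and $\mathbf{y}=\mathbf{y}_0'$ yields two different unit vectors equal to the same gradient, a contradiction. Hence $\mathbf{y}_0=\mathbf{y}_0'$ and the projection is unique at $\mathbf{x}_0$.

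Putting the pieces together, the set of points in $\R^p\setminus C$ where the projection onto $C$ is not unique is contained in the set of non-differentiability points of $f$, which is Lebesgue-null by Rademacher's theorem. Combined with the trivial case $\mathbf{x}\in C$, this shows that the set of $\mathbf{x}\in\R^p$ without a unique nearest point in $C$ has Lebesgue measure zero, as desired.

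The only subtlety to be careful about is the linear-descent step used to deduce $\nabla f(\mathbf{x}_0) = -(\mathbf{y}-\mathbf{x}_0)/\|\mathbf{y}-\mathbf{x}_0\|$; this requires verifying that along the segment from $\mathbf{x}_0$ to any minimizer $\mathbf{y}$ the value $f$ equals the distance to that same $\mathbf{y}$, which is a one-line application of the triangle inequality exactly as written in the footnote of the Lemma~\ref{lem:D_mea_zero} proof. Beyond that, the argument is routine and the compactness of $C$ is only used to guarantee that the infimum defining $f$ is attained.
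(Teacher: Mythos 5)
Your proposal is correct and matches the paper's own proof essentially verbatim: both apply Rademacher's theorem to the 1-Lipschitz distance function $f(\mathbf{x}) = d(\mathbf{x},C)$, invoke the footnote argument from the proof of Lemma~\ref{lem:D_mea_zero} to show that differentiability of $f$ at a point outside $C$ forces a unique projection, and handle points of $C$ trivially. No gaps; nothing further to add.
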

\begin{proof}
It is known that the distance to $C$, i.e., $f(\mathbf{x}):= d(\mathbf{x},C)$ is 1-Lipschitz and is thus Lebesgue a.e.\ differentiable (by Rademacher's theorem). For any differentiable point $\mathbf{x}_0$ of $f$ that is not in $C$, its projection onto $C$ must be unique (see the proof of \cref{lem:D_mea_zero}).
When $\mathbf{x}_0\in C$, the projection is $\mathbf{x}_0$ itself, and the uniqueness is trivial.
\end{proof}

Since $S_n(\mathbf{X}_i)=\argmin_{Q\in\G} \| \mathbf{X}_i - QR_n(\mathbf{X}_i)\|^2$, $S_n(\mathbf{X}_i)R_n(\mathbf{X}_i)$ is the point in the orbit of $R_n(\mathbf{X}_i)$ that is closest to $\mathbf{X}_i$. But $R_n(\mathbf{X}_i)$ must be one of $\mathbf{h}_1,\ldots,\mathbf{h}_n$.
If $S_n(\mathbf{X}_i)R_n(\mathbf{X}_i)$ is not unique, then there exists $\mathbf{h}\in \{\mathbf{h}_1,\ldots,\mathbf{h}_n\}$ such that the projection of $\mathbf{X}_i$ to $\G\mathbf{h}$ is not unique.
However, by \cref{lem:uni_proj},
for Lebesgue a.e.\ $\mathbf{x}\in\R^p$, $\mathbf{x}$ has a unique closest point in $\G \mathbf{h}$.
Since $\mathbf{X}_i$ is absolutely continuous, $S_n(\mathbf{X}_i)R_n(\mathbf{X}_i)$ is almost surely unique.
\qed

\subsection{Proof of \cref{prop:properties_of_generalized_sign_rank}}\label{sec:pf_properties_of_generalized_sign_rank}
Since $\mathbf{X}_1,\ldots,\mathbf{X}_n$ are i.i.d., the $\hat{\sigma}$ that minimizes the objective function
$$\sum_{i=1}^n c( \mathbf{X}_{\sigma(i)},\mathbf{h}_i),\quad {\rm where}\ c(\mathbf{x},\mathbf{h})=\min_{Q\in\G}\|Q^\top \mathbf{x}-\mathbf{h}\|^2,$$
is equally likely to take any of the $n!$ permutations \citep{Nabarun2021rank}.
Hence, the first conclusion follows.

To show the other two properties, suppose ${\rm H}_0$ holds.
We consider the $\sigma$-field $\mathcal{F}$ generated by the $n$ orbits $\G \mathbf{X}_1,\ldots,\G \mathbf{X}_n$, and the possible randomness introduced to select $\hat{\sigma}$ (if $\hat{\sigma}$ is not unique). Then given $\mathcal{F}$, $\hat{\sigma}$ is determined, i.e., $\hat{\sigma}$ is measurable w.r.t.\ $\mathcal{F}$.

Given $\mathcal{F}$, $\hat{Q}_i$ (see~\eqref{eq:OTproblem}) is defined as \begin{equation}\label{eq:defSi}
    \hat{Q}_i = \argmin_{S_i\in\G}\ \lVert S_i^\top \mathbf{X}_{\hat{\sigma}(i)}-\mathbf{h}_i\rVert^2.
\end{equation}
From \cref{lem:abs_cont} (see below), when $\mathbf{X}$ has a $\G$-symmetric distribution, the density $f$ of $\mathbf{X}$ satisfies $f(\mathbf{x})=f(\mathbf{y})$ whenever $\mathbf{x}$ and $\mathbf{y}$ are on the orbit of $\G$.
Note that given $\mathcal{F}$, only the orbit of $\mathbf{X}_{\hat{\sigma}(i)}$ is known, and $\mathbf{X}_{\hat{\sigma}(i)}$ is equally likely to be any point in this orbit due to $\G$-symmetry, i.e.,
$\mathbf{X}_{\hat{\sigma}(i)}$ conditionally (on $\mathcal{F}$) follows the uniform distribution over the $\G$-orbit $\G \mathbf{X}_{\hat{\sigma}(i)}$.
For any $Q\in \G$, define $\tilde{Q}_i$ as
\begin{equation}\label{eq:tildeSi}
    \tilde{Q}_i:= \argmin_{S_i\in\G}\ \lVert S_i^\top Q^{\top} \mathbf{X}_{\hat{\sigma}(i)}-\mathbf{h}_i\rVert^2.
\end{equation}
If the above does not have a unique minimizer, choose $\tilde{Q}_i$ from the distribution ${\rm Uniform}(\G)$ restricted to the set of minimizers of~\eqref{eq:tildeSi}.
Since $ \mathbf{X}_{\hat{\sigma}(i)}$ and $Q^{\top} \mathbf{X}_{\hat{\sigma}(i)}$ have the same conditional distribution given $\mathcal{F}$,
we know that $\tilde{Q}_i$ has the same conditional distribution as $\hat{Q}_i$.
On the other hand, if we view $S_i^\top Q^\top$ in~\eqref{eq:tildeSi} as a whole, then \eqref{eq:tildeSi}
reduces to the minimization problem in \eqref{eq:defSi}, and thus
$Q\tilde{Q}_i$ has the same conditional distribution as $\hat{Q}_i$.
Hence, $Q\hat{Q}_i\overset{d}{=}\hat{Q}_i$ given $\mathcal{F}$.
This implies given $\mathcal{F}$, $\hat{Q}_i$ are i.i.d.\ following the uniform distribution over $\G$. Consequently, $\{\hat{Q}_i\}_{i=1}^n$ is independent of $\mathcal{F}$ and thus independent of $\hat{\sigma}$.
Hence, the second and the third conclusions hold. \qed

\subsection{Proof of \cref{thm:CLT_wilcoxon}}\label{sec:pf_Wil_CLT}
By the independence of $(S_n(\mathbf{X}_1),\ldots, S_n(\mathbf{X}_n))$ and $(R_n(\mathbf{X}_1),\ldots,R_n(\mathbf{X}_n))$, and
the fact that $(R_n(\mathbf{X}_1),\ldots,R_n(\mathbf{X}_n))$ is a permuted version of $(\mathbf{h}_1,\ldots,\mathbf{h}_n)$,
$\mathbf{W}_n$ has the same distribution as
$$\frac{1}{\sqrt{n}} \sum_{i=1}^n S_n(\mathbf{X}_i) J(\mathbf{h}_i),$$ which has mean $\mathbf{0}$ by~\cref{assump:G}-$(i)$. We first show the convergence of the covariance matrix $\frac{1}{n}\sum_{i=1}^n \mathbb{E}\left[S_n(\mathbf{X}_i) J(\mathbf{h}_i)  J(\mathbf{h}_i)^\top S_n(\mathbf{X}_i)^\top\right]$.
Note that $f(\mathbf{x}):=\mathbb{E}_{S}\left[S \mathbf{x} \mathbf{x}^\top S^\top\right]$, where $S \sim {\rm Uniform}(\G)$, is continuous in $\mathbf{x}$ (by an application of the dominated convergence theorem).
Then the covariance matrix can be written as $ \mathbb{E} f(J(\mathbf{H}_n))$,
where $\mathbf{H}_n \sim \frac{1}{n}\sum_{j=1}^n \delta_{\mathbf{h}_j}$converges weakly to $\mathbf{H}\sim \nu$.

To show $\mathbb{E} f(J(\mathbf{H}_n))\to \mathbb{E} f(J(\mathbf{H}))$ as $n \to \infty$,
by the representation theorem, we may assume that $\mathbf{H}_n\overset{a.s.}{\longrightarrow}\mathbf{H}$ (by~\cref{assump:Weak-nu_n}). Then, as $J(\cdot)$ is $\nu$-a.e.~continuous, by the continuous mapping theorem, $J(\mathbf{H}_n)\overset{a.s.}{\longrightarrow}J(\mathbf{H})$.
Together with $\mathbb{E}|J(\mathbf{H}_{n})_i J(\mathbf{H}_{n})_j|\to \mathbb{E}|J(\mathbf{H})_i J(\mathbf{H})_j|$ (by~\cref{assump:nu_n-nu}-$(iii)$),
we have $J(\mathbf{H}_n) J(\mathbf{H}_n)^\top\overset{L^1}{\longrightarrow} J(\mathbf{H}) J(\mathbf{H})^\top$ by Scheff{\' e}'s lemma. 
Note that $\|f(\mathbf{x})\|_F = \|\mathbb{E}_{S}\left[S \mathbf{x} \mathbf{x}^\top S^\top\right]\|_F \leq \E_S [\|S\mathbf{x}\mathbf{x}^\top S^\top\|_F] = \E_S [\|\mathbf{x}\mathbf{x}^\top \|_F]=\|\mathbf{x}\mathbf{x}^\top \|_F$, where we have used the convexity of the Frobenius norm and $\|AS\|_F= \|SA\|_F = \|A\|_F$ for orthogonal $S$ and any matrix $A$.
Hence, $\|f(J(\mathbf{H}_n))\|_F\leq  \|J(\mathbf{H}_n) J(\mathbf{H}_n)^\top\|_F$.
Together with the fact that  $J(\mathbf{H}_n)J(\mathbf{H}_n)^\top$ converges in $L^1$, we have that  $\|f(J(\mathbf{H}_n))\|_F$ is uniformly integrable.
Since $f(J(\mathbf{H}_n))\overset{a.s.}{\longrightarrow} f(J(\mathbf{H}))$, we further have
$\mathbb{E} f(J(\mathbf{H}_n))\to \mathbb{E} f(J(\mathbf{H}))$.

Since the covariance converges, we only need to check the Lindeberg's condition:
$$\begin{aligned}
&\quad \frac{1}{n}\sum_{i=1}^n \mathbb{E}\left[ \|S_n(\mathbf{X}_i) J(\mathbf{h}_i)\|^2 1_{\|S_n(\mathbf{X}_i) J(\mathbf{h}_i)\| \geq \varepsilon\sqrt{n}}\right] \\
&= \frac{1}{n}\sum_{i=1}^n  \| J(\mathbf{h}_i)\|^2 1_{\| J(\mathbf{h}_i)\| \geq \varepsilon\sqrt{n}}=\mathbb{E} \left[ \|J(\mathbf{H}_n)\|^2 1_{\|J(\mathbf{H}_n)\|\geq \varepsilon \sqrt{n}} \right].
\end{aligned}$$
Note that since $J(\mathbf{H}_n)$ is bounded in probability, $\mathbb{P}(\|J(\mathbf{H}_n)\|\geq \varepsilon \sqrt{n})\to 0$ as $n\to\infty$, so $ \|J(\mathbf{H}_n)\|^2 1_{\|J(\mathbf{H}_n)\|\geq \varepsilon \sqrt{n}}$ converges in probability to 0.
Note that $ \|J(\mathbf{H}_n)\|^2$ is uniformly integrable because $J(\mathbf{H}_n)J(\mathbf{H}_n)^\top$ converges in $L^1$, and hence $ \|J(\mathbf{H}_n)\|^2 1_{\|J(\mathbf{H}_n)\|\geq \varepsilon \sqrt{n}}$ is uniformly integrable. Thus,
$\mathbb{E} \left[\|J(\mathbf{H}_n)\|^2 1_{\|J(\mathbf{H}_n)\|\geq \varepsilon \sqrt{n}}\right] \to 0$. Hence by Lindeberg-Feller's CLT, as $n \to \infty$,
$$\frac{1}{\sqrt{n}} \sum_{i=1}^n S_n(\mathbf{X}_i) J(\mathbf{h}_i)\overset{d}{\to}N\left(\mathbf{0}, \mathbb{E} f(J(\mathbf{H}))\right) \equiv  N\left(\mathbf{0},{\rm Var}(SJ(\mathbf{H}))\right).$$
\qed

\subsection{Proof of Theorems \ref{thm:population_rank}, \ref{cor:conv_popu}, \ref{prop:consistency_signed_rank}, and \ref{prop:consistency_sign}}\label{sec:pf_consistency}
\begin{definition}[$c$-convexity \citep{villani2009OT}]\label{def_cconvex}
A function $\psi:\R^p \to \R\cup\{ + \infty\}$ is said to be {\it $c$-convex} if it is not identically $+\infty$, and there exists $\zeta:\mathbb{R}^p\to\R\cup\{\pm\infty\}$ such that
$$\psi(\mathbf{x}) = \sup_{\mathbf{y}\in\R^p}\left( \zeta(\mathbf{y})-c(\mathbf{x},\mathbf{y})\right),\quad {\rm for\ all}\ \mathbf{x}\in\R^p.$$
Its $c$-transform is the function $\psi^c$ defined by
\begin{equation}\label{eq:c-transform}
\psi^c(\mathbf{y}):=\inf_{\mathbf{x}\in\R^p}\left(\psi(\mathbf{x})+c(\mathbf{x},\mathbf{y})\right),\quad {\rm for\ all}\ \mathbf{y}\in\R^p.
\end{equation}
Its $c$-subdifferential is defined as
$$\partial_c\psi:=\left\{ (\mathbf{x},\mathbf{y})\in\R^p\times \R^p:\psi^c (\mathbf{y})-\psi(\mathbf{x})=c(\mathbf{x},\mathbf{y})\right\}.$$
\end{definition}
If $c(\cdot,\cdot)$ is continuous, then a $c$-convex function is automatically lower semicontinuous (l.s.c.) and Borel measurable \citep[Remark 5.5]{villani2009OT}.

We consider the equivalent cost function $c(\textbf{x},\textbf{y}):=-\max_{Q\in \G} \mathbf{x}^\top Q \mathbf{y}$.
Note that with our choice of $c$, $\psi(\mathbf{x}) = \sup_{\mathbf{y},Q}\left(\zeta(\mathbf{y})+\mathbf{x}^\top Q\mathbf{y} \right)$
is the supremum of linear functions. This implies that a $c$-convex function is convex in the usual sense,
and is thus Lebesgue a.e.\ differentiable in its domain \citep{anderson1952convex}.
It is also easy to see that any $c$-convex function $\psi$ is $\G$-invariant, i.e., $\psi(Q\mathbf{x}) = \psi(\mathbf{x})$ for any $Q\in \G$, and thus $\nabla\psi(\cdot)$ is $\G$-equivariant, i.e., $\nabla\psi(Q\mathbf{x})=Q\nabla\psi(\mathbf{x})$ whenever $\mathbf{x}$ and $Q\mathbf{x}$ are differentiable points of $\psi$.

\begin{lemma}\label{lem:abs_cont}
Let $\mathbf{X}\sim\mu\in\mathcal{P}_{\rm ac}(\mathbb{R}^p)$, and $\G$ be a compact subgroup of the orthogonal group ${\rm O}(p)$. Let $S$ follow the uniform distribution over $\G$, independent of $\mathbf{X}$. Then $S\mathbf{X}\sim\mu_S$ is also Lebesgue absolutely continuous, and $\mu$ is absolutely continuous w.r.t.\ $\mu_S$.
\end{lemma}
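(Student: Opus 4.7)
The lemma asserts two things: (i) $\mu_S \ll \lambda_p$ (Lebesgue measure), and (ii) $\mu \ll \mu_S$. Part (i) is routine. Part (ii) is the content.

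For (i), write $\mu_S(A) = \int_\G \mu(Q^{-1}A)\,d\lambda(Q)$, where $\lambda$ is the Haar measure on $\G$. Since every $Q \in \G \subset \mathrm{O}(p)$ is an isometry, $\lambda_p(Q^{-1}A) = \lambda_p(A)$; combined with $\mu \ll \lambda_p$ this gives $\mu_S(A) = 0$ whenever $\lambda_p(A) = 0$. By Fubini and the change of variables $y = Qx$ (noting $|\det Q|=1$), the density of $\mu_S$ is
\[
g(y) \;=\; \int_\G f(Q^{-1}y)\,d\lambda(Q) \;=\; \int_\G f(Qy)\,d\lambda(Q),
\]
where $f$ is the density of $\mu$ and the second equality uses the bi-invariance of Haar measure on the compact (hence unimodular) group $\G$ under $Q \mapsto Q^{-1}$.

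For (ii), suppose $\mu_S(A) = 0$. Then $g = 0$ Lebesgue-a.e.\ on $A$, so $A$ differs from $A \cap N$ by a Lebesgue-null set, where $N := \{y : g(y) = 0\}$. Since $\mu \ll \lambda_p$, it suffices to prove $\mu(N) = 0$. The key observation is that $g$ is $\G$-invariant: for any $Q_0 \in \G$, left invariance of Haar measure gives
\[
g(Q_0 y) \;=\; \int_\G f(Q Q_0 y)\,d\lambda(Q) \;=\; \int_\G f(Q' y)\,d\lambda(Q') \;=\; g(y),
\]
so $QN = N$ for every $Q \in \G$. Using this invariance and the fact that each $Q \in \G$ preserves Lebesgue measure, change of variables $y = Qz$ yields $\mu(N) = \int_N f(Qz)\,dz$ for every $Q \in \G$; integrating over $Q$ against $\lambda$ and applying Fubini,
\[
\mu(N) \;=\; \int_\G \mu(N)\,d\lambda(Q) \;=\; \int_N \left[\int_\G f(Qz)\,d\lambda(Q)\right] dz \;=\; \int_N g(z)\,dz \;=\; 0.
\]

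\textbf{Main obstacle.} The subtle point is that for a continuous group such as $\G = \mathrm{O}(p)$, one cannot simply extract $f(y) = 0$ from $g(y) = 0$ by ``taking $Q = I$'', because $\{I\}$ is Haar-null. The resolution is to avoid pointwise reasoning and instead exploit global $\G$-invariance of the zero set $N$: after this observation the $\mu$-measure of $N$ is recovered from the $\mu_S$-measure of $N$ by a single Fubini argument, regardless of whether $\G$ is finite or continuous.
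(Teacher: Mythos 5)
Your proof is correct, but it follows a genuinely different route from the paper's for the substantive part, $\mu \ll \mu_S$. Part (i) is the same in both: the density of $\mu_S$ is $g(\mathbf{y})=\int_\G f(Q\mathbf{y})\,d\lambda(Q)$ via change of variables, Fubini, and invariance of the Haar measure under inversion. For part (ii), the paper argues topologically: it shows ${\rm supp}(\mu)\subseteq{\rm supp}(\mu_S)$ (using that every element of $\G$, in particular $I$, lies in the support of the Haar measure, together with continuity of the group action and compactness of $\G$), and then exhibits $f/g$ explicitly as the Radon--Nikodym derivative, relying on $g$ being a.e.\ positive on ${\rm supp}(\mu_S)$. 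You instead argue purely measure-theoretically: a $\mu_S$-null set $A$ agrees, up to a Lebesgue-null set, with a subset of the zero set $N=\{g=0\}$; the $\G$-invariance of $g$ makes $N$ $\G$-invariant, and a single change of variables plus Fubini gives $\mu(N)=\int_N g\,d\mathbf{z}=0$, hence $\mu(A)=0$. What your route buys is that it sidesteps any discussion of supports and of positivity of densities on their supports (a delicate point, since a density need not be a.e.\ positive on its support), and it treats finite and continuous $\G$ uniformly, exactly as you note in your ``main obstacle'' remark; what the paper's route buys is an explicit formula $d\mu/d\mu_S=f/g$, which it does not actually need elsewhere. One small terminological slip: the identity $g(Q_0\mathbf{y})=g(\mathbf{y})$ comes from the substitution $Q'=QQ_0$, i.e.\ \emph{right}-invariance of the Haar measure rather than left-invariance; since the Haar measure of a compact group is bi-invariant (as you yourself invoke earlier), this does not affect correctness.
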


\begin{proof}
Suppose $\mathbf{X}$ has a Lebesgue density $f(\cdot)$. We show in the following that $S\mathbf{X}$ has a Lebesgue density given by, for $\mathbf{x} \in \R^p$, $g(\mathbf{x})=\mathbb{E}_S [f(S\mathbf{x})]$, where $S\sim {\rm Uniform}(\G)$. Let $m$ denote the uniform distribution over $\G$ (i.e., the Haar measure on $\G$). For any Borel set $B\subset \R^p$,
$$\begin{aligned}
\mathbb{P}(S\mathbf{X}\in B) &= \int 1_{Q\mathbf{x}\in B}f(\mathbf{x}){\rm d}\mathbf{x}\,{\rm d}m(Q)\\
&=\int 1_{\mathbf{y}\in B} f(Q^{-1} \mathbf{y}){\rm d}\mathbf{y}\,{\rm d}m(Q)\qquad &\left({\rm by\ a\ change\ of \ variable\ \mathbf{y}=Q\mathbf{x}}\right)\\
&=\int 1_{\mathbf{y}\in B} f(Q \mathbf{y}){\rm d}\mathbf{y}\,{\rm d}m(Q)\qquad &\left({\rm as\ }S^{-1}\overset{d}{=} S\sim {\rm Uniform}(\G)\right)\\
&=\int 1_{\mathbf{y}\in B} \mathbb{E}_S f(S\mathbf{y}) {\rm d}\mathbf{y}. \qquad & \left({\rm by\ Fubini\text{'}s\ theorem}\right)
\end{aligned}$$
Hence, $S\mathbf{X}$ has a Lebesgue density given by $g(\mathbf{x})=\mathbb{E}_S f(S\mathbf{x})$, for $\mathbf{x}\in\R^p$.

Recall that the support of a probability measure $\mu$ on $\R^p$ is the closed set defined as:
$${\rm supp}(\mu):=\{ \mathbf{x} \in \R^p : \forall {\rm\ open\ neighborhood\ }N_\mathbf{x} {\rm\ of\ } \mathbf{x}, \,\mu(N_\mathbf{x}) > 0\}.$$
If $\mu$ has a Lebesgue density $f(\cdot)$, then $f(\mathbf{x})>0$ (Lebesgue) a.e.-$\mathbf{x}$\ on ${\rm supp}(\mu)$, and $f(\mathbf{x})=0$ (Lebesgue) a.e.-$\mathbf{x}$\ on $\R^p\backslash {\rm supp}(\mu)$.

We will show that every point in the support of $\mathbf{X}$ is in the support of $S\mathbf{X}$ and thus $\frac{f(\cdot)}{g(\cdot)}$ is well-defined (which is the Radon-Nikodym derivative of $\mathbf{X}$ w.r.t.\ $S\mathbf{X}$). This will establish the absolute continuity of $\mathbf{X}$ w.r.t.\ $S\mathbf{X}$.

Note that every element $Q\in\G$ is in the support of ${\rm Uniform}(\G)$.\footnote{Otherwise, there exists $Q_0 \in \G$ and a neighborhood $U$ of $Q_0$ with probability 0. Consider the following open cover of $\G$: $\{QU: Q \in \G\}$. By the compactness of $\G$, we can find a finite sub-cover $\cup_{i=1}^k Q_iU$ that covers $\G$ and has zero probability, yielding a contradiction.} In particular, the identity matrix $I \equiv I_p$ is in the support of ${\rm Uniform}(\G)$.
Let $\mathbf{x} \in {\rm supp}(\mu)$.
For any neighborhood $N_\mathbf{x}$ of $\mathbf{x}$, there exists an open ball $B(\mathbf{x},\varepsilon)$ contained in $N_\mathbf{x}$.
By the continuity of the group action
$\G\times\R^p\to \R^p:(Q,\mathbf{x})\mapsto Q\mathbf{x}$,
there exists a neighborhood $V$ of $I$ such that for all $Q\in V$ and $\mathbf{y}\in B(\mathbf{x},\varepsilon/2)$,
we have $Q \mathbf{y} \in N_\mathbf{x}$.
Therefore, by independence,
$\mathbb{P}(S\mathbf{X}\in N_\mathbf{x})\geq \mathbb{P}(S\in V,\mathbf{X}\in B(\mathbf{x},\varepsilon/2))
=\mathbb{P}(S\in V)\mathbb{P}(\mathbf{X}\in B(\mathbf{x},\varepsilon/2)) > 0 $, where we have used that $I$, $\mathbf{x}$ are in the supports of ${\rm Uniform}(\G)$ and $\mu$, respectively. Hence, every point in the support of $\mathbf{X}$ is in the support of $S\mathbf{X}\sim\mu_S$.

Note that $g(\cdot)$ is Lebesgue a.e.~positive on ${\rm supp}(\mu_S)$. Thus for any Borel set $B \subset \R^p$,
$$\mathbb{P}(\mathbf{X}\in B) = \int_{{\rm supp}(\mu_S)\cap B} f(\mathbf{x})\,{\rm d}\mathbf{x} = \int_{{\rm supp}(\mu_S)\cap B} \frac{f(\mathbf{x})}{g(\mathbf{x})}g(\mathbf{x})\,{\rm d}\mathbf{x} = \int_{{\rm supp}(\mu_S)\cap B} \frac{f(\mathbf{x})}{g(\mathbf{x})}\,{\rm d}\mu_S(\mathbf{x}).$$
Thus, $\mathbf{X}$ is absolutely continuous w.r.t.\ $S\mathbf{X}$ with Radon-Nikodym derivative $\frac{f(\cdot)}{g(\cdot)}$. \end{proof}

\begin{proof}[Proof of \cref{thm:population_rank}]
By 
\cref{prop:existence_c_cyc},
the continuity of $c(\cdot,\cdot)$ implies that there exists $\pi \in \Pi(\mu,\nu)$ with a $c$-cyclically monotone support. By 
\cref{prop:existence_c_cvx}, for any $\pi \in \Pi(\mu,\nu)$ with a $c$-cyclically monotone support, there exists a $c$-convex function $\psi:\R^p \to \R\cup\{+\infty\}$ such that the support of $\pi$ is contained in the $c$-subdifferential set
$$\partial_c\psi :=\left\{ (\mathbf{x},\mathbf{y})\in\R^p\times \R^p:\psi^c (\mathbf{y})-\psi(\mathbf{x})=c(\mathbf{x},\mathbf{y})\right\}.$$
Since $\psi$ is $c$-convex, and $c(\cdot,\cdot)$ is continuous, $\psi$ is a l.s.c.~convex function in the usual sense. Thus, $\psi$ is Lebesgue a.e.~differentiable in its domain (which contains the support of $\mu$).
Fix any pair of $(\mathbf{x},\mathbf{y})$ in the support of $\pi$ and suppose $\psi$ is differentiable at $\mathbf{x}$.
Let $ \nabla_\mathbf{x} \left(-c(\textbf{x},\textbf{y})\right)$ denote a sub-gradient of $g(\textbf{x}):=\max_{Q\in \G} \mathbf{x}^\top Q \mathbf{y}$,
which clearly exists (as $g(\cdot)$ is a convex function) and is an element in the convex hull of $\{Q \mathbf{y}: Q {\rm\ is\ a\ maximizer\ of\ }\max_{Q\in \G} \mathbf{x}^\top Q \mathbf{y}\}$ (i.e., the gradients of the active linear functions at $\mathbf{x}$; see e.g.,~\citep[Theorem 4.4.2]{Hiriart1993Cvx}).
As $\psi(\cdot)$ is differentiable at $\mathbf{x}$, for $\mathbf{h} \in \R^p$ in a neighborhood of $\mathbf{0} \in \R^p$, we have
\begin{equation}\label{eq:Taylor}
\psi(\textbf{x}+\textbf{h}) = \psi(\textbf{x}) + \nabla\psi(\textbf{x}) \cdot \textbf{h} + o(\textbf{h}).
\end{equation}
Now, using the definition of $c$-transform (see~\eqref{eq:c-transform}): 
$$\begin{aligned}
\psi(\textbf{x}+\textbf{h}) & \geq \psi^c(\textbf{y}) - c(\textbf{x}+\textbf{h},\textbf{y})\\
& \geq \; \psi^c(\textbf{y}) - c(\textbf{x},\textbf{y})+\nabla_\mathbf{x}\left(- c(\textbf{x},\textbf{y})\right)\cdot \textbf{h} \quad \text{ (by the definition\ of\ sub-gradient of } g(\cdot))\\
& =\;\psi(\textbf{x})+\nabla_\mathbf{x}\left(- c(\textbf{x},\textbf{y})\right)\cdot \textbf{h}.
\end{aligned}$$
Since $\mathbf{h}$ can take arbitrary directions, comparing the above display with~\eqref{eq:Taylor}, we get
\begin{equation}\label{eq:Grad-psi}
\nabla \psi(\textbf{x}) =  \nabla_\mathbf{x} \left(-c(\textbf{x},\textbf{y})\right).
\end{equation}
This also implies that the sub-gradient $\nabla_\mathbf{x}\left(- c (\mathbf{x},\mathbf{y})\right)$ is unique, i.e.,
$\{\tilde Q \mathbf{y}: \tilde Q  \in \argmax_{Q\in \G} \mathbf{x}^\top Q \mathbf{y}\}$ consists of only one element.
Let $Q_0:= \argmax_{Q\in \G} \mathbf{x}^\top Q \mathbf{y}$
be the unique maximizer of $\mathbf{x}^\top Q \mathbf{y}$. Then
$\nabla \psi(\textbf{x}) = Q_0 \mathbf{y}$, so that $\textbf{y}= Q_0^{-1}\nabla \psi(\textbf{x})$.
Since the orbit $\{Q \nabla\psi(\textbf{x}): Q\in \G\}$ intersects $ B$ at one point at most, we have $\mathbf{y}= q\left(\nabla \psi(\textbf{x})\right)$  (as $\mathbf{y} \in B$ and  $\nu$ is concentrated on $B$).

Thus, for $\mu$-a.e.~$\textbf{x}$, there is only one $\textbf{y}$ such that $(\textbf{x}, \textbf{y}) \in \partial_c \psi$. So, $\pi$ is given by a deterministic coupling, i.e., $({\rm identity},q\circ \nabla\psi)\#\mu$ and the existence of $R$ is shown.
Although $q\circ \nabla\psi$ is only defined Lebesgue a.e.~on the domain of $\psi$, it has a $\mu$-a.e.\ unique Borel measurable version (see the proof of \citet[Theorem 5.30]{villani2009OT}) which we define to be $R$ (note that $q\circ \nabla\psi$ coincides with $R$ on a Borel set with $\mu$-measure 1).

Let $S \sim {\rm Uniform}(\G)$ independent of everything else. Thus, $S  \stackrel{d}{=} S Q$ for any $Q \in \G$. Therefore, for $\mu$-a.e.~$\mathbf{x}$, $S\nabla\psi(\mathbf{x})\overset{d}{=} S R(\mathbf{x})$ as $\nabla\psi(\mathbf{x})$ and $R(\mathbf{x})=q(\nabla\psi(\mathbf{x}))$ are on the same orbit (i.e., $R(\mathbf{x}) = Q \nabla\psi(\mathbf{x})$ for some $Q \in \G$). Therefore, $S\nabla\psi(\mathbf{X})\overset{d}{=} S R(\mathbf{X}) \overset{d}{=} S\mathbf{H}$ (recall $ R(\mathbf{X}) \sim \mathbf{H}$).
Since $\psi$ is a $c$-convex function,
the $\G$-equivariance of $\nabla\psi(\cdot)$ (see the discussion following Definition~\ref{def_cconvex}) then implies
$\nabla\psi(S\mathbf{X})\overset{d}{=}S\mathbf{H}$.
As $\nabla\psi(\cdot)$ --- gradient of a convex function --- pushes the symmetrized version $S\mathbf{X}$ to $S\mathbf{H}$, by McCann's theorem~\citep{McCann1995} (which does not need any moment assumptions) $\nabla\psi(\cdot)$ is  $\mu_S$-a.e.\ uniquely determined (by the absolute continuity of $\mu_S$).
As $\mu$ is absolutely continuous w.r.t.\ $\mu_S$, $\nabla\psi(\cdot)$ is also $\mu$-a.e.\ uniquely determined, and the uniqueness of $R$ is shown.

To show $(iv)$, note that by definition $$S(\mathbf{x},\mathbf{h}) = \argmin_{Q\in \G} \|Q^\top \mathbf{x} - \mathbf{h}\|^2 = \argmin_{Q\in \G} \| \mathbf{x} - Q\mathbf{h}\|^2 = \argmax_{Q\in \G} \mathbf{x}^\top Q \mathbf{h},$$ and thus,
$S(\mathbf{x},R( \mathbf{x})) R( \mathbf{x})$ is a point in the orbit
$\{QR( \mathbf{x}): Q\in \G\}$ that is closest to $\mathbf{x}$.
However, as shown in~\eqref{eq:Grad-psi}, for $\mu$-a.e.\ $\mathbf{x}$, $\nabla\psi(\mathbf{x})$ is the unique element in $\{\tilde Q R(\mathbf{x}): \tilde Q\in \argmax_{Q\in \G} \mathbf{x}^\top Q R(\mathbf{x})\}$. This implies that $S(\mathbf{X},R( \mathbf{X})) R( \mathbf{X})\overset{a.s.}{=}\nabla\psi(\mathbf{X})$.

Finally, $(v)$ is a direct consequence of \citet[Theorem 5.10]{villani2009OT}.
\end{proof}

At this point, \cref{thm:population_rank} has been proved. In the following, we will develop some further useful results for Theorems \ref{cor:conv_popu}, \ref{prop:consistency_signed_rank} and \ref{prop:consistency_sign}.

\begin{lemma}[Continuity and uniqueness of $S(\cdot,\cdot)$ when $\G$ acts freely on $\G B$]\label{lem:free_G_cont}
Under the assumptions of \cref{thm:population_rank}, if $\G$ acts freely on $\G B$, then $S(\mathbf{x},\mathbf{y}) :=\argmin_{Q\in \G} \|Q^\top \mathbf{x} - \mathbf{y}\|^2$ is unique and continuous for $({\rm identity}, {R})\#\mu$-a.e.~$(\mathbf{x}, \mathbf{y})$.
\end{lemma}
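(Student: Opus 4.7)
The plan is to reduce the minimization to a linear maximization and then combine two standard ingredients: an image-to-parameter uniqueness argument based on the free group action, and a Berge-type argmax stability argument using compactness of $\G$.

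First I would rewrite $S(\mathbf{x},\mathbf{y})=\argmin_{Q\in\G}\|Q^\top\mathbf{x}-\mathbf{y}\|^2=\argmax_{Q\in\G}\mathbf{x}^\top Q\mathbf{y}$, since $\|\mathbf{x}\|^2$ and $\|\mathbf{y}\|^2$ do not depend on $Q$. For uniqueness at $(\mathbf{x},R(\mathbf{x}))$, the key input is already in the proof of \cref{thm:population_rank}: it was established there that for $\mu$-a.e.\ $\mathbf{x}$ the image set $\{\tilde Q R(\mathbf{x}):\tilde Q\in\argmax_{Q\in\G}\mathbf{x}^\top Q R(\mathbf{x})\}$ equals the singleton $\{\nabla\psi(\mathbf{x})\}$. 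Hence if $Q_1,Q_2\in\argmax_{Q\in\G}\mathbf{x}^\top Q R(\mathbf{x})$, then $Q_1 R(\mathbf{x})=Q_2 R(\mathbf{x})$, i.e.\ $(Q_2^{-1}Q_1)\,R(\mathbf{x})=R(\mathbf{x})$. Because $\nu$ is concentrated on $B$ and $R\#\mu=\nu$, we have $R(\mathbf{x})\in B$ ($\mu$-a.e.); the assumption that $\G$ acts freely on $\G B$ then forces $Q_2^{-1}Q_1=I$, so $Q_1=Q_2$. This upgrades the image-uniqueness to parameter-uniqueness and establishes uniqueness of $S(\mathbf{x},R(\mathbf{x}))$ for $\mu$-a.e.\ $\mathbf{x}$.

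For continuity, fix such an $(\mathbf{x}_0,\mathbf{y}_0)=(\mathbf{x}_0,R(\mathbf{x}_0))$ in the full-measure set where uniqueness holds. Take an arbitrary sequence $(\mathbf{x}_n,\mathbf{y}_n)\to(\mathbf{x}_0,\mathbf{y}_0)$ in $\R^p\times\R^p$ and any measurable selection $S(\mathbf{x}_n,\mathbf{y}_n)\in\argmax_{Q\in\G}\mathbf{x}_n^\top Q\mathbf{y}_n$. By compactness of $\G$, every subsequence admits a further subsequence with $S(\mathbf{x}_{n_k},\mathbf{y}_{n_k})\to Q^*\in\G$. Passing to the limit in the defining inequality $\mathbf{x}_{n_k}^\top S(\mathbf{x}_{n_k},\mathbf{y}_{n_k})\mathbf{y}_{n_k}\geq\mathbf{x}_{n_k}^\top Q\mathbf{y}_{n_k}$ for every fixed $Q\in\G$, using joint continuity of $(\mathbf{x},\mathbf{y},Q)\mapsto\mathbf{x}^\top Q\mathbf{y}$, yields $\mathbf{x}_0^\top Q^*\mathbf{y}_0\geq\mathbf{x}_0^\top Q\mathbf{y}_0$ for all $Q\in\G$. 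Hence $Q^*\in\argmax_{Q\in\G}\mathbf{x}_0^\top Q\mathbf{y}_0$, and uniqueness at $(\mathbf{x}_0,\mathbf{y}_0)$ gives $Q^*=S(\mathbf{x}_0,\mathbf{y}_0)$. Since every subsequence has a sub-subsequence converging to the same limit, the whole sequence $S(\mathbf{x}_n,\mathbf{y}_n)$ converges to $S(\mathbf{x}_0,\mathbf{y}_0)$, which is the desired continuity.

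The argument is really just a two-step glue job, and I do not anticipate a serious obstacle. The only place that requires care is recognizing that uniqueness in the argmax set $\argmax_{Q\in\G}\mathbf{x}^\top Q\mathbf{y}$ does \emph{not} follow from image-uniqueness alone (since $\G$ can contain redundant copies of the same action on $R(\mathbf{x})$); it is precisely the free-action hypothesis on $\G B$, together with the fact $R(\mathbf{x})\in B$, that bridges this gap. Once uniqueness is in hand, the continuity step is the standard Berge-maximum-theorem argument specialized to a singleton-valued argmax correspondence.
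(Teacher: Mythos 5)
Your proposal is correct and follows the same structure as the paper's proof: the uniqueness step is essentially identical (image-uniqueness of the maximizer of $\mathbf{x}^\top Q R(\mathbf{x})$ from the proof of \cref{thm:population_rank}, upgraded to uniqueness in $\G$ via the free action on $\G B$ and the fact that $R(\mathbf{x})\in B$ $\mu$-a.e.). The only difference is in the continuity step, where the paper invokes epi-convergence machinery (Rockafellar--Wets) to conclude that minimizers of $f_n(Q)=\|Q^\top\mathbf{x}_n-\mathbf{y}_n\|^2$ converge to the unique minimizer at the limit, while you give the equivalent direct argument — compactness of $\G$, a subsequence extraction, passage to the limit in the optimality inequality, and uniqueness at $(\mathbf{x}_0,\mathbf{y}_0)$ — which is a more elementary, self-contained route to the same stability conclusion.
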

\begin{proof}
We first show the uniqueness. Suppose that  there are two minimizers $Q_1,Q_2$ of $\|Q^\top \mathbf{x} - \mathbf{y}\|^2$. Then $Q_1,Q_2$ are also maximizers of $\mathbf{x}^\top Q \mathbf{y}$. Since for $\mu$-a.e.~$\mathbf{x}$ and $\mathbf{y} = R(\mathbf{x})$, from the proof of \cref{thm:population_rank}, $\{Q \mathbf{y}: Q {\rm\ is\  a\ maximizer\ of\ }\max_{S\in \G} \mathbf{x}^\top Q \mathbf{y}\}$ has only one element $\nabla \psi(\mathbf{x})$, we must have $Q_1 \mathbf{y} = Q_2 \mathbf{y}=\nabla \psi(\mathbf{x})$.
This implies $Q_1^{-1}Q_2 \mathbf{y} =  \mathbf{y}$ and thus $Q_1=Q_2$,  as $\G$ acts freely on $\G B$.
Hence $\argmin_{Q\in \G} \|Q^\top \mathbf{x} - \mathbf{y}\|^2$ is unique $({\rm identity}, {R})\#\mu$-a.s.

Next, we show the continuity.
Suppose $(\mathbf{x}_n,\mathbf{y}_n)\to (\mathbf{x}_0,\mathbf{y}_0)$ as $n\to\infty$ and $S(\mathbf{x},\mathbf{y})$ is uniquely defined at $(\mathbf{x}_0,\mathbf{y}_0)$.
Consider the function $f_n:\R^{p\times p}\to\R$ defined by $f_n(Q)=\|Q^\top \mathbf{x}_n-\mathbf{y}_n\|^2$ if $Q\in \G$, and $f_n(Q)= +\infty$ otherwise.
Let $f:\R^{p\times p}\to \R$ be the function defined by $f(Q)=\|Q^\top \mathbf{x}_0-\mathbf{y}_0\|^2$ if $Q\in \G$ and $f(Q)= +\infty$ otherwise. For any $Q\in\R^{p\times p}$, it is easy to see that
for every sequence $\{Q_n\}_{n \ge 1} \subset \R^{p \times p}$ such that $\|Q_n - Q\|_F \to 0$,
$\liminf_n f_n(Q_n)\geq f(Q)$ (by using the form of $f_n$, which is continuous on $\G$, and the compactness of $\G$). Moreover, for some sequence $\{Q_n\}_{n \ge 1} \subset \R^{p \times p}$ such that $\|Q_n - Q\|_F \to 0$, we also have $\limsup_n f_n(Q_n)\leq f(Q)$.
This implies $f_n$ {\it epi-converges} to $f$ \citep[Proposition 7.2]{rockafellar1998va}.
Note that $f_n$ and $f$ are l.s.c.~with bounded level sets $\{Q\in\R^{p\times p}\,:\, f_n(Q)\leq \alpha \}$ for all $\alpha\in\R$.
By \citet[Theorem 7.33]{rockafellar1998va}, the minimizer(s) of $f_n$ must converge to the minimizer of $f$.
Hence, the continuity of $S(\mathbf{x},\mathbf{y})$ is shown.
\end{proof}

The following weak convergence results will be used in the proofs of Theorems \ref{cor:conv_popu}, \ref{prop:consistency_signed_rank} and \ref{prop:consistency_sign}.
\begin{lemma}[Joint weak convergence of empirical ranks and signed-ranks]\label{lem:joint_weak}
    Suppose Assumptions~\ref{assump:nu-G} and \ref{assump:Weak-nu_n} hold. Let $\mu_n$ denote the empirical distribution on $\{\mathbf{X}_1,\ldots,\mathbf{X}_n\}$ and $\mathcal{X}_n\sim \mu_n$. 
    With probability 1, we have 
$(\mathcal{X}_n,R_n(\mathcal{X}_n))\overset{w}{\longrightarrow}(\mathbf{X},R(\mathbf{X}))$ and $(\mathcal{X}_n,S_n(\mathcal{X}_n)R_n(\mathcal{X}_n))\overset{w}{\longrightarrow}(\mathbf{X},\nabla\psi(\mathbf{X}))$.
\end{lemma}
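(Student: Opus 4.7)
The plan is to recognize the empirical joint distribution
\[
\pi_n := \frac{1}{n}\sum_{i=1}^n \delta_{(\mathbf{X}_i,R_n(\mathbf{X}_i))}
\]
as a coupling in $\Pi(\mu_n,\nu_n)$ whose support is $c$-cyclically monotone (since $(\hat\sigma,\{\hat Q_i\})$ minimizes~\eqref{eq:OTproblem}, and optimality for a continuous cost is equivalent to $c$-cyclical monotonicity of the support). Since $\mu\in\mathcal{P}_{\rm ac}(\mathbb{R}^p)$, the Glivenko--Cantelli/Varadarajan theorem gives $\mu_n\overset{w}{\to}\mu$ almost surely, and by~\cref{assump:Weak-nu_n}, $\nu_n\overset{w}{\to}\nu$. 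Tightness of the marginals yields tightness of $\{\pi_n\}$ on the realization-set of full probability where $\mu_n\overset{w}{\to}\mu$. I will fix such a realization for the remainder of the argument.

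Next, I would apply Prokhorov's theorem: every subsequence of $\{\pi_n\}$ admits a further weakly convergent sub-subsequence with some limit $\pi$, whose marginals are necessarily $\mu$ and $\nu$. By the stability of $c$-cyclical monotonicity under weak convergence for continuous costs (e.g., Villani, Theorem 5.20), the support of $\pi$ is $c$-cyclically monotone. But~\cref{thm:population_rank} asserts that the unique member of $\Pi(\mu,\nu)$ with $c$-cyclically monotone support is $(\mathrm{identity},R)\#\mu$. Hence every subsequential limit coincides, and the full sequence satisfies $\pi_n\overset{w}{\to}(\mathrm{identity},R)\#\mu$, which is precisely $(\mathcal{X}_n,R_n(\mathcal{X}_n))\overset{w}{\to}(\mathbf{X},R(\mathbf{X}))$.

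For the signed-rank convergence, I would push forward along the map
\[
T:(\mathbf{x},\mathbf{h})\;\longmapsto\;\Bigl(\mathbf{x},\;\argmin_{\mathbf{y}\in\G\mathbf{h}}\|\mathbf{x}-\mathbf{y}\|\Bigr),
\]
so that $T_\#\pi_n = \frac{1}{n}\sum_i\delta_{(\mathbf{X}_i,S_n(\mathbf{X}_i)R_n(\mathbf{X}_i))}$ and, by \cref{thm:population_rank}-$(iv)$, $T_\#((\mathrm{identity},R)\#\mu)=(\mathrm{identity},\nabla\psi)\#\mu$. To invoke the continuous mapping theorem it suffices to show that $T$ is continuous at $(\mathrm{identity},R)\#\mu$-a.e.~point. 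Theorem~\ref{thm:population_rank}-$(iv)$ guarantees that for $\mu$-a.e.~$\mathbf{x}$, $\nabla\psi(\mathbf{x})$ is the unique closest point in $\G R(\mathbf{x})$ to $\mathbf{x}$. A short compactness argument then gives continuity of $T$ at such points: if $(\mathbf{x}_n,\mathbf{h}_n)\to(\mathbf{x},\mathbf{h})$ and the projection onto $\G\mathbf{h}$ is unique, then any limit point of $\argmin_{\mathbf{y}\in\G\mathbf{h}_n}\|\mathbf{x}_n-\mathbf{y}\|$ lies in $\G\mathbf{h}$ (by compactness of $\G$ and continuity of the group action) and attains the minimum distance from $\mathbf{x}$, so by uniqueness must coincide with $\argmin_{\mathbf{y}\in\G\mathbf{h}}\|\mathbf{x}-\mathbf{y}\|$. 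The Portmanteau theorem then delivers $T_\#\pi_n\overset{w}{\to}(\mathrm{identity},\nabla\psi)\#\mu$, which is the claim.

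The main obstacle is the correct invocation of stability of $c$-cyclical monotonicity for the non-smooth cost $c(\mathbf{x},\mathbf{h})=\min_{Q\in\G}\|Q^\top\mathbf{x}-\mathbf{h}\|^2$; fortunately $c$ is continuous (the minimum of continuous functions over the compact set $\G$), so Villani's result applies directly. The second subtlety, verifying continuity a.e.~of $T$, is where uniqueness of the signed-rank (Theorem~\ref{thm:population_rank}-$(iv)$, which does not require a free group action) is essential: even though the \emph{sign} $S_n(\mathbf{x})$ can be non-unique (as in spherical symmetry), the \emph{signed-rank} $S_n(\mathbf{x})R_n(\mathbf{x})$ is a.s.~a single point, which is exactly what is needed for the continuous mapping argument.
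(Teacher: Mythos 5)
Your proposal is correct and follows essentially the same route as the paper's proof: identify the empirical coupling as the (a.s.\ unique) $c$-cyclically monotone element of $\Pi(\mu_n,\nu_n)$, use tightness, Prokhorov, stability of $c$-cyclical monotonicity under weak limits for the continuous cost, and uniqueness from \cref{thm:population_rank} to get convergence of the ranks; then obtain the signed-rank convergence by the continuous mapping theorem applied to the orbit-projection map, whose a.e.\ continuity rests on the a.e.\ uniqueness of the signed-rank (not the sign), exactly as in the paper.
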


\begin{proof}
It is known that $\mu_n$ converges weakly to $\mu$ with probability 1.
Fix $\omega \in \Omega$ (the sample space) such that $\mu_n(\omega)\overset{w}{\longrightarrow}\mu$. 
Consider the sequence of random measures $({\rm identity}, {R}_n)\#\mu_n(\omega)$, where ${R}_n$ is the empirical rank map and the randomness comes from the random draw according to $\mu_n(\omega)$.
As ${R}_n\# \mu_n = \nu_n$ converges weakly (to $\nu$, by \cref{assump:Weak-nu_n}) and $\mu_n(\omega)$ converges weakly (to $\mu$), we have that $({\rm identity}, {R}_n)\#\mu_n(\omega)$ is tight. Hence by Prokhorov's theorem, every subsequence of $({\rm identity}, {R}_n)\#\mu_n(\omega)$ has
a further subsequence that converges weakly.
We will show that this limit is $({\rm identity}, {R})\#\mu$ regardless of what sub-sequences are chosen. Consequently, since weak convergence is metrizable, $({\rm identity},{R}_n)\#\mu_n(\omega)$ converges weakly to the same limit.

Now suppose the convergent sub-sequence is $({\rm identity}, {R}_{n_k})\#\mu_{n_k}(\omega)$. We know that $({\rm identity}, {R}_{n_k})\#\mu_{n_k}(\omega)$ is supported on a $c$-cyclically monotone set, following the optimality of ${R}_{n_k}$ (see e.g., the proof of \cref{prop:existence_c_cyc}). This, together with the continuity of $c(\cdot,\cdot)$, implies that its weak limit must also be supported on a $c$-cyclically monotone set (this is also shown in the proof of \cref{prop:existence_c_cyc}).
From \cref{thm:population_rank}, the unique $c$-cyclically monotone measure in $\Pi(\mu,\nu)$ is $({\rm identity}, {R})\#\mu$;
this establishes $({\rm identity}, {R}_n)\#\mu_n(\omega)\overset{w}{\longrightarrow}({\rm identity}, {R})\#\mu$.
Or equivalently, with $\mathcal{X}_n\sim \mu_n$, we have $(\mathcal{X}_n,R_n(\mathcal{X}_n))\overset{w}{\longrightarrow}(\mathbf{X},R(\mathbf{X}))$.

Next, we use the above weak convergence to show that
$(\mathcal{X}_n,S_n(\mathcal{X}_n)R_n(\mathcal{X}_n))\overset{w}{\longrightarrow}(\mathbf{X},\nabla\psi(\mathbf{X}))$.
Recall that the signed-rank is the point in the orbit of the rank vector that is closest to the data point. So by defining $f(\mathbf{x},\mathbf{y}):={\rm Proj}_{\G \mathbf{y}}(\mathbf{x})$ as the projection of $\mathbf{x}$ onto the orbit of $\mathbf{y}$, we can write $S_n(\mathcal{X}_n)R_n(\mathcal{X}_n)=f(\mathcal{X}_n,R_n(\mathcal{X}_n))$,
and $\nabla\psi(\mathbf{X})=f(\mathbf{X},R(\mathbf{X}))$.
We will show that $f(\mathbf{x},\mathbf{y})$ is continuous at $({\rm identity}, {R})\#\mu$ a.e.-$(\mathbf{x},\mathbf{y})$.
We have shown in the proof of \cref{thm:population_rank} that the signed-rank $f(\mathbf{x},\mathbf{y})={\rm Proj}_{\G \mathbf{y}}(\mathbf{x})$ is 
$({\rm identity}, {R})\#\mu$ a.e.-$(\mathbf{x},\mathbf{y})$ uniquely defined. In case where the nearest point in $\G\mathbf{y}$ to $\mathbf{x}$ is not unique, $f(\mathbf{x},\mathbf{y})$ can be chosen as any projection, and we can show that the continuity still holds.

Let $(\mathbf{x}_0,\mathbf{y}_0)$ be a point where the projection ${\rm Proj}_{\G \mathbf{y}_0}(\mathbf{x}_0)$ is unique. We argue by contradiction. If $f(\mathbf{x},\mathbf{y})$ is not continuous at $(\mathbf{x}_0,\mathbf{y}_0)$, then there exist $\delta > 0 $ and a sequence $(\mathbf{x}_n,\mathbf{y}_n)\to (\mathbf{x}_0,\mathbf{y}_0)$ such that $\|{\rm Proj}_{\G \mathbf{y}_n}(\mathbf{x}_n) - {\rm Proj}_{\G \mathbf{y}_0}(\mathbf{x}_0)\|\geq \delta$ for all $n$.
Define the distance from a point $\mathbf{x}\in\R^p$ to a set $A\subset\R^p$ as $d(\mathbf{x},A):=\inf_{\mathbf{y}\in A}(\mathbf{x},\mathbf{y})$. Let $d_0 := d(\mathbf{x}_0,\G\mathbf{y}_0)$, and suppose ${\rm Proj}_{\G \mathbf{y}_0}(\mathbf{x}_0) = g_0 \mathbf{y}_0$ is the point in $\G\mathbf{y}_0$ that achieves this minimal distance, i.e., $\|\mathbf{x}_0 - g_0 \mathbf{y}_0\| = d_0 $.
Observe that $d(\mathbf{x}_n,\G\mathbf{y}_n)\leq d(\mathbf{x}_n,\mathbf{x}_0) + d(\mathbf{x}_0,\G\mathbf{y}_n)\leq d(\mathbf{x}_n,\mathbf{x}_0) + d(\mathbf{x}_0,g_0\mathbf{y}_n)\leq d(\mathbf{x}_n,\mathbf{x}_0) + d(\mathbf{x}_0,g_0\mathbf{y}_0) + d(g_0\mathbf{y}_0,g_0\mathbf{y}_n)=\|\mathbf{x}_n-\mathbf{x}_0\| + d_0 + \|\mathbf{y}_n-\mathbf{y}_0\|$. Therefore, $\|{\rm Proj}_{\G \mathbf{y}_n}(\mathbf{x}_n) - \mathbf{x}_0\|\leq \|{\rm Proj}_{\G \mathbf{y}_n}(\mathbf{x}_n) - \mathbf{x}_n\| + \|\mathbf{x}_n - \mathbf{x}_0\| = d(\mathbf{x}_n,\G\mathbf{y}_n)+\|\mathbf{x}_n - \mathbf{x}_0\|\leq 2\|\mathbf{x}_n-\mathbf{x}_0\| + d_0 + \|\mathbf{y}_n-\mathbf{y}_0\| \to d_0$, as $n\to\infty$.
Hence, ${\rm Proj}_{\G \mathbf{y}_n}(\mathbf{x}_n)$ must have a sub-sequence that converges to a point in $\{\mathbf{x}\in\R^p: \|\mathbf{x} - \mathbf{x}_0\|\leq d_0\}\cap \G\mathbf{y}_0$, which consists of one single point ${\rm Proj}_{\G \mathbf{y}_0}(\mathbf{x}_0)$, by the uniqueness of the projection ${\rm Proj}_{\G \mathbf{y}_0}(\mathbf{x}_0)$. This is a contradiction to $\|{\rm Proj}_{\G \mathbf{y}_n}(\mathbf{x}_n) - {\rm Proj}_{\G \mathbf{y}_0}(\mathbf{x}_0)\|\geq \delta>0$ for all $n$.

Now, by continuous mapping applied to $(\mathbf{x},\mathbf{y})\mapsto (\mathbf{x},f(\mathbf{x},\mathbf{y}))$ and $(\mathcal{X}_n,R_n(\mathcal{X}_n))\overset{w}{\longrightarrow}(\mathbf{X},R(\mathbf{X}))$, we have $(\mathcal{X}_n,S_n(\mathcal{X}_n)R_n(\mathcal{X}_n))\overset{w}{\longrightarrow}(\mathbf{X},\nabla\psi(\mathbf{X}))$.
\end{proof}

Finally, we present a simple lemma which serves as a convenient tool in proving \cref{cor:conv_popu}.
\begin{lemma}\label{lem:UI}
    Suppose Assumption \ref{assump:Weak-nu_n} holds. For a continuous function $f:B\to \R$, if we have $\frac{1}{n}\sum_{i=1}^n |f(\mathbf{h}_{n,i})| \to \int |f(\mathbf{x})|{\rm d}\nu(\mathbf{x})$, then $f\# \nu_n$ is uniformly integrable.
\end{lemma}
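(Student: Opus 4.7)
The plan is to reduce the statement to the standard fact that weak convergence together with convergence of first absolute moments implies uniform integrability. Set $X_n \sim f\#\nu_n$ and $X \sim f\#\nu$. By Assumption~\ref{assump:Weak-nu_n} we have $\nu_n \overset{w}{\to} \nu$, and since $f$ is continuous, the continuous mapping theorem yields $X_n \overset{d}{\to} X$. Meanwhile, the hypothesis gives
$$\E|X_n| \;=\; \frac{1}{n}\sum_{i=1}^n |f(\mathbf{h}_{n,i})| \;\longrightarrow\; \int |f|\,{\rm d}\nu \;=\; \E|X|,$$
where finiteness of the limit is implicit in the hypothesis.

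The key step is the implication ``$X_n\overset{d}{\to} X$ and $\E|X_n|\to \E|X|<\infty$'' $\Longrightarrow$ ``$\{X_n\}$ is uniformly integrable.'' I would prove this as follows. Fix $M>0$ that is a continuity point of the distribution of $|X|$ (such $M$ are dense in $(0,\infty)$). The indicator $1\{|y|\le M\}$ is continuous off the $|X|$-null set $\{|y|=M\}$, and $y\mapsto |y|\,1\{|y|\le M\}$ is bounded by $M$. Hence by the portmanteau theorem,
$$\E\bigl[|X_n|\,1\{|X_n|\le M\}\bigr] \longrightarrow \E\bigl[|X|\,1\{|X|\le M\}\bigr].$$
Subtracting this from $\E|X_n|\to \E|X|$ yields $\E[|X_n|\,1\{|X_n|>M\}] \to \E[|X|\,1\{|X|>M\}]$ for every continuity point $M$.

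Given $\varepsilon>0$, since $\E|X|<\infty$ we can pick a continuity point $M_0$ with $\E[|X|\,1\{|X|>M_0\}]<\varepsilon/2$, and then choose $N$ so that $\E[|X_n|\,1\{|X_n|>M_0\}]<\varepsilon$ for all $n\ge N$. For each of the finitely many remaining $n<N$, the measure $\nu_n$ is supported on $\{\mathbf{h}_{n,1},\ldots,\mathbf{h}_{n,n}\}$, so $\E[|X_n|\,1\{|X_n|>M\}]\to 0$ individually as $M\to\infty$, and we enlarge $M_0$ to handle these cases simultaneously. This gives $\sup_n \E[|X_n|\,1\{|X_n|>M\}]\to 0$ as $M\to\infty$, which is exactly uniform integrability of $f\#\nu_n$.

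There is no real obstacle here beyond the routine bookkeeping around continuity points of $|X|$; the density of such points in $(0,\infty)$ makes the argument go through without issue.
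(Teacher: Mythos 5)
Your proof is correct. It reduces the lemma to the standard fact that convergence in distribution together with convergence of first absolute moments implies uniform integrability, and your verification of that fact — truncating at continuity points $M$ of the law of $|X|$, applying the portmanteau/mapping theorem to the bounded function $y\mapsto |y|\,1\{|y|\le M\}$ (whose discontinuity set $\{|y|=M\}$ is null under the limit law), subtracting, and handling the finitely many small $n$ via the finite support of $\nu_n$ together with monotonicity of the tail expectations in $M$ — is sound. The paper proves the same implication by a different route: it invokes Skorokhod's almost-sure representation to upgrade $\mathbf{H}_n\overset{w}{\to}\mathbf{H}$ to $\mathbf{H}_n\overset{a.s.}{\to}\mathbf{H}$, applies continuous mapping to get $f(\mathbf{H}_n)\overset{a.s.}{\to} f(\mathbf{H})$, and then uses Scheff\'e's lemma (a.s. convergence plus convergence of $L^1$ norms) to conclude $L^1$ convergence, which implies uniform integrability; since uniform integrability depends only on the laws, this transfers back to $f\#\nu_n$. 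Your argument buys a more elementary, distribution-level proof that avoids the representation-space detour and the implicit transfer step, at the cost of the bookkeeping with continuity points; the paper's version is shorter given the named theorems. (Both proofs share the same mild implicit assumption in the continuous-mapping step, namely that continuity of $f$ on $B$ suffices $\nu$-a.e., so this is not a gap specific to your write-up.)
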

\begin{proof}
    Since $\mathbf{H}_n \sim \nu_n$ converges weakly to $\nu$, by choosing an almost sure representative, we may assume
    $\mathbf{H}_n\overset{a.s.}{\longrightarrow}\mathbf{H}$, and by continuous mapping $f(\mathbf{H}_n)\overset{a.s.}{\longrightarrow}f(\mathbf{H})$. Since the moments also converge, i.e., $\E |f(\mathbf{H}_n)|\to \E |f(\mathbf{H})|$, Scheff{\' e}'s lemma implies
$f(\mathbf{H}_n)$ converges to $f(\mathbf{H})$ in $L^1$. This implies $\{f(\mathbf{H}_n)\}_{n\geq 1}$ is uniformly integrable.
\end{proof}


\begin{proof}[Proof of \cref{cor:conv_popu}]

We first state a continuity property of gradients of convex functions. Note that the gradient of convex function $\nabla\psi(\cdot)$ is uniquely defined Lebesgue a.e.
In the negligible case where the gradient does not exist at some point $\mathbf{x}$, we may define $\nabla\psi(\mathbf{x})$ as any subgradient at $\mathbf{x}$, i.e., an element of the subdifferential set $\partial \psi(\mathbf{x})$ (see e.g.~\citep{McCann1995}).
A subgradient always exists in the interior of the domain of a convex function. This notion of $\nabla\psi(\cdot)$ is {\it continuous} in the following sense:  if $\mathbf{x} \in \R^p$ is a differentiable point of $\psi(\cdot)$, and $\mathbf{x}_n\to\mathbf{x}$ and $\mathbf{y}_n\in\partial \psi(\mathbf{x}_n)$, then $\mathbf{y}_n\to\nabla\psi(\mathbf{x})$.

Let us now prove part-1.~of~\cref{cor:conv_popu}. From \cref{lem:joint_weak}, we have with probability 1, $(\mathcal{X}_n,R_n(\mathcal{X}_n))\overset{w}{\longrightarrow}(\mathbf{X},R(\mathbf{X}))$ and $(\mathcal{X}_n,S_n(\mathcal{X}_n)R_n(\mathcal{X}_n))\overset{w}{\longrightarrow}(\mathbf{X},\nabla\psi(\mathbf{X}))$, where
$\mathcal{X}_n\sim \mu_n$. As $q(\cdot)$ is continuous, $R(\cdot) = q(\nabla\psi(\cdot))$ is $\mu$-a.e.\ continuous. Hence, by the continuous mapping theorem applied to $f(\mathbf{x},\mathbf{y})=\rho(R(\mathbf{x}),\mathbf{y})$,
$(\mathcal{X}_n,R_n(\mathcal{X}_n))\overset{w}{\longrightarrow}(\mathbf{X},R(\mathbf{X}))$ implies
$\rho(R(\mathcal{X}_n),R_n(\mathcal{X}_n)) \overset{w}{\longrightarrow} \rho(R(\mathbf{X}),R(\mathbf{X})) =0$.
Note that $R_n(\mathcal{X}_n) \sim \nu_n$.
$\E \rho(\mathbf{H}_{n},\mathbf{0}) \to \E \rho(\mathbf{H},\mathbf{0})$ together with \cref{lem:UI} implies
$\rho(R_n(\mathcal{X}_n),\mathbf{0})$ is uniformly integrable.
Hence, $\rho(R(\mathbf{X}),\mathbf{0})$, as the weak limit of $\rho(R_n(\mathcal{X}_n),\mathbf{0})$, is integrable by Fatou's lemma.
The strong law of large numbers then implies
$\E\rho(R(\mathcal{X}_n),\mathbf{0})$ converges to $\E\rho(R(\mathbf{X}),\mathbf{0})$. Since $\rho(R(\mathcal{X}_n),\mathbf{0})$ converges in distribution to $\rho(R(\mathbf{X}),\mathbf{0})$,
Scheff{\' e}'s lemma again implies $\rho(R(\mathcal{X}_n),\mathbf{0})$ is uniformly integrable.
Hence, $\rho(R(\mathcal{X}_n),R_n(\mathcal{X}_n))\lesssim \rho(R(\mathcal{X}_n),\mathbf{0}) + \rho(R_n(\mathcal{X}_n),\mathbf{0})$ is uniformly integrable and its mean converges to the mean of 0, i.e.,
$$\frac{1}{n}\sum_{i=1}^n \rho(R(\mathbf{X}_i), R_n(\mathbf{X}_i))\to 0.$$

Next, we show the convergence of signs.
By~\cref{lem:free_G_cont} and the continuous mapping theorem applied to $f(\mathbf{x},\mathbf{y})=\tilde{\rho}(S(\mathbf{x},\mathbf{y}),S(\mathbf{x},R(\mathbf{x})))$,
$(\mathcal{X}_n,R_n(\mathcal{X}_n))\overset{w}{\longrightarrow}(\mathbf{X},R(\mathbf{X}))$ implies
$\tilde{\rho}(S(\mathcal{X}_n,R_n(\mathcal{X}_n)),S(\mathcal{X}_n,R(\mathcal{X}_n)))\overset{w}{\longrightarrow}0$.
As $\tilde{\rho}$ is continuous and the signs are bounded, we can apply the bounded convergence theorem which implies that the expectation also converges, i.e.,
$$\frac{1}{n}\sum_{i=1}^n \tilde{\rho}(S(\mathbf{X}_i,R_n(\mathbf{X}_i)) , S(\mathbf{X}_i,R(\mathbf{X}_i)) ) \to 0 \qquad \mbox{as } n \to \infty.$$
Since $S(\mathbf{X}_i,R_n(\mathbf{X}_i))= S_n(\mathbf{X}_i)$, the convergence result follows.

Finally, we show the convergence of the empirical signed-ranks.
Since, with probability 1, $(\mathcal{X}_n,S_n(\mathcal{X}_n)R_n(\mathcal{X}_n))\overset{w}{\longrightarrow}(\mathbf{X},\nabla\psi(\mathbf{X}))$ (by~\cref{lem:joint_weak}), by continuous mapping theorem applied to
$f(\mathbf{x},\mathbf{y})=\rho(\mathbf{y},\nabla\psi(\mathbf{x}))$, we have $\rho(S_n(\mathcal{X}_n)R_n(\mathcal{X}_n),\nabla\psi(\mathcal{X}_n))\overset{w}{\longrightarrow} 0$.
We have shown that
$\rho(R_n(\mathcal{X}_n),\mathbf{0})$ is uniformly integrable.
It follows from $\rho(Q\mathbf{x},\mathbf{0}) \lesssim \rho(\mathbf{x},\mathbf{0}) + 1$ that
$\rho(S_n(\mathcal{X}_n)R_n(\mathcal{X}_n),\mathbf{0})$ is also uniformly integrable.
Similarly, since $\nabla\psi(\mathcal{X}_n)$ and $R(\mathcal{X}_n)$ are on a same orbit, the uniform integrability of $\rho(R(\mathcal{X}_n),\mathbf{0})$ implies that $\rho(\nabla\psi(\mathcal{X}_n),\mathbf{0})$ is uniformly integrable.
Hence, $\rho(S_n(\mathcal{X}_n)R_n(\mathcal{X}_n),\nabla\psi(\mathcal{X}_n))$ is uniformly integrable and its mean converges to the mean of 0, i.e.,
$$\frac{1}{n}\sum_{i=1}^n \rho(S_n(\mathbf{X}_i)R_n(\mathbf{X}_i) , \nabla\psi(\mathbf{X}_i))\to 0\qquad \mbox{as } n \to \infty.$$
\end{proof}

\begin{proof}[Proof of the Theorems~\ref{prop:consistency_signed_rank} and \ref{prop:consistency_sign}]
It suffices to show the following claims:
\begin{enumerate}
\item $\frac{1}{n}\sum_{i=1}^n S_n(\mathbf{X}_i) \overset{a.s.}{\longrightarrow}\mathbb{E} [S(\mathbf{X},R( \mathbf{X}))]$;
\item $\frac{1}{n}\sum_{i=1}^n S_n(\mathbf{X}_i) J(R_n(\mathbf{X}_i)) \overset{a.s.}{\longrightarrow}\mathbb{E} [J(\nabla\psi(\mathbf{X}))]$.
\end{enumerate}
From \cref{lem:joint_weak} we have shown that for a.e.~$\omega$, 
$({\rm identity}, {R}_n)\#\mu_n(\omega)\overset{w}{\longrightarrow}({\rm identity}, {R})\#\mu$.
From \cref{lem:free_G_cont}, $S(\cdot,\cdot)$ is continuous $({\rm identity}, {R})\#\mu$ a.s.~(and for spherical symmetry, $S(\cdot,\cdot,\cdot)$ is continuous $({\rm identity}, {R})\#\mu\otimes P_\varepsilon$-a.s.~as in \cref{rem:Sp-Sym}), by finding an almost sure representative and dominated convergence, the integration converges:
$$\frac{1}{n}\sum_{i=1}^n S_n(\mathbf{X}_i)  = \frac{1}{n}\sum_{i=1}^n S(\mathbf{X}_i,R_n( \mathbf{X}_i), \varepsilon_i)
 \to \mathbb{E} [S(\mathbf{X},R( \mathbf{X}),\varepsilon)].$$

To prove part-2.~above, let $C\subset \R^p$ be the set of points where $J(\cdot)$ is continuous. Then, $\nu_S(C)=1$ by \cref{assump:equivarJ}.
From \cref{thm:population_rank}, as $\nabla\psi(\cdot)$ pushes $\mu_S$ to $\nu_S$, we have $\mathbb{P}(\nabla\psi(S\mathbf{X})\in C)=1$.
Since $\mu$ is absolutely continuous w.r.t.\ $\mu_S$ (by~\cref{lem:abs_cont}), $\mathbb{P}(\nabla\psi(\mathbf{X})\in C)=1$. This allows us to apply the continuous mapping theorem in the following.

From \cref{lem:joint_weak}, for $\mathcal{X}_n\sim \mu_n(\omega)$, we have $(\mathcal{X}_n,S_n(\mathcal{X}_n)R_n(\mathcal{X}_n))\overset{w}{\longrightarrow}(\mathbf{X},\nabla\psi(\mathbf{X}))$. In particular, by the equivariance and continuity of $J(\cdot)$,
$S_n(\mathcal{X}_n)J(R_n(\mathcal{X}_n)) = J(S_n(\mathcal{X}_n)R_n(\mathcal{X}_n))\overset{w}{\longrightarrow}J(\nabla\psi(\mathbf{X}))$.
Note that $\|S_n(\mathcal{X}_n)J(R_n(\mathcal{X}_n))\|=\|J(R_n(\mathcal{X}_n))\|$ is uniformly integrable as: (i) $\E \|J(\mathbf{H}_n)\| \to \E \|J(\mathbf{H})\| $, (ii) $J(\mathbf{H}_n) \stackrel{w}{\longrightarrow} J(\mathbf{H})$, and (iii) an application of Scheff{\' e}'s lemma. Hence, we have the convergence of the first moment:
$$\begin{aligned}
\frac{1}{n}\sum_{i=1}^n S_n(\mathbf{X}_i) J({R}_n(\mathbf{X}_i) ) \to \mathbb{E} [J(\nabla\psi(\mathbf{X}))].
\end{aligned}$$ 
\end{proof}

\subsection{Proof of \cref{rk:classical_Wilcoxon}}\label{sec:pf_classical_Wilcoxon}
Under the assumptions of \cref{rk:classical_Wilcoxon}, $S(X,R(X))={\rm sign}(X)$ and
$R(X) = F_{|X|}(|X|)$, since ${\rm sign}(\cdot)F_{|X|}(|\cdot|)$  transports $\mu_S$ to ${\rm Uniform}(-1,1)$
and appears as the gradient of a convex function.
Hence from \cref{prop:consistency_signed_rank}, the GWSR test is consistent against $\mathbb{E}\left[ {\rm sign}(X)F_{|X|}(|X|)\right] \neq  0$. Suppose $f$ is the density function of $X$. Then:
$$\begin{aligned}
&\quad\mathbb{E}\left[ {\rm sign}(X)F_{|X|}(|X|)\right] >  0\\
& \Leftrightarrow \int_0^\infty \left[ \frac{f(x)}{f(x)+f(-x)} - \frac{f(-x)}{f(x)+f(-x)}\right] F_{|X|}(x)\left[f(x)+f(-x)\right]{\rm d}x > 0\\
&\Leftrightarrow\int_0^\infty \left[ f(x)-f(-x)\right] F_{|X|}(x){\rm d}x > 0\\
&\Leftrightarrow\int_0^\infty \left[ f(x)-f(-x)\right] \int_{0}^x \left[f(y)+f(-y)\right]{\rm d}y {\rm d}x > 0\\
&\Leftrightarrow\int_{0\leq y\leq x} \left[ f(x)-f(-x)\right]\left[f(y)+f(-y)\right] {\rm d}x{\rm d}y >0\\
&\Leftrightarrow\int_{0\leq y\leq x}  f(x)\left[f(y)+f(-y)\right] {\rm d}x{\rm d}y > \int_{0\leq y\leq x}f(-x)\left[f(y)+f(-y)\right] {\rm d}x{\rm d}y\\
&\Leftrightarrow \mathbb{P}(0\leq X_2\leq X_1) + \mathbb{P}(0\leq -X_2\leq X_1) >
\mathbb{P}(0\leq X_2\leq -X_1) + \mathbb{P}(0\leq -X_2\leq -X_1) \\
&\Leftrightarrow \mathbb{P}(-X_1\leq X_2\leq X_1) > \mathbb{P}(X_1\leq X_2\leq -X_1) \\
&\Leftrightarrow \mathbb{P}(X_1+X_2\geq 0, X_2\leq X_1) > \mathbb{P}(X_1+ X_2\leq 0, X_1\leq X_2) \\
&\Leftrightarrow \frac{1}{2}\mathbb{P}(X_1+X_2\geq 0) > \frac{1}{2} \mathbb{P}(X_1+ X_2\leq 0)\\
&\Leftrightarrow \mathbb{P}(X_1+X_2 > 0) > \frac{1}{2}.
\end{aligned}$$
By replacing $X$ with $-X$, we have:
$$\mathbb{E}\left[ {\rm sign}(X)F_{|X|}(|X|)\right] <  0\quad \Leftrightarrow\quad  \mathbb{P}(X_1+X_2 > 0) < \frac{1}{2},$$
which concludes the proof.
\qed

\subsection{Proof of \cref{cor:location_shift}}\label{sec:pf_location_shift}
By \cref{prop:consistency_signed_rank}, we only need to show that $ \mathbb{E}[\nabla\psi(\mathbf{X})]\neq \mathbf{0}$.

We begin by showing $ \nabla\psi(\mathbf{X})$ and $ \nabla\psi(\mathbf{X}-2\mathbf{\Delta}) $ are integrable.
Note that $\E[\|\nabla\psi(\mathbf{X})\|] = \E[\|\mathbf{H}\|]$ where $\mathbf{H} \sim \nu$; hence $\nabla\psi(\mathbf{X})\in L^1$.
Since $-I\in\G$ (by Assumption \ref{assump:G}-$(ii)$),
\begin{equation}\label{eq:eqdist}
2\mathbf{\Delta}- \mathbf{X} = 
\mathbf{\Delta} + (\mathbf{\Delta}-\mathbf{X})\overset{d}{=}
\mathbf{\Delta} + (\mathbf{X}-\mathbf{\Delta})=\mathbf{X}.\end{equation}
Since $\nabla \psi$ is equivariant under the group action of $\G$,
$$\nabla\psi(\mathbf{X}-2\mathbf{\Delta}) =- \nabla\psi(2\mathbf{\Delta}-\mathbf{X}) \overset{d}{=} -\nabla\psi( \mathbf{X}) \in L^1.$$
To show $\mathbb{E}\left[\nabla\psi(\mathbf{X})\right] \ne \mathbf{0}$ we argue by contradiction. If $\mathbb{E}\left[\nabla\psi(\mathbf{X})\right]=\mathbf{0}$, then \eqref{eq:eqdist} would imply
$\mathbb{E}\left[\nabla\psi( \mathbf{X}-2\mathbf{\Delta})\right]=\mathbf{0}$
and thus:
$$\mathbb{E}\left[  \mathbf{\Delta}^\top \left\{ \nabla\psi(\mathbf{X}) - \nabla\psi(\mathbf{X}-2\mathbf{\Delta})  \right\}\right]= 0.$$
By the monotonicity of gradients of convex functions:
$2\mathbf{\Delta}^\top \left\{ \nabla\psi(\mathbf{X}) - \nabla\psi(\mathbf{X}-2\mathbf{\Delta})  \right\}\geq 0$.
This implies $\mathbf{\Delta}^\top \left\{ \nabla\psi(\mathbf{X}) - \nabla\psi(\mathbf{X}-2\mathbf{\Delta})  \right\}=0$ a.s.

Let $A$ be an open set of positive measure w.r.t.~$\mu$ where $\psi$ is strictly convex. Let $\mathbf{x}_0 \in A$.
Consider $g(t) := \psi(\mathbf{x}_0 - t \mathbf{\Delta})$, $t\in \R$, which is a one-dimensional convex function, and is strictly convex in a neighborhood of 0.
This implies $g'(2) > g'(0)$, i.e., $\mathbf{\Delta}^\top \left\{ \nabla\psi(\mathbf{x}_0) - \nabla\psi(\mathbf{x}_0-2\mathbf{\Delta})  \right\}>0$. As $\mathbf{\Delta}^\top \left\{ \nabla\psi(\mathbf{x}_0) - \nabla\psi(\mathbf{x}_0-2\mathbf{\Delta})  \right\}\geq 0$ for $\mathbf{x}_0\notin A$, we obtain 
$$\begin{aligned}
& \quad \mathbb{E}\left[  \mathbf{\Delta}^\top \left\{ \nabla\psi(\mathbf{X}) - \nabla\psi(\mathbf{X}-2\mathbf{\Delta})  \right\}\right] \\
& = \mathbb{E}\left[  \mathbf{\Delta}^\top \left\{ \nabla\psi(\mathbf{X}) - \nabla\psi(\mathbf{X}-2\mathbf{\Delta})  \right\} \mathbf{1}_A(\mathbf{X}) \right] + \mathbb{E}\left[  \mathbf{\Delta}^\top \left\{ \nabla\psi(\mathbf{X}) - \nabla\psi(\mathbf{X}-2\mathbf{\Delta})  \right\} \mathbf{1}_{A^c}(\mathbf{X}) \right]>0,
\end{aligned}$$
which yields a contradiction.


\qed

\subsection{Proof of \cref{lem:population_signed_rank}}\label{sec:pf_lem:population_signed_rank}
Observe that
\begin{equation}\label{eq:L2diff}\begin{aligned}
&\quad \E \left\|\frac{1}{\sqrt{n}}\sum_{i=1}^n S_n(\mathbf{X}_i)  J(R_n(\mathbf{X}_i))
- \frac{1}{\sqrt{n}}\sum_{i=1}^n S(\mathbf{X}_i,R(\mathbf{X}_i))  J(R(\mathbf{X}_i))\right\|^2\\
&=\frac{1}{n} \E \left\|\sum_{i=1}^n S_n(\mathbf{X}_i)  J(R_n(\mathbf{X}_i)) \right\|^2
+ \frac{1}{n} \E \left\|\sum_{i=1}^n S(\mathbf{X}_i,R(\mathbf{X}_i))  J(R(\mathbf{X}_i)) \right\|^2\\
&\quad -\frac{2}{n}\E \left( \sum_{i=1}^n S_n(\mathbf{X}_i)  J(R_n(\mathbf{X}_i)) \right)^\top \left(\sum_{i=1}^n S(\mathbf{X}_i,R(\mathbf{X}_i))  J(R(\mathbf{X}_i))\right).
\end{aligned}\end{equation}
Since $(S_n(\mathbf{X}_1),\ldots,S_n(\mathbf{X}_n))\indep (R_n(\mathbf{X}_1),\ldots,R_n(\mathbf{X}_n))$ and follows i.i.d.\ ${\rm Uniform}(\G)$,
$$\frac{1}{n} \E \left\|\sum_{i=1}^n S_n(\mathbf{X}_i)  J(R_n(\mathbf{X}_i)) \right\|^2 =\frac{1}{n} \sum_{i=1}^n \E\left\| S_n(\mathbf{X}_i)  J(R_n(\mathbf{X}_i)) \right\|^2 = 
\E\left\| J(R_n(\mathbf{X}_1)) \right\|^2,$$
as the cross terms all have expectation 0 (by \cref{assump:G}-$(i)$).

By the $\G$-symmetry of $\mathbf{X}_i$ and the equivariance of $\nabla\psi$ and $J$ (here $S \sim {\rm Uniform}(\G)$),
$$\E[S(\mathbf{X}_i,R(\mathbf{X}_i))  J(R(\mathbf{X}_i))] = \E[J(\nabla\psi(\mathbf{X}_i))] =\E[J(\nabla\psi(S\mathbf{X}_i))] =  \E[S]\E[J(\nabla\psi(\mathbf{X}_i))] = \mathbf{0}.$$
This implies
$$\frac{1}{n} \E \left\|\sum_{i=1}^n S(\mathbf{X}_i,R(\mathbf{X}_i))  J(R(\mathbf{X}_i)) \right\|^2 = \frac{1}{n}
\sum_{i=1}^n \E \left\| S(\mathbf{X}_i,R(\mathbf{X}_i))  J(R(\mathbf{X}_i)) \right\|^2
=\E \left\|J(R(\mathbf{X}_1)) \right\|^2.$$
To simplify the third term in \eqref{eq:L2diff}, we first show that for $i\neq j$,
$$\E \left[\Big(  S_n(\mathbf{X}_i)  J(R_n(\mathbf{X}_i)) \Big)^\top \Big( S(\mathbf{X}_j,R(\mathbf{X}_j))  J(R(\mathbf{X}_j))\Big) \right]= 0.$$
We first condition on the $\mathbf{X}_1,\ldots,\mathbf{X}_{j-1},\mathbf{X}_{j+1},\ldots,\mathbf{X}_n$ and the orbit of $\mathbf{X}_j$, i.e., $\{Q\mathbf{X}_j:Q\in \G\}$.
Then $S_n(\mathbf{X}_i)J(R_n(\mathbf{X}_i))$ is known, and $\mathbf{X}_j$ is still equally likely to take any element in the orbit, due to the $\G$ symmetry of $\mathbf{X}_j$.
Hence, by the equivariance of $\nabla\psi$ and $J(\cdot)$,
$S(\mathbf{X}_j,R(\mathbf{X}_j))J(R(\mathbf{X}_j))=J(\nabla\psi(\mathbf{X}_j))$ has conditional expectation $\mathbf{0}$ and consequently the unconditional expectation is $\mathbf{0}$ as well.
Hence,
$$\begin{aligned}
&\quad \frac{1}{n}\E \left( \sum_{i=1}^n S_n(\mathbf{X}_i)  J(R_n(\mathbf{X}_i)) \right)^\top \left(\sum_{i=1}^n S(\mathbf{X}_i,R(\mathbf{X}_i))  J(R(\mathbf{X}_i))\right)\\
&=\frac{1}{n} \sum_{i=1}^n \E \Big(  S_n(\mathbf{X}_i)  J(R_n(\mathbf{X}_i)) \Big)^\top \Big( S(\mathbf{X}_i,R(\mathbf{X}_i))  J(R(\mathbf{X}_i))\Big)\\
&=\frac{1}{n} \sum_{i=1}^n \E \left[ J(S_n(\mathbf{X}_i)R_n (\mathbf{X}_i))^\top  J(\nabla\psi(\mathbf{X}_i))\right].
\end{aligned}$$
Recall that from~\cref{lem:joint_weak} if $\mathcal{X}_n\sim \mu_n(\omega)$, we have $(\mathcal{X}_n,S_n(\mathcal{X}_n)R_n(\mathcal{X}_n))\overset{w}{\longrightarrow}(\mathbf{X},\nabla\psi(\mathbf{X}))$. Therefore, by continuous mapping theorem, by~\cref{assump:equivarJ},
$$\begin{aligned}
&\quad J(S_n(\mathcal{X}_n)R_n (\mathcal{X}_n))^\top  J(\nabla\psi(\mathcal{X}_n))
+ \|J(S_n(\mathcal{X}_n)R_n (\mathcal{X}_n))\|^2 + \|J(\nabla\psi(\mathcal{X}_n))\|^2\\
&\overset{w}{\longrightarrow}J(\nabla\psi(\mathbf{X}))^\top  J(\nabla\psi(\mathbf{X}))
+ \|J(\nabla\psi(\mathbf{X}))\|^2 + \|J(\nabla\psi(\mathbf{X}))\|^2.
\end{aligned}$$
Since the above random variables are nonnegative, by Fatou's lemma, the expectation w.r.t.\ $\mu_n(\omega)$ satisfies:
$$
\begin{aligned}
&\liminf_{n\to\infty}\frac{1}{n}\sum_{i=1}^n\left[J(S_n(\mathbf{X}_i)R_n (\mathbf{X}_i))^\top  J(\nabla\psi(\mathbf{X}_i))
+ \|J(S_n(\mathbf{X}_i)R_n (\mathbf{X}_i))\|^2 + \|J(\nabla\psi(\mathbf{X}_i))\|^2 \right]\\
&\geq \E \left[ 3\left\|J(\nabla\psi(\mathbf{X})) \right\|^2\right]= 3\E \left\|J(R(\mathbf{X})) \right\|^2.
\end{aligned}$$
Note that the above holds for a.e.~$\omega$. Hence we can take the usual expectation and again by Fatou's lemma:
$$\begin{aligned}
&\quad \liminf_{n\to\infty}\frac{1}{n} \sum_{i=1}^n \E \left[ J(S_n(\mathbf{X}_i)R_n (\mathbf{X}_i))^\top  J(\nabla\psi(\mathbf{X}_i))+ \|J(S_n(\mathbf{X}_i)R_n (\mathbf{X}_i))\|^2 + \|J(\nabla\psi(\mathbf{X}_i))\|^2\right]\\
&\geq \mathbb{E}\left[ \liminf_{n\to\infty}\frac{1}{n} \sum_{i=1}^n \left[J(S_n(\mathbf{X}_i)R_n (\mathbf{X}_i))^\top  J(\nabla\psi(\mathbf{X}_i))
+ \|J(S_n(\mathbf{X}_i)R_n (\mathbf{X}_i))\|^2 + \|J(\nabla\psi(\mathbf{X}_i))\|^2 \right] \right]\\
&\geq \E \left[3\E \left\|J(R(\mathbf{X})) \right\|^2\right] =3\E \left\|J(R(\mathbf{X})) \right\|^2.
\end{aligned}$$

Note that
$\E\left\| J(S_n(\mathbf{X}_i)R_n(\mathbf{X}_i)) \right\|^2 = \E\left\| J(R_n(\mathbf{X}_1)) \right\|^2$ for all $i$, and they converge to $\E \left\|J(R(\mathbf{X})) \right\|^2$ following from the convergence of the second moments of $J\# \nu_n$ (\cref{assump:nu_n-nu}-$(ii)$). Therefore,

$$\liminf_{n\to\infty}\frac{1}{n} \sum_{i=1}^n \E \left[ J(S_n(\mathbf{X}_i)R_n (\mathbf{X}_i))^\top  J(\nabla\psi(\mathbf{X}_i))\right]
\geq \E \left\|J(R(\mathbf{X})) \right\|^2.$$
Aggregating the previous discussions, we have
$$\begin{aligned}
&\quad \limsup_{n\to\infty} \E \left\|\frac{1}{\sqrt{n}}\sum_{i=1}^n S_n(\mathbf{X}_i)  J(R_n(\mathbf{X}_i))
- \frac{1}{\sqrt{n}}\sum_{i=1}^n S(\mathbf{X}_i,R(\mathbf{X}_i))  J(R(\mathbf{X}_i))\right\|^2\\
&=\limsup_{n\to\infty}  \Bigg[ \E\left\| J(R_n(\mathbf{X}_1)) \right\|^2 + \E \left\|J(R(\mathbf{X}_1)) \right\|^2 - \frac{2}{n} \sum_{i=1}^n \E \left[ J(S_n(\mathbf{X}_i)R_n (\mathbf{X}_i))^\top  J(\nabla\psi(\mathbf{X}_i))\right]\Bigg]\\
&\leq  \lim_{n\to\infty}\E\left\| J(R_n(\mathbf{X}_1)) \right\|^2 + \E \left\|J(R(\mathbf{X}_1)) \right\|^2 - 2\liminf_{n\to\infty}\frac{1}{n} \sum_{i=1}^n \E \left[ J(S_n(\mathbf{X}_i)R_n (\mathbf{X}_i))^\top  J(\nabla\psi(\mathbf{X}_i))\right]\\
&\leq \E \left\|J(R(\mathbf{X})) \right\|^2 + \E \left\|J(R(\mathbf{X})) \right\|^2 - 2\E \left\|J(R(\mathbf{X})) \right\|^2=0,
\end{aligned}$$ which completes the proof.
\qed

\subsection{Proof of \cref{thm:asymptotics_local_alternatives}}\label{sec:pf_thm:asymptotics_local_alternatives}
Let $Q_n$ be the law of  $(\mathbf{X}_1,\ldots,\mathbf{X}_n)$ under $\textrm{H}_1$,
and $P_n$ be the law of  $(\mathbf{X}_1,\ldots,\mathbf{X}_n)$ under $\textrm{H}_0$.
Under the regularity conditions in \cref{sec:regu}, if ${\B \xi}_n \to{\B \xi}\in\R^p$, then by \citet[Theorem 7.2]{{vanderVaart1998}}, under $\textrm{H}_0$:
$$\log \frac{{\rm d}Q_n}{{\rm d}P_n} = \sum_{i=1}^n\log \frac{f(\mathbf{X}_i- {\B \xi}_n/\sqrt{n})}{f(\mathbf{X}_i)} = - \frac{1}{\sqrt{n}}\sum_{i=1}^n \frac{{\B \xi}^\top \nabla f (\mathbf{X}_i)}{f(\mathbf{X}_i)} - \frac{1}{2}{\B \xi}^\top I_{{\B \theta}_0} {\B \xi} + o_p(1),$$
where $ I_{{\B \theta}_0}$ is the Fisher information at ${\B \theta}_0=\mathbf{0}$.

By \cref{lem:population_signed_rank} and the multivariate central limit theorem:
$$\begin{aligned}
&\quad \begin{pmatrix}
\mathbf{W}_n\\
\log \frac{{\rm d}Q_n}{{\rm d}P_n}\\
\end{pmatrix}
=
\begin{pmatrix}
\frac{1}{\sqrt{n}}\sum_{i=1}^n J(\nabla\psi(\mathbf{X}_i))\\
-\frac{1}{\sqrt{n}}\sum_{i=1}^n \frac{{\B \xi}^\top \nabla f (\mathbf{X}_i)}{f(\mathbf{X}_i)} - \frac{1}{2}{\B \xi}^\top I_{{\B \theta}_0} {\B \xi}\\
\end{pmatrix} + o_p(1)\\
&\overset{d}{\longrightarrow}
N\left(
\begin{pmatrix}
\mathbf{0}\\
-\frac{{\B \xi}^\top I_{{\B \theta}_0} {\B \xi} }{2}\\
\end{pmatrix},
\begin{pmatrix}
{\rm Var}(J(\nabla\psi(\mathbf{X}))) & {\B \gamma}\\
{\B \gamma}^\top &{\B \xi}^\top I_{{\B \theta}_0} {\B \xi}\\
\end{pmatrix}
\right),
\end{aligned}$$
where ${\B \gamma} = -\mathbb{E}_{\textrm{H}_0} \left[ J(\nabla\psi(\mathbf{X}))\frac{{\B \xi}^\top \nabla f (\mathbf{X})}{f(\mathbf{X})}\right]$.
Now, by Le Cam's third lemma \citep[Example 6.7]{vanderVaart1998}, under $\textrm{H}_1$,
$$\mathbf{W}_n\overset{d}{\longrightarrow} N\left({\B \gamma},{\rm Var}\big(J(\nabla\psi(\mathbf{X}))\big)\right)= N\left({\B \gamma},\Sigma_{\rm ERD}\right).$$
\qed

\subsection{Proof for \cref{defn:ARE}}\label{sec:pf:defnARE}
We first consider a sequence ${\B \theta}_n =\frac{{\B \xi}_n}{\sqrt{n}} $, where ${\B \xi}_n\to{\B \xi}\neq \mathbf{0}$.
From \cref{sec:pf_thm:asymptotics_local_alternatives} we know that under ${\B \theta}_n =\frac{{\B \xi}_n}{\sqrt{n}} $,
$$\Sigma_{\rm ERD}^{-1/2} \mathbf{W}_n\overset{d}{\longrightarrow}
\Sigma_{\rm ERD}^{-1/2}{\B \gamma} + N(\mathbf{0},I_p),  \quad {\rm and}\quad \sqrt{n} S_n^{-1/2}\bar{\mathbf{X}}_n \overset{d}{\longrightarrow} \Sigma_\mathbf{X}^{-1/2}{\B \xi} + N(\mathbf{0}, I_p),$$
with ${\B \gamma}\equiv {\B \gamma}({\B \xi})=-\mathbb{E}_{\textrm{H}_0} \left[J(\nabla\psi(\mathbf{X}))\frac{{\B \xi}^\top \nabla f (\mathbf{X})}{f(\mathbf{X})}\right]$.

Hence, the level $\alpha$ GWSR test and Hotelling's $T^2$ test based on $n$ observations have powers converging to:
$$\mathbb{P}\left(\left\| \Sigma_{\rm ERD}^{-1/2}{\B \gamma}+ N(\mathbf{0}, I_p)   \right\|^2 \geq \chi_p^2(\alpha) \right),\quad {\rm and} \quad 
\mathbb{P}\left(\left\| \Sigma_\mathbf{X}^{-1/2}{\B \xi} + N(\mathbf{0}, I_p)   \right\|^2 \geq \chi_p^2(\alpha) \right),$$
respectively, where $\chi_p^2(\alpha)\in\R$ satisfies $\mathbb{P}\left(\chi_p^2 \geq \chi_p^2(\alpha) \right) = \alpha$.

If we use $C_1 n$ observations for the GWSR test with some $C_1 > 0$, then
under ${\B \theta}_n =\frac{{\B \xi}_n}{\sqrt{n}}=\frac{\sqrt{C_1}{\B \xi}_n}{\sqrt{C_1n}}$, where $\sqrt{C_1}{\B \xi}_n \to \sqrt{C_1} {\B \xi}$, the power of the level $\alpha$ GWSR test converges to
$$\mathbb{P}\left(\left\| \Sigma_{\rm ERD}^{-1/2}\sqrt{C_1}{\B \gamma}+ N(\mathbf{0}, I_p)   \right\|^2 \geq \chi_p^2(\alpha) \right).$$
Note that for a given $\beta \in (\alpha,1)$, there exists a unique $C_1>0$ such that
$$\mathbb{P}\left(\left\| \Sigma_{\rm ERD}^{-1/2}\sqrt{C_1}{\B \gamma}+ N(\mathbf{0}, I_p)   \right\|^2 \geq \chi_p^2(\alpha) \right)=\beta.$$
Similarly, there exists a unique $C_2>0$ such that under ${\B \theta}_n =\frac{{\B \xi}_n}{\sqrt{n}}$, the level $\alpha$ Hotelling's $T^2$ test based on $C_2 n$ observations has power converging to
$$\mathbb{P}\left(\left\| \Sigma_\mathbf{X}^{-1/2}\sqrt{C_2}{\B \xi} + N(\mathbf{0}, I_p)   \right\|^2 \geq \chi_p^2(\alpha) \right)=\beta.$$
This implies $\|\Sigma_{\rm ERD}^{-1/2}\sqrt{C_1}{\B \gamma}\| = \|\Sigma_\mathbf{X}^{-1/2}\sqrt{C_2}{\B \xi}\|$ in order to achieve the same power $\beta$, and
$$\lim_{n\to\infty}\frac{N_2\left(\alpha,\beta,\frac{{\B \xi}_n}{\sqrt{n}}\right)}{N_1\left(\alpha,\beta,\frac{{\B \xi}_n}{\sqrt{n}}\right)}=\lim_{n\to\infty}\frac{C_2 n + o(n)}{C_1 n + o(n)}= \frac{\|\Sigma_{\rm ERD}^{-1/2}{\B \gamma}\|^2}{\| \Sigma_\mathbf{X}^{-1/2}{\B \xi} \|^2}.$$
Note that for any $\varepsilon_n {\B \xi}_n \to 0$ such that $\varepsilon_n\to 0$ and ${\B \xi}_n \to {\B \xi}\neq 0$, any subsequence of $\{\varepsilon_n {\B \xi}_n \}_{n\geq 1}$ has a further subsequence that is contained in a sequence of the form $\{\frac{\tilde{{\B \xi}}_n}{\sqrt{n}}\}_{n\geq 1}$ with $\tilde{{\B \xi}}_n\to {\B \xi}$.
This implies the $\frac{1}{\sqrt{n}}$ rate is not important, and
we also have $\lim_{n\to\infty}\frac{N_2\left(\alpha,\beta,\varepsilon_n {\B \xi}_n\right)}{N_1\left(\alpha,\beta,\varepsilon_n {\B \xi}_n\right)} = \frac{\|\Sigma_{\rm ERD}^{-1/2}{\B \gamma}\|^2}{\| \Sigma_\mathbf{X}^{-1/2}{\B \xi} \|^2}$.

\subsection{Proof of \cref{prop:Gaussian}}\label{sec:pf_prop:Gaussian}
Note that all the score functions $J(\cdot)$ used in this proof satisfy \cref{assump:equivarJ}, and hence
\cref{thm:asymptotics_local_alternatives} applies.\\

\noindent Part 1.~(Gaussian ERD with identity score function):
From \cref{thm:population_rank} we know that
$\nabla\psi$ is the optimal transport map under the square loss that transports $\mu=\mu_S$ to $\nu_S=N(\mathbf{0},I)$ under ${\rm H}_0$,  and it is also the unique map pushing $\mu$ to $N(\mathbf{0},I)$ that appears as the gradient of a convex function \citep{McCann1995}.

In this Gaussian case, $\nabla\psi(\mathbf{x}) = \Sigma^{-1/2}\mathbf{x}$, which maps $N(\mathbf{0},\Sigma)$ to $N(\mathbf{0},I)$, and is the gradient of a convex function. Let $\mathbf{b}:=\Sigma^{-1/2}{\B \xi} $ and $\mathbf{G}\sim N(0,I)$. Then:
$$\begin{aligned}\|\Sigma_{\rm ERD}^{-1/2}{\B \gamma}\|^2 = \|{\B \gamma}\|^2 = \left\|\mathbb{E}_{\textrm{H}_0} \left[ \nabla\psi(\mathbf{X})\frac{{\B \xi}^\top \nabla f (\mathbf{X})}{f(\mathbf{X})}\right]\right\|^2
&=\left\| \mathbb{E} \left[ \Sigma^{-1/2}\mathbf{X} \cdot {\B \xi}^\top \Sigma^{-1}\mathbf{X}\right] \right\|^2\\
&=\left\| \mathbb{E} \left[ \mathbf{G} \cdot \mathbf{b}^\top \mathbf{G}\right] \right\|^2\\
&=\|\mathbf{b}\|^2=\|\Sigma^{-1/2}{\B \xi} \|^2.
\end{aligned}$$
This implies ${\rm ARE}(\mathbf{W}_n,\bar{\mathbf{X}}_n;{\B \xi})=1$.\\

\noindent Part 2.~(Uniform ERD): 
In this case, we still have $\nu_S= N(\mathbf{0},I)$ 
so $\nabla\psi(\mathbf{x}) = \Sigma^{-1/2}\mathbf{x}$.
For ${\rm ERD}={\rm Uniform}(-1,1)^p$, $\Sigma_{\rm ERD}=\frac{1}{3} I$. 
Using the integration by parts formula,
$$\begin{aligned}\|\Sigma_{\rm ERD}^{-1/2}{\B \gamma}\|^2 &= 3\|{\B \gamma}\|^2 =3 \left\|\mathbb{E}_{\textrm{H}_0} \left[ J(\nabla\psi(\mathbf{X}))\frac{{\B \xi}^\top \nabla f (\mathbf{X})}{f(\mathbf{X})}\right]\right\|^2\\
&=3\sum_{j=1}^p\left(\int J(\nabla\psi(\mathbf{x}))_j \sum_{i=1}^p\xi_i \frac{\partial}{\partial x_i}f(\mathbf{x}) {\rm d}\mathbf{x} \right)^2\\
&= 3\sum_{j=1}^p \left(\sum_{i=1}^p  \E \left[\xi_i\frac{\partial}{\partial X_i} J(\nabla \psi(\mathbf{X}))_j \right]\right)^2\\
&=3\sum_{j=1}^p \left(\sum_{i=1}^p  \E \left[\xi_i\frac{\partial}{\partial X_i} \left(2\Phi(e_j^\top \Sigma^{-1/2}\mathbf{X})-1\right)\right]\right)^2\\
&=3\sum_{j=1}^p \left(\sum_{i=1}^p  \E \left[2\xi_i \Sigma^{-1/2}_{ji} \phi(e_j^\top \Sigma^{-1/2}\mathbf{X})\right]\right)^2,
\end{aligned}$$
where in the third line, we have used the integration by parts. Here
$\phi(\cdot)$ denotes the density function of the standard normal distribution.
From~\citet[Eq.~(C.32) in Appendix]{deb2021efficiency}, $\E \left[\phi(e_j^\top \Sigma^{-1/2}\mathbf{X})\right] = \frac{1}{2\sqrt{\pi}}$.
Hence, the above equation can be further simplified to
$$\begin{aligned}
3\sum_{j=1}^p \left(\sum_{i=1}^p  \xi_i \Sigma^{-1/2}_{ji} \frac{1}{\sqrt{\pi}}\right)^2=\frac{3}{\pi}\|\Sigma^{-1/2}{\B \xi} \|^2.
\end{aligned}$$ which yields the desired result. \newline

\noindent Part 3.~(Spherical uniform ERD with identity score function):
Write $\nabla\psi(\mathbf{x})=(r_1(\mathbf{x}),\ldots,r_p(\mathbf{x}))$.
For the spherical uniform distribution $\Sigma_{\rm ERD} = \frac{1}{3p} I$.
Using the integration by parts formula,
$$\begin{aligned}\|\Sigma_{\rm ERD}^{-1/2}{\B \gamma}\|^2 &= 3p\|{\B \gamma}\|^2 =3p \left\|\mathbb{E}_{\textrm{H}_0} \left[ \nabla\psi(\mathbf{X})\frac{{\B \xi}^\top \nabla f (\mathbf{X})}{f(\mathbf{X})}\right]\right\|^2\\
&=3p\sum_{j=1}^p\left(\sum_{i=1}^p  \E \left[\xi_i\frac{\partial}{\partial X_i} r_j(\mathbf{X}) \right]\right)^2\\
&=3\|\Sigma^{-1/2}{\B \xi}\|^2\cdot \frac{1}{p}\left[\frac{1}{2^{p-1}}\cdot\frac{\Gamma(p-0.5)}{(\Gamma(p/2))^2} +\frac{\sqrt{2\pi}(p-1)}{2^{p/2}\Gamma(p/2)}\E[|G_1|^{p-2}]-\mathcal{C}_p  \right]^2,
\end{aligned}$$
where the last equality follows from~\citet[Eq.~(C.44) in Appendix]{deb2021efficiency} with the notations therein. 
Hence,
${\rm ARE}(\mathbf{W}_n,\bar{\mathbf{X}}_n;{\B \xi})=\kappa_p$.
\qed

\subsection{Proof of \cref{thm:indepcomp}}\label{sec:pf_thm:indepcomp}
\noindent Part 1. (${\rm ERD} = {\rm Uniform}(-1,1)^p$ with identity score function):
In this case $\Sigma_{\rm ERD} = \frac{1}{3}I$, and it can be shown that
$$\nabla\psi(\mathbf{x}) = \left(2F_1(x_1)-1,\ldots,2F_p(x_p)-1 \right)^\top,$$
where $F_i$ is the cumulative distribution function (cdf) for $\tilde{f}_i$. To see this, note that the above map pushes $\mu$ to ${\rm Uniform}(-1,1)^p$, and is the gradient of the convex function: $2\sum_{i=1}^p \int_{-\infty}^{x_i}F_i(t_i){\rm d}{t_i} - \mathbf{1}^\top \mathbf{x}$. Hence,
$$\begin{aligned}
\left\|\Sigma_{\rm ERD}^{-1/2}\mathbb{E}_{\textrm{H}_0} \left[ \nabla\psi(\mathbf{X})\frac{{\B \xi}^\top \nabla f (\mathbf{X})}{f(\mathbf{X})}\right]\right\|^2 &= 3\left\|\mathbb{E}_{\textrm{H}_0} \left[ \nabla\psi(\mathbf{X})\frac{{\B \xi}^\top \nabla f (\mathbf{X})}{f(\mathbf{X})}\right]\right\|^2\\
&=3\sum_{i=1}^p \left( \E_{\textrm{H}_0}\left[(2F_i(X_i)-1)\frac{{\B \xi}^\top \nabla f (\mathbf{X})}{f(\mathbf{X})} \right] \right)^2\\
&=12\sum_{i=1}^p \left( \E_{\textrm{H}_0}\left[F_i(X_i)\frac{{\B \xi}^\top \nabla f (\mathbf{X})}{f(\mathbf{X})} \right] \right)^2\\
&=12\sum_{i=1}^p \left( \int F_i(x_i)\xi_i  \tilde{f}_i' (x_i){\rm d}x_i \right)^2\\
&=12\sum_{i=1}^p \left(\xi_i \int  \tilde{f}_i^2 (x_i){\rm d}x_i \right)^2.
\end{aligned}$$
Let $\sigma_i^2$ be the variance of $F_i$. Then $\|\Sigma_\mathbf{X}^{-1/2}{\B \xi}\|^2 = {\B \xi}^\top \Sigma_\mathbf{X}^{-1}{\B \xi} = \sum_{i=1}^p \frac{\xi_i^2}{\sigma^2_i}$.

From the derivation of ARE of Wilcoxon's test w.r.t.\ the $t$-test in one-dimensional case \citep{Hodges1956efficiency}, we know:
$$\sigma_i^2 \left(\int \tilde{f}_i^2(x_i){\rm d}x\right)^2 \geq \frac{9}{125}.$$
This implies
$${\rm ARE}(\mathbf{W}_n,\bar{\mathbf{X}}_n;{\B \xi})=\frac{\left\|\Sigma_{\rm ERD}^{-1/2}\mathbb{E}_{\textrm{H}_0} \left[ \nabla\psi(\mathbf{X})\frac{{\B \xi}^\top \nabla f (\mathbf{X})}{f(\mathbf{X})}\right]\right\|^2}{\| \Sigma_\mathbf{X}^{-1/2}{\B \xi} \|^2}\geq \frac{108}{125}.$$ \newline

\noindent Part 2.~(Gaussian ERD with identity score function): In this case $\Sigma_{\rm ERD} = I$, and
$$\nabla\psi(\mathbf{x}) = \left(\Phi^{-1} \circ F_1(x_1),\ldots,\Phi^{-1} \circ F_p(x_p) \right).$$
To see this, note that the above map pushes $\mu$ to $N(\mathbf{0},I)$, and is the gradient of the convex function  $\psi(\mathbf{x}) := \sum_{i=1}^p \int_{-\infty}^{x_i}\Phi^{-1}\circ F_i(t_i){\rm d}{t_i}$.
Hence,
\begin{equation}\label{eq:normal_bound}\begin{aligned}
\left\|\Sigma_{\rm ERD}^{-1/2}\mathbb{E}_{\textrm{H}_0} \left[ \nabla\psi(\mathbf{X})\frac{{\B \xi}^\top \nabla f (\mathbf{X})}{f(\mathbf{X})}\right]\right\|^2 &= \left\|\mathbb{E}_{\textrm{H}_0} \left[ \nabla\psi(\mathbf{X})\frac{{\B \xi}^\top \nabla f (\mathbf{X})}{f(\mathbf{X})}\right]\right\|^2\\
&=\sum_{i=1}^p \left( \E_{\textrm{H}_0}\left[\Phi^{-1} \big(F_i(X_i)\big)\frac{{\B \xi}^\top \nabla f (\mathbf{X})}{f(\mathbf{X})} \right] \right)^2\\
&=\sum_{i=1}^p \left( \int \Phi^{-1} \big(F_i(x_i)\big)\xi_i  \tilde{f}_i' (x_i)\,{\rm d}x_i \right)^2\\
&=\sum_{i=1}^p \left(\xi_i \int \frac{1}{\phi\left(\Phi^{-1} \big(F_i(x_i)\big)\right)} \tilde{f}_i^2 (x_i)\,{\rm d}x_i \right)^2,
\end{aligned}\end{equation}
where the last equality follows from integration by parts, with $\phi$ being the density of $N(0,1)$.
It follows from the optimization problem in one-dimension \citep[Theorem 2.1]{Gastwirth1968elementary} that
\begin{equation}\label{eq:normal_bound_inf}
\sigma_i^2  \left( \int \frac{1}{\phi\left(\Phi^{-1} \big(F_i(x_i)\big)\right)} \tilde{f}_i^2 (x_i)\,{\rm d}x_i \right)^2 \geq 1,
\end{equation}
and the minimizing density is that of $N(0,\sigma^2)$ for any $\sigma>0$. This implies:
\begin{equation}\label{eq:indepARE}
{\rm ARE}(\mathbf{W}_n,\bar{\mathbf{X}}_n;{\B \xi})= \frac{\sum_{i=1}^p \left(\xi_i \int \frac{1}{\phi\left(\Phi^{-1} \big(F_i(x_i)\big)\right)} \tilde{f}_i^2 (x_i)\,{\rm d}x_i \right)^2}{ \sum_{i=1}^p \frac{\xi_i^2}{\sigma^2_i}}\geq 1.
\end{equation}
\qed

\subsection{Proof of \cref{thm:ellip}}\label{sec:pf_ellip}
We assume the same regularity conditions on $\underline{f}$ as in
\citet[Appendix C.5.3]{deb2021efficiency}:
\begin{itemize}
    \item $\int_{\R^+}r^{d+1}\underline{f}(r){\rm d}r<\infty$.
    \item $\sqrt{\underline{f}}(\cdot)$ admits a weak derivative, which is denoted by $\left(\sqrt{\underline{f}}\right)'(\cdot)$. This means that for all $\psi:\R^+\to\R^+ $ which is compactly supported and infinitely differentiable, we have
    $$\int_{\R^+} \sqrt{\underline{f}(r)}\psi'(r)\,{\rm d}r = -\int_{\R^+}\left(\sqrt{\underline{f}}\right)'(r)\psi(r){\rm d}r. $$
    \item $\int_{\R^+} r^{d-1}\left[\left(\sqrt{\underline{f}}\right)'(r)\right]^2\, {\rm d}r <\infty .$
\end{itemize}

\noindent Part 1. (Spherical uniform ERD with identity score):
Write $\bar{\mathbf{X}}:=\Sigma^{-1/2}(\mathbf{X}-{\B \theta})$, which has a spherically symmetric distribution under the null, and its covariance matrix is given by
$\frac{1}{p} \E\|\bar{\mathbf{X}}\|^2 I_p$. Hence, $\Sigma_\mathbf{X} = \frac{1}{p} \E\|\bar{\mathbf{X}}\|^2\cdot  \Sigma$ and 
$$\| \Sigma_\mathbf{X}^{-1/2}{\B \xi} \|^2 = {\B \xi}^\top \Sigma_\mathbf{X}^{-1}{\B \xi}=\frac{p}{ \E\|\bar{\mathbf{X}}\|^2}\cdot {\B \xi}^\top \Sigma^{-1}{\B \xi}.$$
From \citet[Eq.~(C.55) in Appendix]{deb2021efficiency},
$$\left\|\Sigma_{\rm ERD}^{-1/2}\mathbb{E}_{\textrm{H}_0} \left[ \nabla\psi(\mathbf{X})\frac{{\B \xi}^\top \nabla f (\mathbf{X})}{f(\mathbf{X})}\right]\right\|^2=
\frac{3}{p}\cdot {\B \xi}^\top \Sigma^{-1}{\B \xi} \left( \E\bar{h}\left( \|\bar{\mathbf{X}}\|\right)+(p-1)\E\left[ \frac{\bar{H}\left(\|\bar{\mathbf{X}}\|\right)}{\|\bar{\mathbf{X}}\|}\right]\right)^2,$$
where $\bar{H}$ is the distribution function of $\|\bar{\mathbf{X}}\|$, and $\bar{h}$ is the corresponding density function.
Hence,

$$\begin{aligned}
{\rm ARE}(\mathbf{W}_n,\bar{\mathbf{X}}_n;{\B \xi})&=\frac{\left\|\Sigma_{\rm ERD}^{-1/2}\mathbb{E}_{\textrm{H}_0} \left[ \nabla\psi(\mathbf{X})\frac{{\B \xi}^\top \nabla f (\mathbf{X})}{f(\mathbf{X})}\right]\right\|^2}{\| \Sigma_\mathbf{X}^{-1/2}{\B \xi} \|^2}\\
&=\frac{3}{p^2}\cdot \E \|\bar{\mathbf{X}}\|^2 \cdot \left(\E\bar{h}\left(\|\bar{\mathbf{X}}\|\right) + (p-1) \E\left[ \frac{\bar{H}\left(\|\bar{\mathbf{X}}\|\right)}{\|\bar{\mathbf{X}}\|}\right]\right)^2\\
&\geq \frac{81}{500}\cdot \frac{(\sqrt{2p-1}+1)^5}{p^2(\sqrt{2p-1}+5)},
\end{aligned}$$
where the inequality follows from the optimization problem solved in \citet[Eq.~(C.56) in Appendix]{deb2021efficiency}. The density that attains the minimum is also given in \citet[Appendix C.5.3]{deb2021efficiency}. \newline

\noindent Part 2. (Gaussian ERD with identity score function):
From \citet[Lemma C.2]{deb2021efficiency},
$$\nabla\psi(\mathbf{X})=\frac{\Sigma^{-1/2}(\mathbf{X}-{\B \theta})}{\|\Sigma^{-1/2}(\mathbf{X}-{\B \theta}) \|}\cdot H_p^{-1}\left(\bar{H}\left(\|\Sigma^{-1/2}(\mathbf{X}-{\B \theta})\| \right) \right),$$
where $H_p$ is the distribution function of a $\sqrt{\chi_p^2}$ distribution.
From \citet[Eq.~(C.60) in Appendix]{deb2021efficiency}:
$$\begin{aligned}
{\rm ARE}(\mathbf{W}_n,\bar{\mathbf{X}}_n;{\B \xi})&=\frac{\left\|\Sigma_{\rm ERD}^{-1/2}\mathbb{E}_{\textrm{H}_0} \left[ \nabla\psi(\mathbf{X})\frac{{\B \xi}^\top \nabla f (\mathbf{X})}{f(\mathbf{X})}\right]\right\|^2}{\| \Sigma_\mathbf{X}^{-1/2}{\B \xi} \|^2}\\
&=\frac{\E \|\bar{\mathbf{X}}\|^2}{p^3} \cdot
\left\{\E \left[ \frac{\bar{h}\left(\|\bar{\mathbf{X}}\|\right)}{h_p(H_p^{-1}(\bar{H}(\|\bar{\mathbf{X}}\|)))} \right]+ (p-1) \E\left[ \frac{H_d^{-1}\circ\bar{H}\left(\|\bar{\mathbf{X}}\|\right)}{\|\bar{\mathbf{X}}\|}\right]\right\}^2\\
&\geq 1.
\end{aligned}$$
The equality is attained if and only if the infimum in \citet[Eq.~(C.60) in Appendix]{deb2021efficiency} is attained, i.e., when $f(\cdot)$ is a multivariate normal distribution \citep[Theorem 3.5]{deb2021efficiency}.
\qed

\subsection{Proof of \cref{thm:blind}}\label{sec:pf_blind}
Suppose $\tilde{f}_i(\cdot)$ has variance $\sigma_i^2$. Then $\mathbf{W}$ has covariance matrix ${\rm diag}(\sigma_1^2,\ldots,\sigma_p^2)$, and
$\Sigma_{\mathbf{X}}={\rm Cov}(A\mathbf{W})=A \, {\rm diag}(\sigma_1^2,\ldots,\sigma_p^2)A^\top$, $\| \Sigma_\mathbf{X}^{-1/2}{\B \xi} \|^2 = {\B \xi}^\top A \, {\rm diag}\left(\frac{1}{\sigma_1^2},\ldots,\frac{1}{\sigma_p^2}\right)A^\top {\B \xi}$.
Observe that under the null, with $A=(a_{i,j})$, for $\mathbf{x} =(x_1,\ldots, x_p) \in \R^p$,
$$f(\mathbf{x}) = \prod_{i=1}^p \tilde{f}_i\left(\sum_{j=1}^p a_{j,i}x_j \right).$$
Hence, with $w_i = \sum_{j=1}^p a_{j,i}x_j $ (or $\mathbf{w}=A^\top \mathbf{x}$),
$$\frac{{\B \xi}^\top\nabla f(\mathbf{X})}{f(\mathbf{X})}=\sum_{i,j=1}^p \xi_i a_{i,j}\frac{\tilde{f}_j'}{\tilde{f}_j}(w_j).$$
Let $\tilde{F}_i$ be the distribution function corresponding to $\tilde{f}_i$.
Let $R_\mathbf{W}(\mathbf{w})=\left(\Phi^{-1}\circ \tilde{F}_1(w_1),\ldots, \Phi^{-1}\circ \tilde{F}_p(w_p)\right)$.
By \citet[Lemma A.8]{ghosal2019multivariate},
$\nabla \psi(\mathbf{X}) = AR_\mathbf{W}(A^\top \mathbf{X})$. Hence,
$$\begin{aligned}
\left\|\Sigma_{\rm ERD}^{-1/2}\mathbb{E}_{\textrm{H}_0} \left[ \nabla\psi(\mathbf{X})\frac{{\B \xi}^\top \nabla f (\mathbf{X})}{f(\mathbf{X})}\right]\right\|^2
&=\left\|\mathbb{E}_{\textrm{H}_0} \left[ AR_\mathbf{W}(A^\top \mathbf{X})\sum_{i,j=1}^p \xi_i a_{i,j}\frac{\tilde{f}_j'}{\tilde{f}_j}(W_j)\right]\right\|^2\\
&=\left\|\mathbb{E}_{\textrm{H}_0} \left[ R_\mathbf{W}(\mathbf{W})\sum_{i,j=1}^p \xi_i a_{i,j}\frac{\tilde{f}_j'}{\tilde{f}_j}(W_j)\right]\right\|^2\\
&=\sum_{k=1}^p \left(\E_{\textrm{H}_0}\left[\Phi^{-1}\circ \tilde{F}_k(W_k) \sum_{i,j=1}^p \xi_i a_{i,j}\frac{\tilde{f}_j'}{\tilde{f}_j}(W_j) \right] \right)^2.
\end{aligned}$$
Under the null, $W_j$ is independent of $W_k$ when $j\neq k$ and thus $\E_{\textrm{H}_0}\left[\Phi^{-1}\circ \tilde{F}_k(W_k) \frac{\tilde{f}_j'}{\tilde{f}_j}(W_j) \right]=0$. Hence
$$\begin{aligned}
\E_{\textrm{H}_0}\left[\Phi^{-1}\circ \tilde{F}_k(W_k) \sum_{i,j=1}^p \xi_i a_{i,j}\frac{\tilde{f}_j'}{\tilde{f}_j}(W_j) \right]&=
\sum_{i=1}^p \xi_i a_{i,k}\cdot \E_{\textrm{H}_0}\left[\Phi^{-1}\circ \tilde{F}_k(W_k) \frac{\tilde{f}_k'}{\tilde{f}_k}(W_k) \right]\\
&\geq \sum_{i=1}^p \xi_i a_{i,k}\cdot \frac{1}{\sigma_k} = \frac{(A^\top {\B \xi})_k}{\sigma_k},
\end{aligned}$$
where the inequality follows from \eqref{eq:normal_bound} and \eqref{eq:normal_bound_inf}. Hence,
$$\begin{aligned}
{\rm ARE}(\mathbf{W}_n,\bar{\mathbf{X}}_n;{\B \xi})&=\frac{\left\|\Sigma_{\rm ERD}^{-1/2}\mathbb{E}_{\textrm{H}_0} \left[ \nabla\psi(\mathbf{X})\frac{{\B \xi}^\top \nabla f (\mathbf{X})}{f(\mathbf{X})}\right]\right\|^2}{\| \Sigma_\mathbf{X}^{-1/2}{\B \xi} \|^2}\geq \frac{\sum_{k=1}^p \left( \frac{(A^\top {\B \xi})_k}{\sigma_k}\right)^2 }{{\B \xi}^\top A\, {\rm diag}\left(\frac{1}{\sigma_1^2},\ldots,\frac{1}{\sigma_p^2}\right)A^\top {\B \xi}}=1.
\end{aligned}$$
\qed

\subsection{Proof of \cref{thm:efficiency}}\label{pf:thm:efficiency}
If the model $(P_\theta:{\B \theta}\in\Theta)$ is differentiable in quadratic mean, then the local experiments converge to a Gaussian limit \citep[Section 15.3]{vanderVaart1998}:
$$\left( P_{{\B \theta}_0+{\B \xi}/\sqrt{n}}^n:{\B \xi}\in\R^p\right)\to\left( N({\B \xi},I_{{\B \theta}_0}^{-1}):{\B \xi}\in\R^p\right),$$
where $I_{{\B \theta}_0}$ is the Fisher information at ${\B \theta}_0=\mathbf{0}_p$.

\noindent{\bf 1.} We consider testing~\eqref{eq:Hypo-Test-Local}. The asymptotic representation theorem \citep[Theorem 15.1]{vanderVaart1998} shows that if we have a sequence of level $\alpha$ tests with power function $\pi_n(\cdot)$ such that $\pi_n({\B \xi}/\sqrt{n})\to\pi({\B \xi})$ for every ${\B \xi}$ (as $n \to \infty$) and some function $\pi$, then $\pi(\cdot)$ is the power function of a level $\alpha$ test in the Gaussian limit experiment.
Now let us consider the Gaussian limit experiment:
For testing $\tilde{{\rm H}}_0:{\B \xi}=\mathbf{0}_p$ versus $\tilde{\rm H}_1:{\B \xi}\neq \mathbf{0}_p$ based on an observation $\mathbf{Y}\sim N({\B \xi}, I_{{\B \theta}_0}^{-1})$, the level $\alpha$ test that maximizes the minimum power over $\{{\B \xi}:\|I^{1/2}_{{\B \theta}_0} {\B \xi}\|=c\}$ has the critical region $\{\mathbf{y} \in \R^p:\|I^{1/2}_{{\B \theta}_0}\mathbf{y}\|^2\geq \chi^2_p(\alpha)\}$, for all $c>0$ \citep[Problem 15.4]{vanderVaart1998}. This implies
\begin{equation}\label{eq:efficiency_up_bound}
\inf_{{\B \xi}: \|I^{1/2}_{{\B \theta}_0} {\B \xi}\|=c}\lim_{n\to\infty} \pi_n \left(\frac{{\B \xi}}{\sqrt{n}} \right)\leq \mathbb{P}\left(\|N\big({\B \xi}_c,I_p\big)\|^2\geq \chi_p^2(\alpha)\right),
\end{equation}
for any ${\B \xi}_c\in\R^p$ with $\|{\B \xi}_c\|=c$.

Next, we show that by choosing a proper score function $J(\cdot)$, we can achieve the above optimal power (on the right side of~\eqref{eq:efficiency_up_bound}).
From \cref{thm:asymptotics_local_alternatives}, under ${\rm H}_1$,
$$\mathbf{W}_n\overset{d}{\longrightarrow}  N\left({\B \gamma},\Sigma_{\rm ERD}\right),$$
where ${\B \gamma} = -\mathbb{E}_{\textrm{H}_0} \left[J(\nabla\psi(\mathbf{X}))\frac{{\B \xi}^\top \nabla f (\mathbf{X})}{f(\mathbf{X})}\right]\in\R^p$.
In \cref{thm:population_rank}, we show that $\nabla \psi(\mathbf{x})$ is the optimal transport map under square loss that pushes $P_{{\B \theta}_0}$ to $\nu_S$.
If $\nu_S$ has a Lebesgue density, then McCann's theorem \citep[Remark 16]{McCann1995} states that $\nabla \psi(\cdot )$ admits an almost sure inverse $\nabla\psi^*:\R^p\to\R^p$ such that
$\nabla\psi^*(\nabla\psi(\mathbf{x}))=\mathbf{x}$ ($P_{{\B \theta}_0}$ a.e.-$\mathbf{x}$) and $\nabla\psi \circ \nabla\psi^* (\mathbf{x}) = \mathbf{x}$ ($\nu_S$ a.e.-$\mathbf{x}$).

With $J(\cdot)=-\frac{\nabla f(\nabla\psi^*(\cdot))}{f(\nabla\psi^*(\cdot))}$,
$J(\nabla\psi(\mathbf{x}))=-\frac{\nabla f(\mathbf{x})}{f(\mathbf{x})}$ ($P_{{\B \theta}_0}$ a.e.-$\mathbf{x}$).
Observe that,
$$J(Q \nabla\psi(\mathbf{x})) = J(\nabla\psi(Q\mathbf{x})) = -\frac{\nabla f(Q\mathbf{x})}{f(Q\mathbf{x})},$$
for any $Q\in \G$. Since $f$ is $\G$-symmetric, $f(Q\mathbf{x}) = f(\mathbf{x})$ for all $Q\in \G$ and $\mathbf{x}\in\R^p$, which further implies $\nabla f(Q\mathbf{x}) = Q\nabla f(\mathbf{x})$. Hence,
$J(Q\psi(\mathbf{x}))=Q J(\psi(\mathbf{x}))$, showing that $J(\cdot)$ is $\G$-equivariant ($\nu_S$ a.e.).
With this choice of $J$,
$$\begin{aligned}
{\B \gamma} &=  \mathbb{E}_{\textrm{H}_0} \left[ \frac{\nabla f(\mathbf{X})}{f(\mathbf{X})} \frac{\nabla f (\mathbf{X})^\top}{f(\mathbf{X})}\right]{\B \xi}= I_{{\B \theta}_0}{\B \xi},\\
\Sigma_{\rm ERD} &={\rm Var}_{\rm H_0}\left(J(\nabla\psi(\mathbf{X}))\right) =  {\rm Var}_{\rm H_0}\left(\frac{\nabla f(\mathbf{X})}{f(\mathbf{X})}\right) = I_{{\B \theta}_0}.
\end{aligned}$$
Hence, under ${\rm H}_1$,
$\mathbf{W}_n\overset{d}{\longrightarrow}  N\left(I_{{\B \theta}_0}{\B \xi},I_{{\B \theta}_0}\right)$, and therefore
$$\mathbb{P}_{{\B \theta}_0+ \frac{{\B \xi}}{\sqrt{n}}}\left(\mathbf{W}_n^\top \Sigma_{\rm ERD}^{-1} \mathbf{W}_n \geq \chi_p^2(\alpha)\right)\to 
\mathbb{P}\left(\|N(I_{{\B \theta}_0}^{1/2}{\B \xi}, I_p)\|^2 \geq \chi_p^2(\alpha)  \right).$$
Combining with \eqref{eq:efficiency_up_bound}, we have
$$\inf_{{\B \xi}: \|I^{1/2}_{{\B \theta}_0} {\B \xi}\|=c}\lim_{n\to\infty} \pi_n \left(\frac{{\B \xi}}{\sqrt{n}} \right)\leq \inf_{{\B \xi}: \|I^{1/2}_{{\B \theta}_0} {\B \xi}\|=c} \lim_{n\to\infty} \mathbb{P}_{{\B \theta}_0 + \frac{{\B \xi}}{\sqrt{n}}}\left(\mathbf{W}_n^\top \Sigma_{\rm ERD}^{-1} \mathbf{W}_n \geq \chi_p^2(\alpha)\right),$$ which yields the desired result. \newline

\noindent{\bf 2.} Recall that $\pi_n(\cdot)$ is the power function of a sequence of level $\alpha$ tests for~\eqref{eq:one-side-test}. By \citet[Theorem 15.4]{vanderVaart1998},
\begin{equation}\label{eq:Upper-Bd-one-sided}
\limsup_{n\to\infty}\pi_n\left( {\B \theta}_0 + \frac{{\B \xi}}{\sqrt{n}}\right) \leq 1-\Phi\left( z_\alpha - \frac{{\B \xi}^\top \nabla \zeta ({\B \theta}_0)}{\sqrt{\nabla \zeta ({\B \theta}_0)^\top I_{{{\B \theta}_0}}^{-1}\nabla \zeta ({\B \theta}_0)}}\right),
\end{equation}
where $\Phi(\cdot)$ is the cdf of a standard normal distribution, and $z_\alpha$ satisfies $\Phi(z_\alpha)=1-\alpha$.

From the proof of part-1., $\mathbf{W}_n\overset{d}{\longrightarrow}  N\left(I_{{\B \theta}_0}{\B \xi},I_{{\B \theta}_0}\right)$ under ${\B \theta} = {{\B \theta}_0 + \frac{{\B \xi}}{\sqrt{n}}}$, and therefore
$$\nabla \zeta ({\B \theta}_0)^\top I_{{\B \theta}_0}^{-1} \mathbf{W}_n\overset{d}{\longrightarrow}N\big({\B \xi}^\top \nabla \zeta ({\B \theta}_0), \nabla \zeta ({\B \theta}_0)^\top I_{{{\B \theta}_0}}^{-1}\nabla \zeta ({\B \theta}_0)\big).$$
This implies
$$\lim_{n\to\infty}\mathbb{P}_{{\B \theta}_0 + \frac{{\B \xi}}{\sqrt{n}}}\left(\frac{\nabla \zeta ({\B \theta}_0)^\top I_{{\B \theta}_0}^{-1} \mathbf{W}_n}{\sqrt{\nabla \zeta ({\B \theta}_0)^\top I_{{{\B \theta}_0}}^{-1}\nabla \zeta ({\B \theta}_0)}} \geq z_{\alpha} \right)=1-\Phi\left( z_\alpha - \frac{{\B \xi}^\top \nabla \zeta ({\B \theta}_0)}{\sqrt{\nabla \zeta ({\B \theta}_0)^\top I_{{{\B \theta}_0}}^{-1}\nabla \zeta ({\B \theta}_0)}}\right),$$ which attains the upper bound in~\eqref{eq:Upper-Bd-one-sided}.

\qed

\section{Some Auxiliary Results}\label{sec:Appendix-C}
In the following, we state and prove some auxiliary results. Results of a similar form can be found in~\citet[Lemma 9]{McCann1995} and~\citet[Theorem 5.10]{villani2009OT}.
\begin{prop}\label{prop:existence_c_cyc}
    If $c(\cdot,\cdot)$ is continuous, then there exists $\pi \in \Pi(\mu,\nu)$ with a $c$-cyclically monotone support (see \cref{def:c-cyc}).
\end{prop}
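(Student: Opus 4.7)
The plan is to prove existence via a truncation-compactness argument. For each $k \in \mathbb{N}$, define $c_k(\mathbf{x},\mathbf{y}) := c(\mathbf{x},\mathbf{y}) \wedge k$; each $c_k$ is continuous and bounded on $\R^p \times \R^p$. Since $\mu$ and $\nu$ are Borel probability measures on $\R^p$, the set $\Pi(\mu,\nu)$ is tight and closed under weak convergence, hence weakly compact. The functional $\pi \mapsto \int c_k\, d\pi$ is weakly continuous on $\Pi(\mu,\nu)$, so it attains its minimum at some $\pi_k \in \Pi(\mu,\nu)$. By a standard argument (e.g., \citet[Theorem 5.10]{villani2009OT}), $\pi_k$ is concentrated on a $c_k$-cyclically monotone Borel set, and since $c_k$ is continuous, the closure of such a set remains $c_k$-cyclically monotone; hence I may take $\mathrm{supp}(\pi_k)$ itself to be $c_k$-cyclically monotone.

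Applying Prokhorov's theorem to the tight family $\{\pi_k\}_{k \ge 1}$, extract a subsequence $\pi_{k_j}$ converging weakly to some $\pi^* \in \Pi(\mu,\nu)$. It remains to verify that $\mathrm{supp}(\pi^*)$ is $c$-cyclically monotone. Fix $N \ge 1$, points $(\mathbf{x}_i,\mathbf{y}_i) \in \mathrm{supp}(\pi^*)$ for $i=1,\ldots,N$, and an arbitrary permutation $\sigma \in \mathcal{S}_N$. For each $i$ and each open neighborhood $U_i$ of $(\mathbf{x}_i,\mathbf{y}_i)$, the Portmanteau theorem gives $\liminf_j \pi_{k_j}(U_i) \ge \pi^*(U_i) > 0$, so $U_i \cap \mathrm{supp}(\pi_{k_j}) \neq \emptyset$ for all $j$ sufficiently large. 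Choosing a nested sequence of shrinking neighborhoods, I can select $(\mathbf{x}_i^j,\mathbf{y}_i^j) \in \mathrm{supp}(\pi_{k_j})$ with $(\mathbf{x}_i^j,\mathbf{y}_i^j) \to (\mathbf{x}_i,\mathbf{y}_i)$ as $j \to \infty$.

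By continuity, the values $c(\mathbf{x}_i^j,\mathbf{y}_i^j)$ and $c(\mathbf{x}_i^j,\mathbf{y}_{\sigma(i)}^j)$ are uniformly bounded, say by $M := 1 + \max_i\{c(\mathbf{x}_i,\mathbf{y}_i) + c(\mathbf{x}_i,\mathbf{y}_{\sigma(i)})\}$, for $j$ sufficiently large. For $k_j > M$, $c_{k_j}$ agrees with $c$ at all these points, and the $c_{k_j}$-cyclical monotonicity of $\mathrm{supp}(\pi_{k_j})$ yields
\begin{equation*}
\sum_{i=1}^N c(\mathbf{x}_i^j,\mathbf{y}_i^j) \;\le\; \sum_{i=1}^N c(\mathbf{x}_i^j,\mathbf{y}_{\sigma(i)}^j).
\end{equation*}
Letting $j \to \infty$ and using continuity of $c$ yields the required inequality at the limit points, establishing $c$-cyclical monotonicity of $\mathrm{supp}(\pi^*)$.

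The main technical obstacle is the coordination between the two limits --- the approximating points $(j \to \infty)$ together with the truncation threshold $(k_j \to \infty)$. This is resolved by exploiting the local boundedness of $c$ near the chosen limit points, which ensures that once $k_j$ exceeds this local bound, $c_{k_j}$ and $c$ become indistinguishable on the region of interest, allowing the cyclical monotonicity inequality to pass cleanly to the limit.
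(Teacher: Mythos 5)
Your proposal is correct in outline but follows a genuinely different route from the paper. The paper discretizes the measures: it takes empirical approximations $\mu_n,\nu_n$, solves the finite assignment problem, shows by an elementary exchange argument that the resulting discrete plan has $c$-cyclically monotone support, and then passes to a weak limit, using that for continuous $c$ the set $\mathcal{C}(N)$ of $N$-tuples satisfying the monotonicity inequality is closed, so that $\pi^{\otimes N}$ is concentrated on $\mathcal{C}(N)$ and $({\rm supp}\,\pi)^N\subset\mathcal{C}(N)$ by the Portmanteau theorem. You instead truncate the cost, invoke the continuum theory (existence of an optimal plan for a bounded continuous cost and $c$-cyclical monotonicity of its support, as in \citet[Theorem 5.10]{villani2009OT}), and transfer cyclical monotonicity to the weak limit by approximating support points of $\pi^*$ by support points of $\pi_{k_j}$ and letting the truncation level outrun the locally bounded cost values; that two-limit coordination is handled correctly, and your pointwise limiting step is the same mechanism as the paper's closed-set/Portmanteau argument, carried out at the level of points rather than of $\pi^{\otimes N}$. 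What each buys: the paper's proof is self-contained (only the finite assignment problem and an exchange argument, no appeal to general OT existence/duality results), in line with its goal of avoiding moment assumptions; yours is shorter granted Villani's theorem, and the truncation likewise sidesteps integrability hypotheses. (Also, \cref{def:c-cyc} only requires the cyclic-shift inequalities; your verification for an arbitrary permutation $\sigma$ is stronger and is recovered from the cyclic form by decomposing $\sigma$ into cycles, so no issue there.)

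One point needs a patch. You assert that $c_k:=c\wedge k$ is continuous and bounded; boundedness holds only if $c$ is bounded below. The proposition is stated for an arbitrary continuous $c$, and in the paper it is applied (in the proof of \cref{thm:population_rank}) to the equivalent cost $c(\mathbf{x},\mathbf{y})=-\max_{Q\in\G}\mathbf{x}^\top Q\mathbf{y}$, which is unbounded below. With only the upper truncation, $\pi\mapsto\int c_k\,{\rm d}\pi$ need not be weakly continuous (nor even finite), so the existence of the minimizer $\pi_k$ and the appeal to \citet[Theorem 5.10]{villani2009OT} are not justified as written. The fix is immediate: truncate on both sides, $c_k:=\max\{\min\{c,k\},-k\}$, which is bounded and continuous; every subsequent step of your argument, including the final localization (for large $j$ all $2N$ relevant cost values lie in $[-M,M]$, so $c_{k_j}$ and $c$ agree at those points once $k_j>M$), goes through verbatim.
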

\begin{proof}
Let $\mu_n := \frac{1}{n}\sum_{i=1}^n \delta_{\mathbf{x}_i}$ and $\nu_n := \frac{1}{n}\sum_{i=1}^n \delta_{\mathbf{y}_i}$ be such that $\mu_n \overset{w}{\longrightarrow}\mu$ and $\nu_n \overset{w}{\longrightarrow}\nu$ (for example, we can draw $\mathbf{x}_i$ i.i.d.\ from $\mu$ and $\mathbf{y}_i$ i.i.d.\ from $\nu$).
For each $n$, solve
\begin{equation}\label{eq:LP}
    \min_{a_{ij}\geq 0}\sum_{i,j=1}^n a_{ij}c(\mathbf{x}_i,\mathbf{y}_j),\qquad {\rm\ s.t.\ }\sum_{j=1}^n a_{ij} = 1,i=1,\ldots,n,{\rm\ and\ }\sum_{i=1}^n a_{ij} = 1,j=1,\ldots,n
\end{equation}
to obtain a minimizer $(a_{ij}^n)$. In fact, this is the assignment problem \citep{Munkres1957assign,Bertsekas1988auction} and there exists an optimal solution such that $a_{ij}^n\in\{0,1\}$ for all $i,j$.
Let $\pi_n := \sum_{i,j=1}^na_{ij}^n\delta_{(\mathbf{x}_i,\mathbf{y}_j)}$.
Then $\pi_n$ is supported on $S$, the set of all couples $(\mathbf{x}_i,\mathbf{y}_j)$ such that $a_{ij}^n > 0$.
To show that $\pi_n \in \Pi(\mu_n,\nu_n)$ has a $c$-cyclically monotone support, we argue by contradiction. Suppose that there exists $N \in \mathbb{N}$ and $(\mathbf{x}_{i_1},\mathbf{y}_{i_1}),\ldots,(\mathbf{x}_{i_N},\mathbf{y}_{i_N})$ in $S$ such that (with $\mathbf{y}_{i_{N+1}} = \mathbf{y}_{i_{1}}$)
$$c(\mathbf{x}_{i_1},\mathbf{y}_{i_2}) + c(\mathbf{x}_{i_2},\mathbf{y}_{i_3})+\ldots +c(\mathbf{x}_{i_N},\mathbf{y}_{i_{N+1}})<c(\mathbf{x}_{i_1},\mathbf{y}_{i_1})+\ldots +c(\mathbf{x}_{i_N},\mathbf{y}_{i_N}).$$
Define
$$\tilde{\pi}_n := \pi_n + \varepsilon \sum_{l=1}^N\left(\delta_{(\mathbf{x}_{i_l},\mathbf{y}_{i_{l+1}}) } - \delta_{(\mathbf{x}_{i_l},\mathbf{y}_{i_{l}}) }\right).$$
Suppose $\tilde{\pi}_{n} = \sum_{i,j=1}^n \tilde{a}_{ij}^n\delta_{(\mathbf{x}_i,\mathbf{y}_j)}$. Then, for a sufficiently small $\varepsilon>0$,
$\tilde{a}_{ij}^n \geq 0$, and $(\tilde{a}_{ij}^n)$ has a smaller cost than $({a}_{ij}^n)$ in \eqref{eq:LP}, which yields a contradiction.

Hence, $\pi_n \in \Pi(\mu_n,\nu_n)$ has a $c$-cyclically monotone support. The weak convergence of $\mu_n$ and $\nu_n$ implies that $\{\mu_n\}$ and $\{\nu_n\}$ are tight, and thus $\{\pi_n\}$ must be tight (\citet[Lemma 4.4]{villani2009OT}). By Prokhorov's theorem, there is a subsequence $\{\pi_{n_k}\}_{k \ge 1}$ which converges weakly to some probability measure $\pi$.
Since $(\mathbf{x},\mathbf{y})\mapsto \mathbf{x}$ and $(\mathbf{x},\mathbf{y})\mapsto \mathbf{y}$ are continuous, by continuous mapping theorem, $\pi$ has marginals $\mu$ and $\nu$. For each $n$, the $c$-cyclically monotonicity of $\pi_n$ implies that for all $N \in \mathbb{N}$ and $\pi_n^{\otimes N}$-almost all $(\mathbf{x}_1,\mathbf{y}_1),\ldots,(\mathbf{x}_N,\mathbf{y}_N)$, we have (with $\mathbf{y}_{N+1} = \mathbf{y}_{1}$)
\begin{equation}\label{eq:c-cyc-N}
\sum_{i=1}^Nc(\mathbf{x}_i,\mathbf{y}_i)\leq \sum_{i=1}^Nc(\mathbf{x}_i,\mathbf{y}_{i+1}).
\end{equation}
In other words, $\pi_n^{\otimes N}$ is concentrated on the set $\mathcal{C}(N)$ of all $((\mathbf{x}_1,\mathbf{y}_1),\ldots,(\mathbf{x}_N,\mathbf{y}_N))\in (\R^p\times \R^p)^N$ satisfying \eqref{eq:c-cyc-N}. Since $c(\cdot,
\cdot)$ is continuous, $\mathcal{C}(N)$ is a closed set, so the weak limit
$\pi^{\otimes N} $ of $\pi^{\otimes N}_{n_k} $ is also concentrated on $\mathcal{C}(N)$ by Portmanteau theorem. Let $\Gamma$ be the support of $\pi$. Then $\Gamma^N = ({\rm supp}(\pi))^N={\rm supp} (\pi^{\otimes N})\subset \mathcal{C}(N)$. Since this holds true for all $N$, $\Gamma$ is $c$-cyclically monotone.
\end{proof}

\begin{prop}\label{prop:existence_c_cvx}
    If $c(\cdot,\cdot)$ is continuous and $\pi\in \Gamma(\mu,\nu)$ has $c$-cyclically monotone support, then there exists a $c$-convex function $\psi$ such that $\partial_c\psi$ (see \cref{def_cconvex}) contains the support of $\pi$.
\end{prop}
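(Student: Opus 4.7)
The plan is to carry out the classical Rockafellar--R\"uschendorf construction of a $c$-convex potential from a $c$-cyclically monotone set, as in \citep[proof of Theorem 5.10]{villani2009OT}, being careful to respect the sign convention used here for $c$-convexity, i.e.\ $\psi(\mathbf{x})=\sup_\mathbf{y}[\zeta(\mathbf{y})-c(\mathbf{x},\mathbf{y})]$.

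First I would fix any base point $(\mathbf{x}_*,\mathbf{y}_*)$ in $\Gamma:=\mathrm{supp}(\pi)$ (nonempty since $\pi$ is a probability measure) and define
\begin{equation*}
\psi(\mathbf{x}):=\sup\Big\{\sum_{i=0}^{N-1}\big[c(\mathbf{x}_i,\mathbf{y}_i)-c(\mathbf{x}_{i+1},\mathbf{y}_i)\big]\,:\,N\geq 1,\,(\mathbf{x}_0,\mathbf{y}_0)=(\mathbf{x}_*,\mathbf{y}_*),\,(\mathbf{x}_i,\mathbf{y}_i)\in\Gamma\ (1\leq i\leq N-1),\,\mathbf{x}_N=\mathbf{x}\Big\}.
\end{equation*}
With this sign convention, $\mathbf{x}$ enters each chain only through the final term $-c(\mathbf{x},\mathbf{y}_{N-1})$, so $\psi$ can be rewritten as $\sup_\mathbf{y}[\zeta(\mathbf{y})-c(\mathbf{x},\mathbf{y})]$ where $\zeta(\mathbf{y})$ is the supremum of all chain prefixes whose final $\mathbf{y}_{N-1}$ equals $\mathbf{y}$; this exhibits $\psi$ as $c$-convex in the sense of \cref{def_cconvex}, provided $\psi$ is not identically $+\infty$.

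The key check is that $\psi(\mathbf{x}_*)=0$, so that $\psi\not\equiv+\infty$ and the construction is nontrivial. The trivial $N=1$ chain with $\mathbf{x}_1=\mathbf{x}_*$ yields value $0$. For any longer chain that closes up at $\mathbf{x}_N=\mathbf{x}_*=\mathbf{x}_0$, the value becomes $\sum_i c(\mathbf{x}_i,\mathbf{y}_i)-\sum_i c(\mathbf{x}_{i+1\,\mathrm{mod}\,N},\mathbf{y}_i)$, which is $\leq 0$ by applying the $c$-cyclical monotonicity definition (along the reversed cyclic permutation of the indices). Hence $\psi(\mathbf{x}_*)=0$. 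Continuity of $c$ additionally ensures that $\psi$ is lower semicontinuous, as a supremum of continuous functions of $\mathbf{x}$.

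Next I verify $\Gamma\subset\partial_c\psi$. For $(\mathbf{x},\mathbf{y})\in\Gamma$ and an arbitrary $\mathbf{x}'\in\R^p$, pick an $\varepsilon$-optimal chain for $\psi(\mathbf{x})$ ending at $\mathbf{x}_N=\mathbf{x}$, then extend it by appending $(\mathbf{x}_N,\mathbf{y}_N):=(\mathbf{x},\mathbf{y})\in\Gamma$ and setting $\mathbf{x}_{N+1}:=\mathbf{x}'$. The value of the extended chain picks up the extra telescoping term $c(\mathbf{x},\mathbf{y})-c(\mathbf{x}',\mathbf{y})$, giving
$$\psi(\mathbf{x}')\geq\bigl[\psi(\mathbf{x})-\varepsilon\bigr]+c(\mathbf{x},\mathbf{y})-c(\mathbf{x}',\mathbf{y}).$$
Letting $\varepsilon\to 0$ and rearranging yields $\psi(\mathbf{x}')+c(\mathbf{x}',\mathbf{y})\geq \psi(\mathbf{x})+c(\mathbf{x},\mathbf{y})$ for every $\mathbf{x}'$, so the infimum defining $\psi^c(\mathbf{y})$ is attained at $\mathbf{x}'=\mathbf{x}$, i.e.\ $\psi^c(\mathbf{y})-\psi(\mathbf{x})=c(\mathbf{x},\mathbf{y})$, placing $(\mathbf{x},\mathbf{y})$ in $\partial_c\psi$.

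The only real subtlety, and the main thing to watch, is the sign bookkeeping: the paper's $c$-convex functions are sups of $\zeta(\mathbf{y})-c(\mathbf{x},\mathbf{y})$, so the chain increments must be written as $c(\mathbf{x}_i,\mathbf{y}_i)-c(\mathbf{x}_{i+1},\mathbf{y}_i)$ (rather than the opposite) so that the $\mathbf{x}$-dependence appears as $-c(\mathbf{x},\mathbf{y}_{N-1})$. Once this is set up consistently, both the $c$-convexity of $\psi$ and the subdifferential inclusion follow from the same chain-extension argument, and $c$-cyclical monotonicity is invoked only to certify that $\psi(\mathbf{x}_*)$ is finite.
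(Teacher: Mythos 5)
Your proposal is correct and follows essentially the same route as the paper's proof: the Rockafellar--R\"uschendorf chain construction anchored at a base point of $\mathrm{supp}(\pi)$, with $c$-cyclical monotonicity used only to show $\psi(\mathbf{x}_*)=0$ (hence $\psi\not\equiv+\infty$), and the chain-extension argument giving $\Gamma\subset\partial_c\psi$. The only cosmetic difference is your use of $\varepsilon$-optimal chains where the paper takes the supremum over extended chains directly; the sign bookkeeping you flag matches the paper's convention exactly.
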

\begin{proof}
Let $\Gamma$ denote the support of $\pi$. 
Fix $(\mathbf{x}_0,\mathbf{y}_0)\in \Gamma$. Define
$$\begin{aligned}\psi(\mathbf{x})&:=\sup_{m\in\mathbb{N}}\sup \big\{[c(\mathbf{x}_0,\mathbf{y}_0) - c(\mathbf{x}_1,\mathbf{y}_0)] +  [c(\mathbf{x}_1,\mathbf{y}_1) - c(\mathbf{x}_2,\mathbf{y}_1)]\\
&\quad +\ldots
+[c(\mathbf{x}_m,\mathbf{y}_m) - c(\mathbf{x},\mathbf{y}_m)]:(\mathbf{x}_1,\mathbf{y}_1),\ldots,(\mathbf{x}_m,\mathbf{y}_m)\in\Gamma\big\}.
\end{aligned}$$
From the definition, with $m=1$ and $(\mathbf{x}_1,\mathbf{y}_1) = (\mathbf{x}_0,\mathbf{y}_0)$, we have
$\psi(\mathbf{x}_0)\geq 0$. On the other hand, the $c$-cyclical monotonicity of $\Gamma$ implies that
$\psi(\mathbf{x}_0)\leq 0$, and hence $\psi(\mathbf{x}_0)= 0$.
By renaming $\mathbf{y}_m$ as $\mathbf{y}$,
$$\begin{aligned}
\psi(\mathbf{x})&=\sup_{\mathbf{y}\in\R^p}\sup_{m\in\mathbb{N}}\sup_{(\mathbf{x}_1,\mathbf{y}_1),\ldots,(\mathbf{x}_{m-1},\mathbf{y}_{m-1}),\mathbf{x}_m }\big\{[c(\mathbf{x}_0,\mathbf{y}_0) - c(\mathbf{x}_1,\mathbf{y}_0)] +  [c(\mathbf{x}_1,\mathbf{y}_1) - c(\mathbf{x}_2,\mathbf{y}_1)]\\
&\quad +\ldots
+[c(\mathbf{x}_m,\mathbf{y}) - c(\mathbf{x},\mathbf{y})]:(\mathbf{x}_1,\mathbf{y}_1),\ldots,(\mathbf{x}_m,\mathbf{y})\in\Gamma\big\}\\
&=\sup_{\mathbf{y}\in\R^p} [\zeta(\mathbf{y}) - c(\mathbf{x},\mathbf{y})],
\end{aligned}$$
where $\zeta(\mathbf{y}):=-\infty$ if there does not exist $\mathbf{x}_m$ such that $(\mathbf{x}_m,\mathbf{y})\in \Gamma$, and otherwise
$$\begin{aligned}
\zeta(\mathbf{y})&:=\sup_{m\in\mathbb{N}}\sup_{(\mathbf{x}_1,\mathbf{y}_1),\ldots,(\mathbf{x}_{m-1},\mathbf{y}_{m-1}),\mathbf{x}_m }\big\{[c(\mathbf{x}_0,\mathbf{y}_0) - c(\mathbf{x}_1,\mathbf{y}_0)] +  [c(\mathbf{x}_1,\mathbf{y}_1) - c(\mathbf{x}_2,\mathbf{y}_1)]\\
&\quad +\ldots
+[c(\mathbf{x}_m,\mathbf{y}) - c(\mathbf{x},\mathbf{y})]:(\mathbf{x}_1,\mathbf{y}_1),\ldots,(\mathbf{x}_m,\mathbf{y})\in\Gamma\big\}.
\end{aligned}$$
Thus $\psi(\cdot)$ is a $c$-convex function.
For any $(\bar{\mathbf{x}},\bar{\mathbf{y}})\in \Gamma$, with $(\mathbf{x}_m,\mathbf{y}_m)=(\bar{\mathbf{x}},\bar{\mathbf{y}})$ in the definition of $\psi$,
$$\begin{aligned}\psi(\mathbf{x})&\geq \sup_{m\geq 2}\sup_{(\mathbf{x}_1,\mathbf{y}_1),\ldots,(\mathbf{x}_{m-1},\mathbf{y}_{m-1})\in \Gamma} \big\{[c(\mathbf{x}_0,\mathbf{y}_0) - c(\mathbf{x}_1,\mathbf{y}_0)] +  [c(\mathbf{x}_1,\mathbf{y}_1) - c(\mathbf{x}_2,\mathbf{y}_1)]\\
&\quad +\ldots
+[c(\mathbf{x}_{m-1},\mathbf{y}_{m-1}) - c(\bar{\mathbf{x}},\mathbf{y}_{m-1})] + [c(\bar{\mathbf{x}},\bar{\mathbf{y}}) - c(\mathbf{x},\bar{\mathbf{y}})]\big\}\\
&=\psi(\bar{\mathbf{x}}) + [c(\bar{\mathbf{x}},\bar{\mathbf{y}}) - c(\mathbf{x},\bar{\mathbf{y}})].
\end{aligned}$$
Hence, $\psi(\mathbf{x}) + c(\mathbf{x},\bar{\mathbf{y}})\geq \psi(\bar{\mathbf{x}}) + c(\bar{\mathbf{x}},\bar{\mathbf{y}})$. Taking the infimum over $\mathbf{x}\in \R^p$, we deduce that
$\psi^c(\bar{\mathbf{y}})\geq \psi(\bar{\mathbf{x}}) + c(\bar{\mathbf{x}},\bar{\mathbf{y}})$. Since the reverse inequality always holds from the definition of $\psi^c$ (\cref{def_cconvex}), we have $\psi^c(\bar{\mathbf{y}})= \psi(\bar{\mathbf{x}}) + c(\bar{\mathbf{x}},\bar{\mathbf{y}})$. This means that $(\bar{\mathbf{x}},\bar{\mathbf{y}})\in \partial_c\psi$, and so $\Gamma\subset \partial_c\psi$.
\end{proof}

\bibliography{symmetry}

\begin{thebibliography}{}

\bibitem[Ajne, 1968]{Ajne1968circular}
Ajne, B. (1968).
\newblock A simple test for uniformity of a circular distribution.
\newblock {\em Biometrika}, 55:343--354.

\bibitem[Albisetti et~al., 2020]{alb2020sph}
Albisetti, I., Balabdaoui, F., and Holzmann, H. (2020).
\newblock Testing for spherical and elliptical symmetry.
\newblock {\em J. Multivariate Anal.}, 180:104667, 19.

\bibitem[Anderson and Klee~Jr, 1952]{anderson1952convex}
Anderson, R. and Klee~Jr, V. (1952).
\newblock Convex functions and upper semi-continuous collections.
\newblock {\em Duke Mathematical Journal}, 19(2):349--357.

\bibitem[Arbuthnott, 1710]{Arbuthnott1710sign}
Arbuthnott, J. (1710).
\newblock An argument for divine providence, taken from the constant regularity
  observed in the births of both sexes.
\newblock {\em Philosophical Transactions}, 27.328:186--190.

\bibitem[Bach and Jordan, 2003]{Bach2003KICA}
Bach, F.~R. and Jordan, M.~I. (2003).
\newblock Kernel independent component analysis.
\newblock {\em J. Mach. Learn. Res.}, 3(1):1--48.

\bibitem[Baringhaus, 1991]{Baringhaus1991}
Baringhaus, L. (1991).
\newblock Testing for spherical symmetry of a multivariate distribution.
\newblock {\em Ann. Statist.}, 19(2):899--917.

\bibitem[Bellhouse, 2001]{bellhouse2001john}
Bellhouse, D. (2001).
\newblock John arbuthnot.
\newblock In {\em Statisticians of the Centuries}, pages 39--42. Springer.

\bibitem[Bennett, 1962]{Bennett1962sign}
Bennett, B.~M. (1962).
\newblock On multivariate sign tests.
\newblock {\em J. Roy. Statist. Soc. Ser. B}, 24:159--161.

\bibitem[Beran and Millar, 1997]{beran1997multivariate}
Beran, R.~J. and Millar, P.~W. (1997).
\newblock Multivariate symmetry models.
\newblock In {\em Festschrift for Lucien Le Cam}, pages 13--42. Springer.

\bibitem[Berk and Hwang, 1989]{Berk1989opt}
Berk, R. and Hwang, J.~T. (1989).
\newblock Optimality of the least squares estimator.
\newblock {\em J. Multivariate Anal.}, 30(2):245--254.

\bibitem[Bertsekas, 1988]{Bertsekas1988auction}
Bertsekas, D.~P. (1988).
\newblock The auction algorithm: a distributed relaxation method for the
  assignment problem.
\newblock {\em Ann. Oper. Res.}, 14(1-4):105--123.

\bibitem[Bickel, 1965]{Bickel1965competitors}
Bickel, P.~J. (1965).
\newblock On some asymptotically nonparametric competitors of {H}otelling's
  {$T^{2}$}.
\newblock {\em Ann. Math. Statist.}, 36:160--173.

\bibitem[Blough, 1989]{Blough1989PP}
Blough, D.~K. (1989).
\newblock Multivariate symmetry via projection pursuit.
\newblock {\em Ann. Inst. Statist. Math.}, 41(3):461--475.

\bibitem[Blumen, 1958]{blumen1958new}
Blumen, I. (1958).
\newblock A new bivariate sign test.
\newblock {\em J. Amer. Statist. Assoc.}, 53(282):448--456.

\bibitem[Boeckel et~al., 2018]{boeckel2018multivariate}
Boeckel, M., Spokoiny, V., and Suvorikova, A. (2018).
\newblock Multivariate brenier cumulative distribution functions and their
  application to non-parametric testing.
\newblock {\em arXiv preprint arXiv:1809.04090}.

\bibitem[Cardoso, 1998]{Cardoso1998blind}
Cardoso, J.-F. (1998).
\newblock Blind signal separation: statistical principles.
\newblock {\em Proceedings of the IEEE}, 86(10):2009--2025.

\bibitem[Chakraborty, 2001]{Chakraborty2001quantiles}
Chakraborty, B. (2001).
\newblock On affine equivariant multivariate quantiles.
\newblock {\em Ann. Inst. Statist. Math.}, 53(2):380--403.

\bibitem[Chakraborty and Chaudhuri, 1996]{Chakraborty1996TR}
Chakraborty, B. and Chaudhuri, P. (1996).
\newblock On a transformation and re-transformation technique for constructing
  an affine equivariant multivariate median.
\newblock {\em Proc. Amer. Math. Soc.}, 124(8):2539--2547.

\bibitem[Chakraborty et~al., 1998]{CCO1998}
Chakraborty, B., Chaudhuri, P., and Oja, H. (1998).
\newblock Operating transformation retransformation on spatial median and angle
  test.
\newblock {\em Statist. Sinica}, 8(3):767--784.

\bibitem[Chatterjee, 1966]{Chatterjee1966bivariate}
Chatterjee, S.~K. (1966).
\newblock A bivariate sign test for location.
\newblock {\em Ann. Math. Statist.}, 37:1771--1782.

\bibitem[Chaudhuri and Sengupta, 1993]{CS1993}
Chaudhuri, P. and Sengupta, D. (1993).
\newblock Sign tests in multidimension: inference based on the geometry of the
  data cloud.
\newblock {\em J. Amer. Statist. Assoc.}, 88(424):1363--1370.

\bibitem[Chernoff and Savage, 1958]{chernoff1958}
Chernoff, H. and Savage, I.~R. (1958).
\newblock Asymptotic normality and efficiency of certain nonparametric test
  statistics.
\newblock {\em Ann. Math. Statist.}, 29:972--994.

\bibitem[Comon and Jutten, 2010]{comon2010handbook}
Comon, P. and Jutten, C. (2010).
\newblock {\em Handbook of Blind Source Separation: Independent Component
  Analysis and Applications}.
\newblock Elsevier Science.

\bibitem[Cs\"{o}rg\H{o} and Heathcote, 1987]{csorgo1987symmetry}
Cs\"{o}rg\H{o}, S. and Heathcote, C.~R. (1987).
\newblock Testing for symmetry.
\newblock {\em Biometrika}, 74(1):177--184.

\bibitem[Deb et~al., 2023]{deb2021efficiency}
Deb, N., Bhattacharya, B.~B., and Sen, B. (2023).
\newblock Pitman efficiency lower bounds for multivariate distribution-free
  tests based on optimal transport.
\newblock {\em arXiv preprint arXiv:2104.01986}.

\bibitem[Deb et~al., 2021]{deb2021rates}
Deb, N., Ghosal, P., and Sen, B. (2021).
\newblock Rates of estimation of optimal transport maps using plug-in
  estimators via barycentric projections.
\newblock {\em Advances in Neural Information Processing Systems},
  34:29736--29753.

\bibitem[Deb and Sen, 2023]{Nabarun2021rank}
Deb, N. and Sen, B. (2023).
\newblock Multivariate rank-based distribution-free nonparametric testing using
  measure transportation.
\newblock {\em J. Amer. Statist. Assoc.}, 118(541):192--207.

\bibitem[Diks and Tong, 1999]{DH1999}
Diks, C. and Tong, H. (1999).
\newblock A test for symmetries of multivariate probability distributions.
\newblock {\em Biometrika}, 86(3):605--614.

\bibitem[Dyckerhoff et~al., 2015]{Dyckerhoff2015depth}
Dyckerhoff, R., Ley, C., and Paindaveine, D. (2015).
\newblock Depth-based runs tests for bivariate central symmetry.
\newblock {\em Ann. Inst. Statist. Math.}, 67(5):917--941.

\bibitem[Eaton, 1989]{Eaton1989group}
Eaton, M.~L. (1989).
\newblock {\em Group Invariance Applications in Statistics}, volume~1 of {\em
  NSF-CBMS Regional Conference Series in Probability and Statistics}.
\newblock Institute of Mathematical Statistics, Hayward, CA; American
  Statistical Association, Alexandria, VA.

\bibitem[Efron, 1969]{efron1969}
Efron, B. (1969).
\newblock Student's {$t$}-test under symmetry conditions.
\newblock {\em J. Amer. Statist. Assoc.}, 64:1278--1302.

\bibitem[Einmahl and Gan, 2016]{EG2016}
Einmahl, J. H.~J. and Gan, Z. (2016).
\newblock Testing for central symmetry.
\newblock {\em J. Statist. Plann. Inference}, 169:27--33.

\bibitem[Fang et~al., 1990]{Fang1990symm}
Fang, K.~T., Kotz, S., and Ng, K.~W. (1990).
\newblock {\em Symmetric multivariate and related distributions}, volume~36 of
  {\em Monographs on Statistics and Applied Probability}.
\newblock Chapman and Hall, Ltd., London.

\bibitem[Feuerverger and Mureika, 1977]{Feuerverger1977characteristic}
Feuerverger, A. and Mureika, R.~A. (1977).
\newblock The empirical characteristic function and its applications.
\newblock {\em Ann. Statist.}, 5(1):88--97.

\bibitem[Fire, 2006]{fire2006statistics}
Fire, M. (2006).
\newblock Statistics on signed permutations groups.
\newblock In {\em 18th Annual International Conference on Formal Power Series
  and Algebraic Combinatorics, FPSAC 2006}.

\bibitem[Fourdrinier and Wells, 1995]{Fourdrinier1995sph}
Fourdrinier, D. and Wells, M.~T. (1995).
\newblock Estimation of a loss function for spherically symmetric distributions
  in the general linear model.
\newblock {\em Ann. Statist.}, 23(2):571--592.

\bibitem[Gastwirth and Wolff, 1968]{Gastwirth1968elementary}
Gastwirth, J.~L. and Wolff, S.~S. (1968).
\newblock An elementary method for obtaining lower bounds on the asymptotic
  power of rank tests.
\newblock {\em Ann. Math. Statist.}, 39:2128--2130.

\bibitem[Ghosal and Sen, 2019]{ghosal2019multivariate}
Ghosal, P. and Sen, B. (2019).
\newblock Multivariate ranks and quantiles using optimal transport:
  consistency, rates, and nonparametric testing.
\newblock {\em arXiv preprint arXiv:1905.05340}.

\bibitem[Ghosh and Ruymgaart, 1992]{Ghosh1992charac}
Ghosh, S. and Ruymgaart, F.~H. (1992).
\newblock Applications of empirical characteristic functions in some
  multivariate problems.
\newblock {\em Canad. J. Statist.}, 20(4):429--440.

\bibitem[Gibbons and Chakraborti, 2003]{Gibbons2003nonpara}
Gibbons, J.~D. and Chakraborti, S. (2003).
\newblock {\em Nonparametric Statistical Inference}, volume 168 of {\em
  Statistics: Textbooks and Monographs}.
\newblock Marcel Dekker, Inc., New York, fourth edition.

\bibitem[Gibbons and Chakraborti, 2011]{Gibbons2011}
Gibbons, J.~D. and Chakraborti, S. (2011).
\newblock {\em Nonparametric Statistical Inference. In: Lovric, M. (eds)
  International Encyclopedia of Statistical Science}, pages 977--979.
\newblock Springer Berlin Heidelberg, Berlin, Heidelberg.

\bibitem[H\'{a}jek et~al., 1999]{hajek1999represent}
H\'{a}jek, J., \v{S}id\'{a}k, Z., and Sen, P.~K. (1999).
\newblock {\em Theory of Rank Tests}.
\newblock Probability and Mathematical Statistics. Academic Press, Inc., San
  Diego, CA, second edition.

\bibitem[Hallin et~al., 2021]{hallin2017distribution}
Hallin, M., del Barrio, E., Cuesta-Albertos, J., and Matr\'{a}n, C. (2021).
\newblock Distribution and quantile functions, ranks and signs in dimension
  {$d$}: a measure transportation approach.
\newblock {\em Ann. Statist.}, 49(2):1139--1165.

\bibitem[Hallin et~al., 2020]{hallin2020fully}
Hallin, M., Hlubinka, D., and Hudecov{\'a}, {\v{S}}. (2020).
\newblock Fully distribution-free center-outward rank tests for multiple-output
  regression and manova.
\newblock {\em arXiv preprint arXiv:2007.15496}.

\bibitem[Hallin et~al., 2022a]{Hallin2022VARMA}
Hallin, M., La~Vecchia, D., and Liu, H. (2022a).
\newblock Center-{o}utward {R}-{e}stimation for {s}emiparametric {VARMA}
  {m}odels.
\newblock {\em J. Amer. Statist. Assoc.}, 117(538):925--938.

\bibitem[Hallin et~al., 2022b]{hallin2022nonparametric}
Hallin, M., Liu, H., and Verdebout, T. (2022b).
\newblock Nonparametric measure-transportation-based methods for directional
  data.
\newblock {\em arXiv preprint arXiv:2212.10345}.

\bibitem[Hallin and Paindaveine, 2002]{Hallin2002optimal}
Hallin, M. and Paindaveine, D. (2002).
\newblock Optimal tests for multivariate location based on interdirections and
  pseudo-{M}ahalanobis ranks.
\newblock {\em Ann. Statist.}, 30(4):1103--1133.

\bibitem[Halton, 1964]{halton1964algorithm}
Halton, J.~H. (1964).
\newblock Algorithm 247: Radical-inverse quasi-random point sequence.
\newblock {\em Communications of the ACM}, 7(12):701--702.

\bibitem[Hardy et~al., 1952]{Hardy1952}
Hardy, G.~H., Littlewood, J.~E., and P\'{o}lya, G. (1952).
\newblock {\em Inequalities}.
\newblock Cambridge, at the University Press, second edition.

\bibitem[Hastings, 1970]{Hasting1970monte}
Hastings, W.~K. (1970).
\newblock Monte {C}arlo sampling methods using {M}arkov chains and their
  applications.
\newblock {\em Biometrika}, 57(1):97--109.

\bibitem[Heathcote et~al., 1995]{Heathcote1995symmetry}
Heathcote, C.~R., Rachev, S.~T., and Cheng, B. (1995).
\newblock Testing multivariate symmetry.
\newblock {\em J. Multivariate Anal.}, 54(1):91--112.

\bibitem[Henze et~al., 2014]{Henze2014sph}
Henze, N., Hl\'{a}vka, Z., and Meintanis, S.~G. (2014).
\newblock Testing for spherical symmetry via the empirical characteristic
  function.
\newblock {\em Statistics}, 48(6):1282--1296.

\bibitem[Henze et~al., 2003]{Henze2003inv}
Henze, N., Klar, B., and Meintanis, S.~G. (2003).
\newblock Invariant tests for symmetry about an unspecified point based on the
  empirical characteristic function.
\newblock {\em J. Multivariate Anal.}, 87(2):275--297.

\bibitem[Henze and Mayer, 2020]{Henze2020news}
Henze, N. and Mayer, C. (2020).
\newblock More good news on the {HKM} test for multivariate reflected symmetry
  about an unknown centre.
\newblock {\em Ann. Inst. Statist. Math.}, 72(3):741--770.

\bibitem[Hiriart-Urruty and Lemar\'{e}chal, 1993]{Hiriart1993Cvx}
Hiriart-Urruty, J.-B. and Lemar\'{e}chal, C. (1993).
\newblock {\em Convex Analysis and Minimization Algorithms. {I}}, volume 305 of
  {\em Grundlehren der mathematischen Wissenschaften}.
\newblock Springer-Verlag, Berlin.

\bibitem[Hodges, 1955]{Hodges1955sign}
Hodges, Jr., J.~L. (1955).
\newblock A bivariate sign test.
\newblock {\em Ann. Math. Statist.}, 26:523--527.

\bibitem[Hodges and Lehmann, 1956]{Hodges1956efficiency}
Hodges, Jr., J.~L. and Lehmann, E.~L. (1956).
\newblock The efficiency of some nonparametric competitors of the {$t$}-test.
\newblock {\em Ann. Math. Statist.}, 27:324--335.

\bibitem[Householder, 1958]{Householder1958}
Householder, A.~S. (1958).
\newblock Unitary triangularization of a nonsymmetric matrix.
\newblock {\em J. Assoc. Comput. Mach.}, 5:339--342.

\bibitem[Hu{\v{s}}kov{\'a}, 1984]{Marie1984sym}
Hu{\v{s}}kov{\'a}, M. (1984).
\newblock Hypothesis of symmetry.
\newblock In {\em Nonparametric Methods}, volume~4 of {\em Handbook of
  Statistics}, pages 63--78. Elsevier.

\bibitem[Hyv\"{a}rinen, 2005]{Hyv2005score}
Hyv\"{a}rinen, A. (2005).
\newblock Estimation of non-normalized statistical models by score matching.
\newblock {\em J. Mach. Learn. Res.}, 6:695--709.

\bibitem[Hyv{\"a}rinen and Oja, 2000]{Hyvarunen2000ICA}
Hyv{\"a}rinen, A. and Oja, E. (2000).
\newblock Independent component analysis: algorithms and applications.
\newblock {\em Neural Networks}, 13(4):411--430.

\bibitem[Jammalamadaka et~al., 1987]{Jammalamadaka1987linear}
Jammalamadaka, S.~R., Tiwari, R.~C., and Chib, S. (1987).
\newblock Bayes prediction in the linear model with spherically symmetric
  errors.
\newblock {\em Econom. Lett.}, 24(1):39--44.

\bibitem[Jonker and Volgenant, 1987]{Jonker1987algorithm}
Jonker, R. and Volgenant, A. (1987).
\newblock A shortest augmenting path algorithm for dense and sparse linear
  assignment problems.
\newblock {\em Computing}, 38(4):325--340.

\bibitem[Kariya and Eaton, 1977]{Kariya1977sph}
Kariya, T. and Eaton, M.~L. (1977).
\newblock Robust tests for spherical symmetry.
\newblock {\em Ann. Statist.}, 5(1):206--215.

\bibitem[Kolassa, 2020]{kolassa2020introduction}
Kolassa, J.~E. (2020).
\newblock {\em An Introduction to Nonparametric Statistics}.
\newblock CRC Press.

\bibitem[Kuipers and Niederreiter, 1974]{Kuipers1974unif}
Kuipers, L. and Niederreiter, H. (1974).
\newblock {\em Uniform Distribution of Sequences}.
\newblock Pure and Applied Mathematics. Wiley-Interscience, New
  York-London-Sydney.

\bibitem[Lehmann, 1975]{Lehmann75}
Lehmann, E.~L. (1975).
\newblock {\em Nonparametrics: Statistical Methods Based on Ranks}.
\newblock Holden-Day Series in Probability and Statistics. Holden-Day, Inc.,
  San Francisco, Calif.; McGraw-Hill International Book Co., New
  York-D\"{u}sseldorf.
\newblock With the special assistance of H. J. M. d'Abrera.

\bibitem[Lehmann and Romano, 2005]{Lehmann2005hypo}
Lehmann, E.~L. and Romano, J.~P. (2005).
\newblock {\em Testing Statistical Hypotheses}.
\newblock Springer Texts in Statistics. Springer, New York, third edition.

\bibitem[Li et~al., 2005]{Li2005contour}
Li, B., Zha, H., and Chiaromonte, F. (2005).
\newblock Contour regression: a general approach to dimension reduction.
\newblock {\em Ann. Statist.}, 33(4):1580--1616.

\bibitem[Manole et~al., 2021]{manole2021plugin}
Manole, T., Balakrishnan, S., Niles-Weed, J., and Wasserman, L. (2021).
\newblock Plugin estimation of smooth optimal transport maps.
\newblock {\em arXiv preprint arXiv:2107.12364}.

\bibitem[Marden, 1999]{marden1999multivariate}
Marden, J.~I. (1999).
\newblock Multivariate rank tests.
\newblock In {\em Multivariate Analysis, Design of Experiments, and Survey
  Sampling}, pages 401--432. Marcel Dekker, Inc., New York.

\bibitem[Mardia and Jupp, 2000]{Mardia2000directional}
Mardia, K.~V. and Jupp, P.~E. (2000).
\newblock {\em Directional Statistics}.
\newblock Wiley Series in Probability and Statistics. John Wiley \& Sons, Ltd.,
  Chichester.

\bibitem[McCann, 1995]{McCann1995}
McCann, R.~J. (1995).
\newblock Existence and uniqueness of monotone measure-preserving maps.
\newblock {\em Duke Math. J.}, 80(2):309--323.

\bibitem[McWilliams, 1990]{McWilliams1990runs}
McWilliams, T.~P. (1990).
\newblock A distribution-free test for symmetry based on a runs statistic.
\newblock {\em J. Amer. Statist. Assoc.}, 85(412):1130--1133.

\bibitem[M\"{o}tt\"{o}nen and Oja, 1995]{Mottonen1995sign}
M\"{o}tt\"{o}nen, J. and Oja, H. (1995).
\newblock Multivariate spatial sign and rank methods.
\newblock {\em J. Nonparametr. Statist.}, 5(2):201--213.

\bibitem[M\"{o}tt\"{o}nen et~al., 2005]{Mottonen2005Multivariate}
M\"{o}tt\"{o}nen, J., Oja, H., and Serfling, R. (2005).
\newblock Multivariate generalized spatial signed-rank methods.
\newblock {\em J. Statist. Res.}, 39(1):25--48.

\bibitem[M\"{o}tt\"{o}nen et~al., 1997]{Mottonen1997efficiency}
M\"{o}tt\"{o}nen, J., Oja, H., and Tienari, J. (1997).
\newblock On the efficiency of multivariate spatial sign and rank tests.
\newblock {\em Ann. Statist.}, 25(2):542--552.

\bibitem[Munkres, 1957]{Munkres1957assign}
Munkres, J. (1957).
\newblock Algorithms for the assignment and transportation problems.
\newblock {\em J. Soc. Indust. Appl. Math.}, 5:32--38.

\bibitem[Munkres, 2000]{Munkres2000topology}
Munkres, J.~R. (2000).
\newblock {\em Topology}.
\newblock Prentice Hall, Inc., Upper Saddle River, NJ.

\bibitem[Ngatchou-Wandji, 2009]{Ngatchou2009symmetry}
Ngatchou-Wandji, J. (2009).
\newblock Testing for symmetry in multivariate distributions.
\newblock {\em Stat. Methodol.}, 6(3):230--250.

\bibitem[Niederreiter, 1992]{Harald1992mc}
Niederreiter, H. (1992).
\newblock {\em Random Number Generation and Quasi-{M}onte {C}arlo Methods},
  volume~63 of {\em CBMS-NSF Regional Conference Series in Applied
  Mathematics}.
\newblock Society for Industrial and Applied Mathematics (SIAM), Philadelphia,
  PA.

\bibitem[Nordhausen and Oja, 2011]{MNM2011}
Nordhausen, K. and Oja, H. (2011).
\newblock Multivariate $l_1$ methods: The package {MNM}.
\newblock {\em Journal of Statistical Software}, 43(5):1--28.

\bibitem[Oja, 2010]{Oja2010nonp}
Oja, H. (2010).
\newblock {\em Multivariate Nonparametric Methods with {R}}, volume 199 of {\em
  Lecture Notes in Statistics}.
\newblock Springer, New York.
\newblock An approach based on spatial signs and ranks.

\bibitem[Pinelis, 1994]{Pinelis1994T2}
Pinelis, I. (1994).
\newblock Extremal probabilistic problems and {H}otelling's {$T^2$} test under
  a symmetry condition.
\newblock {\em Ann. Statist.}, 22(1):357--368.

\bibitem[Puri and Sen, 1971]{Puri-Sen-1971}
Puri, M.~L. and Sen, P.~K. (1971).
\newblock {\em Nonparametric Methods in Multivariate Analysis}.
\newblock John Wiley \& Sons, Inc., New York-London-Sydney.

\bibitem[Randles, 1989]{Randles1989sign}
Randles, R.~H. (1989).
\newblock A distribution-free multivariate sign test based on interdirections.
\newblock {\em J. Amer. Statist. Assoc.}, 84(408):1045--1050.

\bibitem[Randles, 2000]{Randles2000}
Randles, R.~H. (2000).
\newblock A simpler, affine-invariant, multivariate, distribution-free sign
  test.
\newblock {\em J. Amer. Statist. Assoc.}, 95(452):1263--1268.

\bibitem[Robert and Casella, 2004]{Robert2004monte}
Robert, C.~P. and Casella, G. (2004).
\newblock {\em Monte {C}arlo Statistical Methods}.
\newblock Springer Texts in Statistics. Springer-Verlag, New York, second
  edition.

\bibitem[Rockafellar and Wets, 1998]{rockafellar1998va}
Rockafellar, R.~T. and Wets, R. J.-B. (1998).
\newblock {\em Variational Analysis}, volume 317 of {\em Grundlehren der
  mathematischen Wissenschaften}.
\newblock Springer-Verlag, Berlin.

\bibitem[Samarov and Tsybakov, 2004]{Samarov2004ica}
Samarov, A. and Tsybakov, A. (2004).
\newblock Nonparametric independent component analysis.
\newblock {\em Bernoulli}, 10(4):565--582.

\bibitem[Sen and Puri, 1967]{PuriSen1967}
Sen, P.~K. and Puri, M.~L. (1967).
\newblock On the theory of rank order tests for location in the multivariate
  one sample problem.
\newblock {\em Ann. Math. Statist.}, 38:1216--1228.

\bibitem[Serfling, 2010]{Serfling2010Equivariance}
Serfling, R. (2010).
\newblock Equivariance and invariance properties of multivariate quantile and
  related functions, and the role of standardisation.
\newblock {\em J. Nonparametr. Stat.}, 22(7):915--936.

\bibitem[Serfling, 2014]{Serfling2014}
Serfling, R.~J. (2014).
\newblock Multivariate symmetry and asymmetry.
\newblock {\em Wiley StatsRef: Statistics Reference Online}.

\bibitem[Shi et~al., 2023]{shi2023semiparametrically}
Shi, H., Drton, M., Hallin, M., and Han, F. (2023).
\newblock Semiparametrically efficient tests of multivariate independence using
  center-outward {Q}uadrant, {S}pearman, and {K}endall statistics.
\newblock {\em ECARES Working Papers}.

\bibitem[Shi et~al., 2022a]{Shi2022indep}
Shi, H., Drton, M., and Han, F. (2022a).
\newblock Distribution-free consistent independence tests via center-outward
  ranks and signs.
\newblock {\em J. Amer. Statist. Assoc.}, 117(537):395--410.

\bibitem[Shi et~al., 2022b]{shi2022universally}
Shi, H., Hallin, M., Drton, M., and Han, F. (2022b).
\newblock On universally consistent and fully distribution-free rank tests of
  vector independence.
\newblock {\em Ann. Statist.}, 50(4):1933--1959.

\bibitem[Shlens, 2014]{shlens2014tutorial}
Shlens, J. (2014).
\newblock A tutorial on independent component analysis.
\newblock {\em arXiv preprint arXiv:1404.2986}.

\bibitem[Song and Ermon, 2019]{song2019generative}
Song, Y. and Ermon, S. (2019).
\newblock Generative modeling by estimating gradients of the data distribution.
\newblock {\em Advances in Neural Information Processing Systems}, 32.

\bibitem[Van~der Vaart, 1998]{vanderVaart1998}
Van~der Vaart, A.~W. (1998).
\newblock {\em Asymptotic Statistics}, volume~3 of {\em Cambridge Series in
  Statistical and Probabilistic Mathematics}.
\newblock Cambridge University Press, Cambridge.

\bibitem[Villani, 2003]{Villani03}
Villani, C. (2003).
\newblock {\em Topics in Optimal Transportation}, volume~58 of {\em Graduate
  Studies in Mathematics}.
\newblock American Mathematical Society, Providence, RI.

\bibitem[Villani, 2009]{villani2009OT}
Villani, C. (2009).
\newblock {\em Optimal Transport}, volume 338 of {\em Grundlehren der
  Mathematischen Wissenschaften}.
\newblock Springer-Verlag, Berlin.

\bibitem[Watson, 1961]{Watson1961circle}
Watson, G.~S. (1961).
\newblock Goodness-of-fit tests on a circle.
\newblock {\em Biometrika}, 48:109--114.

\bibitem[Weyl, 1952]{Weyl1952Symmetry}
Weyl, H. (1952).
\newblock {\em Symmetry}.
\newblock Princeton University Press, Princeton, NJ.

\bibitem[Wilcoxon, 1945]{wilcoxon1945individual}
Wilcoxon, F. (1945).
\newblock Individual comparisons by ranking methods.
\newblock {\em Biometrics}, 1(6):80--83.

\bibitem[Zeng and Zhu, 2010]{Zeng2010central}
Zeng, P. and Zhu, Y. (2010).
\newblock An integral transform method for estimating the central mean and
  central subspaces.
\newblock {\em J. Multivariate Anal.}, 101(1):271--290.

\bibitem[Zuo and Serfling, 2000]{ZS2000}
Zuo, Y. and Serfling, R. (2000).
\newblock On the performance of some robust nonparametric location measures
  relative to a general notion of multivariate symmetry.
\newblock {\em J. Statist. Plann. Inference}, 84(1-2):55--79.

\end{thebibliography}
\bibliographystyle{apalike}
\end{document}